\documentclass[11pt]{article}

\usepackage{amsfonts,amsmath,amssymb,amsthm,graphicx}
\usepackage{fullpage}
\usepackage{changepage}
\usepackage{enumerate}
\usepackage{scalerel}
\usepackage{accents}
\usepackage[margin=1in]{geometry}
\usepackage{float}
\usepackage[caption = false]{subfig}
\usepackage{hyperref}
\usepackage{enumitem}

\usepackage{thm-restate}

\usepackage{natbib}
\usepackage{csquotes}
\usepackage{comment}

\usepackage{bbm}

\usepackage{cleveref}
\crefname{claim}{claim}{claims}
\Crefname{algocf}{Algorithm}{Algorithms}

\crefname{appendix}{appendix}{appendices}
\Crefname{appendix}{Appendix}{Appendices}

\let\oldappendix\appendix
\renewcommand{\appendix}{\oldappendix
  \crefalias{section}{appendix}}

\usepackage{tikz}
\usetikzlibrary{matrix,calc}
\usepackage{array,longtable}
\usepackage{booktabs}
\usepackage{pgfplots}
\pgfplotsset{compat=1.18}
\usetikzlibrary{patterns}
\usepgfplotslibrary{fillbetween}
\usetikzlibrary{intersections}

\usepackage[ruled,vlined,linesnumbered]{algorithm2e}
\Crefname{algocf}{Algorithm}{Algorithms}
\allowdisplaybreaks

\usetikzlibrary{arrows, decorations.markings}

\tikzstyle{vecArrow} = [thick, decoration={markings,mark=at position
   1 with {\arrow[semithick]{open triangle 60}}},
   double distance=1.4pt, shorten >= 5.5pt,
   preaction = {decorate},
   postaction = {draw,line width=1.4pt, white,shorten >= 4.5pt}]
\tikzstyle{innerWhite} = [semithick, white,line width=1.4pt, shorten >= 4.5pt]

\allowdisplaybreaks
\theoremstyle{plain}
\newtheorem{theorem}{Theorem}[section]
\newtheorem{lemma}[theorem]{Lemma}
\newtheorem{claim}[theorem]{Claim}

\newtheorem{proposition}[theorem]{Proposition}
\newtheorem{corollary}[theorem]{Corollary}

\theoremstyle{plain}
\newtheorem{definition}{Definition}[section] \newtheorem{example}[definition]{Example}
\newtheorem{remark}[definition]{Remark}

\theoremstyle{plain}

\usepackage{xfrac}

\newcommand{\xhdr}[1]{\vspace{2mm} \noindent{\bf #1}}

\newcommand{\primed}{^\dagger}
\newcommand{\doubleprimed}{^\ddagger}

\newcommand{\GFT}[2][]{\texttt{GFT}\ifthenelse{\not\equal{}{#1}}{_{#1}}{}\!\left[{\def\givenn{\middle|}#2}\right]}

\newcommand{\reals}{\mathbb{R}}
\newcommand{\eps}{\varepsilon}

\newcommand{\alloc}{x}
\newcommand{\price}{p}

\newcommand{\val}{v}
\newcommand{\buyerdist}{F}

\newcommand{\supp}{{\sf supp}}
\newcommand{\allocs}{\boldsymbol{\alloc}}
\newcommand{\prices}{\boldsymbol{\price}}
\newcommand{\revcurve}{R}

\newcommand{\buyercdf}{\buyerdist}
\newcommand{\buyerpdf}{f}

\newcommand{\threshold}{\epsilon}

\newcommand{\reserve}{r}
\newcommand{\optreserve}{\reserve^*}

\newcommand{\auxfunc}{\psi}

\newcommand{\naturals}{\mathbb{N}}

\newcommand{\osdensity}{\eta}

\newcommand{\quant}{q}

\newcommand{\expDists}{\mathcal{F}_{\textsc{TEXP}}}

\newcommand{\SLVCGAuction}{{\sf Supply-Limiting VCG Auction}}
\newcommand{\VCGAuction}{{\sf VCG Auction}}
\newcommand{\BayesianOptimalMech}{{\sf Bayesian-Optimal Mechanism}}

\newcommand{\RevVCG}{{\sf RevVCG}}
\newcommand{\RevOPT}{{\sf RevOPT}}
\newcommand{\RevSLVCG}{{\sf RevVCG}}

\newcommand{\ccomplexity}{\texttt{CC}}

\newcommand{\ccomplexityInfty}{\texttt{CC}_{(\infty)}}
\newcommand{\ccomplexitySL}{\ccomplexity^{\texttt{SL}}}
\newcommand{\ccomplexitySLInfty}{\ccomplexity_{(\infty)}^{\texttt{SL}}}

\newcommand{\MHRDists}{\mathcal{F}_{\text{MHR}}}
\newcommand{\RegDists}{\mathcal{F}_{\text{Reg}}}

\newcommand{\DistClass}{\mathcal{F}}
\newcommand{\ccapproxratio}{\Gamma}
\newcommand{\optquant}{\quant^*}
\newcommand{\supply}{s}

\newcommand{\quantSC}{\quant\primed}
\newcommand{\valSC}{\val\primed}

\newcommand{\imbalanceratio}{\alpha}
\newcommand{\Reglevel}{\lambda}

\newcommand{\WorstRegDist}{\buyerdist_{(\optreserve,\Reglevel)}}

\newcommand{\WorstRevcurve}{\revcurve_{(\optreserve,\Reglevel)}}

\newcommand{\CriticalQuantile}{\quant_0}
\newcommand{\shiftCriticalQuantile}{\abs{\quant-\CriticalQuantile}}
\newcommand{\addBuyers}{t_n}
\newcommand{\revLB}{\revcurve_{\min}}

\newcommand{\BetaFun}[2]{B\!\left(#1,#2\right)}

\newcommand{\BetaDist}[2]{Beta\!\left(#1,#2\right)}
\newcommand{\indexi}{i}
\newcommand{\indexj}{j}
\newcommand{\coefA}{a}
\newcommand{\AltBinSum}{S}
\newcommand{\digammaf}{\psi}
\newcommand{\Harm}[1]{H_{#1}}

\newcommand{\coefOld}{S_{\text{left}}}
\newcommand{\coefNew}{S_{\text{right}}}
\newcommand{\intOld}{I_1}
\newcommand{\intNew}{I_2}  
\newcommand{\binEntropy}{H}

\newcommand{\Auxfunc}{H}
\newcommand{\ExpectX}{\mu}

\newcommand{\funcC}{c}

\newcommand{\variablex}{x}
\newcommand{\variableu}{u}
\newcommand{\variablet}{y}
\newcommand{\pdf}[1]{p\left(#1\right)}

\newcommand{\rprob}{p_+(\optquant)}
\newcommand{\drprob}{p_+^\prime(\optquant)}
\newcommand{\lnexpect}{h(\optquant)}
\newcommand{\dlnexpect}{h^\prime(\optquant)}
\newcommand{\VCGrpart}{F(\optquant)}
\newcommand{\dVCGrpart}{F^\prime(\optquant)}
\newcommand{\IF}{\mathbf{1}}

\newcommand{\kratio}{\theta}

\newcommand{\qLeft}{q_L}
\newcommand{\qRight}{q_R}

\newcommand{\valvec}{\boldsymbol{v}}

\newcommand{\hisval}[1]{\val_{#1}}

\newcommand{\devval}[1]{\bid} 

\newcommand{\alloci}[2]{\alloc_{#1}\!\left(#2\right)}
\newcommand{\pricei}[2]{\price_{#1}\!\left(#2\right)}

\newcommand{\truncGPDists}{\mathcal{F}_{(\lambda)}^*}
\newcommand{\lambdaDists}{\mathcal{F}_{(\lambda\text{-Reg})}}
\newcommand{\distLambdaReserve}{F_{(\lambda,\reserve)}}
\newcommand{\distExpReserve}{F_{(0,\reserve)}}

\newcommand{\DiracDelta}{\delta}
\newcommand{\bids}{\boldsymbol{b}}
\newcommand{\bid}{b}

	\DeclareMathOperator{\argmax}{argmax}

\newcommand{\condition}{\,\mid\,}

\newcommand{\prob}[2][]{\text{Pr}\ifthenelse{\not\equal{}{#1}}{_{#1}}{}\!\left[{\def\givenn{\middle|}#2}\right]}
\newcommand{\expect}[2][]{\mathbb{E}\ifthenelse{\not\equal{}{#1}}{_{#1}}{}\!\left[{\def\givenn{\middle|}#2}\right]}

\newcommand{\tparen}{\big}
\newcommand{\tprob}[2][]{\text{Pr}\ifthenelse{\not\equal{}{#1}}{_{#1}}{}\tparen[{\def\given{\tparen|}#2}\tparen]}
\newcommand{\texpect}[2][]{\mathbb{E}\ifthenelse{\not\equal{}{#1}}{_{#1}}{}\tparen[{\def\given{\tparen|}#2}\tparen]}

\newcommand{\sprob}[2][]{\text{Pr}\ifthenelse{\not\equal{}{#1}}{_{#1}}{}[#2]}
\newcommand{\sexpect}[2][]{\mathbb{E}\ifthenelse{\not\equal{}{#1}}{_{#1}}{}[#2]}

\newcommand{\dd}{{\mathrm d}}

\newcommand{\indicator}[1]{{\mathbbm{1}\left\{ #1 \right\}}}

\newcommand{\plus}[1]{{\left( #1 \right)^+}}

\newcommand{\abs}[1]{{\left| #1 \right|}}

\title{Strengthening Bulow-Klemperer-Style Results\\ for Multi-Unit Auctions}

\author{Moshe Babaioff\thanks{Hebrew University of Jerusalem. Email: {\tt moshe.babaioff@mail.huji.ac.il}} 
\and Yiding Feng\thanks{Hong Kong University of Science and Technology. Email: {\tt ydfeng@ust.hk}}
\and Zihan Luo\thanks{Chinese Academy of Sciences. Email: {\tt luozihan23s@ict.ac.cn}}}

\date{}

\begin{document}

\maketitle
\begin{abstract}
    The classic result of \citet{BK-96} shows that in multi-unit auctions with $m$ units and $n\geq m$ buyers whose values are sampled i.i.d. from a regular distribution, the revenue of the VCG auction with $m$ additional buyers is at least as large as the optimal revenue. Unfortunately, for regular distributions, adding $m$ additional buyers is sometimes indeed necessary, so the  ``competition complexity'' of the VCG auction is $m$. We seek proving better competition complexity results in two dimensions.

First, under stronger distributional assumptions---specifically, $\Reglevel$-regularity, which interpolates between regularity and monotone hazard rate (MHR)---the competition complexity of VCG auction drops dramatically. In balanced markets (where $m=n$) with MHR distributions, it is sufficient to only add $(e^{1/e} - 1 + o(1))n \approx 0.4447n$ additional buyers to match the optimal revenue---less than half the number that is necessary under regularity---and this bound is asymptotically tight. We provide both exact finite-market results for small value of $n$, and closed-form asymptotic formulas for general market with any $m\leq n$, and any target fraction of the optimal revenue.

Second, we analyze a supply-limiting variant of VCG auction that caps the number of units sold in a prior-independent way. Whenever the goal is to achieve almost the optimal revenue---say, 90\% or 95\%---this mechanism strictly improves upon standard VCG auction, requiring significantly fewer additional buyers.

Together, our results show that both stronger distributional assumptions, as well as a simple prior-independent refinement to the VCG auction, can each substantially reduce the number of additional buyers that is sufficient to achieve (near-)optimal revenue. Our analysis hinges on a unified worst-case reduction to truncated generalized Pareto distributions, enabling both numerical computation and analytical tractability. \end{abstract}

\thispagestyle{empty}
\newpage

\section{Introduction}
\label{sec:intro}

One of the most fundamental goals in mechanism design is revenue maximization.  
Optimizing over the complex space of mechanisms to maximize the revenue is challenging, but for the single-item setting, \citet{mye-81} has presented his celebrate mechanism, which maximizes the seller's expected revenue in the Bayesian environment. However, this mechanism relies on the knowledge of the buyers' prior value distribution---a requirement that may be impractical in many real-world settings.
To address this limitation, \citet{BK-96} showed that for a single-item auction, when buyer values are drawn i.i.d.\ from a regular distribution, it suffices to recruit just one additional buyer from the same distribution and run the {\VCGAuction} (i.e., the second-price auction). This simple, prior-independent mechanism---requiring no knowledge of the prior---guarantees expected revenue at least as high as that of Myerson's optimal auction. This result suggests that, rather than optimizing over the complex, prior-dependent mechanism-design space to maximize revenue, a firm may instead benefit from market expansion and the use of simple and practical prior-independent mechanisms, such as the second-price auction.

In many practical applications, the seller may have multiple units of a homogeneous good to allocate among $n$ buyers. Prominent examples include online advertising (where an ad platform sells multiple impressions of the same ad slot), spectrum auctions (where a regulator allocates several identical frequency blocks), cloud computing markets (where a provider offers multiple instances of a virtual machine type), and emerging AI service platforms (where providers sell access to multiple identical queries or compute tokens from a large language model). As a natural extension of the single-unit setting, \citet{BK-96} show that when the number of units is $m \leq n$ and buyers' values are drawn i.i.d.\ from a regular distribution, recruiting $m$ additional buyers and then running the {\VCGAuction}\footnote{{\VCGAuction} is a natural extension of the second-price auction. See \Cref{def:vcg auction}.} \citep{vic-61} achieves weakly higher revenue than the {\BayesianOptimalMech}.

Remarkably, the result in \citet{BK-96} is tight: the following folklore example 
exhibits a market with a regular valuation distribution for which recruiting only $m-1$ additional buyers and running the {\VCGAuction} yields strictly lower revenue than the {\BayesianOptimalMech}.\footnote{The formal analysis of \Cref{example:intro:multi-unit bk:regular} can be found in \Cref{apx:intro example analysis}.}
\begin{example}
\label{example:intro:multi-unit bk:regular}
    Fix any $n \geq 1$, and let $H \triangleq 3n - 1$. Consider an $n$-unit, $n$-buyer (balanced) market where buyers' values are drawn i.i.d.\ from a regular distribution $\buyerdist$ defined as follows: its support is $\supp(\buyerdist) = [0, H]$, and its cumulative distribution function satisfies $\buyercdf(\val) = 1 - \frac{1}{\val + 1}$ for $\val < H$, with an atom of mass $\frac{1}{H+1}$ at $\val = H$.

    In this market, the {\BayesianOptimalMech} posts a take-it-or-leave-it price of $H$ to all buyers and achieves an expected revenue of $n \cdot \left(1 - \frac{1}{H+1}\right) = n - \frac{1}{3}$.
    In contrast, the revenue of the {\VCGAuction} with $n-1$ additional buyers (i.e., $2n - 1$ total buyers) is 
strictly less than $n - \frac{1}{2}$.
\end{example}

While \Cref{example:intro:multi-unit bk:regular} illustrates the necessity of recruiting $n$ additional buyers (which, in this instance, amounts to duplicating the entire market), one might argue that the valuation distribution used is somewhat unrealistic: {even in a large market ($n\gg 0$),
it forces the {\BayesianOptimalMech} to
serve less than 1 buyer in expectation}---specifically, $1/(H+1) = 1/(3n)$---who happen to have extremely high valuations. For instance, when $n = 10{,}000$, 
the mechanism never serves any buyer with a value that is not in the top $0.01\%$ percentile.
We thus see that the family of regular distributions is premise enough to allow for such phenomenon. This motivates the following question regarding further restricting the family of distributions:
\begin{displayquote}
   (Question 1) 
\emph{Under standard distributional assumptions stronger than regularity, is it sufficient to add much fewer buyers to guarantee that the {\VCGAuction} always outperforms the {\BayesianOptimalMech}?}
\end{displayquote}
In a nutshell, our work shows that significantly smaller number of additional buyers is sufficient under the stronger conditions of the monotone hazard rate (MHR) and $\Reglevel$-regularity.

Taking a step back, one can view ``Bulow–Klemperer-style results'' as aiming to minimize the number of additional buyers that is sufficient and necessary to match or exceed the revenue of the {\BayesianOptimalMech} by \emph{some prior-independent mechanism}. 
It is not at all clear that the {\VCGAuction} is the best prior-independent mechanism to consider when the goal is to  establish such a result.  
The core issue lies in a fundamental mismatch of objectives: the {\VCGAuction} maximizes social welfare and thus allocates units as long as there is demand whereas the {\BayesianOptimalMech} strategically restricts supply to extract high revenue from the most valuable buyers, even at the cost of efficiency.

This tension is starkly illustrated in \Cref{example:intro:multi-unit bk:regular}. There, the {\BayesianOptimalMech} serves only a tiny fraction of buyers---only about $1/3$ buyer in expectation---but charges the maximum possible price $H$, yielding high revenue. In contrast, the {\VCGAuction} serves all $n$ units, but at much lower prices, resulting in strictly lower total revenue. 
The revenue gap arises not from insufficient competition, but from the {\VCGAuction}'s inherent inability to replicate the revenue-driven supply discipline of the Bayesian optimum.

This observation suggests that a prior-independent mechanism which internalizes this strategic supply restriction---without knowing the distribution---could significantly reduce the number of additional buyers that is sufficient to beat the {\BayesianOptimalMech}. A natural candidate is a \emph{supply-limiting VCG auction}, which caps the number of allocated units in a prior-independent way, based solely on the market size (e.g., the number of buyers), thereby emulating the cautious allocation behavior of the {\BayesianOptimalMech} in high-value-tail scenarios. 
Motivated by this, we pose the following question:
\begin{displayquote}
   (Question 2) 
\emph{By considering prior-independent variants of the {\VCGAuction}---such as optimally limiting its supply without knowing the distribution---is it possible to show that a much smaller number of additional buyers is always sufficient for such a mechanism to outperform, or nearly outperform, the {\BayesianOptimalMech}?}
\end{displayquote}
Indeed, our work shows that to approximately outperform the optimal revenue, it is sufficient for a prior-independent supply-limiting variant of {\VCGAuction} to add significantly fewer additional buyers.

\subsection{Our Contributions}
\label{sec:intro:results}
In this work, we provide compelling answers to both questions. Below we explain our contributions. Throughout, we consider a Bayesian environment where a seller sells $m$ units of a good to $n\geq m$ buyers with values drawn i.i.d.\ from a common valuation distribution $\buyerdist$. For ease of presentation, we follow recent literature on Bulow–Klemperer-style results, refer to the number of additional buyers that is sufficient and necessary as the \emph{competition complexity} \citep{EFFTW-17} (see formal definition in \Cref{def:competition complexity}).

\xhdr{Balanced Markets with MHR Distributions (\Cref{sec:bk mhr}).}
We begin our investigation by considering a balanced market (i.e., $n = m$) and refer to $n$ as the market size. Recall that \Cref{example:intro:multi-unit bk:regular} shows that $n$ additional buyers are indeed sometimes necessary for the {\VCGAuction} to achieve revenue at least as high as the Bayesian optimal revenue in the original balanced market when the valuation distribution $\buyerdist$ is regular.

Our first set of results demonstrates that under the stronger distributional assumption of monotone hazard rate (MHR)---a canonical condition widely studied in the literature---it is sufficient to add much fewer buyers to outperform {\BayesianOptimalMech}.
Specifically, for all market sizes $n \geq 3$, strictly fewer than $n$ additional buyers suffice, 
and this number quickly drops relative to $n$, and for $n \geq 23$ it is already below half of $n$.

To establish these improved guarantees, we first characterize a subclass of MHR distributions that contains the worst case for competition complexity. In \Cref{prop:bk vcg:mhr worst case}, we show that restricting attention to the class of truncated exponential distributions (\Cref{def:truncated exponential distribution})---a natural subclass of MHR distributions---does not reduce the competition complexity of the {\VCGAuction}. This class admits a simple closed-form cumulative distribution function (CDF), parameterized by a single truncation parameter $\reserve \in [1, e]$.

This worst-case characterization serves two purposes. First, it provides structural insight into the nature of the hardest instances for the {\VCGAuction}. Second, it enables both numerical and analytical refinements of the competition complexity. 

Specifically, leveraging this reduction, we can compute the exact competition complexity for any finite market size $n$ via a computer-aided grid search over discretized values of $\reserve$, combined with a Lipschitz continuity argument to bound the error for intermediate values. Using this approach, we determine the exact competition complexity for all $n \leq 593$ and report the results in \Cref{sec:numerical experiment}.

Furthermore, by again exploiting the worst-case characterization, we analytically derive a uniform upper bound on the competition complexity for all $n \geq 594$ (\Cref{thm:bk vcg:mhr finite market}). This bound is less than $0.456 \cdot n$ for all such $n$, and it essentially converges to $(e^{1/e} - 1) \cdot n \approx 0.4447 \cdot n$ as $n \to \infty$. 
Within the same theorem,
we also complement this upper bound with an asymptotically matching lower bound.

\xhdr{General markets with $\Reglevel$-regular distributions (\Cref{sec:bk vcg}).}
We next generalize our results for balanced markets with MHR distributions along three natural dimensions:  
(i) a broader class of $\Reglevel$-regular distributions, which smoothly interpolates between regularity ($\Reglevel = 1$) and MHR ($\Reglevel = 0$);  
(ii) general (possibly imbalanced) markets with $n$ buyers and $m \leq n$ units, capturing a wider range of practical scenarios---from scarce goods ($m \ll n$) to abundant supply ($m \approx n$); and  
(iii) relaxing the requirement to fully match the optimal Bayesian revenue to achieving a $\ccapproxratio$-fraction of it.\footnote{In practice, recruiting additional buyers may be costly. Allowing a target approximation ratio $\ccapproxratio < 1$ helps quantify the trade-off between the cost of recruiting extra buyers and the quality of the revenue guarantee.}

For this more general setting, we show that the worst-case characterization established for balanced markets under MHR (\Cref{prop:bk vcg:mhr worst case}) admits a natural extension. Specifically, as formalized in \Cref{prop:bk vcg:worst case}, for any $\Reglevel \in [0,1]$, any market size with $m \leq n$, and any target approximation ratio $\ccapproxratio \in (0,1]$, the worst-case distribution belongs to the class of truncated $\Reglevel$-generalized Pareto distributions (\Cref{def:truncated generalized Pareto distribution}). This class admits a closed-form CDF and is parameterized by a single truncation parameter $\reserve$, whose range is bounded for every $\Reglevel \in [0,1)$. Remarkably, while the extension is conceptually natural, the analysis becomes significantly more intricate compared to the balanced-market case; see further discussion in \Cref{sec:intro:techniques}.

Leveraging this worst-case characterization, we shed light on the competition complexity in this general setting. To ensure tractability and obtain clean asymptotic results, we consider the following large-market regime: let $\imbalanceratio \in [0,1]$ denote the \emph{supply-to-demand ratio}, i.e., the asymptotic fraction of units per buyer. We study the limit as $n \to \infty$ with $m = \lceil \imbalanceratio \cdot n \rceil$ units. Additionally, we assume that the target approximation ratio $\ccapproxratio$ is an absolute constant strictly less than $1$.
In \Cref{thm:bk vcg:large market}, for every $\imbalanceratio \in [0,1]$ and $\ccapproxratio \in (0,1)$, we derive a closed-form expression for the \emph{asymptotic competition complexity}---defined as the limit of the competition complexity in an $\lceil \imbalanceratio \cdot n \rceil$-unit, $n$-buyer market, normalized by $n$, as $n \to \infty$.

\begin{figure}
    \centering

\begin{tikzpicture}
\begin{axis}[
width=10cm,height=7cm,
xmin=0.0,xmax=1.05,
ymin=-0.015,ymax=1.05,
xtick={0, 0.25, 0.367, 0.5, 1},   
xticklabels = {0, 0.25, $1/e$, 0.5, 1},
tick label style={font=\footnotesize}, scaled y ticks=false,                 yticklabel style={/pgf/number format/fixed}, ytick={0, 0.05, 0.15, 0.5, 1}, 
minor tick num=0,
axis line style=gray,
axis x line=middle,
axis y line=left,
tick style={black},
legend style={draw=none,fill=white,font=\small,at={(0.98,0.98)},anchor=north east},
xlabel={\footnotesize $\imbalanceratio$},
ylabel={\footnotesize asymptotic CC},
]

\def\b{1.0}
\def\ee{2.718281828} 

\pgfplotsset{
  myLine/.style={line width=2pt}
}

\addplot[black, myLine, domain=0:1, samples=400]
  {max(0, \b + x - 1)};

\addplot[black!45, myLine, domain=0:1, samples=400]
  {ifthenelse(x<0.25, 0, max(0, x*(0.25*(\b/x) + 1)^2 - 1))};

\addplot[black!20, myLine, domain=0:1, samples=400]
  {ifthenelse(x<1/\ee, 0, max(0, x*exp(\b/(\ee*x)) - 1))};

\addplot[dotted](1, 0) -- (1, 1) -- (0, 1);
\addplot[dotted](1, 0.4447) -- (0, 0.4447);
\addplot[dotted](1, 0.5625) -- (0, 0.5625);
\addplot[dotted](0.5, 0) -- (0.5, 0.5) -- (0, 0.5);
\addplot[dotted](0.5, 0.125) -- (0, 0.125);
\addplot[dotted](0.5, 0.0435) -- (0, 0.0435);

\end{axis}
\end{tikzpicture} \caption{The asymptotic competition complexity (CC) (to achieve 99.99\% of the Bayesian optimal revenue)
as a function of supply-demand ratio $\imbalanceratio\in[0, 1]$. The black, dark-gray, and light-gray curves correspond to regular $\RegDists$, $0.5$-regular $\mathcal{F}_{\text{0.5-Reg}}$, and MHR distributions $\MHRDists$, respectively.}
\label{fig:bk vcg:large market}
\end{figure}
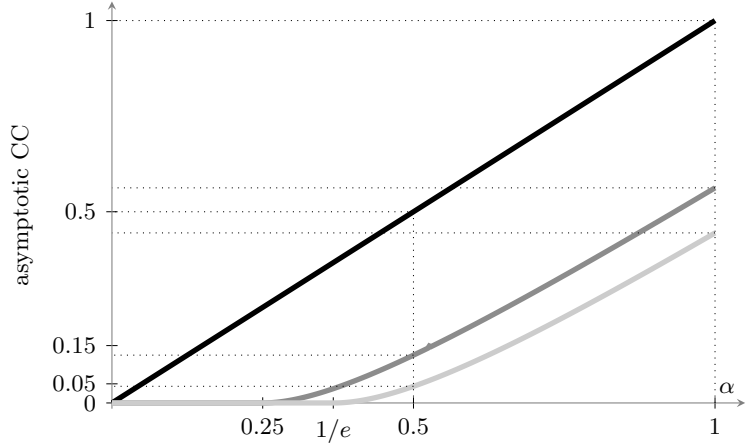

The resulting bound is illustrated in \Cref{fig:bk vcg:large market} for the goal of achieving 99.99\% of the Bayesian optimal revenue. Remarkably:
\begin{itemize}
    \item In a balanced market ($\imbalanceratio = 1$), it suffices to add only 44.47\% (MHR) or 56.25\% ($0.5$-regular) of the original number of buyers.
    \item In a market with half as many units as buyers ($\imbalanceratio = 0.5$), just 4.35\% (MHR) or 12.5\% ($0.5$-regular) additional buyers are always sufficient.
\item By contrast, for regular distributions ($\Reglevel = 1$), one must add 100\% (balanced) or 50\% (half as many units as buyers) additional buyers to achieve the same guarantee.
\end{itemize}
These results highlight how stronger regularity assumptions---specifically, moving from regular to MHR distributions---dramatically reduce the competition complexity of the {\VCGAuction}.

\xhdr{Supply-limiting VCG Auction (\Cref{sec:bk supply limiting}).} Finally, we study a natural supply-limiting variant of the {\VCGAuction}, where the seller choice a supply $\supply\leq m$ and then implement the {\VCGAuction} to sell $\supply$ units. Importantly, the supply $\supply$ is chosen without the knowledge of the valuation distribution (i.e., in a prior-independent manner), to minimize the number of additional buyers for the worst-case distribution.

First, we show in \Cref{prop:bk supply limiting:worst case}, that for any choice of supply, the worst-case distribution still belongs to the class of truncated $\Reglevel$-generalized Pareto distributions (\Cref{def:truncated generalized Pareto distribution}). This results strictly generalize the worst-case characterization in \Cref{prop:bk vcg:mhr worst case,prop:bk vcg:worst case}, since the {\VCGAuction} is a special case of {\SLVCGAuction} with $\supply = m$. 

\begin{figure}
    \centering

\begin{tikzpicture}
\begin{axis}[
width=10cm,height=7cm,
xmin=0.0,xmax=1.05,
ymin=-0.015,ymax=1.05,
xtick distance=1,
ytick distance=1,
minor tick num=0,
axis line style=gray,
axis x line=middle,
axis y line=left,
tick style={black},
legend style={draw=none,fill=white,font=\small,at={(0.98,0.98)},anchor=north east},
xlabel={\footnotesize $\ccapproxratio$},
ylabel={\footnotesize asymptotic CC},
]

\def\a{1.0}
\def\ee{2.718281828} 

\pgfplotsset{
  solidStyle/.style={line width=1.3pt},
  dashedStyle/.style={line width=2pt, dashed, dash pattern=on 4pt off 3pt}
}

\addplot[black, solidStyle, domain=0:1, samples=400] {max(0, \a + x - 1)};

\addplot[black, dashedStyle, domain=0:1, samples=400] {max(0, \a * x + x - 1)};

\addplot[black!45, solidStyle, domain=0:1, samples=400]
  {max(0, \a*(0.25*(x/\a) + 1)^2 - 1)};

\addplot[black!45, dashedStyle, domain=0:1, samples=400]
  {max(0, \a*x*(0.25*(1/\a) + 1)^2 - 1)};

\addplot[black!20, solidStyle, domain=0:1, samples=400]
  {max(0, \a*exp(x/(\ee*\a)) - 1)};

\addplot[black!20, dashedStyle, domain=0:1, samples=400]
  {max(0, \a*x*exp(1/(\ee*\a)) - 1)};
\end{axis}
\end{tikzpicture}

 \caption{
The asymptotic competition complexity (CC) of the {\VCGAuction} (solid curve) and the asymptotic optimal competition complexity (CC) of the {\SLVCGAuction} (dashed curve) as a function of target approximation ratio $\ccapproxratio\in(0, 1)$ in the balanced market (i.e., $\imbalanceratio=1$). The black, dark-gray, and light-gray curves correspond to regular $\RegDists$, $0.5$-regular $\mathcal{F}_{\text{0.5-Reg}}$, and MHR distributions $\MHRDists$, respectively.}
\label{fig:VCG-vs-SLVCG}
\end{figure}

Next, we characterize the asymptotic competition complexity of the {\SLVCGAuction} in \Cref{thm:bk supply limiting:large market} for every supply-to-demand ratio $\imbalanceratio \in (0,1]$, target approximation ratio $\ccapproxratio \in (0,1)$, and $\Reglevel$-regularity level $\Reglevel \in [0,1]$. In particular, in this asymptotic large-market regime, we show that an optimal choice of supply is to sell approximately a $\ccapproxratio$-fraction of the total units.

The resulting bound is illustrated in \Cref{fig:VCG-vs-SLVCG} for balanced markets ($\imbalanceratio = 1$). Remarkably:
\begin{itemize}
    \item As the target approximation ratio $\ccapproxratio$ approaches 1, the asymptotic competition complexity of the {\SLVCGAuction} with optimally chosen supply $\supply$ converges to that of the standard {\VCGAuction}. In other words, when the goal is exact optimality ($\ccapproxratio = 1$), prior-independent supply limiting offers no advantage over the standard {\VCGAuction}.

    \item However, when $\ccapproxratio < 1$, the {\SLVCGAuction} strictly improves upon the standard {\VCGAuction}. For example, when $\ccapproxratio = 0.8$, the {\SLVCGAuction} 
it is sufficient to add only $15.6\%$, $25\%$, and $60\%$ additional buyers for $\MHRDists$, $\mathcal{F}_{\text{0.5-Reg}}$, and $\RegDists$, respectively---compared to $34.2\%$, $44\%$, and $80\%$ for the {\VCGAuction}. When $\ccapproxratio = 0.6$, it is sufficient for the {\SLVCGAuction} to only add a \emph{sublinear} number of additional buyers for both $\MHRDists$ and $\mathcal{F}_{\text{0.5-Reg}}$, and only $20\%$ for $\RegDists$, whereas it is still necessary for the {\VCGAuction} to add $24.7\%$, $32.3\%$, and $60\%$, respectively.
\end{itemize}
To understand the rationale behind the optimal supply choice and the above observations, consider the following trade-off inherent to the {\SLVCGAuction}:  
As the supply $\supply$ decreases, competition among buyers intensifies, improving performance on high-value-tail distributions (e.g., as in \Cref{example:intro:multi-unit bk:regular}). However, reducing $\supply$ simultaneously degrades performance on the opposite extreme---a single-point mass distribution at value 1---where revenue is proportional to the number of units sold. In fact, the optimal supply is precisely the smallest $\supply$ such that the target approximation ratio $\ccapproxratio$ is achieved under this deterministic (point-mass) instance.
\subsection{Our Techniques}
\label{sec:intro:techniques}
In most prior work on Bulow–Klemperer-style results for regular distributions \citep[e.g.,][]{BK-96,HR-09,RTY-20,FLR-19}, the analysis typically relies on charging arguments or virtual surplus comparisons, and an explicit characterization of  worst-case distributions is unnecessary. In contrast, to obtain tight or nearly tight characterizations of competition complexity under stronger distributional assumptions or alternative mechanisms, we require a more refined approach.

A key ingredient in our analysis is a \emph{worst-case reduction} (\Cref{prop:bk vcg:mhr worst case,prop:bk vcg:worst case,prop:bk supply limiting:worst case}): we show that to understand the competition complexity of the {\VCGAuction} (and its supply-limiting variant), it suffices to consider a very simple family of distributions---namely, truncated $\Reglevel$-generalized Pareto distributions (\Cref{def:truncated generalized Pareto distribution}). This family is parameterized by just a single parameter (the truncation point), which makes both numerical and analytical treatment tractable.

The main challenge in proving this reduction stems from a subtle technical issue: in multi-unit settings, the revenue gap between {\VCGAuction} and the {\BayesianOptimalMech} does \emph{not} behave monotonically with respect to standard stochastic dominance. In other words, a ``worse'' distribution (in the usual sense) does not always lead to a larger revenue gap, since the expected revenue of both mechanisms decreases. This breaks a common proof strategy used in the literature.

To overcome this, we shift our focus from the multiplicative revenue gap to an additive one, and uncover a clean \emph{structured three-interval property}: the behavior of the gap can be decomposed into three quantile intervals. While non-monotonicity appears in the middle interval, it does so in a consistent and controllable way. Leveraging the concavity of revenue curves under $\Reglevel$-regularity, we show that the worst case is always attained within the class of truncated $\Reglevel$-generalized Pareto distributions. The proof proceeds via a careful case analysis based on the location of the monopoly quantile relative to the three quantile intervals.

This reduction not only resolves the technical obstacle but also serves as a powerful tool: it allows us to compute exact competition complexity for small markets via numerical search. In addition, it also serves as a powerful analytical tool. For large markets, where the target approximation ratio $\ccapproxratio < 1$, it enables clean asymptotic results (\Cref{thm:bk vcg:large market,thm:bk supply limiting:large market}) via standard concentration inequalities and the slack introduced by $\ccapproxratio < 1$. However, when the goal is \emph{exact} optimality ($\ccapproxratio = 1$)---as in \Cref{thm:bk vcg:mhr finite market} for balanced markets under MHR distributions---the lack of slack makes the analysis substantially more delicate. 

We next explain the key challenge and technical approach behind establishing the uniform upper bound in \Cref{thm:bk vcg:mhr finite market}. Even after reducing the problem to the one-parameter family of truncated exponential distributions, a significant obstacle remains: the worst-case truncation point $\reserve \in [1,e]$ admits no closed-form expression and varies with the market size $n$ (\Cref{remark:worst truncation}). This rules out simple endpoint analyses.

To overcome this, we develop a refined four-part case analysis over the interval $[1,e]$, leveraging precise derivative estimates and concentration properties of order statistics (see \Cref{lem:VCG revenue small reserve,lem:VCG revenue middle reserve,lem:VCG revenue high reserve,lem:VCG revenue super high reserve,lem:VCG revenue e reserve}). This careful decomposition allows us to certify that the bound stated in the theorem suffices for all market size $n\geq 594$. The competition complexity for market size $n\leq 593$ is covered by the numerical computations.

\subsection{Related Work}

\xhdr{Bulow-Klemperer-style results.} 
Since the seminal work of \citet{BK-96}, a rich line of literature has sought to establish Bulow–Klemperer-style results in various settings. In single-parameter environments, \citet{HR-09,SS-13,FLR-19,FJ-24} study the competition complexity of the {\VCGAuction} when buyers’ valuations are drawn from non-identical regular or even weaker than regular distributions. \citet{BGG-20,CLMZ-24} consider two-sided markets and analyze the competition complexity of the trade-reduction mechanism for maximizing gains from trade. \citet{BCDV-22} investigate the competition complexity of (prior-dependent) dynamic pricing in the single-item setting under regular distributions.
Like one of the regimes studied in our work, these results in \citet{HR-09,SS-13,FLR-19,FJ-24,BCDV-22} focus on \emph{approximating} the Bayesian optimal revenue rather than fully outperforming it.

A closely related work is \citet{RTY-20}, which studies revenue guarantees under competition enhancement for regular distributions. They extend the multi-unit competition complexity result of \citet{BK-96} to general matching markets and also show that the {\SLVCGAuction} can achieve nontrivial revenue approximation \emph{without} adding extra buyers in such markets. In contrast, our work provides a comprehensive analysis of the competition complexity for both the {\VCGAuction} and the {\SLVCGAuction} under standard distributional assumptions that are \emph{stronger} than regularity---namely, $\Reglevel$-regularity and MHR.

There is also a growing body of work on competition complexity in multi-parameter environments, initiated by \citet{EFFTW-17}. For additive valuations, tight bounds on the competition complexity of the {\VCGAuction} have been established by \citet{BW-19,DRWX-24}. \citet{FFR-18,CS-21} study competition complexity for broader classes of auctions under regular distributions. Closer to our approach, \citet{LP-18,BCFLW-25} analyze the competition complexity of the {\VCGAuction} and the grand-bundle second-price auction for additive valuation functions under $\Reglevel$-regular and MHR distributions---paralleling our focus on stronger distributional assumptions.

\xhdr{Revenue approximation in the $m$-unit setting.}
There is also a body of literature on the multi-unit setting that bounds the revenue ratio between various simple mechanisms and the Bayesian optimal benchmark. The revenue guarantees of posted pricing mechanisms have been extensively studied \citep[e.g.,][]{HKS-07,CHMS-10,yan-11,DFK-16,FHL-20,JJLZ-21}. \citet{JJLZ-21} also analyze the revenue approximation achieved by the second-price auction with an anonymous reserve. Closely related to the multi-unit setting, \citet{CJK-20} study anonymous pricing mechanisms for buyers with concave valuations.

\section{Preliminaries}
\label{sec:prelim}

In this work, we study revenue maximization in a Bayesian multi-unit multi-buyer setting, aiming to match the optimal revenue by 
a prior-independent mechanism with some additional buyers.

\xhdr{Model.} 
A seller has $m$ identical units of a good to sell to $n\geq m$ unit-demand buyers. 
Each buyer $\indexi$ has a private value $\hisval{\indexi}\geq 0$ for receiving one unit, where the values are independently and identically (i.i.d.) drawn from a publicly known valuation distribution~$\buyerdist$, supported on non-negative numbers.  
We use $\buyerdist$ interchangeably to denote both the distribution and its cumulative distribution function (CDF), i.e., $\buyerdist(\val) = \Pr_{\variablex \sim \buyerdist}[\variablex < \val]$.  
We assume that $\buyerdist$ has a probability density function and denote its corresponding probability density function (PDF) by $\buyerpdf$. We also assume that there exists a price $\price \in \supp(\buyerdist)$ that maximizes the expected revenue $\price \cdot (1 - \buyercdf(\price))$ for the single-buyer problem.

\xhdr{Mechanisms.}
A selling \emph{mechanism} consists of an \emph{allocation rule} $\allocs(\bids)=\left(\alloci{1}{\bids},\ldots,\alloci{n}{\bids}\right)$ and a \emph{payment rule} $\prices(\bids)=\left(\pricei{1}{\bids},\ldots,\pricei{n}{\bids}\right)$, where $\alloci{\indexi}{\bids}\in[0,1]$ denotes the probability that buyer $\indexi$ receives one unit of the good when the reported bid profile is $\bids=(\bid_1,\bid_2,\ldots,\bid_n)$, and $\pricei{\indexi}{\bids}\in\reals_+$ denotes the payment charged to buyer $\indexi$. Feasibility relies on that at most $m$ units are allocated ex post, 
i.e., $\sum_{\indexi=1}^{n}\alloci{\indexi}{\bids}\le m$ for all $\bids$. Buyers aim to maximize their quasi-linear utility, which is $\alloci{\indexi}{\bids}\cdot \bid_i-\pricei{\indexi}{\bids}$ for buyer $i$ with allocation $\alloci{\indexi}{\bids}$ that is charged a payment of $\pricei{\indexi}{\bids}$.

A mechanism satisfies \emph{Bayesian incentive compatible (BIC)} if every buyer maximizes her interim utility (that is, her expected utility over the randomness of all other buyers' value realizations and randomness of the mechanism) by bidding her value truthfully when all other buyers bids truthfully; and satisfies \emph{dominant strategy incentive compatible (DSIC)} if every buyer maximizes her ex post utility regardless of all other buyers' reports. For mechanisms that are DSIC or BIC, we assume buyers report their private values truthfully, and thus interpret the allocation and payment rules $(\allocs, \prices)$ of a given mechanism as functions of the true valuation profile $\valvec = (\val_1, \dots, \val_n)$, mapping it to buyers' allocations and payments, respectively.

A DSIC/BIC mechanism satisfies \emph{interim individual rational (interim IR)} if every buyer's interim utility is non-negative for all her values; and satisfies \emph{ex post individual rational (ex post IR)} if every buyer's ex post utility is non-negative for all valuation profiles.

\xhdr{Classic distribution families.} 
In the mechanism design literature, the regularity and monotone hazard rate (MHR) conditions are two classic distributional assumptions that have been studied extensively:

\begin{itemize}
    \item \textbf{(Regularity)} A distribution $\buyerdist$ satisfies the \emph{regularity condition} if its \emph{virtual value} $
    \val-\frac{1-\buyercdf(\val)}{\buyerpdf(\val)}$ is non-decreasing in $\val$.
    \item \textbf{(MHR)} A distribution $\buyerdist$ satisfies the \emph{monotone hazard rate (MHR) condition} if its \emph{hazard rate}
$\frac{\buyerpdf(\val)}{1-\buyercdf(\val)}$
is non-decreasing in $\val$.
\end{itemize}
\noindent
We also recall the $\Reglevel$-regularity condition which interpolates between these two conditions in a parametric way.
The $\Reglevel$-regularity condition is also known as the \emph{$\alpha$-strongly regular condition} in the literature \citep{RR-14,RR-15}.
\begin{itemize}
    \item \textbf{($\Reglevel$-Regularity)} For any $\Reglevel\in[0,1]$, 
a distribution $\buyerdist$ satisfies the \emph{$\Reglevel$-regularity condition} if its \emph{$\Reglevel$-generalized virtual value} $
\Reglevel\cdot\val-\frac{1-\buyercdf(\val)}{\buyerpdf(\val)}$ is non-decreasing in $\val$.
\end{itemize}

\noindent By definition, it can be verified that the $\Reglevel$-regularity with $\Reglevel=1$ and $\Reglevel=0$ recover the classic regularity and MHR conditions, respectively. 
We also denote the classes of regular distributions, MHR distributions, and $\Reglevel$-regular distributions, by $\RegDists$, $\MHRDists$, and {$\lambdaDists$},  respectively.

\xhdr{Bayesian revenue maximization.}
The seller's objective is to maximize her expected \emph{revenue}, defined as the sum of payments collected from all buyers: $\expect[\valvec]{\sum_{\indexi\in[n]}\pricei{\indexi}{\valvec}}$. For ease of presentation, we let ${\sf RevX}_{m:n}(\buyerdist)$ denote the expected revenue of truthful (DSIC/BIC) mechanism ${\sf X}$ when selling $m$ units of an identical good to $n$ truthful buyers whose values are drawn i.i.d.\ from $\buyerdist$.
When the distribution is clear from the context, we omit $\buyerdist$ and simply write ${\sf RevX}_{m:n}$.

We recall Myerson's characterization of the Bayesian-optimal auction for selling $m$ identical units when buyer valuations are drawn i.i.d.\ from a regular distribution (Myerson's  {\BayesianOptimalMech} maximizes the revenue under the BIC and interim IR constraints).

\begin{lemma}[\citealp{mye-81}]
\label{lem:Bayesian optimal mechanism}
    Consider the setting of selling $m$ units of an identical item to $n\geq m$ buyers with values drawn i.i.d.\ from a regular distribution $\buyerdist$.
Fix a monopoly reserve $\optreserve \in \argmax\limits_{\price \geq 0} \, \price \cdot (1 - \buyerdist(\price))$. 
    The {\BayesianOptimalMech}  with monopoly reserve $\optreserve$ operates as follows: allocate one unit to each of the $m$ highest-valued buyers whose value is at least monopoly reserve $\optreserve$. Each allocated buyer pays the maximum of the monopoly reserve $\optreserve$ and the $(m+1)$-th highest value; all other buyers pay $0$.
\end{lemma}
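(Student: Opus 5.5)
The plan is to derive this from Myerson's single-parameter characterization: expected revenue equals expected virtual surplus. Write $\phi(\val)=\val-\frac{1-\buyercdf(\val)}{\buyerpdf(\val)}$ for the virtual value. For any BIC and interim IR mechanism $(\allocs,\prices)$, the interim allocation $X_\indexi(\val_\indexi)=\expect[\valvec_{-\indexi}]{\alloci{\indexi}{\valvec}}$ must be non-decreasing. The payment identity then gives
\[
\expect[\valvec]{\textstyle\sum_\indexi \pricei{\indexi}{\valvec}}=\expect[\valvec]{\textstyle\sum_\indexi \phi(\val_\indexi)\,\alloci{\indexi}{\valvec}}-\textstyle\sum_\indexi U_\indexi(0),
\]
and interim IR forces $U_\indexi(0)\ge 0$. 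I would derive this in the standard way: write the envelope formula $U_\indexi(\val)=U_\indexi(0)+\int_0^{\val}X_\indexi(t)\,\dd t$, take expectations, and swap the order of integration. This uses the assumption that $\buyerdist$ has a density.

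Next, maximize the virtual surplus pointwise in $\valvec$, subject to $\sum_\indexi\alloci{\indexi}{\valvec}\le m$. The pointwise optimum gives a unit to each of the (up to) $m$ buyers with the largest non-negative virtual values. Under regularity $\phi$ is non-decreasing, so these are the $m$ highest-valued buyers with $\phi(\val_\indexi)\ge 0$. This rule is monotone, so it is implementable, and with $U_\indexi(0)=0$ it is optimal among all BIC and interim IR mechanisms. Ties have measure zero because $\buyerdist$ is atomless.

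The main step is to identify the threshold $\{\phi\ge 0\}$ with $\{\val\ge \optreserve\}$ for the chosen monopoly reserve. I would pass to quantile space with the revenue curve $\revcurve(\quant)=\quant\cdot\buyercdf^{-1}(1-\quant)$. Here $\phi$ equals $\revcurve'(\quant)$ at $\quant=1-\buyercdf(\val)$, and regularity is equivalent to concavity of $\revcurve$. Any maximizer $\optreserve$ of $\price(1-\buyercdf(\price))$ corresponds to a maximizer of the concave $\revcurve$. So $\phi\ge 0$ for values above $\optreserve$ and $\phi\le 0$ below it. When the argmax is not unique, $\phi=0$ on the whole flat region, and including or excluding those buyers does not change the virtual surplus. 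Hence the rule ``serve the top $m$ buyers with $\val\ge\optreserve$'' is optimal for every fixed choice of $\optreserve$. I expect this step to be the main obstacle, since it requires care with flat parts of $\revcurve$ and with non-differentiability; I would handle it by working with the ironing-free concave revenue curve and its one-sided derivatives.

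Finally, the payments follow from the threshold characterization. The allocation is deterministic and monotone in each bid, so a winner's DSIC payment equals her critical bid: the smallest bid at which she still wins. That bid is $\max\{\optreserve,\,(m+1)\text{-th highest value}\}$, because she must clear the reserve and must beat the $m$-th highest among the others, which is the $(m+1)$-th highest overall. Losers pay $0$. This mechanism is DSIC and ex post IR, and its $U_\indexi(0)=0$, so its revenue attains the virtual-surplus upper bound. That completes the plan.
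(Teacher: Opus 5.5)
Your proposal is correct and is exactly the standard Myerson argument: the payment identity via the envelope formula, pointwise maximization of virtual surplus under the $m$-unit constraint, identification of $\{\phi\ge 0\}$ with $\{v\ge r^*\}$ through concavity of the revenue curve, and critical-bid payments. That is all the paper relies on here, since it states this lemma without proof and simply cites \citet{mye-81}.
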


Notably, the {\BayesianOptimalMech} is \emph{distribution-dependent}: implementing the monopoly reserve $\optreserve$ relies on the knowledge of the valuation distribution $\buyerdist$. 
Motivated by the Wilson doctrine \citep{wil-87}, it is therefore desirable to design mechanisms whose performance guarantees do not rely on such detailed prior information, which leads to the study of \emph{prior-independent mechanisms}. 
{One important prior-independent mechanism studied in the literature---and the main focus of our paper---is the {\VCGAuction}:}
\begin{definition}
\label{def:vcg auction}
In an $m$-unit $n$-buyer market with $m\leq n$, the {\VCGAuction} allocates one unit to each of the $m$ highest-value buyers, and charges each of them the $(m+1)$-highest value as the price. 
\end{definition}

The {\VCGAuction} is DSIC, ex post IR, and welfare-maximizing by construction. Remarkably, its implementation does not relies on any knowledge regarding the valuation distribution $\buyerdist$, and hence it is prior-independent. When the valuation distribution is regular, it is similar to the {\BayesianOptimalMech} and the only difference is that it does not have a prior-dependent monopoly reserve. 
While the {\VCGAuction} is welfare-maximizing, its revenue is in general strictly smaller than the revenue of the {\BayesianOptimalMech}.

\xhdr{The Bulow-Klemperer result and competition complexity.}
The seminal work of \citet{BK-96} shows that it is possible to obtain revenue that is as large as the Bayesian optimal revenue, by the {\VCGAuction} with some additional buyers.
Following their approach, we also use the {\VCGAuction} as our canonical prior-independent mechanism for this purpose.

\begin{theorem}[\citealp{BK-96}]
\label{thm:prelim:bk result}
    Consider the setting of selling $m$ units of an identical item to $n\geq m$ buyers with values drawn i.i.d.\ from a regular distribution $\buyerdist$. The {\VCGAuction} with $m$ additional buyers has revenue at least as large as 
the original {\BayesianOptimalMech}, i.e., 
    $\RevVCG_{m:n+m}(\buyerdist) \geq \RevOPT_{m:n}(\buyerdist)$. Moreover, there exists an instance with a regular distribution such that $m$ additional buyers are necessary for this to hold.
\end{theorem}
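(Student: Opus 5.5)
The proof follows the classical Bulow--Klemperer argument, adapted to $m$ units. We use the revenue-equivalence / virtual-surplus characterization: for any DSIC, ex post IR mechanism, expected revenue equals expected virtual surplus $\expect{\sum_i \phi(\val_i)\alloci{i}{\valvec}}$, where $\phi(\val)=\val-\frac{1-\buyercdf(\val)}{\buyerpdf(\val)}$. Regularity makes $\phi$ non-decreasing, so \Cref{lem:Bayesian optimal mechanism} is the virtual-surplus maximizer subject to selling at most $m$ units. The key idea is that the {\VCGAuction} on $n+m$ buyers is itself the revenue-optimal mechanism among all mechanisms for $n+m$ buyers that \emph{always sell all $m$ units}. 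This holds because, under the constraint of allocating exactly $m$ units, maximizing virtual surplus means giving the units to the $m$ highest virtual values. By monotonicity of $\phi$, these belong to the $m$ highest values, which is exactly the VCG allocation with VCG payments as the induced threshold payments.

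It therefore suffices to exhibit one mechanism on $n+m$ buyers that always sells all $m$ units and earns at least $\RevOPT_{m:n}$. Construct it as follows. Run the {\BayesianOptimalMech} of \Cref{lem:Bayesian optimal mechanism} on the first $n$ buyers. If it sells only $k<m$ units, give the remaining $m-k$ units for free, or more precisely at price $0$ / threshold, to $m-k$ of the $m$ extra buyers. This mechanism is DSIC and IR: the extra buyers' allocations do not depend on their own bids, so they pay $0$. It always allocates exactly $m$ units, since there are $m$ extra buyers. Its revenue equals $\RevOPT_{m:n}$. By the optimality claim above, $\RevVCG_{m:n+m}\ge \RevOPT_{m:n}$.

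The main obstacle is justifying the optimality claim rigorously, specifically handling negative virtual values. Among mechanisms forced to allocate all $m$ units, the virtual-surplus-maximizing allocation must assign units even to buyers with negative virtual values. Choosing the top $m$ virtual values remains optimal pointwise, and the resulting allocation is monotone in each bid, so it is implementable with threshold payments. Those threshold payments are exactly the $(m+1)$-th highest value, which identifies the mechanism with VCG. A point to check is that the constructed benchmark mechanism's expected revenue also equals its expected virtual surplus. The extra buyers receive an allocation independent of their own values, so their virtual-surplus contribution is $\expect{\phi(\val)}=0$, assuming $\val\cdot(1-\buyercdf(\val))\to 0$, which holds via the existence of a monopoly price. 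This makes the comparison consistent.

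For tightness, invoke \Cref{example:intro:multi-unit bk:regular}: with $m=n$, adding $n-1$ buyers gives VCG revenue below $n-\tfrac12<n-\tfrac13=\RevOPT$. Computing the VCG revenue there is the deferred calculation, done by bounding the expected $n$-th order statistic of $2n-1$ draws, or equivalently the expected virtual surplus of the top $n$ draws.
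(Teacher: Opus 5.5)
Your proposal is correct. It is exactly the classical Bulow--Klemperer argument that the paper relies on by citation for the first part. VCG on $n+m$ buyers is revenue-optimal among mechanisms that always sell all $m$ units, and running the optimal $n$-buyer mechanism while handing leftover units to the extra buyers is such a mechanism. For necessity you use the same \Cref{example:intro:multi-unit bk:regular} with the same order-statistic bound as \Cref{apx:intro example analysis}.

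One repair to your bookkeeping. The benchmark needs only revenue $\le$ virtual surplus, which individual rationality gives directly. Your claimed equality can fail, since $\mathbb{E}[\phi(\val)]=\underline{v}-\revcurve(0)$. Moreover, existence of a monopoly price does not force $\val(1-\buyercdf(\val))\to 0$ (e.g.\ the equal-revenue distribution). When $\revcurve(0)>0$, each buyer's revenue picks up an extra term equal to $\revcurve(0)$ times the probability that her top type is served. VCG also maximizes that term, because the highest bidder always wins.
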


Following the result of \citet{BK-96}, a body of literature has extended these ideas across a wide range of model settings in revenue-maximization. The notion of competition complexity was formally introduced in \citet{EFFTW-17}.
\begin{definition}[Competition complexity]
\label{def:competition complexity}
For any given $m,n\in\naturals$ ($m\leq n$), any $\ccapproxratio\in(0, 1]$, and any distribution class $\DistClass$, the \emph{competition complexity} $\ccomplexity\left(m,n,\DistClass,\ccapproxratio\right)$ of the {\VCGAuction} in an $m$-unit $n$-buyer market over the distribution class $\DistClass$ with targeted approximation ratio $\ccapproxratio$ is defined to be
\begin{align*}    
    \ccomplexity\left(m,n,\DistClass,\ccapproxratio\right) &\triangleq 
    \min\left\{k\in\naturals:
    \forall\buyercdf\in\DistClass,~
    \RevVCG_{m:n+k}\left(\buyerdist\right) \geq 
    \ccapproxratio\cdot \RevOPT_{m:n}\left(\buyerdist\right)
    \right\},
\end{align*}
If no finite $k$ satisfies the condition, the competition complexity is defined to be $\infty$.  
For the special case of a balanced market ($m = n$) with full optimality ($\ccapproxratio = 1$), we simplify the notation to $\ccomplexity(n,\DistClass) \triangleq \ccomplexity(n,n,\DistClass,1)$.
\end{definition}
In above definition, given any \emph{target approximation ratio} $\ccapproxratio\in(0, 1]$,  the competition complexity $\ccomplexity\left(m,n,\DistClass,\ccapproxratio\right)$ is the minimum number of buyers that suffices to add to the {\VCGAuction} in order to achieve revenue that is at least $\ccapproxratio$-fraction of revenue of the {\BayesianOptimalMech} in the original $m$-unit $n$-buyer market,
where buyers' values are drawn i.i.d.\ from any (adversarially chosen and unknown) distribution from the class of distributions $\DistClass$.  
As a sanity check, \Cref{thm:prelim:bk result} from \citet{BK-96} can be stated as 
\begin{align*}
\ccomplexity\left(m,n,\RegDists,1\right) = m
\end{align*}
In subsequent sections, we go beyond this setting by analyzing the competition complexity under stronger distributional assumptions (e.g., MHR and $\Reglevel$-regular distributions)\footnote{Without the regularity assumption, the competition complexity is known to be unbounded \citep[see, e.g.,][]{FJ-24}.} and/or weaker approximation targets (i.e., $\ccapproxratio < 1$), which are not addressed in \citet{BK-96}.

\xhdr{Revenue curves.} We follow the revenue-curve viewpoint of \citet{BR-89} to describe the expected revenue obtained by posting a take-it-or-leave-it price when buyers' values are drawn from valuation distribution $\buyerdist$.

\begin{definition}[Revenue curve, \citealp{BR-89}]
    Fix a valuation distribution $\buyerdist$. The \emph{revenue curve} $\revcurve:[0, 1]\rightarrow\reals_+$ is defined as
    \begin{align*}
        \revcurve(\quant)\triangleq \quant\cdot \buyercdf^{-1}(1-\quant)
        \quad 
        \text{for all $\quant\in(0, 1]$}~,
        \;\;
        \mbox{and}
        \;\;
        \revcurve(0)\triangleq \lim\nolimits_{\quant\to 0^+} \revcurve(\quant)
    \end{align*}
    where $\buyercdf^{-1}(1 - \quant)  \triangleq \sup\{\val:\buyercdf(\val) \leq 1 - \quant\}$.
\end{definition}
We define the \emph{monopoly revenue} as the largest revenue $\max_{\quant\in[0, 1]}\revcurve(\quant)$ and denote every quantile (resp.\ price) attaining this largest revenue as a \emph{monopoly quantile} $\optquant$ (resp.\ \emph{monopoly reserve $\optreserve$}).\footnote{When the distribution $\buyerdist$ is regular, its revenue curve is bounded and continuous, so the monopoly revenue and monopoly quantile are well-defined. When $\buyerdist$ is $\Reglevel$-regular with $\Reglevel \in [0,1)$, the monopoly quantile is not only well-defined but also unique and strictly positive \citep{SS-19}, which implies that the monopoly reserve is uniquely defined as well.
For regular distributions, it is possible that the monopoly quantile is zero. In this case, we adopt the convention that the corresponding monopoly reserve is $\infty$, and the monopoly revenue is given by $\revcurve(0)$.}
The regularity of a valuation distribution has the following nice geometric interpretation on the induced revenue curve. 
\begin{lemma}[\citealp{BR-89}]
\label{lem:concave revenue curve}
    A valuation distribution $\buyerdist$ of the buyer is regular if and only if the induced revenue curve $\revcurve$ is weakly concave.
\end{lemma}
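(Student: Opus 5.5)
The plan is to work directly in quantile space. Write $\val(\quant) = \buyercdf^{-1}(1-\quant)$, so that $\revcurve(\quant) = \quant\,\val(\quant)$. Since $\buyerdist$ has a density, we can differentiate:
\begin{align*}
\revcurve'(\quant) = \val(\quant) + \quant\,\val'(\quant), \qquad \val'(\quant) = -\frac{1}{\buyerpdf(\val(\quant))}.
\end{align*}
Using $\quant = 1-\buyercdf(\val(\quant))$, this gives
\begin{align*}
\revcurve'(\quant) = \val(\quant) - \frac{1-\buyercdf(\val(\quant))}{\buyerpdf(\val(\quant))} \triangleq \phi(\val(\quant)),
\end{align*}
which is the virtual value evaluated at the value of quantile $\quant$. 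So the derivative of the revenue curve is exactly the virtual value, and the lemma reduces to a monotonicity statement.

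\textbf{Regularity implies concavity.} The map $\quant\mapsto\val(\quant)$ is non-increasing. If $\phi$ is non-decreasing in $\val$, then $\revcurve'(\quant)=\phi(\val(\quant))$ is non-increasing in $\quant$. A continuous function on $[0,1]$ with a non-increasing derivative is concave. Continuity at $\quant=0$ follows from the definition of $\revcurve(0)$ as a limit.

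\textbf{Concavity implies regularity.} If $\revcurve$ is concave, then $\revcurve'$ is non-increasing in $\quant$ wherever it exists. Since $\val(\quant)$ is strictly decreasing on the interior of the support, every value $\val$ in the support corresponds to a unique quantile. Hence $\phi(\val)=\revcurve'(1-\buyercdf(\val))$ is non-decreasing in $\val$.

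\textbf{Main obstacle: regularity of the objects involved.} The density may vanish or be discontinuous, and the support may have gaps, so $\val(\quant)$ may fail to be differentiable. I would handle this by treating $\revcurve'$ as the right-derivative, which exists for concave functions, and by noting that on flat parts of $\buyercdf$ (support gaps) $\val(\quant)$ jumps. At such a jump, concavity is preserved, because the left and right derivatives compare in the correct direction when $\phi$ is monotone. Alternatively, one can restrict to the standard setting where $\buyerpdf>0$ on an interval support, which is the setting implicitly assumed for this classical fact of \citet{BR-89}.
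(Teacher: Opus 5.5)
The paper does not prove this lemma; it quotes it from \citet{BR-89}, so there is no proof to compare against. Your core argument is the standard proof and is correct: $R'(q)=\phi(v(q))$, together with the fact that $q\mapsto v(q)$ is a strictly decreasing bijection of $(0,1)$ onto the interior of the support. It is correct under your fallback assumption, which you should make the standing one: $F$ is differentiable on the interior of an interval support, with $F'=f>0$ there.

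This assumption is not cosmetic. A density is only determined almost everywhere, and both directions use the identity $R'(q)=\phi(v(q))$ at \emph{every} $q\in(0,1)$. The first direction needs it everywhere because an almost-everywhere non-increasing derivative does not imply concavity. The second needs it to read $\phi$ off $R'$. Passing to right-derivatives does not remove this requirement. The right-derivative of $R$ equals $\phi(v(q))$ only where $v$ is differentiable, and in the first direction you do not yet know it exists.

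Your ``main obstacle'' paragraph, however, contains a false claim. Suppose the support has a gap, i.e.\ $F$ is constant on $[a,b]$ with $0<F(a)<1$. With the paper's convention, $v(q_0)=b$ for $q_0=1-F(a)$, while $v(q)\to a$ as $q\downarrow q_0$. Hence $R(q_0)=q_0b$ but $\lim_{q\downarrow q_0}R(q)=q_0a<q_0b$. A concave function on $[0,1]$ is continuous on $(0,1)$, so $R$ is \emph{not} concave, and no comparison of one-sided derivatives can rescue this.

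The equivalence still holds, but only because both sides fail. Since $f=0$ on $(a,b)$, the virtual value is $-\infty$ there, while it is finite just below $a$, so $F$ is not regular either.

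The claim also fails if you had atoms in mind. Take an atom at $v_0$ with $q_1=\Pr[x>v_0]>0$. Then $R$ is linear with slope $v_0$ on the atom's quantile range. But its left-derivative at $q_1$ is $v_0-q_1/f(v_0^+)<v_0$ whenever $f(v_0^+)$ is finite, which violates concavity. Only an atom at the top of the support, as in the truncated distributions the paper uses later, is compatible with a concave revenue curve.

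So either replace that sentence with the ``both sides fail'' argument, or drop it and state the interval-support, positive-density assumption up front.
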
 
\noindent

\section{Balanced Markets with an MHR Distribution}
\label{sec:bk mhr}

As we illustrate in \Cref{example:intro:multi-unit bk:regular}, for every $n$, there exists an $n$-unit $n$-buyer (balanced) market with values sampled i.i.d.\ from some regular distribution, for which $n$ additional buyers are necessary for the {\VCGAuction} to outperform the {\BayesianOptimalMech}, i.e.,
a number equal to the market size (and, by \Cref{thm:prelim:bk result}, $n$ additional buyers are also sufficient).
Notably, while the distribution in \Cref{example:intro:multi-unit bk:regular} satisfies regularity, it does \emph{not} satisfy the MHR condition. This raises a natural question: \emph{when the valuation distribution is MHR rather than merely regular, is it sufficient to add significantly fewer than $n$ buyers?}

We answer this question affirmatively: when $n$ is large enough, it suffices to add fewer than half as many additional buyers. Specifically, in \Cref{sec:bk vcg mhr:worst case}, we first show that, rather than optimizing over the entire (infinite-dimensional) space of MHR distributions, the tight competition complexity can be determined by restricting attention to a specific subclass of worst-case MHR distributions that admits a simple closed-form characterization (\Cref{prop:bk vcg:mhr worst case}). 

This worst-case reduction serves as a powerful tool for subsequent analyses. We leverage it to numerically compute the (nearly) exact competition complexity for any given finite market size $n$. Moreover, we use the same characterization to analytically derive a uniform upper bound on the growth rate of the competition complexity as a function of $n$, as stated in \Cref{thm:bk vcg:mhr finite market} (\Cref{sec:bk vcg mhr:finite market}). We show that the growth rate is asymptotically at most $(e^{1/e} - 1) \cdot n \approx 0.4447\cdot n$, which is less than half the rate that is sufficient and necessary for regular distributions. We also show that this rate is asymptotically tight in the same theorem.

\subsection{Characterization of a Worst-Case Sub-class of Distributions}
\label{sec:bk vcg mhr:worst case}

As in much of the literature, a common approach to studying revenue approximation is to identify either a worst-case distribution or a succinct subclass of distributions that contains the worst case. Our analysis follows this high-level agenda.

In a balanced market ($m = n$), the {\BayesianOptimalMech} (see \Cref{lem:Bayesian optimal mechanism}) simplifies to posting the monopoly reserve to all $n$ buyers and allocating one unit to each buyer whose value exceeds this reserve. Consequently, the optimal revenue is simply $n$ times the monopoly revenue.
Combining this with the revenue monotonicity of the {\VCGAuction} (i.e., $\RevVCG_{n:n}(\buyerdist_1) \leq \RevVCG_{n:n}(\buyerdist_2)$ whenever $\buyerdist_1$ is first-order stochastically dominated by $\buyerdist_2$), we observe that a distribution can be worst-case for the competition complexity $\ccomplexity(n,\MHRDists)$ only if it does not strictly first-order stochastically dominates any other MHR distribution with the same monopoly revenue. (Otherwise, such a dominated distribution would yield the same optimal revenue but lower VCG revenue, making it worse for competition complexity.)
After normalizing the monopoly revenue to $1$ (which is without loss of generality for competition complexity analysis), we show that, for every market size $n$, the worst-case distribution belongs to the following class of truncated exponential distributions, a natural subclass of the MHR distributions.

\begin{definition}[Truncated exponential distribution]
\label{def:truncated exponential distribution}
    For any $\reserve\in[1, e]$, the \emph{$\reserve$-truncated exponential distribution $\distExpReserve$} has support $\supp(\distExpReserve) = [0,\reserve]$, and its CDF $\distExpReserve:\reals_+\rightarrow [0, 1]$ is defined as 
\begin{align*}
        \distExpReserve(\val) = \left\{
        \begin{array}{ll}
         1- \reserve^{-\frac{\val}{\reserve}}    & \val\in[0, \reserve] \\
         1    & \val \in(\reserve, \infty)
        \end{array}
        \right.
    \end{align*}
    The \emph{class of Truncated Exponential Distributions $\expDists$} is defined as  
    \begin{align*}
        \expDists\triangleq \left\{\distExpReserve:\reserve\in[1,e]\right\}.
    \end{align*}
\end{definition}

It can be verified (by its definition) that for every $\reserve\in[1,e]$, the $\reserve$-truncated exponential distribution is MHR, and thus $\expDists\subsetneq \MHRDists$.
The fact that the competition complexity of the class of MHR distributions is the same at the one for the subclass of truncated exponential distributions $\expDists$ is formalized by the following result.
As it is a special case of the more general result presented in \Cref{prop:bk vcg:worst case} (covering imbalanced markets, arbitrary target approximation ratios $\ccapproxratio$, and $\Reglevel$-regular distributions), 
we omit its proof (we prove the more general result of \Cref{prop:bk vcg:worst case} in \Cref{sec:bk vcg}).

\begin{proposition}[Characterization of a subclass of worst-case MHR distributions]
\label{prop:bk vcg:mhr worst case}
    The competition complexity of the {\VCGAuction} in the $n$-unit, $n$-buyer balanced market over the class of MHR distributions $\MHRDists$ coincides with competition complexity
over the subclass of truncated exponential distributions $\expDists\subsetneq \MHRDists$; that is,
    \begin{align*}
        \ccomplexity\left(n,\MHRDists\right) = \ccomplexity\left(n,\expDists\right).
    \end{align*}
\end{proposition}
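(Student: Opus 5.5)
Since $\expDists\subsetneq\MHRDists$, the inequality $\ccomplexity(n,\expDists)\le \ccomplexity(n,\MHRDists)$ is immediate. For the reverse inequality, I would fix $k$ and an arbitrary MHR distribution $\buyerdist$, and exhibit some $\distExpReserve\in\expDists$ that is at least as bad for $k$ additional buyers. Concretely, the goal is: if $\RevVCG_{n:n+k}(\buyerdist)<\RevOPT_{n:n}(\buyerdist)$, then $\RevVCG_{n:n+k}(\distExpReserve)<\RevOPT_{n:n}(\distExpReserve)$ for some $\reserve\in[1,e]$.

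\textbf{Step 1 (normalization).} Both revenues scale linearly with values, so I would normalize the monopoly revenue to $1$. In the balanced market, $\RevOPT_{n:n}(\buyerdist)=n$. Let $\optreserve$ be the monopoly reserve and $\optquant=1/\optreserve$ the monopoly quantile. For MHR distributions, $\optquant\ge 1/e$, so $\optreserve\in[1,e]$.

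\textbf{Step 2 (pointwise domination).} Set $\reserve=\optreserve$. I would show that $\distExpReserve$ is first-order stochastically dominated by $\buyerdist$ and has the same monopoly revenue $1$. Write $H(\val)=-\ln(1-\buyerdist(\val))$ for the cumulative hazard. MHR means $H$ is convex with $H(0)=0$. We know $H(\optreserve)=\ln\optreserve$, and the first-order condition at the monopoly reserve gives hazard rate $1/\optreserve$, so $H'(\optreserve)=1/\optreserve$.
\begin{itemize}
\item \emph{For $\val\le\optreserve$:} convexity with $H(0)=0$ gives $H(\val)\le (\val/\optreserve)H(\optreserve)=(\val/\optreserve)\ln\optreserve$. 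Hence $1-\buyerdist(\val)\ge \optreserve^{-\val/\optreserve}=1-\distExpReserve(\val)$.
\item \emph{For $\val>\optreserve$:} $\distExpReserve(\val)=1\ge\buyerdist(\val)$.
\end{itemize}
Thus $\buyerdist$ dominates $\distExpReserve$. For the monopoly revenue of $\distExpReserve$:
\begin{itemize}
\item the price $\val\le\reserve$ yields revenue $\val\reserve^{-\val/\reserve}$, which is maximized at $\val=\reserve/\ln\reserve\ge\reserve$ when $\reserve\le e$;
\item so the maximum over $[0,\reserve]$ is attained at $\val=\reserve$, where the revenue is $\reserve\cdot\reserve^{-1}=1$, using the atom of mass $1/\reserve$ at $\reserve$.
\end{itemize}
It remains to check that $\distExpReserve$ is MHR. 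The hazard rate is constant, $\ln\reserve/\reserve$, on $[0,\reserve)$, and the atom at $\reserve$ acts as an infinite hazard, which is consistent with monotonicity. Alternatively, this can be taken from the paper's remark following the definition.

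\textbf{Step 3 (conclude).} $\RevVCG_{n:n+k}$ is monotone under first-order stochastic dominance: the $(n+1)$-th highest of $n+k$ i.i.d. draws is stochastically monotone under a quantile coupling, and $n$ units are sold at that price. Therefore
\[
\RevVCG_{n:n+k}(\distExpReserve)\le\RevVCG_{n:n+k}(\buyerdist)<n=\RevOPT_{n:n}(\distExpReserve).
\]

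The main obstacle I expect is technical rigor at the boundary. First, $\distExpReserve$ has an atom, while the model assumes densities. I would handle this either by approximation (smoothing the atom and using continuity of both revenues) or by noting that the definitions extend verbatim. Second, I must justify the first-order condition $H'(\optreserve)=1/\optreserve$ when $H$ is only convex, not differentiable. This is not actually needed, because the chord bound in Step 2 uses only convexity and $H(0)=0$.
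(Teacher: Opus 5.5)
Your proof is correct and follows essentially the paper's route. After normalizing the monopoly revenue to $1$, $\RevOPT_{n:n}=n$ for both $\buyerdist$ and the $\optreserve$-truncated exponential distribution, which is first-order stochastically dominated by $\buyerdist$, so monotonicity of VCG revenue closes the argument; the paper gets this as the degenerate balanced case of \Cref{prop:bk vcg:worst case}, where $\osdensity_{n:n-1}$ is a point mass at $1$ and only Case~(i) of its single-crossing argument arises, and your chord bound on the convex cumulative hazard supplies the dominance step that the paper only asserts by construction.
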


\begin{remark}
\label{remark:worst truncation}
    One might wonder whether an even stronger characterization of the worst-case is possible---for example, whether the worst-case distribution always occurs at a boundary of the allowable truncation range (i.e., $\reserve = 1$ or $\reserve = e$) or whether the same truncation value $\reserve$ is worst for all market sizes~$n$. 
    We show that neither is true: the worst-case truncation $\reserve$ generally depends on the market size $n$. See further discussion in \Cref{apx:worst case truncation discussion}.
\end{remark}

Although the definition of competition complexity quantifies over all MHR distributions, the proposition above guarantees that it suffices to consider a much smaller subclass. Specifically, the class of truncated exponential distributions consists of distributions that admit a simple closed-form expression parameterized by a single scalar $\reserve \in [1, e]$.

Leveraging our characterization we can numerically compute the competition complexity $\ccomplexity(n, \MHRDists)$ for any given market size $n$ as follows: we pick some large $M$ and perform a grid search over a discretization of the truncation $\reserve \in \{ 1,\; 1 + {(e-1)}/{M},\; 1 + 2{(e-1)}/{M}, \dots, e \}$,
with discretization step size $(e-1)/M$. For each grid point, we numerically compute the number of additional buyers that is sufficient and necessary, and then bound the discretization error (via a Lipschitzness argument) to account for all intermediate values of $\reserve$. Using this approach, we determine the exact competition complexity for all $n \leq 593$ and report the results in  \Cref{sec:numerical experiment}.

In the next subsection, we further exploit our characterization to derive a uniform upper and lower bound on the competition complexity that holds for all market sizes $n \in\naturals$.

\subsection{Uniform Upper and Lower Bound on the Competition Complexity}
\label{sec:bk vcg mhr:finite market}

In this section, we provide a uniform and lower upper bound on the competition complexity for MHR distributions in the balanced-market setting ($m=n$). Our analysis holds for every finite market size $n \in \naturals$ and reveals a contrast with the regular case.

\begin{restatable}[Balanced market with MHR distributions]{theorem}{thmMHRLUB}
\label{thm:bk vcg:mhr finite market}
    For any $n\in\naturals$, the competition complexity 
    of the {\VCGAuction} for an $n$-unit $n$-buyer (balanced) market over the class of MHR distributions $\MHRDists$ 
    satisfies 
      \begin{align*}
        \left(e^{\frac{1}{e}} - 1\right) \cdot n
        \leq \ccomplexity\left(n,\MHRDists\right)
        \leq 
        \left\lceil\left(e^{\frac{1}{e}} - 1\right) \cdot n + 1.05 \ln n\right\rceil
\end{align*}
\end{restatable}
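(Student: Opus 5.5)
By \Cref{prop:bk vcg:mhr worst case}, it suffices to analyze the one-parameter family $\expDists$. For $\distExpReserve$ the monopoly revenue is $1$, because $\val\,\reserve^{-\val/\reserve}$ is maximized at $\val=\reserve/\ln\reserve\ge \reserve$ (as $\reserve\le e$), so on $[0,\reserve]$ the maximum is at the atom: $\reserve\cdot\reserve^{-1}=1$. Hence $\RevOPT_{n:n}=n$. The proof therefore reduces to comparing $\RevVCG_{n:n+k}(\distExpReserve)=n\,\expect{\val_{(n+1)}}$ against $n$, where $\val_{(n+1)}$ is the $(n+1)$-th highest of $n+k$ i.i.d.\ draws, i.e.\ the $k$-th lowest.

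It is convenient to work in quantile space. The $k$-th lowest value has quantile $U=1-\quant$ distributed as the $k$-th order statistic of $n+k$ uniforms, i.e.\ $U\sim\BetaDist{k}{n+1}$. The value is
\[
\val(U)=\min\{\reserve,\;\reserve\ln(1/(1-U))/\ln\reserve\},
\]
so the condition to verify is $\expect{\val(U)}\ge 1$.

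\textbf{Lower bound.} Take $\reserve=e$, the plain exponential distribution with rate $1/e$, truncated at $e$. Write $k=\kratio n$. Then $U$ concentrates at $\kratio/(1+\kratio)$, so $\expect{\val}\to \min\{e,\,e\ln(1+\kratio)\}$. Requiring this to be at least $1$ forces $\ln(1+\kratio)\ge 1/e$, i.e.\ $\kratio\ge e^{1/e}-1$. To make this hold non-asymptotically, I would use Jensen: $\val(U)$ is concave in $U$ up to the cap (since $-\ln(1-u)$ is convex, I would instead work with exact expectations of exponential order statistics). For untruncated exponentials, $\expect{\text{$k$-th lowest of } N}=e\sum_{j=N-k+1}^{N}1/j$, and truncation only lowers it. So
\[
\expect{\val}\le e\,(H_{n+k}-H_n)<e\ln\frac{n+k}{n},
\]
and $\expect{\val}\ge1$ requires $k>(e^{1/e}-1)n$. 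This gives the lower bound for every $n$ cleanly.

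\textbf{Upper bound.} Set $k=\lceil(e^{1/e}-1)n+1.05\ln n\rceil$. We must show $\expect{\val(U)}\ge1$ for all $\reserve\in[1,e]$. For $n\le593$ I rely on the numerical table of \Cref{sec:numerical experiment}; the analytic argument is for $n\ge594$.

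Write
\[
g_\reserve(U)=\frac{\reserve}{\ln\reserve}\,\ln\frac{1}{1-U}\ \text{ capped at } \reserve,
\]
and $\mu=k/(n+k+1)$ for the Beta mean. I would split $[1,e]$ into ranges, following the paper's four-part analysis.

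For $\reserve$ near $1$, the cap binds with high probability. The threshold quantile is $1-1/\reserve$, which is small, and $U\approx\mu\approx0.3$ far exceeds it, so $\expect{\val}\ge\reserve\Pr[U\ge 1-1/\reserve]$ plus the uncapped part. A Chernoff bound for the Beta/binomial tail handles this range.

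For $\reserve$ near $e$, I would use the exponential order statistic formula: the uncapped expectation is $\frac{\reserve}{\ln\reserve}(H_{n+k}-H_n)$, minus a truncation loss that is exponentially small in $n$. With $H_{n+k}-H_n\ge\ln\frac{n+k+1}{n+1}$, the slack $1.05\ln n$ in $k$ buys about $(1.05\ln n)/(n e^{1/e})$ in the log. This must absorb the $O(1/n)$ errors, which it does comfortably for large $n$.

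In middle ranges, the key fact is that $\reserve/\ln\reserve$ is decreasing on $[1,e]$ while the capped loss shrinks. I would bound
\[
\expect{\min(a X,\reserve)}\ge a\,\expect{X}-a\,\expect{(X-\reserve/a)^+},
\]
where $X=\ln(1/(1-U))$ and $a=\reserve/\ln\reserve$. Here $\expect{X}=H_{n+k}-H_n$, and the tail term is controlled by concentration of $X$ around $\ln(1+\kratio)$ with variance $O(1/n)$.

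\textbf{Main obstacle.} The hard regime is $\reserve$ slightly below $e$. There the cap $\reserve/a=\ln\reserve$ is only slightly above the concentration point $\ln(1+\kratio)\approx1/e$, so both the mean deficit and the truncation loss are of order $1/\sqrt n$-ish and must be balanced precisely. The worst $\reserve$ moves with $n$ (\Cref{remark:worst truncation}). I would treat this range by a second-order expansion in $\ln\reserve-1/e$ relative to the standard deviation $\sigma\asymp n^{-1/2}$, using a Gaussian-type tail bound on the Beta order statistic. The aim is to show that the net loss is at most $O(\ln n/n)$, covered by the $1.05\ln n$ extra buyers; tuning the constant $1.05$ and the threshold $594$ is where the delicate numerics enter.
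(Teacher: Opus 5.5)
Your reduction to $\expDists$, the value $\RevOPT_{n:n}=n$, and the lower bound are correct. The lower bound takes a slightly different route from the paper. The paper applies Jensen to the concave revenue curve, $\RevVCG_{n:n+k}(F_{(0,e)})=(n+k)\,\mathbb{E}[R_{(0,e)}(q)]\le(n+k)R_{(0,e)}\bigl(\tfrac{n}{n+k}\bigr)=en\ln\tfrac{n+k}{n}$. You instead use the R\'enyi formula and $\min\{e,X\}\le X$, which gives the slightly sharper $e(H_{n+k}-H_n)$. Both force $k>(e^{1/e}-1)n$. Your parenthetical is muddled: $v(U)$ is convex in $U$ below the cap, which is why Jensen fails in that parametrization; it does work in the revenue-curve parametrization.

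For the upper bound, your decomposition $\mathbb{E}[v]=a\,(H_{n+k}-H_n)-a\,\mathbb{E}[(X-\ln r)^+]$ is a sound and genuinely different route. Here $a=r/\ln r$ and $X=\ln\frac{1}{1-U}$ is the $k$-th smallest of $n+k$ standard exponentials. But the proposal does not carry this route out, and your ``main obstacle'' paragraph misidentifies what is left.

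\begin{itemize}
\item $X$ concentrates at $\ln(1+\theta)\approx 1/e$, while the cap sits at $X=\ln r$. For $r$ slightly below $e$ the cap is near $1$, a constant distance above the bulk, so the truncation term is negligible. Your own paragraph on $r$ near $e$ says exactly this. You seem to have conflated the quantile-space cap $1/r\approx 1/e$ with the $X$-space concentration point $1/e$.
\item The cap meets the bulk only near $r=e^{1/e}\approx 1.445$. There $\min\{a\,\mathbb{E}X,r\}\approx e^{1/e}$, which leaves $\Omega(1)$ slack against an $O(a\sigma)=O(n^{-1/2})$ loss.
\item So no regime requires balancing $n^{-1/2}$ effects against $\ln n/n$ slack. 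The only tight point is $r=e$, where $r/\ln r$ attains its minimum $e$. There the comparison is just $e(H_{n+k}-H_n)-1\approx 2\ln n/n$ against an $O(1/n)$ truncation term.
\end{itemize}

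The Gaussian second-order expansion you plan is aimed at a non-existent difficulty. Meanwhile, the estimate that actually settles $n\ge 594$ is missing.

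The argument closes along your own lines, in three ranges.

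\begin{itemize}
\item For $r\in[1,1.3]$: the bound $\mathbb{E}[v]\ge r\Pr[U\ge 1-1/r]\ge 1$ is equivalent to $\Pr[U<t]\le t$ for $t=1-1/r\le 0.231$. This follows from a binomial Chernoff bound, using $k\ge 2$ as $t\to 0$.
\item For $r\in[1.3,1.7]$: use $(X-\ln r)^+\le (X-\mathbb{E}X)^++(\mathbb{E}X-\ln r)^+$ and $\mathbb{E}(X-\mathbb{E}X)^+\le\sigma/2$. These give $\mathbb{E}[v]\ge\min\{a\,\mathbb{E}X,r\}-a\sigma/2\ge 1.17-0.06$. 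This uses $\mathbb{E}X\ge\ln\frac{n+k+1}{n+1}\ge 1/e$ and $\sigma^2=\sum_{j=n+1}^{n+k}j^{-2}\le\frac{k}{n(n+k)}$.
\item For $r\in[1.7,e]$: since $a\ge e$, we get $a\,\mathbb{E}X-1\ge e\ln\frac{n+k+1}{n+1}-1$. This is about $0.020$ at $n=594$ and of order $\ln n/n$ in general. Meanwhile $a\,\mathbb{E}(X-\ln r)^+\le a\sigma^2/(\ln 1.7-\mathbb{E}X)$, which is at most about $0.011$ at $n=594$ and of order $1/n$.
\end{itemize}

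Compare this with the paper. It signs the log-derivative of Lemma~\ref{lem:VCG revenue derivative} on $[1,5/4]$ and on $[2,e/(1+1/n)]$. It uses Hoeffding windows on $[5/4,2]$. Near $e$ it combines a crude Lipschitz bound with the harmonic-number value at $r=e$. Your route replaces all of the derivative analysis with the single fact that $r/\ln r\ge e$ on $(1,e]$. It also makes visible that the $1.05\ln n$ term is only needed as slack against $O(1/n)$ errors.
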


As the leading term of $(e^{1/e} - 1) \cdot n$ is the same for both the lower and upper bounds, 
our result provides, up to low-order terms, a tight finite-market closed-form expression for the competition complexity for balanced market with MHR distributions.
In stark contrast, the competition complexity under regular distributions is exactly $n$ (i.e., more than twice as large), highlighting the substantial benefit of the MHR assumption.

To prove the lower bound in \Cref{thm:bk vcg:mhr finite market}, we directly consider the $e$-truncated exponential distribution and establish an upper bound the revenue in the {\VCGAuction} (which gives us the lower bound on the competition complexity).

Unlike many revenue approximation results for MHR distributions, which admit relatively straightforward analyses, our competition complexity upper bound analysis is considerably rather complex. We conclude this section by outlining the proof idea for this upper bound. The formal proof of \Cref{thm:bk vcg:large market} is deferred to \Cref{apx:mhr finite market proof}.

\xhdr{Proof idea behind the upper bound in \Cref{thm:bk vcg:mhr finite market}.}
Leveraging the worst-case characterization in \Cref{prop:bk vcg:mhr worst case}, it suffices to upper bound $\ccomplexity(n,\expDists)$---the competition complexity over the subclass $\expDists \subsetneq \MHRDists$ of truncated exponential distributions---to obtain an upper bound on $\ccomplexity(n,\MHRDists)$, the competition complexity over all MHR distributions. Recall that $\expDists$ consists of the $\reserve$-truncated exponential distributions for $\reserve \in [1, e]$ (\Cref{def:truncated exponential distribution}).

Let $\addBuyers$ denote the upper bound on competition complexity stated in \Cref{thm:bk vcg:mhr finite market}.
With a slight abuse of notation, let $\RevVCG_{n:n+\addBuyers}(\reserve)$ denote the expected revenue of the {\VCGAuction} in an $n$-unit, $(n+\addBuyers)$-buyer market with $\reserve$-truncated exponential distribution. 
Since the original market is balanced, the revenue of the {\BayesianOptimalMech} simplifies to $\RevOPT_{n:n} = n$ for all $\reserve \in [1, e]$. Consequently, it suffices to analyze $\RevVCG_{n:n+\addBuyers}(\reserve)$ as a function of $\reserve$ and prove that
\begin{align*}
\RevVCG_{n:n+\addBuyers}(\reserve) \geq n \quad \text{for all } \reserve \in [1, e].
\end{align*}
As noted in \Cref{remark:worst truncation}, the main challenge lies in identifying the worst-case truncation parameter $\reserve \in [1, e]$ that minimizes $\RevVCG_{n:n+\addBuyers}(\reserve)$. This minimizer admits no closed-form expression and varies with the market size $n$.

To overcome this, our proof proceeds via a careful case analysis. For market sizes $n \leq 593$, we compute the exact competition complexity numerically (reported in \Cref{sec:numerical experiment}). It remains to handle the case $n \geq 594$. For such sufficiently large $n$, we apply concentration inequalities and partition the interval $[1, e]$ into four subintervals, analyzing each separately. 
Then:
\begin{enumerate}[leftmargin=*]
    \item[(i)] For any $\reserve \in [1, 5/4]$, we show that the derivative of $\RevVCG_{n:n+\addBuyers}(\reserve)$ by $\reserve$ is positive, and thus $\RevVCG_{n:n+\addBuyers}(\reserve)$ is increasing in $\reserve$ over this interval. Consequently, its minimum occurs at $\reserve = 1$, and we verify that $\RevVCG_{n:n+\addBuyers}(1) = n$ (\Cref{lem:VCG revenue small reserve}).
    \item[(ii)] For any $\reserve \in [5/4, 2]$, the analysis is relatively straightforward: the additive gap $\RevVCG_{n:n+\addBuyers}(\reserve) - n$ is lower bounded by an absolute positive constant. To formalize this, we construct a simple closed-form lower bound on $\RevVCG_{n:n+\addBuyers}(\reserve)$ and show that it exceeds $n$ in \Cref{lem:VCG revenue middle reserve}.

    \item[(iii)] For any $\reserve \in [2, e/(1+1/n)]$, we prove that the derivative of $\RevVCG_{n:n+\addBuyers}(\reserve)$ by $\reserve$ is negative, implying that $\RevVCG_{n:n+\addBuyers}(\reserve)$ is decreasing in $\reserve$ over this interval (\Cref{lem:VCG revenue high reserve}). Then it suffices to verify that $\RevVCG_{n:n+\addBuyers}(e/(1+1/n)) \geq n$,
which is handled in Case (iv) below.

    \item[(iv)] For any $\reserve \in [e/(1+1/n), e]$, we first upper bound the absolute value of the derivative of $\RevVCG_{n:n+\addBuyers}(\reserve)$. Since the length of the interval is bounded, this yields the inequality \\ $\RevVCG_{n:n+\addBuyers}(\reserve) \geq \RevVCG_{n:n+\addBuyers}(e) - 8$ (\Cref{lem:VCG revenue super high reserve}). We then show that $\RevVCG_{n:n+\addBuyers}(e) \geq n + 8$ in \Cref{lem:VCG revenue e reserve}.\footnote{While our theoretical analysis does not formally establish this (as it is unnecessary for our result), numerical experiments suggest that the minimum of $\RevVCG_{n:n+\addBuyers}(\reserve)$---and the corresponding minimizer $\reserve$---indeed lies in the interval $[e/(1+1/n), e]$.}
\end{enumerate}

\section{General Markets with a \texorpdfstring{$\Reglevel$-Regular}{Lambda-Regular} Distribution}
\label{sec:bk vcg}

In this section, we generalize our analysis in \Cref{sec:bk mhr} along three natural dimensions. 

First, we move beyond the MHR assumption to the broader class of $\Reglevel$-regular distributions, which smoothly interpolates between regularity ($\Reglevel = 1$) and MHR ($\Reglevel = 0$). This allows us to quantify precisely how stronger distributional assumptions reduce the number of additional buyers.

Second, we consider general (possibly imbalanced) markets with $n$ buyers and $m \leq n$ units, rather than restricting to the balanced case ($m = n$). This captures a wider range of practical settings---from scarce goods ($m \ll n$) to abundant supply ($m \approx n$).

Third, instead of requiring the {\VCGAuction} to fully match the optimal Bayesian revenue (obtaining a $\ccapproxratio$ approximation with $\ccapproxratio = 1$), we relax the goal to achieving a 
$\ccapproxratio$-fraction of it, i.e., for $\eps>0$, getting a fraction of only $\ccapproxratio=1-\eps< 1$. 
This shift is not only theoretically natural but also practically relevant: if a seller can recruit only a limited number of extra buyers---say, $0.2n$ instead of the $0.4447n$ (that is necessary for full optimality under MHR)---what revenue guarantee can they still expect? Our framework provides a complete answer: by inverting our competition complexity bounds, one can determine the best approximation ratio $\ccapproxratio$ achievable for any given number of additional buyers.

Together, these extensions yield a unified and fine-grained understanding of how market size, distributional strength, and revenue targets jointly shape the cost of prior independence, in the sense of the number of additional buyers that is sufficient and necessary for the {\VCGAuction} to beat the optimum in revenue.

Similar to \Cref{sec:bk mhr}, we first show that, rather than searching over the entire (infinite dimensional) space of $\Reglevel$-regular distributions, the tight competition complexity can be determined by considering a specific subclass of $\Reglevel$-regular distributions that contains the worst-case distribution and admits a simple closed-form characterization (\Cref{prop:bk vcg:worst case}). This ``worst-case'' characterization then serves as a powerful tool for subsequent analyses: we apply it to analytically derive the tight bound on the competition complexity in (possibly imbalanced) large markets, for any $\Reglevel\in [0,1]$ and any $\ccapproxratio\in (0,1)$ (\Cref{thm:bk vcg:large market}) in \Cref{sec:bk vcg:large market}.

\subsection{Characterization of a Worst-Case Sub-class of Distributions}
\label{sec:bk vcg:worst case}

In this section, we identify a subclass of $\Reglevel$-regular distributions that is worst-case for the competition complexity.

\begin{definition}[Truncated generalized Pareto distribution]
\label{def:truncated generalized Pareto distribution}
    For any $\Reglevel\in(0, 1]$ and $\reserve\in[1,(1-\Reglevel)^{-1/\Reglevel}]$, the \emph{$\reserve$-truncated $\Reglevel$-generalized Pareto distribution~$\distLambdaReserve$} has support $\supp(\distLambdaReserve) = [0,\reserve]$ and CDF $\distLambdaReserve:\reals_+\rightarrow [0, 1]$ 
    defined as follows\footnote{The classic generalized Pareto distribution has CDF $\buyerdist(\val) = 1 - (1+\Reglevel\val/\sigma)^{-1/\Reglevel}$, where $\sigma$ and $\Reglevel$ denote the scale and shape parameters, respectively. Our $\reserve$-truncated $\Reglevel$-generalized Pareto distribution~$\distLambdaReserve$ is obtained by setting the scale to $\Reglevel\reserve/(\reserve^{\Reglevel}- 1)$ and the shape to $\Reglevel$, and then collapsing all probability mass (of $1/\reserve$ in total) above $\reserve$ into an atom at the truncation point~$\reserve$.}
    \begin{align*}
        \distLambdaReserve(\val) = \left\{
        \begin{array}{ll}
         1-\left(\frac{\reserve^\Reglevel - 1}{\reserve}\cdot \val + 1\right)^{-\frac{1}{\Reglevel}}    & \val\in[0, \reserve] \\
         1    & \val \in(\reserve, \infty)
        \end{array}
        \right.
    \end{align*}
For any $\Reglevel \in(0, 1]$, the \emph{class of truncated $\Reglevel$-generalized Pareto distributions $\truncGPDists$} is defined as
\begin{align*}
        \truncGPDists\triangleq \left\{\distLambdaReserve:\reserve\in[1,(1-\Reglevel)^{-1/\Reglevel}]\right\}.
    \end{align*}
    For $\Reglevel = 0$, we define $\mathcal{F}_{(0)}^*\triangleq \expDists$ where $\expDists$ is the class of truncated exponential distributions (\Cref{def:truncated exponential distribution}).\footnote{Recall that for $\Reglevel = 0$ (i.e., the MHR case), $\distExpReserve$ with $\reserve \in [1, e]$ denotes the $\reserve$-truncated exponential distribution defined in \Cref{def:truncated exponential distribution}. Throughout the paper, any expression involving $\Reglevel = 0$ is interpreted as the limit of its counterpart for $\Reglevel \in (0, 1]$. For example,
$\lim_{\Reglevel \to 0} (1 - \Reglevel)^{-1/\Reglevel} = e$ and $\lim_{\Reglevel \to 0}(({\reserve^\Reglevel - 1})/{\reserve}\cdot \val + 1)^{-{1}/{\Reglevel}} = \reserve^{-\val/\reserve}$.} 
\end{definition}

It can be verified (by its definition) that for every $\Reglevel\in[0, 1]$ and $\reserve\in[1,e]$,  the $\reserve$-truncated $\Reglevel$-generalized Pareto distribution~$\distLambdaReserve$ is $\Reglevel$-regular, and thus $\truncGPDists\subsetneq \lambdaDists$.

The following proposition establishes that for every $\Reglevel\in[0, 1]$, $m\leq n$, and $\ccapproxratio\in(0, 1]$, the distribution that defines $\ccomplexity\left(m,n,\lambdaDists,\ccapproxratio\right)$, the competition complexity of all $\Reglevel$-regular distributions,
belongs to the class of truncated $\Reglevel$-generalized Pareto distributions $\truncGPDists$.
Further discussion of the intuition why this distribution class contains the worst case and the implications of this worst-case characterization, follows immediately afterward.

\begin{restatable}[Characterization of a subclass of worst-case $\Reglevel$-regular distributions]{proposition}{propVCGWorstDists}
\label{prop:bk vcg:worst case}
    For any $\Reglevel\in[0, 1]$, $m, n \in \naturals$ with $m\leq n$, and $\ccapproxratio\in(0, 1]$, the competition complexity of the {\VCGAuction} in the $m$-unit, $n$-buyer market over the class of $\Reglevel$-regular distributions $\lambdaDists$ coincides with that over the subclass of truncated $\Reglevel$-generalized Pareto distributions $\truncGPDists\subsetneq \lambdaDists$; that is,
    \begin{align*}
        \ccomplexity\left(m,n,\lambdaDists,\ccapproxratio\right) = \ccomplexity\left(m,n,\truncGPDists,\ccapproxratio\right).
    \end{align*}
\end{restatable}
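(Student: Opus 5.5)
We first reduce the multiplicative condition to an additive one. Both $\RevVCG_{m:n+k}(\buyerdist)$ and $\RevOPT_{m:n}(\buyerdist)$ scale linearly when values are rescaled, and $\lambda$-regularity is preserved by rescaling. So we may normalize the monopoly revenue to $1$. The inclusion $\truncGPDists\subseteq\lambdaDists$ immediately gives $\ccomplexity(m,n,\truncGPDists,\ccapproxratio)\le\ccomplexity(m,n,\lambdaDists,\ccapproxratio)$. For the reverse inequality it suffices to prove the following: for every $k$ and every normalized $\buyerdist\in\lambdaDists$ there is some $\distLambdaReserve$ with
\[
\RevVCG_{m:n+k}(\distLambdaReserve)-\ccapproxratio\cdot\RevOPT_{m:n}(\distLambdaReserve)\;\le\;\RevVCG_{m:n+k}(\buyerdist)-\ccapproxratio\cdot\RevOPT_{m:n}(\buyerdist).
\]
The natural candidate is $\reserve=1/\optquant$, where $\optquant$ is the monopoly quantile of $\buyerdist$; it satisfies $\optquant\ge(1-\Reglevel)^{1/\Reglevel}$ by the positivity result for $\Reglevel$-regular distributions cited in the preliminaries.

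Next we express both revenues linearly in the revenue curve $\revcurve$. Using revenue equivalence (virtual surplus equals $\revcurve'$ in quantile space), we write
\[
\RevVCG_{m:N}=N\int_0^1 \revcurve'(\quant)\,w_{m,N}(\quant)\,\dd\quant ,\qquad
\RevOPT_{m:n}=n\int_0^{\optquant}\revcurve'(\quant)\,w_{m,n}(\quant)\,\dd\quant ,
\]
where $w_{m,N}(\quant)$ is the probability that at most $m-1$ of $N-1$ other buyers have quantile below $\quant$. Integrating by parts turns the additive gap into
\[
G(\revcurve)=\int_0^1 \revcurve(\quant)\,\kappa(\quant)\,\dd\quant+(\text{boundary terms at }\optquant),
\]
with an explicit kernel $\kappa$ built from the binomial densities $-w'_{m,n+k}$ and $-w'_{m,n}$; the kernel has a jump at $\optquant$. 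Since $\revcurve(\optquant)=1$ is fixed, the boundary terms are constant across the competitors we compare.

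We then use two extremality facts for $\Reglevel$-regular curves with monopoly point $(\optquant,1)$. On $[0,\optquant]$ concavity together with $\revcurve(0)\ge 0$ and $\revcurve\le 1$ gives $\quant/\optquant\le\revcurve(\quant)\le 1$; the lower envelope $\quant\reserve$ is exactly the atom at $\reserve$ of $\distLambdaReserve$. On $[\optquant,1]$ the monotonicity of $\Reglevel\val-\frac{1-\buyercdf(\val)}{\buyerpdf(\val)}$, read as a differential inequality in quantile space and compared with the ODE solved by the generalized Pareto curve from $(\optquant,1)$ with zero slope, gives $\revcurve\ge\revcurve_{(\reserve,\Reglevel)}$ pointwise, with $\revcurve\le 1$ above. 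In short, $\distLambdaReserve$ is the pointwise (first-order-stochastic-dominance) minimal element among $\Reglevel$-regular distributions with the given monopoly point.

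The main step is to show that replacing $\revcurve$ by this pointwise-minimal curve does not increase $G$. This would be immediate if $\kappa\ge 0$ everywhere, but it is not, which is the non-monotonicity of the gap. So we establish the sign structure of $\kappa$ via unimodality and ratio properties of the Beta/binomial densities: it has a three-interval pattern in $\quant$, with the negative part confined to a middle interval located around the quantiles where the $m$-th order statistics of $n$ and of $n+k$ samples concentrate. We then split into cases according to whether $\optquant$ lies left of, inside, or right of that middle interval.

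\begin{itemize}
\item Where $\kappa\ge 0$ we lower $\revcurve$ to the generalized Pareto or linear envelope.
\item On the negative middle piece we exploit concavity (in the $\Reglevel$-transformed coordinate): the mass of the positive regions dominates. For this we use a single-crossing or Chebyshev-type argument. Since $\revcurve-\revcurve_{(\reserve,\Reglevel)}$ is nonnegative, vanishes at $\optquant$, and is monotone or concave-like on each side, while $\kappa$ changes sign at most twice with the right orientation, we show $\int(\revcurve-\revcurve_{(\reserve,\Reglevel)})\kappa\ge 0$.
\end{itemize}

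If for some position of $\optquant$ this fails with truncation $1/\optquant$, we instead compare with a truncated generalized Pareto distribution having a different truncation point, shifting $\reserve$. We exploit that both revenues vary continuously in $\reserve$, and that $G$ of the family, viewed as a function of $\reserve$, is what is ultimately minimized.

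The main obstacle is this middle case: proving the precise sign pattern of $\kappa$ for all $m\le n$, $k$, $\ccapproxratio$, and controlling the negative region by the $\Reglevel$-concavity inequality. I would first try to prove the required Chebyshev-type inequality directly from the log-concavity of the Beta densities involved.
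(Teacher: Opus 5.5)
Your reduction to an additive comparison against the truncation $r=1/q^*$ is fine, and the additive inequality you aim for is actually true. The integration by parts through $R'$ is unnecessary: each buyer faces a random price whose quantile has density $\eta_{m:N-1}$, so $\RevVCG_{m:N}=N\int_0^1 R\,\eta_{m:N-1}\,\mathrm{d}q$ is already linear in $R$.

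\textbf{The gap} is the step you yourself flag, $\int (R-R_{(\lambda,r)})\,\kappa\ge 0$, and the tools you propose for it do not suffice. On $[q^*,1]$ the kernel is $(n+k)\eta_{m:n+k-1}\ge 0$, so nothing needs doing there. On $[0,q^*]$ the kernel is $-\psi$, with $\psi=\Gamma n\,\eta_{m:n-1}-(n+k)\,\eta_{m:n+k-1}$. This $\psi$ is negative and then positive, crossing once at some $q^\dagger$, because the density ratio is proportional to $(1-q)^k$. Meanwhile $D=R-R_{(\lambda,r)}=R-q/q^*$ is concave and nonnegative with $D(q^*)=0$, and you need $\int_0^{q^*}D\,\psi\le 0$.

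An argument using only this sign pattern and the shape of $D$ cannot give that. Tents $D$ peaked at $a\uparrow q^*$ are admissible perturbations already for regular distributions, and after normalizing by their height they converge to $q/q^*$. So the inequality forces the quantitative first-moment bound $\int_0^{q^*}q\,\psi(q)\,\mathrm{d}q\le 0$, which you never identify. That bound does hold: $Nq\,\eta_{m:N-1}(q)$ is $m$ times the $\mathrm{Beta}(m+1,N-m)$ density, these densities are stochastically decreasing in $N$, and $\Gamma\le 1$. With it the additive claim follows quickly. Concavity and $D(0)\ge 0$ give $D(q)\ge \frac{D(q^\dagger)}{q^\dagger}q$ on $[0,q^\dagger]$ and $D(q)\le \frac{D(q^\dagger)}{q^\dagger}q$ on $[q^\dagger,q^*]$, hence $\int_0^{q^*}D\psi\le \frac{D(q^\dagger)}{q^\dagger}\int_0^{q^*}q\,\psi\le 0$. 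As written, though, your central inequality is unproven, log-concavity of Beta densities is not the missing ingredient, and the fallback of shifting the truncation point is both unsupported and unnecessary.

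\textbf{How the paper avoids this.} The paper needs no moment condition, because only the sign of the gap matters, not its size. The case $q^*\le q^\dagger$ is immediate from first-order dominance. In the case $q^*\ge q^\dagger$, it compares $R_{(\lambda,r^*)}$ not with $R'$ but with the rescaled curve $\frac{r^*}{v^\dagger}R'$, where $v^\dagger$ is the value of $F'$ at quantile $q^\dagger$. Since $R'(q)/q$ is nonincreasing, $R_{(\lambda,r^*)}(q)=r^*q$ lies below this curve on $[0,q^\dagger]$ and above it on $[q^\dagger,q^*]$, exactly matching the sign of $\psi$. The tail term on $[q^*,1]$ is nonnegative and gets multiplied by $v^\dagger/r^*\ge 1$. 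This yields $\Gamma\RevOPT_{m:n}(F_{(\lambda,r^*)})-\RevVCG_{m:n+k}(F_{(\lambda,r^*)})\ge \frac{r^*}{v^\dagger}\bigl(\Gamma\RevOPT_{m:n}(F')-\RevVCG_{m:n+k}(F')\bigr)>0$. So the ``middle interval'' is $[q^\dagger,q^*]$ itself, not a fixed window where order statistics concentrate, and the case split is simply $q^*\le q^\dagger$ versus $q^*\ge q^\dagger$.

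\textbf{A smaller error.} The extremal curve on $[q^*,1]$ is not the zero-slope solution starting from $(q^*,1)$. Its right slope there, $\frac{r}{r^\lambda-1}\bigl((1-\lambda)r^\lambda-1\bigr)$, is negative unless $r=(1-\lambda)^{-1/\lambda}$. It is the constant-$\lambda$-virtual-value curve through $(q^*,1)$ and $(1,0)$. A zero-slope comparison curve would lie above it and so would not give the pointwise lower bound.
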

Similar to the worst-case characterization illustrated in \Cref{sec:bk mhr}, 
although the definition of competition complexity quantifies over all $\Reglevel$-regular distributions, the proposition above guarantees that it suffices to consider a much smaller subclass. Specifically, the class of truncated $\Reglevel$-generalized Pareto distributions $\truncGPDists$ consists of distributions parameterized by a single scalar $\reserve$, which ranges over the bounded interval $[1, (1-\Reglevel)^{-1/\Reglevel}]$ when $0< \Reglevel < 1$ (and interval $[1,e]$ when $\Reglevel = 0$).
This characterization dramatically simplifies both theoretical and numerical analysis of the competition complexity, as we leverage in the subsequent sections.

We now provide some intuition behind this worst-case distribution class and sketch the key ideas and challenges in 
the proof of \Cref{prop:bk vcg:worst case}. (The formal proof can be found in \Cref{apx:propVCGWorstDists}.)
By construction, for every $\reserve \in [1, e]$, the $\reserve$-truncated $\Reglevel$-generalized Pareto distribution~$\distLambdaReserve$ has monopoly reserve $\optreserve = \reserve$, monopoly quantile $\optquant = 1/\reserve$, and consequently monopoly revenue equal to 1. Moreover, for any $\Reglevel$-regular distribution $\buyerdist$ with monopoly reserve $\reserve$ and monopoly quantile $1/\reserve$,\footnote{It is known that, for any $\Reglevel$-regular distribution with monopoly revenue normalized to one, the monopoly reserve must lie in the interval $[1, (1-\Reglevel)^{-1/\Reglevel}]$ \citep{SS-19}.} the $\reserve$-truncated $\Reglevel$-generalized Pareto distribution~$\distLambdaReserve$ is first-order stochastically dominated (FOSD) by $\buyerdist$. In other words, after normalizing monopoly revenue to $1$ (which is without loss of generality in competition complexity analysis), the class $\truncGPDists$ can be viewed as the smallest distributions---in the FOSD sense---among all $\Reglevel$-regular distributions.

Since the $\reserve$-truncated $\Reglevel$-generalized Pareto distribution~$\distLambdaReserve$ is first-order stochastically dominated by every other $\Reglevel$-regular distribution with the same monopoly reserve and monopoly quantile, a natural approach to establishing its worst-case nature would be to show monotonicity of the \emph{multiplicative revenue gap} in the FOSD order.
Specifically, one might hope that the ratio
\begin{align*}
    \frac{\RevVCG_{m:n+k}(\buyerdist)}{\RevOPT_{m:n}(\buyerdist)}
\end{align*}
is non-decreasing as $\buyerdist$ becomes larger in the FOSD sense---while keeping the monopoly reserve and quantile fixed.
Unfortunately, this monotonicity does not hold in general when $m < n$. (It does hold in the special case of a balanced market, where $n = m$.)
Indeed, one can construct $\Reglevel$-regular distributions $\buyerdist_1$ and $\buyerdist_2$ such that $\buyerdist_1$ is FOSD-dominated by $\buyerdist_2$, yet the multiplicative revenue gap is larger under $\buyerdist_1$ (see such an example in \Cref{sec:FOSD monotonicity discussion})
The underlying reason is that, although both the {\VCGAuction} and the {\BayesianOptimalMech} share similar structural forms, the addition of $k$ extra buyers alters the distribution of the $(m+1)$-th highest value, which affects the revenues of both mechanisms in non-monotonic ways.

To overcome this obstacle, we shift our focus from the multiplicative revenue gap to the \emph{additive revenue gap}:
\begin{align*}
    \RevVCG_{m:n+k}(\buyerdist) - \ccapproxratio \cdot \RevOPT_{m:n}(\buyerdist).
\end{align*}
This shift is not merely a technical convenience---it is \emph{without loss of generality} for the purpose of identifying worst-case distributions. Recall that the competition complexity is defined as the smallest $k$ such that this additive gap is non-negative for \emph{all} distributions in the class. Thus, any distribution that minimizes the additive gap is also a candidate for maximizing the required $k$, making it sufficient to analyze this additive form.

This change of perspective reveals a crucial \emph{three-interval structure} in the behavior of the additive gap. Specifically, there exist quantile thresholds $\quant\primed \leq \quant\doubleprimed$ in $[0,1]$ such that the following holds. Consider any two $\Reglevel$-regular distributions $\buyerdist_1$ and $\buyerdist_2$ that share the same monopoly reserve and monopoly quantile, with $\buyerdist_1$ first-order stochastically dominated by $\buyerdist_2$. Suppose further that their CDFs differ only on values corresponding to a single quantile interval. Then:
\begin{itemize}
    \item If the CDFs differ only on $[0, \quant\primed]$, the additive revenue gap is \emph{smaller} under $\buyerdist_1$;
    \item If they differ only on $[\quant\primed, \quant\doubleprimed]$, the gap is \emph{larger} under $\buyerdist_1$;
    \item If they differ only on $[\quant\doubleprimed, 1]$, the gap is again \emph{smaller} under $\buyerdist_1$.
\end{itemize}

In other words, the failure of monotonicity is confined exclusively to the middle interval $[\quant\primed, \quant\doubleprimed]$---and even there, it behaves in a consistent and analyzable direction: the additive gap \emph{increases} under the FOSD-dominated distribution. (In the special case of a balanced market ($n = m$), the middle interval degenerates, i.e., $\quant\primed = \quant\doubleprimed$, and full monotonicity is restored.)

Leveraging the concavity of revenue curves under $\Reglevel$-regularity, we show that this structure allows us to ``compensate'' for the non-monotonicity in the middle interval using the monotonicity in the lower interval $[0, \quant\primed]$. This enables a precise comparison between an arbitrary $\Reglevel$-regular distribution and its truncated generalized Pareto counterpart, ultimately proving that the worst case always lies within our parametric subclass.

\subsection{Tight Competition Complexity for Asymptotically 
Large Markets}
\label{sec:bk vcg:large market}

In the previous section, we characterize the worst-case distribution for the competition complexity in \Cref{prop:bk vcg:worst case}. Although this characterization---combined with computer-aided numerical search (see \Cref{sec:numerical experiment} for an example)---is sufficient to determine the competition complexity for markets with a fixed, constant number of buyers, it does not yield a closed-form expression for the competition complexity as a function of the market parameters: the number of buyers $n$, the number of units $m$, the regularity level $\Reglevel$ in $\Reglevel$-regular distributions, and the target approximation ratio $\ccapproxratio$.

In this section, we shed light on how these parameters influence the competition complexity of the {\VCGAuction}. To ensure a tractable analysis and obtain clean asymptotic results, we consider the following large-market regime: let $\imbalanceratio \in [0, 1]$ denote the \emph{supply-to-demand ratio}, i.e., the asymptotic fraction of units per buyer. We characterize the competition complexity in the limit as the number of buyers $n \to \infty$, with the number of units set to $m = \lceil \imbalanceratio \cdot n \rceil$. Additionally, we assume that the target approximation ratio $\ccapproxratio$ is an absolute constant strictly less than one.\footnote{Focusing on $\ccapproxratio < 1$ and taking $n \to \infty$ significantly simplifies the analysis. Recall that the non-asymptotic result for the balanced market with MHR distributions and $\ccapproxratio = 1$ is given in \Cref{sec:bk mhr}; both the result and its analysis are substantially more involved than their asymptotic counterparts.}

Specifically, with a slight abuse of notation, we define the \emph{asymptotic competition complexity} with supply-to-demand ratio $\imbalanceratio$, over the class of $\Reglevel$-regular distributions $\lambdaDists$ and target approximation ratio $\ccapproxratio$, as
\begin{align*}
    \ccomplexityInfty(\imbalanceratio,\lambdaDists,\ccapproxratio)
    \triangleq 
    \lim_{n\rightarrow\infty} ~
        \frac{\ccomplexity\left(\lceil \imbalanceratio \cdot n \rceil,n,\lambdaDists,\ccapproxratio\right)}{n}.
\end{align*}
We remark that this definition normalizes the competition complexity by the number of buyers $n$. By definition, $\ccomplexityInfty(\imbalanceratio,\lambdaDists,\ccapproxratio) = 0$ implies that the sufficient and necessary number of additional buyers is sublinear in $n$, whereas a strictly positive value $\ccomplexityInfty(\imbalanceratio,\lambdaDists,\ccapproxratio) > 0$ implies that the leading term in the number of additional buyers is $\ccomplexityInfty(\imbalanceratio,\lambdaDists,\ccapproxratio) \cdot n$. 

As a sanity check, \Cref{thm:bk vcg:mhr finite market} implies that 
$
\ccomplexityInfty(1,\MHRDists,1) \leq e^{1/e} - 1$
for the balanced market (i.e., $\imbalanceratio = 1$) under MHR distributions. Below, we provide a comprehensive closed-form characterization of this asymptotic quantity. Its formal proof can be found in \Cref{apx:thmVCGLargeMarket}.

\begin{restatable}[Competition complexity in large market]{theorem}{thmVCGLargeMarket}
\label{thm:bk vcg:large market}
    For any $\Reglevel\in[0, 1]$, $\imbalanceratio\in(0, 1]$, and $\ccapproxratio\in(0, 1)$, 
the asymptotic competition complexity $\ccomplexityInfty(\imbalanceratio,\lambdaDists,\ccapproxratio)$ of the {\VCGAuction} with supply-to-demand ratio $\imbalanceratio$, over the class of $\Reglevel$-regular distributions $\lambdaDists$ and target approximation ratio $\ccapproxratio$ satisfies\footnote{We define operator $\plus{\cdot}\triangleq \max\{\cdot,0\}$.}
    \begin{align*}
        &\ccomplexityInfty(\imbalanceratio,\lambdaDists,\ccapproxratio)
        =
        \plus{
        \imbalanceratio {\left(\frac{\Reglevel\cdot\ccapproxratio}{\imbalanceratio}{\left(1-\Reglevel\right)}^{\frac{1-\Reglevel}{\Reglevel}}+1 \right)}^{\frac{1}{\Reglevel}}-1
        }
        \cdot \indicator{\imbalanceratio\geq (1-\Reglevel)^{\frac{1}{\Reglevel}}}
    \intertext{In the special cases of MHR distributions ($\Reglevel = 0$) and regular distributions ($\Reglevel = 1$), 
the asymptotic competition complexity simplifies}
        &
        \ccomplexityInfty(\imbalanceratio,\MHRDists,\ccapproxratio)
        =
        \plus{
        \imbalanceratio\cdot e^{\frac{\ccapproxratio}{e\imbalanceratio}} - 1}
        \cdot \indicator{\imbalanceratio\geq \frac{1}{e}}
        \\
        &
        \ccomplexityInfty(\imbalanceratio,\RegDists,\ccapproxratio)
        =
        \plus{
        \ccapproxratio + \imbalanceratio - 1 
        }
    \end{align*}
\end{restatable}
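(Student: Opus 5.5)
By \Cref{prop:bk vcg:worst case}, it suffices to compute the asymptotic competition complexity over the one-parameter family $\truncGPDists=\{\distLambdaReserve\}$. Each member has monopoly revenue $1$, monopoly reserve $\reserve$ and monopoly quantile $1/\reserve$. Parametrize the number of added buyers as $k=\beta n$. We then reduce both revenues to deterministic functionals of the revenue curve $\revcurve_\reserve$, using concentration of order statistics. The slack from $\ccapproxratio<1$ absorbs all $o(n)$ errors, which is why we assume $\ccapproxratio<1$.

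\emph{Step 1: fluid approximation of the two revenues.}
\begin{itemize}
\item For the optimum: $\RevOPT_{m:n}/n \to \max_{\quant\le\imbalanceratio}\revcurve_\reserve(\quant)$. This equals $1$ if $1/\reserve\le\imbalanceratio$, and $\revcurve_\reserve(\imbalanceratio)$ otherwise. The reason is that the number of buyers above any fixed price concentrates at $n$ times its quantile, with $O(\sqrt n)$ fluctuations.
\item For VCG with $n+\beta n$ buyers: the $(m+1)$-th highest value concentrates at the value of quantile $\quant_\beta\triangleq \imbalanceratio/(1+\beta)$. Hence $\RevVCG_{m:n+\beta n}/n \to \imbalanceratio\,\buyercdf^{-1}(1-\quant_\beta) = (1+\beta)\revcurve_\reserve(\quant_\beta)$.
\end{itemize}
I would prove both limits with Chernoff/DKW bounds on the empirical quantile. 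These bounds are uniform over $\reserve\in[1,(1-\Reglevel)^{-1/\Reglevel}]$, because all these distributions have values bounded by a constant and quantile functions that are Lipschitz in a uniform sense. The atom at $\reserve$ needs separate care: when $\quant_\beta<1/\reserve$ the price is simply $\reserve$ with high probability.

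\emph{Step 2: solve the fluid problem.}
\begin{itemize}
\item \textbf{Case $1/\reserve>\imbalanceratio$.} Here $\revcurve_\reserve$ is concave with $\revcurve_\reserve(0)\ge0$, so $(1+\beta)\revcurve_\reserve(\imbalanceratio/(1+\beta))\ge \revcurve_\reserve(\imbalanceratio)$ for every $\beta\ge0$. These $\reserve$ therefore never force $\beta>0$.
\item \textbf{Case $1/\reserve\le\imbalanceratio$.} The benchmark is $1$. If $\quant_\beta\le 1/\reserve$, VCG earns at least $(1+\beta)\revcurve_\reserve(\quant_\beta)\ge \imbalanceratio\reserve\ge1$. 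Otherwise, inverting the CDF of \Cref{def:truncated generalized Pareto distribution} gives $\buyercdf^{-1}(1-\quant)=\reserve(\quant^{-\Reglevel}-1)/(\reserve^\Reglevel-1)$. So the requirement becomes
\begin{align*}
\imbalanceratio\,(\quant_\beta^{-\Reglevel}-1)\cdot\frac{\reserve}{\reserve^\Reglevel-1}\ \ge\ \ccapproxratio .
\end{align*}
\item \textbf{Worst $\reserve$.} The derivative of $\reserve/(\reserve^\Reglevel-1)$ has the sign of $(1-\Reglevel)\reserve^\Reglevel-1$, which is $\le0$ on the admissible range. So the worst case is the largest admissible $\reserve$. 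This is $\reserve_{\max}=(1-\Reglevel)^{-1/\Reglevel}$, which is feasible exactly when $\imbalanceratio\ge(1-\Reglevel)^{1/\Reglevel}$; this is where the indicator in the statement comes from. If $\imbalanceratio<(1-\Reglevel)^{1/\Reglevel}$, every admissible $\reserve$ still gives the optimum at least $(1+\beta)\revcurve(\quant_\beta)$ with $\beta=0$.
\item \textbf{Solving for $\beta$.} At $\reserve_{\max}$ the ratio $\reserve/(\reserve^\Reglevel-1)$ equals $(1-\Reglevel)^{-(1-\Reglevel)/\Reglevel}/\Reglevel$. Solving gives $(1+\beta)/\imbalanceratio\ge\bigl(1+\tfrac{\Reglevel\ccapproxratio}{\imbalanceratio}(1-\Reglevel)^{\frac{1-\Reglevel}{\Reglevel}}\bigr)^{1/\Reglevel}$, which is exactly the claimed formula after taking the positive part. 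The cases $\Reglevel\to0$ and $\Reglevel=1$ follow by taking limits.
\end{itemize}

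\emph{Step 3: matching bounds and the main obstacle.} Upper bound: for any $\beta$ exceeding the formula by $\delta>0$, Step 1's uniform concentration gives $\RevVCG\ge(\ccapproxratio+\delta')\,n-o(n)\ge\ccapproxratio\RevOPT$ for all $\reserve$ once $n$ is large. Lower bound: for $\beta$ below the formula by $\delta$, the single distribution $\buyerdist_{(\Reglevel,\reserve_{\max})}$ gives $\RevVCG<\ccapproxratio n\le\ccapproxratio\RevOPT$ for large $n$. I expect the main obstacle to be making the concentration in Step 1 uniform in $\reserve$. This includes the boundary regime where $\quant_\beta$ is close to $1/\reserve$ and the atom interacts with the order statistic, and the $\Reglevel\to0$ and $\Reglevel=1$ endpoints, where $\reserve_{\max}$ is $e$ or infinite. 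I would handle it by a finite grid over $\reserve$ together with monotonicity of the revenues in $\reserve$ between grid points. In the regular case I would instead note directly that the formula $(\ccapproxratio+\imbalanceratio-1)^+$ is attained as $\reserve\to\infty$.
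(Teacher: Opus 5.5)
Your proposal is correct and follows the paper's proof. The paper likewise reduces to $\truncGPDists$ via \Cref{prop:bk vcg:worst case} and replaces the two revenues by $(n+k_n)\revcurve_{(\Reglevel,\reserve)}(m_n/(n+k_n))$ and $n\,\revcurve_{(\Reglevel,\reserve)}(\min\{m_n/n,1/\reserve\})$ using \Cref{lem:prelim:revenue expression in uniform order statistic} and concentration. It then treats your same three regimes of $\reserve$, plus the case $\imbalanceratio<(1-\Reglevel)^{1/\Reglevel}$, and uses that $\reserve/(\reserve^\Reglevel-1)$ is decreasing to place the worst case at $\reserve=(1-\Reglevel)^{-1/\Reglevel}$.

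Two small repairs are needed. First, $\RevOPT_{m:n}\le n$, so your lower bound should use the slack, $\RevVCG\le(\ccapproxratio-\delta')n+o(n)<\ccapproxratio\,(n-o(n))\le\ccapproxratio\,\RevOPT$, rather than $\ccapproxratio n\le\ccapproxratio\RevOPT$. Second, the VCG revenue is not monotone in $\reserve$ (see \Cref{apx:worst case truncation discussion}), so drop the grid-plus-monotonicity device. Instead, note that every $\revcurve_{(\Reglevel,\reserve)}$ is concave with values in $[0,1]$, hence $1/\eps$-Lipschitz on $[\eps,1-\eps]$ uniformly in $\reserve$, including $\Reglevel=1$. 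Together with exponentially small order-statistic tails, this makes the concentration step uniform and handles the atom automatically.
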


We present several interpretations of the asymptotic competition complexity in \Cref{thm:bk vcg:large market}.

First, consider the goal of \emph{almost} outperforming the {\BayesianOptimalMech} in the original market, i.e., $\ccapproxratio = 1 - \eps$ for some fixed $\varepsilon > 0$. In this case, the sufficient and necessary number of additional buyers by the {\VCGAuction} scales \emph{linearly} with the original number of buyers $n$ when the market is relatively balanced---specifically, when the supply-to-demand ratio satisfies $\imbalanceratio > (1 - \Reglevel)^{1/\Reglevel}$. Conversely, when the market is more imbalanced ($\imbalanceratio \leq (1 - \Reglevel)^{1/\Reglevel}$), the sufficient and necessary number of additional buyers grows only \emph{sublinearly} in $n$.

Second, still under the almost-optimality goal ($\ccapproxratio = 1 - \eps$), we quantify the asymptotic overhead across different distribution classes. For MHR distributions ($\Reglevel = 0$), the fraction of sufficient and necessary additional buyers is $e^{(1-\eps)/e} - 1 \approx 44.47\%$ when $\imbalanceratio = 1$ (a value that matches the leading term in the finite-market bound of \Cref{thm:bk vcg:mhr finite market}), and $\tfrac{1}{2} e^{2(1-\eps)/e} - 1 \approx 4.35\%$ when $\imbalanceratio = 0.5$. For $0.5$-regular distributions, the corresponding fractions are $((5-\eps)/4)^2 - 1 = 56.25\%$ and $((3-\eps)/2)^2 / 2 - 1 = 12.5\%$, respectively. In stark contrast, for regular distributions ($\Reglevel = 1$), the overhead is substantially higher: $1-\eps\approx 100\%$ when $\imbalanceratio = 1$ and $(1-\eps)/2 \approx 50\%$ when $\imbalanceratio = 0.5$. These results underscore how stronger regularity assumptions---moving from regular to MHR---dramatically reduce the competition complexity of the {\VCGAuction}. Also see \Cref{fig:bk vcg:large market} for an illustration.

Third, while \Cref{thm:bk vcg:large market} focuses on target approximation ratios $\ccapproxratio < 1$, its expression admits a continuous extension to $\ccapproxratio = 1$. Specifically, taking the limit as $\ccapproxratio \to 1$ recovers the asymptotic competition complexity for the balanced MHR setting ($\imbalanceratio = 1$, $\Reglevel = 0$) which can be computed from \Cref{thm:bk vcg:mhr finite market}. This suggests that the same functional form likely holds for $\ccapproxratio = 1$ in general settings, though establishing this rigorously would need a more intricate analysis (similar in spirit to the proof of \Cref{thm:bk vcg:mhr finite market}) which we leave for future work.

Fourth, holding any two of the three parameters fixed, the asymptotic competition complexity $\ccomplexity(\imbalanceratio,\lambdaDists,\ccapproxratio)$ is \emph{weakly increasing} and \emph{weakly convex} in each of the following: the supply-to-demand ratio $\imbalanceratio$, the regularity level $\Reglevel$, and the target approximation ratio $\ccapproxratio$.

Finally, we note that although \Cref{thm:bk vcg:large market} does not cover the case $\ccapproxratio = 1$, it nonetheless implies a lower bound on the asymptotic competition complexity for $\ccapproxratio = 1$. This follows because, by definition, the asymptotic competition complexity is non-decreasing in $\ccapproxratio$. We thus obtain the following immediate corollary.
\begin{corollary}
\label{cor:bk vcg:large market LB}
    For any $\Reglevel\in[0, 1]$, 
    and $\imbalanceratio\in(0, 1]$, 
    the asymptotic competition complexity $\ccomplexityInfty(\imbalanceratio,$
    $\lambdaDists,1)$ satisfies
    \begin{align*}
        \ccomplexityInfty(\imbalanceratio,\lambdaDists,1)
        \geq 
        \plus{
        \imbalanceratio {\left(\frac{\Reglevel}{\imbalanceratio}{\left(1-\Reglevel\right)}^{\frac{1-\Reglevel}{\Reglevel}}+1 \right)}^{\frac{1}{\Reglevel}}-1
        }
        \cdot \indicator{\imbalanceratio\geq (1-\Reglevel)^{\frac{1}{\Reglevel}}}
    \end{align*}
\end{corollary}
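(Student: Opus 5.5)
The corollary follows from monotonicity of competition complexity in the target ratio, combined with \Cref{thm:bk vcg:large market} and a limiting argument in $\ccapproxratio$.

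\textbf{Step 1: monotonicity in $\ccapproxratio$.} For every fixed $m\le n$ and $\ccapproxratio<1$, any $k$ satisfying $\RevVCG_{m:n+k}(\buyerdist)\ge \RevOPT_{m:n}(\buyerdist)$ for all $\buyerdist\in\lambdaDists$ also satisfies the requirement with factor $\ccapproxratio$, because $\RevOPT_{m:n}(\buyerdist)\ge 0$. Hence, directly from \Cref{def:competition complexity},
\[
\ccomplexity(\lceil\imbalanceratio n\rceil,n,\lambdaDists,1)\ge \ccomplexity(\lceil\imbalanceratio n\rceil,n,\lambdaDists,\ccapproxratio).
\]

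\textbf{Step 2: pass to the limit in $n$.} Divide by $n$ and let $n\to\infty$. The limit on the right exists and equals the closed form $G(\ccapproxratio)$ given by \Cref{thm:bk vcg:large market}. The limit on the left may not be known to exist. In that case I read $\ccomplexityInfty(\imbalanceratio,\lambdaDists,1)$ as a $\liminf$, which suffices for a lower bound. This gives
\[
\ccomplexityInfty(\imbalanceratio,\lambdaDists,1)\ge G(\ccapproxratio) \quad\text{for every } \ccapproxratio\in(0,1).
\]

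\textbf{Step 3: let $\ccapproxratio\to1^-$.} Take the supremum over $\ccapproxratio<1$. The expression
\[
\imbalanceratio\Bigl(\frac{\Reglevel\ccapproxratio}{\imbalanceratio}(1-\Reglevel)^{(1-\Reglevel)/\Reglevel}+1\Bigr)^{1/\Reglevel}-1
\]
is continuous and nondecreasing in $\ccapproxratio$. The map $\plus{\cdot}$ preserves both properties. The indicator does not depend on $\ccapproxratio$. So the supremum equals the value at $\ccapproxratio=1$, which is exactly the stated right-hand side. For $\Reglevel=0$ the same holds with the limiting interpretation, where the expression becomes $\imbalanceratio e^{\ccapproxratio/(e\imbalanceratio)}-1$, again continuous and increasing in $\ccapproxratio$.

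\textbf{Main obstacle.} The only delicate point is this continuity/limit step, together with the existence issue in Step 2. Both are routine, so no real difficulty is expected.
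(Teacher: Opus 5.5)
Your proof is correct and takes the same route as the paper, which derives the corollary immediately from \Cref{thm:bk vcg:large market} using the fact that competition complexity is non-decreasing in $\ccapproxratio$. Your Steps 2--3 make explicit two details the paper leaves implicit: reading the $\ccapproxratio=1$ quantity as a $\liminf$ if the limit is not known to exist, and using continuity and monotonicity of the closed form as $\ccapproxratio\to 1^-$.
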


\subsection{Proof of \texorpdfstring{\Cref{prop:bk vcg:worst case}}{Proposition 4.1} and \texorpdfstring{\Cref{thm:bk vcg:large market}}{Theorem 4.2}}

We first introduce an effective way to express the expected revenue in the {\BayesianOptimalMech} and in the {\VCGAuction.}
For any given $m, n \in \naturals$ with $m\leq n$, let $\quant_{m:n}$ denote the $m$-th smallest order statistics {(equivalently, the $m$-th highest value)} of $n$ samples drawn from the uniform distribution $U[0, 1]$. We use $\osdensity_{m:n}$ to represent its probability density function (PDF). By definition, it satisfies
\begin{align*}
    \osdensity_{m:n}(\quant) = n\cdot \binom{n - 1}{m - 1}\cdot (1 - \quant)^{n - m}\cdot \quant^{m - 1}
\end{align*}
With this notation in place, we can express the expected revenues of the {\VCGAuction} and the {\BayesianOptimalMech} in terms of the revenue curve and uniform order statistics.

\begin{lemma}
\label{lem:prelim:revenue expression in uniform order statistic}
    In the setting of selling $m$ units of an identical item to $n$ buyers with values drawn i.i.d.\ from a regular distribution $\buyerdist$, the expected revenue of the {\VCGAuction} satisfies
    \begin{align*}
        \RevVCG_{m:n}(\buyerdist) = n\cdot \displaystyle\int_0^1 \revcurve(\quant)\cdot 
        \osdensity_{m:n-1}(\quant)\,\dd\quant
    \end{align*}
    The expected revenue of the {\BayesianOptimalMech} with any monopoly reserve $\optreserve$ and its corresponding monopoly quantile $\optquant$
satisfies 
\begin{align*}
        \RevOPT_{m:n}(\buyerdist) = n\cdot \displaystyle\int_0^1
        \revcurve(\min\{\quant,\optquant\})\cdot \osdensity_{m:n-1}(\quant)\,\dd\quant
    \end{align*}
    where $\revcurve$ is the revenue curve induced from valuation distribution $\buyerdist$.
\end{lemma}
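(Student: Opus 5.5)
The plan is to reduce both formulas to the standard Myerson/Bulow--Roberts identity: the expected payment of a buyer equals the expected revenue-curve value evaluated at the quantile of the threshold price that buyer faces. We work in quantile space. Write each value as $\val_i=\buyercdf^{-1}(1-\quant_i)$ with $\quant_i\sim U[0,1]$ i.i.d. Since $\buyerdist$ has a density, ties have probability zero, and higher values correspond to smaller quantiles.

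For the {\VCGAuction}, we fix a buyer $i$ and condition on the other $n-1$ quantiles. Buyer $i$ faces a threshold: she wins exactly when her value exceeds the $m$-th highest value among the other $n-1$ buyers. Her payment is then that value, because the $(m+1)$-th highest overall value equals the $m$-th highest among the others whenever $i$ is among the top $m$. Let $\quant$ be the $m$-th smallest quantile among the others, so $\quant$ has density $\osdensity_{m:n-1}$. The threshold price is $p=\buyercdf^{-1}(1-\quant)$. Given $\quant$, buyer $i$ wins with probability $\Pr[\quant_i<\quant]=\quant$, in which case she pays $p$. Her conditional expected payment is therefore $\quant\cdot\buyercdf^{-1}(1-\quant)=\revcurve(\quant)$. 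Integrating against $\osdensity_{m:n-1}$ and summing over the $n$ symmetric buyers gives the first formula. The only care needed is the identity $\Pr[\val_i>p]=\quant$, which follows from continuity of $\buyerdist$. If $\buyerdist$ has atoms, as in the truncated distributions with an atom at $\reserve$, we define $\revcurve$ via the stated generalized inverse. The event $\{\quant_i<\quant\}$ then still corresponds to winning at price $\buyercdf^{-1}(1-\quant)$ under the quantile coupling, with tie-breaking by quantile.

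For the {\BayesianOptimalMech}, by \Cref{lem:Bayesian optimal mechanism}, buyer $i$ faces the threshold $\max\{\optreserve, \text{$m$-th highest other value}\}$. In quantile space this threshold is $\min\{\quant,\optquant\}$, where $\quant$ is the same order statistic as before. By the identical argument, her conditional expected payment is $\min\{\quant,\optquant\}\cdot\buyercdf^{-1}(1-\min\{\quant,\optquant\})=\revcurve(\min\{\quant,\optquant\})$. Integrating against $\osdensity_{m:n-1}$ and multiplying by $n$ gives the second formula.

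The main obstacle is conceptual bookkeeping rather than calculation. We must check that the quantile of the reserve $\optreserve$ is exactly $\optquant$, i.e.\ $\optquant\cdot\optreserve$ equals the monopoly revenue. We must also handle the degenerate convention $\optquant=0$ (with $\optreserve=\infty$, revenue $\revcurve(0)$), where the formula reads $\revcurve(0)$ and is interpreted as a limit. Finally, when $m=n$ the order statistic $\osdensity_{n:n-1}$ is degenerate: there are only $n-1$ other buyers, so the threshold is absent and $\quant=1$. In that case we read the density as a point mass at $1$, which recovers the posted-price formulas $n\revcurve(1)$ and $n\revcurve(\optquant)$.
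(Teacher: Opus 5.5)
Your proposal is correct and follows the paper's own proof essentially step for step. Both fix one buyer, treat her as facing a posted price equal to the $m$-th highest competing value (or its maximum with $\optreserve$), note that its quantile has density $\osdensity_{m:n-1}$, take $\revcurve(\quant)$ (resp.\ $\revcurve(\min\{\quant,\optquant\})$) as the conditional expected payment, and multiply by $n$ by symmetry; your point mass at $1$ for $m=n$ matches the paper's Dirac-delta convention, and your quantile tie-breaking remark for atoms is a careful addition the paper omits.
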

\begin{proof}
Fix an arbitrary buyer. In the {\VCGAuction}, she essentially faces a randomized price-posting mechanism, where the price is equal to the $m$-th highest value among the other $n - 1$ buyers. By definition, the quantile of this price follows the $m$-th smallest order statistic of $n-1$ samples drawn from the uniform distribution $U[0, 1]$, with PDF $\osdensity_{m:n-1}$. When $m\ge n$, let $\osdensity_{m:n-1}(\quant)=\DiracDelta_1(\quant)$, where $\DiracDelta_1$ denotes the Dirac delta measure concentrated at $1$. 
Conditioning on this quantile $\quant_{m:n-1}$, the buyer's expected payment is $\revcurve(\quant_{m:n-1})$. Therefore, her expected payment, averaged over the randomness of the other buyers' values, is given by
    \begin{align*}
        \displaystyle\int_0^1 \revcurve(\quant)\cdot \osdensity_{m:n-1}(\quant)\dd\quant
    \end{align*}
    Since all buyers are ex ante symmetric, the expected revenue of the {\VCGAuction} is $n$ times one buyer's expected payment, which is the formula presented in the lemma statement.

    The analysis for the {\BayesianOptimalMech} is similar. Fix an arbitrary buyer; she essentially faces a price-posting mechanism, where the price is the maximum of either the monopoly reserve (i.e., the value of the monopoly quantile) or the value of the $m$-th smallest quantile of the other $n - 1$ buyers. Conditioning on the quantile $\quant_{m:n-1}$, the buyer's expected payment is $\revcurve(\min{\quant_{m:n-1}, \optquant})$. Therefore, her expected payment, averaged over the randomness of the other buyers' values, is given by 
    \begin{align*}
        \displaystyle\int_0^1 \revcurve(\quant)\cdot 
    \osdensity_{m:n-1}(\min\{\quant,\optquant\})\dd\quant
    \end{align*}
    Since all buyers are ex ante symmetric, the expected revenue of the {\BayesianOptimalMech} is $n$ times one buyer's expected payment, which is the formula presented in the lemma statement.

    This completes the proof of \Cref{lem:prelim:revenue expression in uniform order statistic}.
\end{proof}

We also recall the following equivalence condition between distributions and their induced revenue curves. 

\begin{restatable}[Revenue curve dominance]{lemma}{lem:revcurveDominance}
\label{lem:fosd}
Let $\buyerdist_1$ and $\buyerdist_2$ be two valuation distributions with corresponding revenue curves
$\revcurve_1$ and $\revcurve_2$. Then $\buyerdist_1$ is first-order stochastically dominated by $\buyerdist_2$,
i.e., $\buyerdist_1(\val)\ge \buyerdist_2(\val)$ for every value $\val\in\reals$, if and only if
$\revcurve_1(\quant)\le \revcurve_2(\quant)$ for every quantile $\quant\in[0,1]$.
\end{restatable}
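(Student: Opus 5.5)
The plan is to prove both directions directly from the definitions of the revenue curve and of the CDF, using the fact that $\revcurve(\quant) = \quant\cdot\buyercdf^{-1}(1-\quant)$. Since $\quant>0$ is a common positive factor, $\revcurve_1(\quant)\le\revcurve_2(\quant)$ for $\quant\in(0,1]$ is equivalent to $\buyercdf_1^{-1}(1-\quant)\le\buyercdf_2^{-1}(1-\quant)$. So the statement reduces to the classical equivalence between FOSD and pointwise domination of (generalized) quantile functions. The endpoint $\quant=0$ is then handled by passing to the limit $\quant\to0^+$ in the definition $\revcurve(0)=\lim_{\quant\to0^+}\revcurve(\quant)$. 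Pointwise inequalities are preserved under limits, provided the limits exist; they do in the settings where the lemma is applied, and in general the statement should be read as a comparison of the lim-values.

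For the forward direction, assume $\buyercdf_1(\val)\ge\buyercdf_2(\val)$ for all $\val$. Fix $t=1-\quant$. Any $\val$ with $\buyercdf_2(\val)\le t$ need not satisfy $\buyercdf_1(\val)\le t$, so I will use the contrapositive form instead. The set $\{\val:\buyercdf_1(\val)\le t\}$ is contained in $\{\val:\buyercdf_2(\val)\le t\}$, because $\buyercdf_1(\val)\le t$ implies $\buyercdf_2(\val)\le\buyercdf_1(\val)\le t$. Taking suprema gives $\buyercdf_1^{-1}(t)\le\buyercdf_2^{-1}(t)$, which is what we need.

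For the reverse direction, assume $\buyercdf_1^{-1}(t)\le\buyercdf_2^{-1}(t)$ for all $t\in[0,1)$. Suppose for contradiction that some $\val$ has $\buyercdf_1(\val)<\buyercdf_2(\val)$, and pick $t$ strictly between them. Since $\buyercdf_1(\val)\le t$, we get $\buyercdf_1^{-1}(t)\ge\val$. On the other hand, $\buyercdf_2(\val)>t$ together with monotonicity of $\buyercdf_2$ gives $\buyercdf_2(w)>t$ for all $w\ge\val$, so $\buyercdf_2^{-1}(t)\le\val$.

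This leaves only the borderline case $\buyercdf_1^{-1}(t)=\buyercdf_2^{-1}(t)=\val$, which does not immediately contradict. This is the main obstacle: the sup-based inverse and the left-continuous CDF convention $\buyercdf(\val)=\Pr[x<\val]$ must be handled carefully. My plan is to exploit the freedom in choosing $t$. Take $t'\in(t,\buyercdf_2(\val))$ as well. Then $\buyercdf_1^{-1}(t')\ge\buyercdf_1^{-1}(t)\ge\val$ still holds, and the same argument gives $\buyercdf_2^{-1}(t')\le\val$. To get a strict inequality for some $t'$, I will use left-continuity of $\buyercdf_2$. There is $w<\val$ with $\buyercdf_2(w)>t'$, so every point at or above $w$ has CDF exceeding $t'$, giving $\buyercdf_2^{-1}(t')\le w<\val\le\buyercdf_1^{-1}(t')$. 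This contradicts the assumption.

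Finally, translating back from quantile functions to revenue curves, by multiplying by $\quant$ and taking the limit at $\quant=0$, completes the equivalence.
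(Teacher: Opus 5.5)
Your proof is correct and follows the same route as the paper. In both, you factor out $q>0$ to turn the comparison of revenue curves into a comparison of the generalized inverses $F_1^{-1}(1-q)$ and $F_2^{-1}(1-q)$, prove each direction at that level, and obtain $q=0$ by passing to the limit. The one difference is in the reverse direction. The paper argues directly with $q=1-F_1(v')$ and writes $F_2(F_2^{-1}(1-q))=1-q$ as an equality; this holds for continuous $F_2$, and in general only $\le$ holds, which still suffices. Your contradiction argument with an intermediate $t$ and an explicit appeal to left-continuity handles atoms such as the truncation point without relying on that equality.
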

\begin{proof}
Suppose $\buyerdist_1$ is first-order stochastically dominated (FOSD) by $\buyerdist_2$. We show $\revcurve_1(\quant)\le \revcurve_2(\quant)$ for every quantile $\quant\in[0,1]$. Fix any quantile $\quant\in(0, 1]$, define $\val_1 \triangleq \buyercdf_1^{-1}(1-\quant)$ and $\val_2 \triangleq \buyercdf_2^{-1}(1-\quant)$. Due to the FOSD relation, $\val_1 \leq \val_2$. Thus, $\revcurve_1(\quant) = \quant\cdot \val_1 \leq \quant\cdot\val_2 = \revcurve_2(\quant)$ for every $\quant\in(0, 1]$. Finally, $\revcurve_1(0) = \lim_{\quant\to 0}\revcurve_1(\quant) \leq \lim_{\quant\to 0}\revcurve_2(\quant) = \revcurve_2(0)$ as well.

Now we verify the opposite direction. Suppose $\revcurve_1(\quant)\le \revcurve_2(\quant)$ for every quantile $\quant\in[0,1]$. Fix any value $\val\primed\in\reals$. 
Let $\quant \triangleq 1-\buyercdf_1(\val\primed)$, and $\val\doubleprimed\triangleq \buyercdf_2^{-1}(1-\quant)$. Note that $\revcurve_1(\quant) \leq \revcurve_2(\quant)$, which implies $\quant\cdot \val\primed \leq \quant\cdot \val\doubleprimed$ and thus $\val\primed \leq \val\doubleprimed$. Hence, $\buyercdf_1(\val\primed) = 1 - \quant = \buyercdf_2(\val\doubleprimed) \geq \buyercdf_2(\val\primed)$, where the first and second equalities holds due to the construction of $\quant$ and $\val\doubleprimed$. This finishes the proof as desired.
\end{proof}

\label{apx:propVCGWorstDists}

We now prove \Cref{prop:bk vcg:worst case}.
\begin{proof}[Proof of \Cref{prop:bk vcg:worst case}]
    Define auxiliary notation $k \triangleq \ccomplexity\left(m,n,\lambdaDists, \ccapproxratio\right) - 1$.
    By the definition of the competition complexity, there exists an $\Reglevel$-regular distribution $\buyerdist\primed$ such that 
    \begin{align*}
        \RevVCG_{m:n+k}\left(\buyerdist\primed\right)
        <
        \ccapproxratio\cdot \RevOPT_{m:n}
        \left(\buyerdist\primed\right)
    \end{align*}
    Without loss of generality, we assume that the monopoly revenue in distribution $\buyerdist\primed$ is one. Invoking Proposition~4 in \citet{SS-19}, since the monopoly quantile of any $\Reglevel$-regular distribution is at least ${(1-\Reglevel)}^{{1}/{\Reglevel}}$, the monopoly reserve $\optreserve$ of distribution $\buyerdist\primed$ satisfies $\optreserve\in[1,{({1-\Reglevel})}^{-{1}/{\Reglevel}}]$. 

    Let $\revcurve\primed$ be the revenue curve of distribution $\buyerdist\primed$.
    Invoking \Cref{lem:prelim:revenue expression in uniform order statistic}, we express the revenue of the {\BayesianOptimalMech} and the {\VCGAuction} as follows (recall $\osdensity_{a:b}$ is the PDF of $a$-th smallest order statistics of $b$ i.i.d.\ samples from uniform distribution $U[0,1]$):
    \begin{align*}
        \RevOPT_{m:n}\left(\buyerdist\primed\right) 
        & = 
        n\cdot \displaystyle\int_0^1 \revcurve\primed\left(\min\{\quant,\optquant\}\right)\cdot \osdensity_{m:n - 1}\left(\quant\right)\,\dd\quant
        \\
        & = 
        n\cdot \left(\displaystyle\int_0^{\optquant} \revcurve\primed\left(\quant\right) \cdot \osdensity_{m:n - 1}\left(\quant\right)\,\dd\quant 
        +
        \int_{\optquant}^1 \revcurve\primed\left(\optquant\right) \cdot \osdensity_{m:n - 1}\left(\quant\right)\,\dd\quant
        \right)
    \end{align*}
    and
    \begin{align*}
        \RevVCG_{m:n+k}\left(\buyerdist\primed\right) 
        & = 
        \left(n + k\right)\cdot \displaystyle\int_0^1 \revcurve\primed\left(\quant\right) \cdot \osdensity_{m:n + k - 1}\left(\quant\right)\,\dd\quant 
        \\
        & = 
        \left(n + k\right)\cdot \left(\displaystyle\int_0^{\optquant} \revcurve\primed\left(\quant\right) \cdot \osdensity_{m:n + k - 1}\left(\quant\right)\,\dd\quant 
        +
        \displaystyle\int_{\optquant}^1 \revcurve\primed\left(\quant\right) \cdot \osdensity_{m:n + k - 1}\left(\quant\right)\,\dd\quant 
        \right)
    \end{align*}
    We utilize the following technical lemma about the PDF of order statistics from the uniform distribution.
    \begin{lemma}[Single-crossing property]
    \label{lem:bk vcg:osdensity single crossing}
        Given any $m,n,k\in\naturals$ ($m\leq n$), there exists threshold quantile $\quantSC\in[0, 1]$ such that
        \begin{align*}
            \forall \quant \in [0, \quantSC]:&
            \qquad
            \left(n + k\right)\cdot \osdensity_{m:n + k - 1}\left(\quant\right)
            \geq 
            \ccapproxratio\cdot n\cdot \osdensity_{m:n - 1}\left(\quant\right)
            \\
            \forall \quant \in [\quantSC, 1]:&
            \qquad
            \left(n + k\right)\cdot \osdensity_{m:n + k - 1}\left(\quant\right)
            \leq 
            \ccapproxratio\cdot n\cdot \osdensity_{m:n - 1}\left(\quant\right)
        \end{align*}
    \end{lemma}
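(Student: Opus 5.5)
The plan is to study the ratio of the two densities and show that it is monotone in $\quant$. For $\quant\in(0,1)$ both densities are positive, and
\begin{align*}
\frac{(n+k)\cdot \osdensity_{m:n+k-1}(\quant)}{\ccapproxratio\cdot n\cdot \osdensity_{m:n-1}(\quant)}
= \frac{(n+k)(n+k-1)\binom{n+k-2}{m-1}}{\ccapproxratio\, n(n-1)\binom{n-2}{m-1}}\cdot (1-\quant)^{k},
\end{align*}
because the factors $\quant^{m-1}$ cancel and the powers of $(1-\quant)$ differ by exactly $k$. The prefactor is a positive constant $C$ that does not depend on $\quant$. Hence the ratio $C(1-\quant)^k$ is non-increasing in $\quant$, and strictly decreasing when $k\ge 1$.

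The threshold is then defined directly:
\begin{align*}
\quantSC \triangleq \sup\{\quant\in[0,1]: C(1-\quant)^k\ge 1\},
\end{align*}
with the convention $\quantSC=0$ if this set is empty. Explicitly, $\quantSC = \min\{1,\max\{0,1-C^{-1/k}\}\}$ for $k\ge1$. Monotonicity gives the first inequality on $[0,\quantSC]$ and the reverse inequality on $[\quantSC,1]$, after multiplying back by the nonnegative density $\ccapproxratio n\osdensity_{m:n-1}(\quant)$. At the endpoints $\quant\in\{0,1\}$, where a density may vanish, the inequalities hold by continuity of the polynomial densities.

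The remaining work is bookkeeping for the degenerate cases, which I expect to be the main obstacle.
\begin{itemize}
\item \textbf{Case $k=0$.} The ratio is the constant $1/\ccapproxratio\ge1$, so I take $\quantSC=1$.
\item \textbf{Case $m=n$.} Here $\osdensity_{n:n-1}$ is the Dirac mass at $1$ (the convention in the proof of \Cref{lem:prelim:revenue expression in uniform order statistic}). On $[0,1)$ the right-hand side vanishes, so I take $\quantSC=1$. The only inequality left is at the atom at $1$, which holds in the integrated sense in which the lemma is used: the right-hand side is an infinite mass at $1$.
\item \textbf{Case $m=n+k$.} This can occur only when $k=0$ and $m=n$, so it falls under the previous two cases, and the same Dirac convention applies on both sides.
\end{itemize}
The lemma is only used inside integrals against concave revenue curves, so I would phrase these degenerate cases as measure inequalities and not pointwise ones.
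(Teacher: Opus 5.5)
Your argument is correct and is essentially the paper's: the paper factors the difference as $(1-\quant)^{n-1-m}\quant^{m-1}$ times a bracket equal to $(n-1)\binom{n-2}{m-1}\bigl(C(1-\quant)^k-1\bigr)$ and notes that this bracket is decreasing in $\quant$, which is exactly your ratio computation. Your handling of $k=0$ and $m=n$ goes beyond the paper, which ignores these cases, and your fallback to an almost-everywhere/integrated form there is genuinely necessary: for example, when $k=0$, $m=n-1$ and $\ccapproxratio<1$, the second inequality fails at $\quant=1$ for every choice of $\quantSC$.
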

    \begin{proof}
        Consider the (normalized) difference between the left-hand side and right-hand side of the inequality in the lemma statement:
        \begin{align*}
            & 
            \frac{\left(n + k\right)}{\ccapproxratio\cdot n}\cdot \osdensity_{m:n + k - 1}\left(\quant\right)
            - 
            \osdensity_{m:n - 1}\left(\quant\right)
            \\
            ={}&
            \frac{\left(n + k\right)}{\ccapproxratio\cdot n}
            \cdot 
            \left(n + k - 1\right)\cdot \binom{n + k - 2}{m - 1}\cdot
            \left(1 - \quant\right)^{n + k - 1 - m}\cdot \quant^{m - 1}
            -
            \left(n - 1\right)\cdot \binom{n - 2}{m - 1}
            \cdot 
            \left(1 - \quant\right)^{n - 1 - m}\cdot \quant^{m - 1}
            \\
            ={} &
            \left(1 - \quant\right)^{n - 1 - m}\cdot \quant^{m - 1} \cdot 
            \left(
            \frac{\left(n + k\right)}{\ccapproxratio\cdot n}
            \cdot 
            \left(n + k - 1\right)\cdot \binom{n + k - 2}{m - 1}\cdot
            \left(1 - \quant\right)^{k}
            -
            \left(n - 1\right)\cdot \binom{n - 2}{m - 1}
            \right)
        \end{align*}
        Note that the difference is zero for boundary quantiles $\quant \in \{0, 1\}$. 
        In addition,
        \begin{align*}
            \frac{\left(n + k\right)}{\ccapproxratio\cdot n}
            \cdot 
            \left(n + k - 1\right)\cdot \binom{n + k - 2}{m - 1}\cdot
            \left(1 - \quant\right)^{k}
            -
            \left(n - 1\right)\cdot \binom{n - 2}{m - 1}
        \end{align*}
        is decreasing in $\quant\in[0, 1]$. Therefore, there exists at most one quantile $\quantSC \in \left(0, 1\right)$ such that the difference equals zero. Moreover, the difference is weakly positive (resp.\ weakly negative) for all quantiles smaller than (resp.\ larger than) $\quantSC$. This completes the proof of \Cref{lem:bk vcg:osdensity single crossing}.
    \end{proof}
    Invoking \Cref{lem:bk vcg:osdensity single crossing}, there exists threshold quantile $\quantSC\in[0, 1]$ such that
    \begin{align*}
        \forall \quant \in [0, \quantSC]:&
        \qquad
        \left(n + k\right)\cdot \osdensity_{m:n + k - 1}\left(\quant\right)
        \geq 
        \ccapproxratio\cdot n\cdot \osdensity_{m:n - 1}\left(\quant\right)
        \\
        \forall \quant \in [\quantSC, 1]:&
        \qquad
        \left(n + k\right)\cdot \osdensity_{m:n + k - 1}\left(\quant\right)
        \leq 
        \ccapproxratio\cdot n\cdot \osdensity_{m:n - 1}\left(\quant\right)
    \end{align*}
    Define auxiliary function $\auxfunc:[0, 1]\rightarrow\reals$ as follows:
    \begin{align*}
        \auxfunc\left(\quant\right) \triangleq \ccapproxratio\cdot n\cdot \osdensity_{m:n - 1}\left(\quant\right)
         - \left(n + k\right)\cdot \osdensity_{m:n + k - 1}\left(\quant\right)
    \end{align*}
    By construction, $\auxfunc\left(\quant\right) \leq 0$ for all  $\quant\in[0, \quantSC]$ and $\auxfunc\left(\quant\right) \geq 0$ for all $\quant\in[\quantSC,1]$.
    Consider the following two cases:

    \begin{figure}
        \centering
        \subfloat[Case (i)]{

\begin{tikzpicture}
\begin{axis}[
width=8cm,height=6cm,
xmin=0.0,xmax=1.05,
ymin=-0.15,ymax=1.05,
xtick={0, 0.5, 0.7, 1},   
xticklabels = {0, $\optquant$, $\quant\primed$, 1},
tick label style={font=\footnotesize},
ytick distance=1,
minor tick num=0,
axis line style=gray,
axis x line=middle,
axis y line=left,
tick style={black},
legend style={draw=none,fill=white,font=\small,at={(0.98,0.98)},anchor=north east},
]

\def\rstar{2.0}
\def\qstar{0.5}
\def\qdag{0.7}
\def\lnrstar{0.6931471805599453}
\def\scale{0.7142857142857143}

\definecolor{orangeLine}{RGB}{240,160,0}
\definecolor{blueLine}{RGB}{58,166,255}
\definecolor{greenLine}{RGB}{0,163,122}

\addplot[black, line width=1.8pt, domain=0:1, samples=500] {4*x*(1-x)};

\addplot[black!40, solid, line width=1.6pt, domain=1e-6:1, samples=700]
{ifthenelse(x<=\qstar, \rstar*x, (\rstar/\lnrstar)*x*ln(1/x))};

\addplot[gray, dotted, line width=1pt] coordinates {(\qstar,0) (\qstar,1.05)};

\addplot[gray, dotted, line width=1pt] coordinates {(\qdag,0) (\qdag,1.05)};

\end{axis}
\end{tikzpicture}         }
        ~~~~
        \subfloat[Case (ii)]{\begin{tikzpicture}
\begin{axis}[
width=8cm,height=6cm,
xmin=0.0,xmax=1.05,
ymin=-0.15,ymax=1.05,
xtick={0, 0.5, 0.3, 1},   
xticklabels = {0, $\optquant$, $\quant\primed$, 1},
tick label style={font=\footnotesize},
ytick distance=1,
minor tick num=0,
axis line style=gray,
axis x line=middle,
axis y line=left,
tick style={black},
legend style={draw=none,fill=white,font=\small,at={(0.98,0.98)},anchor=north east},
]
\def\rstar{2.0}
\def\qstar{0.5}
\def\qdag{0.3}
\def\lnrstar{0.6931471805599453}
\def\scale{0.7142857142857143}
\definecolor{orangeLine}{HTML}{F0A000}
\definecolor{blueLine}{HTML}{3AA6FF}
\definecolor{greenLine}{HTML}{00A37A}
\addplot[black,line width=1.8pt,domain=0:1,samples=600]{4*x*(1-x)};
\addplot[black!40,line width=1.6pt,solid,domain=1e-6:1,samples=900]{ifthenelse(x<=\qstar,\rstar*x,(\rstar/\lnrstar)*x*ln(1/x))};
\addplot[black,line width=1.8pt,dashed,domain=0:1,samples=600]{\scale*(4*x*(1-x))};
\addplot[gray,line width=1pt,dotted] coordinates {(\qstar,0) (\qstar,1.05)};
\addplot[gray, line width=1pt,dotted] coordinates {(\qdag,0) (\qdag,1.05)};
\end{axis}
\end{tikzpicture} }
        \caption{Graphical illustration of the analysis for \Cref{prop:bk vcg:worst case}. The black (resp.\ gray) solid curve is the revenue curve $\revcurve\primed$ (resp.\ $\WorstRevcurve$) induced by distribution $\buyerdist\primed$ (resp.\ $\WorstRegDist$). The black dashed line in Case (ii) is $\optreserve/\val\primed\cdot \revcurve\primed(\quant)$. Quantile $\optquant$ is the monopoly quantile and $\quant\primed$ is the quantile threshold defined in \Cref{lem:bk vcg:osdensity single crossing}.}
    \label{fig:bk vcg:worst case reduction}
    \end{figure}

    \xhdr{Case (i) [Quantiles $\optquant\leq \quantSC$]:} In this case, 
    consider $\optreserve$-truncated $\Reglevel$-generalized Pareto distribution~$\WorstRegDist$. By construction, $\WorstRegDist$ has the same monopoly reserve, monopoly quantile, and monopoly revenue as distribution $\buyerdist\primed$. In addition, $\distLambdaReserve$ is first order stochastically dominated by $\buyerdist\primed$, and thus the induced revenue curves $\WorstRevcurve\left(\quant\right) \leq \revcurve\primed\left(\quant\right)$ for all quantiles $\quant\in[0, 1]$. See \Cref{fig:bk vcg:worst case reduction} for a graphical illustration. Note that
    \begin{align*}
        &\ccapproxratio\cdot \RevOPT_{m:n}\left(\WorstRegDist\right)
        -
        \RevVCG_{m:n + k}\left(\WorstRegDist\right) 
        \\
        ={} & 
        \ccapproxratio\cdot n\cdot \left(\displaystyle\int_0^{\optquant} \WorstRevcurve\left(\quant\right) \cdot \osdensity_{m:n - 1}\left(\quant\right)\,\dd\quant 
        +
        \int_{\optquant}^1 \WorstRevcurve\left(\optquant\right) \cdot \osdensity_{m:n - 1}\left(\quant\right)\,\dd\quant
        \right)
        \\
        & \qquad 
        - 
        \left(n + k\right)\cdot \left(\displaystyle\int_0^{\optquant} \WorstRevcurve\left(\quant\right) \cdot \osdensity_{m:n + k - 1}\left(\quant\right)\,\dd\quant 
        +
        \displaystyle\int_{\optquant}^1 \WorstRevcurve\left(\quant\right) \cdot \osdensity_{m:n + k - 1}\left(\quant\right)\,\dd\quant 
        \right)
        \\
        ={} & 
        \displaystyle\int_0^{\optquant}
        \WorstRevcurve\left(\quant\right)
        \cdot 
        \auxfunc\left(\quant\right)
        \,
        \dd \quant
+
        \ccapproxratio\cdot n\cdot \int_{\optquant}^1 \WorstRevcurve\left(\optquant\right) \cdot \osdensity_{m:n - 1}\left(\quant\right)\,\dd\quant
        \\
        &\qquad
        -
        \left(n + k\right)\cdot \displaystyle\int_{\optquant}^1 \WorstRevcurve\left(\quant\right) \cdot \osdensity_{m:n + k - 1}\left(\quant\right)\,\dd\quant 
        \\
        \geq{} & 
        \displaystyle\int_0^{\optquant}
        \revcurve\primed\left(\quant\right)
        \cdot 
        \auxfunc\left(\quant\right)
        \,
        \dd \quant
+
        \ccapproxratio\cdot n\cdot \int_{\optquant}^1 \revcurve\primed\left(\optquant\right) \cdot \osdensity_{m:n - 1}\left(\quant\right)\,\dd\quant
        -
        \left(n + k\right)\cdot \displaystyle\int_{\optquant}^1 \revcurve\primed\left(\quant\right) \cdot \osdensity_{m:n + k - 1}\left(\quant\right)\,\dd\quant 
        \\
        ={} & 
        \ccapproxratio\cdot \RevOPT_{m:n}\left(\WorstRegDist\primed\right)
        -
        \RevVCG_{m:n + k}\left(\buyerdist\primed\right) 
        \\
        \geq {}& 0
    \end{align*}
    where the first inequality holds since (i) revenue curves satisfy $\WorstRevcurve\left(\quant\right)\leq \revcurve\primed\left(\quant\right)$ for all quantiles $\quant\in[0, 1]$, $\WorstRevcurve\left(\optquant\right) = \revcurve\primed\left(\optquant\right)$, and (ii) auxiliary function $\auxfunc\left(\quant\right)
    \leq 0$ for all quantiles $\quant\in[0,\optquant]\subseteq[0,\quantSC]$ (\Cref{lem:bk vcg:osdensity single crossing}). This completes the proof of the proposition statement for Case (i).

    \xhdr{Case (ii) [Quantiles $\optquant\geq \quantSC$]:}
    In this case, 
    consider $\optreserve$-truncated $\Reglevel$-generalized Pareto distribution~$\WorstRegDist$. By construction, $\WorstRegDist$ has the same monopoly reserve, monopoly quantile, and monopoly revenue as distribution $\buyerdist\primed$. In addition, $\WorstRegDist$ is first order stochastically dominated by $\buyerdist\primed$.
    Let $\valSC$ be the value such that $1 - \buyerdist\primed\left(\valSC\right) = \quantSC$. The revenue curve $\WorstRevcurve$ induced by distribution $\WorstRegDist$ satisfies 
    \begin{align*}
        \forall \quant\in[0,\quantSC]:&\qquad 
        \WorstRevcurve\left(\quant\right) \leq \frac{\optreserve}{\valSC}\cdot  \revcurve\primed\left(\quant\right) \\
        \forall \quant\in[\quantSC,\optquant]:& \qquad 
        \WorstRevcurve\left(\quant\right) \geq 
        \frac{\optreserve}{\valSC}\cdot \revcurve\primed\left(\quant\right)\\
        \forall \quant\in[\optquant, 1]:& \qquad 
        \WorstRevcurve\left(\quant\right) \leq \revcurve\primed\left(\quant\right)
    \end{align*}
    and $\WorstRevcurve\left(\optquant\right) = \revcurve\primed\left(\optquant\right)$,
    where the first and second inequalities holds since distribution $\buyerdist\primed$ is $\Reglevel$-regular and thus the revenue curve $\revcurve\primed$ is concave. See \Cref{fig:bk vcg:worst case reduction} for a graphical illustration. Note that 
    \begin{align*}
        &\ccapproxratio\cdot \RevOPT_{m:n}\left(\WorstRegDist\right)
        -
        \RevVCG_{m:n + k}\left(\WorstRegDist\right) 
        \\    
        ={} & 
        \ccapproxratio\cdot n\cdot \left(\displaystyle\int_0^{\optquant} \WorstRevcurve\left(\quant\right) \cdot \osdensity_{m:n - 1}\left(\quant\right)\,\dd\quant 
        +
        \int_{\optquant}^1 \WorstRevcurve\left(\optquant\right) \cdot \osdensity_{m:n - 1}\left(\quant\right)\,\dd\quant
        \right)
        \\
        & \qquad 
        - 
        \left(n + k\right)\cdot \left(\displaystyle\int_0^{\optquant} \WorstRevcurve\left(\quant\right) \cdot \osdensity_{m:n + k - 1}\left(\quant\right)\,\dd\quant 
        +
        \displaystyle\int_{\optquant}^1 \WorstRevcurve\left(\quant\right) \cdot \osdensity_{m:n + k - 1}\left(\quant\right)\,\dd\quant 
        \right)
        \\
        ={} & 
        \ccapproxratio\cdot n\cdot \left(\displaystyle\int_0^{\quantSC} \WorstRevcurve\left(\quant\right) \cdot \osdensity_{m:n - 1}\left(\quant\right)\,\dd\quant 
        +
        \displaystyle\int_{\quantSC}^{\optquant} \WorstRevcurve\left(\quant\right) \cdot \osdensity_{m:n - 1}\left(\quant\right)\,\dd\quant 
        +
        \int_{\optquant}^1 \WorstRevcurve\left(\optquant\right) \cdot \osdensity_{m:n - 1}\left(\quant\right)\,\dd\quant
        \right)
        \\
        & \qquad 
        - 
        \left(n + k\right)\cdot \left(\displaystyle\int_0^{\quantSC} \WorstRevcurve\left(\quant\right) \cdot \osdensity_{m:n + k - 1}\left(\quant\right)\,\dd\quant 
        +
        \displaystyle\int_{\quantSC}^{\optquant} \WorstRevcurve\left(\quant\right) \cdot \osdensity_{m:n + k - 1}\left(\quant\right)\,\dd\quant \right.
        \\
        &
        \hspace{10cm}
        \left.
        +
        \displaystyle\int_{\optquant}^1 \WorstRevcurve\left(\quant\right) \cdot \osdensity_{m:n + k - 1}\left(\quant\right)\,\dd\quant 
        \right)
        \\
        ={} & 
        \displaystyle\int_0^{\quantSC}
        \WorstRevcurve\left(\quant\right)
        \cdot 
        \auxfunc\left(\quant\right)
        \,
        \dd \quant
+
        \displaystyle\int_{\quantSC}^{\optquant}
        \WorstRevcurve\left(\quant\right)
        \cdot 
        \auxfunc\left(\quant\right)
        \,
        \dd \quant
        \\
        & \qquad
        +
        \ccapproxratio\cdot n\cdot \int_{\optquant}^1 \WorstRevcurve\left(\optquant\right) \cdot \osdensity_{m:n - 1}\left(\quant\right)\,\dd\quant
        -
        \left(n + k\right)\cdot \displaystyle\int_{\optquant}^1 \WorstRevcurve\left(\quant\right) \cdot \osdensity_{m:n + k - 1}\left(\quant\right)\,\dd\quant 
        \\
        = {} & 
        \frac{\optreserve}{\valSC}
        \cdot 
        \left(
        \frac{\valSC}{\optreserve}
        \cdot 
        \displaystyle\int_0^{\quantSC}
        \WorstRevcurve\left(\quant\right)
        \cdot 
        \auxfunc\left(\quant\right)
        \,
        \dd \quant
        \right.
+
        \frac{\valSC}{\optreserve}\cdot 
        \displaystyle\int_{\quantSC}^{\optquant}
        \WorstRevcurve\left(\quant\right)
        \cdot 
        \auxfunc\left(\quant\right)
        \,
        \dd \quant
        \\
        & \left.\qquad
        +
        \frac{\valSC}{\optreserve}\cdot 
        \ccapproxratio\cdot n\cdot \int_{\optquant}^1 \WorstRevcurve\left(\optquant\right) \cdot \osdensity_{m:n - 1}\left(\quant\right)\,\dd\quant
        -
        \frac{\valSC}{\optreserve}\cdot 
        \left(n + k\right)\cdot \displaystyle\int_{\optquant}^1 \WorstRevcurve\left(\quant\right) \cdot \osdensity_{m:n + k - 1}\left(\quant\right)\,\dd\quant 
        \right)
        \\  
        \geq {} & 
        \frac{\optreserve}{\valSC}
        \cdot 
        \left(
        \displaystyle\int_0^{\quantSC}
        \revcurve\primed\left(\quant\right)
        \cdot 
        \auxfunc\left(\quant\right)
        \,
        \dd \quant
        \right.
+
        \displaystyle\int_{\quantSC}^{\optquant}
        \revcurve\primed\left(\quant\right)
        \cdot 
        \auxfunc\left(\quant\right)
        \,
        \dd \quant
        \\
        & \left.\qquad
        + {}~
        \ccapproxratio\cdot n\cdot \int_{\optquant}^1 \revcurve\primed\left(\optquant\right) \cdot \osdensity_{m:n - 1}\left(\quant\right)\,\dd\quant
        -
        \left(n + k\right)\cdot \displaystyle\int_{\optquant}^1 \revcurve\primed\left(\quant\right) \cdot \osdensity_{m:n + k - 1}\left(\quant\right)\,\dd\quant 
        \right)
        \\
        ={} & \frac{\optreserve}{\valSC}
        \cdot\left(\ccapproxratio\cdot \RevOPT_{m:n}\left(\buyerdist\primed\right)
        -
        \RevVCG_{m:n + k}\left(\buyerdist\primed\right) \right)
        \\
        \geq {} & 0
    \end{align*}
    Here, the third equality holds due to the construction of auxiliary function $\auxfunc$, and the first inequality holds since 
    \begin{align*}
        &\frac{\valSC}{\optreserve}
        \cdot 
        \displaystyle\int_0^{\quantSC}
        \WorstRevcurve\left(\quant\right)
        \cdot 
        \auxfunc\left(\quant\right)
        \,
        \dd \quant
        \overset{\left(a\right)}{\geq} 
        \displaystyle\int_0^{\quantSC}
        \revcurve\primed\left(\quant\right)
        \cdot 
        \auxfunc\left(\quant\right)
        \,
        \dd \quant
        \\
        &\frac{\valSC}{\optreserve}
        \cdot 
        \displaystyle\int_{\quantSC}^{\optquant}
        \WorstRevcurve\left(\quant\right)
        \cdot 
        \auxfunc\left(\quant\right)
        \,
        \dd \quant
        \overset{\left(b\right)}{\geq} 
        \displaystyle\int_{\quantSC}^{\optquant}
        \revcurve\primed\left(\quant\right)
        \cdot 
        \auxfunc\left(\quant\right)
        \,
        \dd \quant
        \\
        &\frac{\valSC}{\optreserve}\cdot 
        \ccapproxratio\cdot n\cdot \int_{\optquant}^1 \WorstRevcurve\left(\optquant\right) \cdot \osdensity_{m:n - 1}\left(\quant\right)\,\dd\quant
        -
        \frac{\valSC}{\optreserve}\cdot 
        \left(n + k\right)\cdot \displaystyle\int_{\optquant}^1 \WorstRevcurve\left(\quant\right) \cdot \osdensity_{m:n + k - 1}\left(\quant\right)\,\dd\quant 
        \\
        & \qquad \qquad\overset{\left(c\right)}{\geq}{} 
        \ccapproxratio\cdot n\cdot \int_{\optquant}^1 \WorstRevcurve\left(\optquant\right) \cdot \osdensity_{m:n - 1}\left(\quant\right)\,\dd\quant
        -
        \left(n + k\right)\cdot \displaystyle\int_{\optquant}^1 \WorstRevcurve\left(\quant\right) \cdot \osdensity_{m:n + k - 1}\left(\quant\right)\,\dd\quant 
        \\
        & \qquad \qquad\overset{\left(d\right)}{\geq}{} 
        \ccapproxratio\cdot n\cdot \int_{\optquant}^1 \revcurve\primed\left(\optquant\right) \cdot \osdensity_{m:n - 1}\left(\quant\right)\,\dd\quant
        -
        \left(n + k\right)\cdot \displaystyle\int_{\optquant}^1 \revcurve\primed\left(\quant\right) \cdot \osdensity_{m:n + k - 1}\left(\quant\right)\,\dd\quant 
    \end{align*}
    where inequality~(a) holds since $\WorstRevcurve\left(\quant\right) \leq \frac{\optreserve}{\valSC}\cdot \revcurve\primed\left(\quant\right)$ and $\auxfunc\left(\quant\right) \leq 0$ for all quantiles $\quant \in [0, \quantSC]$, inequality~(b) holds since $\WorstRevcurve\left(\quant\right) \geq \frac{\optreserve}{\valSC}\cdot \revcurve\primed\left(\quant\right)$ and $\auxfunc\left(\quant\right) \geq 0$ for all quantiles $\quant \in [\quantSC,\optquant]$, and inequality~(c) holds since $\frac{\valSC}{\optreserve} \geq 1$ (implied by the case assumption that $\quantSC \leq \optquant$ and the construction of $\valSC$) and $\ccapproxratio\cdot n\cdot \int_{\optquant}^1 \WorstRevcurve\left(\optquant\right) \cdot \osdensity_{m:n - 1}\left(\quant\right)\,\dd\quant
    -
    \left(n + k\right)\cdot \int_{\optquant}^1 \WorstRevcurve\left(\quant\right) \cdot \osdensity_{m:n + k - 1}\left(\quant\right)\,\dd\quant  \geq 0$ (implied by the fact that $\WorstRevcurve\left(\optquant\right) \geq \WorstRevcurve\left(\quant\right)$ for all $\quant\in[\optquant,1]$ and $\ccapproxratio\cdot n\cdot\osdensity_{m:n-1}\left(\quant\right)\,\dd\quant - \left(n+k\right)\cdot \osdensity_{m:n+k-1}\left(\quant\right)\geq 0$ for all $\quant\in[\optquant, 1]$), and inequality~(d) holds since $\WorstRevcurve\left(\optquant\right) = \revcurve\primed\left(\optquant\right)$, $\WorstRevcurve\left(\quant\right) \leq \revcurve\primed\left(\quant\right)$ for all $\quant\in[\optquant,1]$. This completes the proof of the proposition statement for Case~(ii). 

    Putting the analysis for the two cases together, we finish the proof of \Cref{prop:bk vcg:worst case}.
\end{proof}
 
\label{apx:thmVCGLargeMarket}

We now prove \Cref{thm:bk vcg:large market}.
\begin{proof}[Proof of \Cref{thm:bk vcg:large market}] 
    Fix $\Reglevel\in[0, 1]$, $\imbalanceratio\in(0, 1]$, and $\ccapproxratio\in(0, 1)$. For ease of presentation, define auxiliary notations $m_n$ and $k_n$ as
    \begin{align*}
        m_n \triangleq \lceil\imbalanceratio\cdot n\rceil
        \;\;\mbox{and}\;\;
        k_n \triangleq \ccomplexity\left(\lceil\imbalanceratio\cdot n\rceil,n,\lambdaDists,\ccapproxratio\right)
    \end{align*}
    Invoking \Cref{prop:bk vcg:worst case}, $k_n$ is the smallest non-negative integer such that for all $\reserve\in[1,(1-\Reglevel)^{-1/\Reglevel}]$,
    \begin{align}
    \label{eq:bk vcg:large market:condition}
        \RevVCG_{m_n:n+k_n}\left(\distLambdaReserve\right) \geq 
        \ccapproxratio\cdot 
        \RevOPT_{m_n:n}\left(\distLambdaReserve\right) 
    \end{align}
    where $\distLambdaReserve$ is the $\reserve$-truncated $\Reglevel$-generalized Pareto distribution (\Cref{def:truncated generalized Pareto distribution}). By construction, distribution $\Reglevel$ has monopoly reserve $\reserve$, monopoly quantile $1/\reserve$, and thus monopoly revenue of one. 

    Invoking \Cref{lem:prelim:revenue expression in uniform order statistic}, we express the revenue of the {\VCGAuction} and the {\BayesianOptimalMech} as follows (recall $\osdensity_{a:b}$ is the PDF of $a$-th smallest order statistics of $b$ i.i.d.\ samples from uniform distribution $U[0,1]$):
    \begin{align*}
        \lim_{n\rightarrow \infty}
        \RevVCG_{m_n:n+k_n}\left(\distLambdaReserve\right) 
        &= 
        \lim_{n\rightarrow \infty}
        \left(n+k_n\right)\cdot \displaystyle\int_0^1 \revcurve_{(\Reglevel,\reserve)}\left(\quant\right)
        \cdot \osdensity_{m_n:n+k_n - 1}\left(\quant\right)\,\dd \quant \notag
        \\
        &\overset{(a)}{=}
        \lim_{n\rightarrow \infty} \left(n+k_n\right)\cdot \revcurve_{(\Reglevel,\reserve)}\left(\frac{m_n}{n+k_n}\right)~,
    \end{align*}
    and
    \begin{align*}
        \lim_{n\rightarrow \infty}
        \RevOPT_{m_n:n}\left(\distLambdaReserve\right) 
        &= 
        \lim_{n\rightarrow \infty}
        n\cdot \displaystyle\int_0^1
        \revcurve_{(\Reglevel,\reserve)}\left(\min\left\{\quant,\frac{1}{\reserve}\right\}\right)\cdot \osdensity_{m_n:n-1}\left(\quant\right)\,\dd\quant \notag
        \\
        &\overset{(b)}{=}
        \lim_{n\rightarrow \infty} n\cdot \revcurve_{(\Reglevel,\reserve)}\left(\min\left\{\frac{m_n}{n},\frac{1}{\reserve}\right\}\right)~,
    \end{align*}
    where $\revcurve_{(\Reglevel,\reserve)}$ is the induced revenue curve of distribution $\distLambdaReserve$, and both equalities~(a) and (b) hold due to the concentration of the order statistics ($\expect[\quant\sim\osdensity_{m_n:n+k_n - 1}]{\quant}=m_n/(n+k_n)$ and $\expect[\quant\sim\osdensity_{m_n:n - 1}]{\quant}=m_n/n$) and the Lipschitz continuity (implied by the $\Reglevel$-regularity of $\distLambdaReserve$) of the revenue curve $\revcurve_{(\Reglevel,\reserve)}(\cdot)$ and $\revcurve_{(\Reglevel,\reserve)}(\min\{\cdot,1/\reserve\})$ at $m_n/(n+k_n)$ and $m_n/n$, respectively.

    Below, we consider two cases depending on the value of the supply-to-demand ratio $\imbalanceratio$ separately.

    \xhdr{Case (i) [Supply-to-demand ratio $\imbalanceratio\in[(1-\Reglevel)^{1/\Reglevel}, 1]$]:}
    In this case, we further consider three subcases depending on the value of truncation $\reserve$ (which is also the monopoly reserve by construction) in distribution $\distLambdaReserve$.
    
    \xhdr{Case (i.a) [Truncation $\reserve\in[1,n/m_n]$]:} In this subcase, the revenue curve $\revcurve_{(\Reglevel,\reserve)}$ satisfies 
    \begin{align*}
        \revcurve_{(\Reglevel,\reserve)}\left(\frac{m_n}{n+k_n}\right) = \reserve\cdot \frac{m_n}{n+k_n}
        \;\;
        \mbox{and}
        \;\;
        \revcurve_{(\Reglevel,\reserve)}\left(\min\left\{\frac{m_n}{n},\frac{1}{\reserve}\right\}\right)
        =
        \reserve\cdot \frac{m_n}{n}
    \end{align*}
    Therefore, 
    \begin{align*}
        \lim_{n\rightarrow\infty}~
        \frac{\RevVCG_{m_n:n+k_n}\left(\distLambdaReserve\right)}{\RevOPT_{m_n:n}\left(\distLambdaReserve\right)}
        =
        \lim_{n\rightarrow\infty}~\frac{\left(n+k_n\right)\cdot \reserve \cdot \frac{m_n}{n + k_n}}{n\cdot \reserve\cdot \frac{m_n}{n}}
        =
        1 > \ccapproxratio
    \end{align*}

    \xhdr{Case (i.b) [Truncation $\reserve \in(n/m_n, (n+k_n)/m_n]$]:} In this subcase, the revenue curve $\revcurve_{(\Reglevel,\reserve)}$ satisfies
    \begin{align*}
        \revcurve_{(\Reglevel,\reserve)}\left(\frac{m_n}{n+k_n}\right) = \reserve\cdot \frac{m_n}{n+k_n}
        \;\;
        \mbox{and}
        \;\;
        \revcurve_{(\Reglevel,\reserve)}\left(\min\left\{\frac{m_n}{n},\frac{1}{\reserve}\right\}\right)
        =
        1
    \end{align*}
    Therefore, 
    \begin{align*}
        \lim_{n\rightarrow\infty}~
        \frac{\RevVCG_{m_n:n+k_n}\left(\distLambdaReserve\right)}{\RevOPT_{m_n:n}\left(\distLambdaReserve\right)}
        =
        \lim_{n\rightarrow\infty}~\frac{\left(n+k_n\right)\cdot \reserve \cdot \frac{m_n}{n + k_n}}{n\cdot 1}
        =
        \frac{m_n}{n}\cdot \reserve > 1 > \ccapproxratio
    \end{align*}

    \xhdr{Case (i.c) [Truncation $\reserve \in((n+k_n)/m_n,(1-\Reglevel)^{-1/\Reglevel}]$]:} In this subcase, the revenue curve $\revcurve_{(\Reglevel,\reserve)}$ satisfies
    \begin{align*}
        &\revcurve_{(\Reglevel,\reserve)}\left(\frac{m_n}{n+k_n}\right) = \frac{\reserve}{\reserve^\Reglevel - 1}\cdot \left(\left(\frac{n+k_n}{m_n}\right)^\Reglevel-1\right)\cdot \frac{m_n}{n+k_n}
        \geq 
        \frac{1}{\Reglevel}
        \cdot 
        \left(\frac{1}{1-\Reglevel}\right)^{\frac{1-\Reglevel}{\Reglevel}}\cdot \left(\left(\frac{n+k_n}{m_n}\right)^\Reglevel-1\right)\cdot \frac{m_n}{n+k_n}
        \\
        \mbox{and}
        \;\;
        &\revcurve_{(\Reglevel,\reserve)}\left(\min\left\{\frac{m_n}{n},\frac{1}{\reserve}\right\}\right)
        =
        1 
    \end{align*}
    where the inequality holds since $\reserve/(\reserve^\Reglevel - 1)$ is decreasing in $\reserve\in((n+k_n)/m_n),(1-\Reglevel)^{-1/\Reglevel}]$.
    Therefore, 
    \begin{align*}
        \lim_{n\rightarrow\infty}~
        \frac{\RevVCG_{m_n:n+k_n}\left(\distLambdaReserve\right)}{\RevOPT_{m_n:n}\left(\distLambdaReserve\right)}
        \geq
        \lim_{n\rightarrow\infty}~\frac{(n+k_n)\cdot\frac{1}{\Reglevel}
        \cdot 
        \left(\frac{1}{1-\Reglevel}\right)^{\frac{1-\Reglevel}{\Reglevel}}\cdot \left(\left(\frac{n+k_n}{m_n}\right)^\Reglevel-1\right)\cdot \frac{m_n}{n+k_n}}{n\cdot 1}
        \ge \ccapproxratio
    \end{align*}
    where the last inequality holds due to the value of $k_n$ provided in the theorem statement.

    \xhdr{Case (ii) [Supply-to-demand ratio $\imbalanceratio\in(0,(1-\Reglevel)^{1/\Reglevel})$]:}
    In this case, since truncation $\reserve \leq (1-\Reglevel)^{-1/\Reglevel}$ by construction (\Cref{def:truncated generalized Pareto distribution}), it guarantees that $\reserve \in [1, n/m_n]$ and thus the same argument as Case (i.a) applies.

    \smallskip
    \noindent
    Combining all the pieces together, we complete the proof of \Cref{thm:bk vcg:large market} as desired.
\end{proof}
  
\section{Supply-Limiting VCG Auction}
\label{sec:bk supply limiting}

Unlike the single-unit setting, where the {\VCGAuction} achieves the optimal competition complexity of just $1$ additional buyer under regular (and even MHR) distributions \citep{BK-96}, it remains unclear whether its competition complexity is optimal among all prior-independent mechanisms in the multi-unit setting. Motivated by this question, we study the following variant of the {\VCGAuction} and analyze its competition complexity.

\begin{definition}[{\SLVCGAuction}]
    In an $m$-unit $n$-buyer market with $m\leq n$, the {\SLVCGAuction} with supply $\supply \in [m]$ runs the {\VCGAuction} to sell $\supply$ units to the $n$ buyers.
\end{definition}

We remark that the supply $\supply$ is a design parameter of the {\SLVCGAuction} and must be chosen without knowledge of the valuation distribution $\buyerdist$, i.e., in a prior-independent manner. 
It can be verified that the {\SLVCGAuction} is therefore prior-independent, DSIC, and ex post IR. Moreover, the classic {\VCGAuction} corresponds to a special case of the {\SLVCGAuction} in which the supply is degenerate at $m$.
We define its competition complexity as follows:

\begin{definition}[Competition complexity of {\SLVCGAuction}]
    For any $m, n \in \naturals$ with $m \leq n$, any $\ccapproxratio \in (0, 1]$, and any distribution class $\DistClass$, the \emph{competition complexity} of the {\SLVCGAuction} is defined as
    \begin{align*}    
\ccomplexitySL\left(m, n, \DistClass, \ccapproxratio\right) 
        &\triangleq 
        \min\left\{k \in \naturals :
        \exists\supply\in[m],~\forall\,\buyerdist \in \DistClass,\,
        \RevSLVCG_{\supply:n+k}\left(\buyerdist\right) \geq 
        \ccapproxratio \cdot \RevOPT_{m:n}\left(\buyerdist\right)
        \right\}.
    \end{align*}
    If no finite $k$ satisfies the condition, the competition complexity is defined to be $\infty$.
\end{definition}

Similar to the analysis in \Cref{sec:bk vcg}, we first show that the worst-case distribution for the {\SLVCGAuction} is again given by the truncated $\Reglevel$-generalized Pareto distributions (\Cref{def:truncated generalized Pareto distribution}), and then present its asymptotic competition complexity.
Both arguments closely mirror their counterparts for the classic {\VCGAuction} and are deferred to \Cref{apx:propVCGSLWorstDists,apx:thmVCGSLLargeMarket}.

\begin{restatable}[Worst-case distribution characterization]{proposition}{propVCGSLWorstDists}
\label{prop:bk supply limiting:worst case}
    For any $\Reglevel\in[0, 1]$, $m, n \in \naturals$ with $m\leq n$, and $\ccapproxratio\in(0, 1]$, the competition complexity of the {\SLVCGAuction} in the $m$-unit, $n$-buyer market over the class of $\Reglevel$-regular distributions $\lambdaDists$ coincides with that over the class of truncated $\Reglevel$-generalized Pareto distributions $\truncGPDists$; that is,
    \begin{align*}
        \ccomplexitySL\left(m,n,\lambdaDists,\ccapproxratio\right) = \ccomplexitySL\left(m,n,\truncGPDists,\ccapproxratio\right).
    \end{align*}
\end{restatable}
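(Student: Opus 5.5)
Since $\truncGPDists\subseteq\lambdaDists$, one inequality is immediate: $\ccomplexitySL(m,n,\truncGPDists,\ccapproxratio)\le\ccomplexitySL(m,n,\lambdaDists,\ccapproxratio)$. For the reverse, we show a pointwise statement that holds uniformly in the supply. Fix any $k\in\naturals$ and any $\supply\in[m]$. We claim that if some $\Reglevel$-regular $\buyerdist\primed$ satisfies $\RevSLVCG_{\supply:n+k}(\buyerdist\primed)<\ccapproxratio\cdot\RevOPT_{m:n}(\buyerdist\primed)$, then so does the truncated generalized Pareto distribution $\WorstRegDist$ with the same monopoly reserve $\optreserve$.

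Taking this claim for granted, suppose $k$ and $\supply$ work for every distribution in $\truncGPDists$. Then they must also work for every distribution in $\lambdaDists$, since otherwise the claim produces a failing member of $\truncGPDists$. This gives equality of the two competition complexities. It matters that $\supply$ is fixed before the distribution is chosen: the existential quantifier over $\supply$ is outside the universal quantifier over $\buyerdist$, so proving the claim for each fixed $\supply$ suffices.

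To prove the claim, we follow the proof of \Cref{prop:bk vcg:worst case} essentially verbatim, with different order-statistic densities. Normalize the monopoly revenue to one, so $\optreserve\in[1,(1-\Reglevel)^{-1/\Reglevel}]$ by \citet{SS-19}. By \Cref{lem:prelim:revenue expression in uniform order statistic}, the supply-limited VCG revenue equals $(n+k)\int_0^1 \revcurve(\quant)\,\osdensity_{\supply:n+k-1}(\quant)\,\dd\quant$. The benchmark $\RevOPT_{m:n}$ still uses $\osdensity_{m:n-1}$ and $\revcurve(\min\{\quant,\optquant\})$. Define
\[
\auxfunc(\quant)\triangleq\ccapproxratio\, n\,\osdensity_{m:n-1}(\quant)-(n+k)\,\osdensity_{\supply:n+k-1}(\quant).
\]
The key new ingredient is a single-crossing lemma replacing \Cref{lem:bk vcg:osdensity single crossing}: there is a $\quantSC$ with $\auxfunc\le0$ on $[0,\quantSC]$ and $\auxfunc\ge0$ on $[\quantSC,1]$.

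This step is the main obstacle, because the two densities now have different exponents on both $\quant$ and $1-\quant$. The ratio is
\[
\frac{(n+k)\,\osdensity_{\supply:n+k-1}(\quant)}{\ccapproxratio\, n\,\osdensity_{m:n-1}(\quant)}=C\cdot\quant^{\supply-m}(1-\quant)^{k+m-\supply},
\]
with $\supply-m\le0$ and $k+m-\supply\ge0$. Its logarithm, $(\supply-m)\ln\quant+(k+m-\supply)\ln(1-\quant)+\text{const}$, is strictly decreasing on $(0,1)$ whenever $(\supply,k)\neq(m,0)$. Both terms are monotone decreasing, so no convexity argument is needed. The degenerate case $\supply=m$, $k=0$ has a constant ratio and is trivially single-crossing. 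Hence the ratio crosses $1$ at most once, from above to below, which is exactly the sign pattern required. One caveat: at $\quant\to0$ the ratio may blow up, but that only reinforces $\auxfunc\le0$ near $0$.

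With the single-crossing lemma in hand, we split into the same two cases as in \Cref{prop:bk vcg:worst case}. In Case (i), $\optquant\le\quantSC$. FOSD via \Cref{lem:fosd} gives $\WorstRevcurve\le\revcurve\primed$, with equality at $\optquant$. Combining this with $\auxfunc\le0$ on $[0,\optquant]$, and noting that the tail term over $[\optquant,1]$ is monotone in the revenue curve with the right sign, the additive gap for $\WorstRegDist$ is at least that for $\buyerdist\primed$. In Case (ii), $\optquant\ge\quantSC$. Here we use concavity of $\revcurve\primed$ and the scaling factor $\optreserve/\valSC\le1$ exactly as before. The sign pattern of $\auxfunc$ on $[0,\quantSC]$, $[\quantSC,\optquant]$ and $[\optquant,1]$ is identical to the original proof, and the tail term is still nonnegative because $\WorstRevcurve(\optquant)\ge\WorstRevcurve(\quant)$ and $\auxfunc\ge0$ on $[\optquant,1]$. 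Nothing else in the argument used $\supply=m$, so the claim follows.
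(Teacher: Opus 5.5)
Your proposal is correct and follows the paper's proof essentially step for step. You fix the supply $\supply$ and prove single crossing of $\ccapproxratio\, n\,\osdensity_{m:n-1}$ against $(n+k)\,\osdensity_{\supply:n+k-1}$; your monotone-ratio argument is the same observation as the paper's factoring out of $(1-\quant)^{n-1-m}\quant^{\supply-1}$. You then rerun the two-case comparison with $\WorstRegDist$ from \Cref{prop:bk vcg:worst case} verbatim. Stating the reduction for every pair $(k,\supply)$, rather than only at $k=\ccomplexitySL-1$ as the paper does, is a slightly cleaner way to conclude equality, since it does not implicitly rely on the revenue being monotone in $k$.
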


Similar to \Cref{sec:bk vcg:large market}, to ensure a tractable analysis and obtain clean asymptotic results, we consider the
the large-market regime parameterized by the supply-to-demand ratio $\imbalanceratio\in(0, 1)$ and define the asymptotic optimal competition complexity of the {\SLVCGAuction} with supply-to-demand ratio $\imbalanceratio$ as 
\begin{align*}
    \ccomplexitySLInfty(\imbalanceratio,\lambdaDists,\ccapproxratio) \triangleq 
    \lim_{n\to \infty}~
    \frac{\ccomplexitySL(\lceil\imbalanceratio\cdot n\rceil, n,\lambdaDists,\ccapproxratio)}{n}~.
\end{align*}
Our characterization of the asymptotic optimal competition complexity of the {\SLVCGAuction} is as follows.
\begin{restatable}[Competition complexity in large market]{theorem}{thmVCGSLLargeMarket}
\label{thm:bk supply limiting:large market}
    For any $\Reglevel\in[0, 1]$, $\imbalanceratio\in(0, 1]$, and $\ccapproxratio\in(0, 1)$, the asymptotic optimal competition complexity $\ccomplexitySLInfty(\imbalanceratio,\lambdaDists,\ccapproxratio)$ of the {\VCGAuction} with supply-to-demand ratio $\imbalanceratio$, over the class of $\Reglevel$-regular distributions $\lambdaDists$ and target approximation ratio $\ccapproxratio$ satisfies
    \begin{align*}
        &\ccomplexitySLInfty(\imbalanceratio,\lambdaDists,\ccapproxratio)
        =
        \plus{
        \imbalanceratio\cdot\ccapproxratio\cdot {\left(\frac{\Reglevel}{\imbalanceratio}\cdot{\left(1-\Reglevel\right)}^{\frac{1-\Reglevel}{\Reglevel}}+1 \right)}^{\frac{1}{\Reglevel}}-1
        }
        \cdot \indicator{\imbalanceratio\geq (1-\Reglevel)^{\frac{1}{\Reglevel}}} 
    \intertext{
    which is achieved by setting supply $\supply = \lceil\ccapproxratio\cdot \lceil\imbalanceratio\cdot n\rceil\rceil$. In the special cases of MHR distributions ($\Reglevel = 0$) and regular distributions ($\Reglevel = 1$), 
    the asymptotic competition complexity simplifies to}
        &
        \ccomplexitySLInfty(\imbalanceratio,\MHRDists,\ccapproxratio)
        =
        \plus{
        \imbalanceratio\cdot\ccapproxratio\cdot e^{\frac{1}{e\imbalanceratio}} - 1}
        \cdot \indicator{\imbalanceratio\geq \frac{1}{e}}
        \\
        &
        \ccomplexitySLInfty(\imbalanceratio,\RegDists,\ccapproxratio)
        =
        \plus{
        \imbalanceratio\cdot \ccapproxratio + \ccapproxratio - 1 
        }
    \end{align*}
\end{restatable}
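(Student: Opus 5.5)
We would prove \Cref{thm:bk supply limiting:large market} by mirroring the proof of \Cref{thm:bk vcg:large market}, now with an extra outer optimization over the supply $\supply$. First, \Cref{prop:bk supply limiting:worst case} lets us restrict attention to the one-parameter family $\distLambdaReserve$, $\reserve\in[1,(1-\Reglevel)^{-1/\Reglevel}]$, each with monopoly revenue one and monopoly quantile $1/\reserve$. Write $m_n=\lceil\imbalanceratio n\rceil$, $\supply_n$ for the chosen supply and $k_n$ for the number of added buyers, and let $\sigma=\lim \supply_n/m_n\in[0,1]$ (along subsequences). As before, \Cref{lem:prelim:revenue expression in uniform order statistic}, concentration of uniform order statistics, and Lipschitz continuity of $\revcurve_{(\Reglevel,\reserve)}$ give
\begin{align*}
\RevSLVCG_{\supply_n:n+k_n}(\distLambdaReserve)\approx (n+k_n)\,\revcurve_{(\Reglevel,\reserve)}\!\left(\tfrac{\supply_n}{n+k_n}\right),\qquad
\RevOPT_{m_n:n}(\distLambdaReserve)\approx n\,\revcurve_{(\Reglevel,\reserve)}\!\left(\min\{\tfrac{m_n}{n},\tfrac{1}{\reserve}\}\right),
\end{align*}
up to $o(n)$. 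Since $\ccapproxratio<1$ is a fixed constant, this $o(n)$ slack can be absorbed by an $o(n)$ change in $k_n$.

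For the \emph{lower bound}, we would use two adversarial instances. The first is $\reserve=1$, the point mass at value $1$. There the ratio is $\supply_n/m_n$, which forces $\sigma\ge\ccapproxratio$. The second is the largest truncation $\reserve^\dagger=(1-\Reglevel)^{-1/\Reglevel}$, which is relevant only when $\imbalanceratio\ge(1-\Reglevel)^{1/\Reglevel}$, i.e.\ when $m_n/n\ge 1/\reserve^\dagger$ asymptotically, so that OPT earns $n$. If $(n+k_n)/\supply_n\ge\reserve^\dagger$, VCG earns $(n+k_n)\cdot\frac{\reserve^\dagger}{(\reserve^\dagger)^\Reglevel-1}\big((\tfrac{n+k_n}{\supply_n})^\Reglevel-1\big)\tfrac{\supply_n}{n+k_n}$. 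Substituting $\frac{\reserve^\dagger}{(\reserve^\dagger)^\Reglevel-1}=\frac{1}{\Reglevel}(1-\Reglevel)^{-(1-\Reglevel)/\Reglevel}$ and requiring this to be at least $\ccapproxratio n$ yields
\[
\frac{n+k_n}{n}\ \ge\ \sigma\imbalanceratio\Big(\tfrac{\Reglevel\ccapproxratio}{\sigma\imbalanceratio}(1-\Reglevel)^{\frac{1-\Reglevel}{\Reglevel}}+1\Big)^{1/\Reglevel}.
\]
The right-hand side is increasing in $\sigma$: the map $x\mapsto x(c/x+1)^{1/\Reglevel}$ is increasing, as one checks by differentiating. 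Combined with $\sigma\ge\ccapproxratio$, this gives the lower bound $\ccapproxratio\imbalanceratio(\tfrac{\Reglevel}{\imbalanceratio}(1-\Reglevel)^{\frac{1-\Reglevel}{\Reglevel}}+1)^{1/\Reglevel}-1$. The other branch, $(n+k_n)/\supply_n<\reserve^\dagger$, gives revenue $\reserve^\dagger\supply_n\ge n$, and this already implies the same bound. Note that the stated formula has $\ccapproxratio$ pulled outside and not inside the bracket, which is exactly what substituting $\sigma=\ccapproxratio$ produces. When $\imbalanceratio<(1-\Reglevel)^{1/\Reglevel}$, no such instance exists, and the indicator vanishes.

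For the \emph{upper bound}, we take $\supply_n=\lceil\ccapproxratio m_n\rceil$ and $k_n$ equal to the claimed leading term times $n$ plus $o(n)$, and we check every $\reserve$ through a case split paralleling Cases (i.a)--(i.c) and (ii) of the proof of \Cref{thm:bk vcg:large market}.
\begin{enumerate}
\item If $\reserve\le n/m_n$, OPT earns $\reserve m_n$. Since $k_n\ge0$ gives $\supply_n/(n+k_n)\le\supply_n/n\le1/\reserve$ for the relevant range, VCG earns at least $\reserve\supply_n\approx\ccapproxratio\reserve m_n$.
\item If $n/m_n<\reserve\le(n+k_n)/\supply_n$, OPT earns $n$ and VCG earns $\reserve\supply_n\ge\ccapproxratio n\cdot(\reserve m_n/n)\ge\ccapproxratio n$.
\item If $\reserve>(n+k_n)/\supply_n$, we use monotonicity of $\reserve/(\reserve^\Reglevel-1)$ to reduce to $\reserve^\dagger$, where the choice of $k_n$ makes the inequality hold.
\end{enumerate}

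The main obstacle is the lower-bound direction. Unlike \Cref{thm:bk vcg:large market}, it requires showing that no choice of $\supply$ (as a fraction $\sigma$) beats $\sigma=\ccapproxratio$. This hinges on the monotonicity-in-$\sigma$ claim above and on careful handling of subsequences and of the $o(n)$ errors near the boundaries $\reserve=n/m_n$ and $\reserve=(n+k_n)/\supply_n$. The MHR and regular specializations then follow by taking the limits $\Reglevel\to0$ and $\Reglevel=1$.
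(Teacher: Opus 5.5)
Your upper-bound half coincides with the paper's proof. With $s_n=\lceil\Gamma m_n\rceil$, your three ranges of $r$, together with the case $\alpha<(1-\lambda)^{1/\lambda}$, are exactly its Cases (i.a)--(i.c) and (ii). The paper does not really argue the converse. It only observes that Case (i.a) forces $s\ge\lceil\Gamma m_n\rceil$, and then checks Case (i.c) for that one choice of $s$. So your attempt to show that no supply fraction $\sigma$ does better than $\sigma=\Gamma$ is a genuine addition. As written, though, it contains two errors.

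First, the two branches at $r^\dagger$ are swapped.
\begin{itemize}
\item If $(n+k_n)/s_n\ge r^\dagger$, then $s_n/(n+k_n)\le 1/r^\dagger$ lies on the linear (atom) piece of $R_{(\lambda,r^\dagger)}$, so VCG earns $r^\dagger s_n$.
\item The generalized-Pareto expression applies only when $(n+k_n)/s_n<r^\dagger$. This is how you use it, correctly, in item~3 of your upper bound.
\end{itemize}
Under your labeling, the ``other branch'' pairs revenue $r^\dagger s_n$ with $n+k_n<r^\dagger s_n$. That bounds $k_n$ from above, so the claim that this branch ``already implies the same bound'' has no support.

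Second, the map $g(x)=x(1+c/x)^{1/\lambda}$, with $c=\lambda\Gamma(1-\lambda)^{(1-\lambda)/\lambda}$, is not increasing. Its derivative is $g'(x)=(1+c/x)^{1/\lambda-1}\bigl(1-\frac{c}{x}\cdot\frac{1-\lambda}{\lambda}\bigr)$, which is negative for $x<c(1-\lambda)/\lambda=\Gamma(1-\lambda)^{1/\lambda}$. For example, $\lambda=1/2$ gives $g(x)=x+2c+c^2/x$.

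Both defects are repaired by the single inequality $\sigma\alpha\ge\Gamma(1-\lambda)^{1/\lambda}$. It holds because $\sigma\ge\Gamma$ and $\alpha\ge(1-\lambda)^{1/\lambda}$. It is also necessary directly, since $R_{(\lambda,r^\dagger)}(q)\le r^\dagger q$ caps VCG revenue at $r^\dagger s_n$. On that range:
\begin{itemize}
\item $g$ is increasing, so your monotonicity step goes through.
\item In the correctly labeled linear branch, $n+k_n\ge r^\dagger s_n$, and $r^\dagger x\ge g(x)$ holds because $(r^\dagger)^\lambda=\frac{1}{1-\lambda}\ge 1+c/x$ exactly when $x\ge\Gamma(1-\lambda)^{1/\lambda}$. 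So that branch yields the same bound.
\end{itemize}

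Alternatively, you can avoid the branch split altogether. The untruncated curve $R_{\mathrm{GP}}(q)=\frac{r^\dagger}{(r^\dagger)^\lambda-1}(q^{1-\lambda}-q)$ is concave and attains its peak value $1$ at $q=1/r^\dagger$. Hence $R_{(\lambda,r^\dagger)}(q)=\min\{r^\dagger q,R_{\mathrm{GP}}(q)\}$. To first order, VCG revenue is then $\min\{r^\dagger s_n,(n+k_n)R_{\mathrm{GP}}(s_n/(n+k_n))\}$, and the generalized-Pareto inequality is necessary in either case.
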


We present several interpretations of the asymptotic competition complexity in \Cref{thm:bk supply limiting:large market}.

First, we observe that as the target approximation ratio $\ccapproxratio$ approaches 1, the optimal supply for the {\SLVCGAuction}, namely $\supply = \lceil \ccapproxratio\cdot \lceil\imbalanceratio \cdot n \rceil\rceil$, converges to the original supply $m = \lceil \imbalanceratio \cdot n \rceil$. Consequently, the competition complexities of the classic {\VCGAuction} and the {\SLVCGAuction} coincide in this regime. The intuition is straightforward: within the class of $\Reglevel$-regular distributions lies the degenerate distribution that places a point mass at value 1 (i.e., all buyers have deterministic value 1). Under this distribution, to fully outperform the {\BayesianOptimalMech}, it is necessary to sell all $\lceil \imbalanceratio \cdot n \rceil$ units; hence, any strictly smaller supply $\supply$ would fail to achieve the target. Therefore, when the goal is exact optimality ($\ccapproxratio = 1$), prior-independent supply limiting offers no advantage over the standard {\VCGAuction}.

However, under approximate optimality ($\ccapproxratio < 1$), the {\SLVCGAuction} strictly improves upon the {\VCGAuction} by requiring fewer additional buyers. As stated in \Cref{thm:bk supply limiting:large market}, in the asymptotic large-market regime, the optimal supply choice is $\supply = \lceil \ccapproxratio \cdot \lceil\imbalanceratio \cdot n \rceil\rceil$. This ensures that even in the worst-case deterministic instance (where all values equal 1), the revenue achieves exactly the target fraction $\ccapproxratio$ of the optimal---making it the minimal supply that satisfies the approximation guarantee across all $\Reglevel$-regular distributions.

\Cref{fig:VCG-vs-SLVCG} compares the asymptotic competition complexity of the {\VCGAuction} and {\SLVCGAuction} for MHR ($\MHRDists$), $0.5$-regular ($\mathcal{F}_{\text{0.5-Reg}}$), and regular ($\RegDists$) distributions in the balanced market ($\imbalanceratio = 1$) as $\ccapproxratio$ varies in $(0,1)$. 
For example, when $\ccapproxratio = 0.8$, it is sufficient for the {\SLVCGAuction} to add only $15.6\%$, $25\%$, and $60\%$ additional buyers for $\MHRDists$, $\mathcal{F}_{\text{0.5-Reg}}$, and $\RegDists$, respectively---compared to $34.2\%$, $44\%$, and $80\%$ for the {\VCGAuction}. 
When $\ccapproxratio = 0.6$, the it is sufficient for {\SLVCGAuction} to add a \emph{sublinear} number of additional buyers for both $\MHRDists$ and $\mathcal{F}_{\text{0.5-Reg}}$, and only $20\%$ for $\RegDists$, whereas it is necessary for the {\VCGAuction} to add $24.7\%$, $32.3\%$, and $60\%$ additional buyers, respectively. 
\bibliographystyle{apalike}
	\bibliography{refs.bib}

\appendix

\section{Analysis of \texorpdfstring{\Cref{example:intro:multi-unit bk:regular}}{Example 1.1}}
\label{apx:intro example analysis}

In this section, we analyze \Cref{example:intro:multi-unit bk:regular}.
First, we consider the {\BayesianOptimalMech}. Since the market is balanced, {\BayesianOptimalMech} reduces to offer the monopoly price to all buyers, and thus, the expected revenue is 
\begin{align*}
    \RevOPT_{n:n}(\buyerdist) = n\cdot H\cdot \frac{1}{H+1} = n - \frac{1}{3}
\end{align*}
where the last equality holds since $H = 3n-1$.

Next we analyze the {\VCGAuction} with $2N - 1$ buyers. Since all buyers are ex ante symmetric. It suffices to analyze a fixed buyer. Notably, a fixed buyer $i$ receives the item if and only if her value $\val_i$ is higher than the $n$-th highest values among others $2n-2$ buyers. Thus, by considering the interim allocation and revenue curve in the quantile space, the expected payment from a fixed agent $i$ can be expressed as
\begin{align*}
    \displaystyle\int_0^1 \revcurve(\quant)\cdot \osdensity_{(n:2n-2)}(\quant)\,\dd \quant 
    \leq 
    \displaystyle\int_0^1 (1-\quant)\cdot \osdensity_{(n:2n-2)}(\quant)\,\dd \quant
    \leq \frac{1}{2}
\end{align*}
where $\revcurve$ is the revenue curve induced by distribution $\buyerdist$, and $\osdensity_{a:b}:[0,1]\rightarrow \reals_+$ is the probability density function (PDF) of $a$-th order statistics from $b$ i.i.d.\ samples from the uniform distribution $U[0, 1]$. Here the first inequality holds due to the fact that $\revcurve(\quant) \leq 1 - \quant$ for all $\quant\in[0,1]$ (implied by the construction of $\buyerdist$), and 
the second inequality holds due to the facts that $\int_0^1  \osdensity_{(n:2n-2)}(\quant)\,\dd \quant=1$ and $\int_0^1 \quant\cdot \osdensity_{(n:2n-2)}(\quant)\,\dd \quant\geq \frac{1}{2}$ (guaranteed since $\osdensity_{(n:2n-2)}$ is a PDF, and its expectation is larger than $1/2$ due to the symmetry of uniform distribution).
Thus, the expected revenue in the {\VCGAuction} with $2n - 1$ buyers is 
\begin{align*}
    \RevVCG_{n:2n-1}(\buyerdist) \leq \frac{1}{2}\cdot (2n-1) = n - \frac{1}{2}
\end{align*}
which is strictly less than the expected revenue in the {\BayesianOptimalMech}.

\section{Counterexample for Multiplicative Gap Monotonicity}
\label{sec:FOSD monotonicity discussion}
In this section, we provide an example demonstrating that the multiplicative revenue gap,
\begin{align*}
    \frac{\RevVCG_{m:n+k}(\buyerdist)}{\RevOPT_{m:n}(\buyerdist)},
\end{align*}
can be \emph{strictly decreasing} as the distribution $\buyerdist$ becomes larger in the first-order stochastic dominance (FOSD) order—even when the monopoly reserve and monopoly quantile are held fixed.

\begin{example}
\label{example:ratio monotonicity}
Consider two regular revenue curves $\revcurve_1$ and $\revcurve_2$ defined by
\begin{align*}
        \revcurve_1(\quant) \triangleq \left\{
        \begin{array}{ll}
         \frac{8}{7}\cdot \quant    & \quant\in[0,\frac{7}{8}] \\
         1    & \quant\in[\frac{7}{8},1]
        \end{array}
        \right.
        \;\;\mbox{and}\;\;
\revcurve_2(\quant) \triangleq \left\{
        \begin{array}{ll}
         \frac{8}{7}\cdot \quant    & \quant\in[0,\frac{3}{4}] \\
          \frac{6}{7} + \frac{4}{7}\cdot (\quant-\frac{3}{4}) & \quant\in[\frac{3}{4},1]   
        \end{array}
        \right.
    \end{align*}
Let $\buyerdist_1$ and $\buyerdist_2$ denote the valuation distributions corresponding to $\revcurve_1$ and $\revcurve_2$, respectively. Both distributions are regular, and $\buyerdist_1$ first-order stochastically dominates $\buyerdist_2$. (See \Cref{fig:ratio monotonicity} for an illustration.)

Fix $m = 2$, $n = 3$, and $k = 1$. Using \Cref{lem:prelim:revenue expression in uniform order statistic}, we compute the expected revenues of the {\BayesianOptimalMech} and the {\VCGAuction} under both distributions:
\begin{align*}
    \RevVCG_{2:4}(\buyerdist_1) &\approx 2.27734375, &
    \RevVCG_{2:4}(\buyerdist_2) &\approx 2.25446428571, \\
    \RevOPT_{2:3}(\buyerdist_1) &\approx 2.234375, &
    \RevOPT_{2:3}(\buyerdist_2) &= 2.1875.
\end{align*}
As expected from revenue monotonicity, both mechanisms yield strictly higher revenue under $\buyerdist_1$ than under $\buyerdist_2$. However, the multiplicative revenue ratio satisfies
\begin{align*}
    \frac{\RevVCG_{2:4}(\buyerdist_1)}{\RevOPT_{2:3}(\buyerdist_1)}
    \approx 1.0192
    \;<\;
    1.0306
    \approx
    \frac{\RevVCG_{2:4}(\buyerdist_2)}{\RevOPT_{2:3}(\buyerdist_2)},
\end{align*}
showing that the ratio is strictly lower under the FOSD-dominant distribution $\buyerdist_1$.
\end{example}

\begin{figure}
    \centering

\begin{tikzpicture}
\begin{axis}[
width=8cm,height=6cm,
xmin=0.0,xmax=1.05,
ymin=-0.15,ymax=1.05,
xtick={0,1},   
xticklabels = {0, 1},
tick label style={font=\footnotesize},
ytick distance=1,
minor tick num=0,
axis line style=gray,
axis x line=middle,
axis y line=left,
tick style={black},
legend style={draw=none,fill=white,font=\small,at={(0.98,0.98)},anchor=north east},
xlabel={\footnotesize $\quant$},
]

\def\rstar{2.0}
\def\qstar{0.5}
\def\qdag{0.7}
\def\lnrstar{0.6931471805599453}
\def\scale{0.7142857142857143}

\definecolor{orangeLine}{RGB}{240,160,0}
\definecolor{blueLine}{RGB}{58,166,255}
\definecolor{greenLine}{RGB}{0,163,122}

\addplot[black!20, line width=1mm] coordinates {
    (0, 0)
    (0.75, 0.85714)
    (1, 1)
    };

\addplot[black, line width=0.3mm] coordinates {
    (0, 0)
    (0.875, 1)
    (1, 1)
    };

\end{axis}
\end{tikzpicture}     \caption{Illustration of \Cref{example:ratio monotonicity}. The black (gray) curve is revenue curve $\revcurve_1$ ($\revcurve_2$). }
    \label{fig:ratio monotonicity}
\end{figure} 
\section{Further Discussion of \texorpdfstring{\Cref{remark:worst truncation}}{Remark 3.2}}
\label{apx:worst case truncation discussion}
We begin with an illustrative example in \Cref{fig:bk vcg:mhr:VCG revenue}. 
Consider a balanced market under MHR distributions with market size $n=10$ and $\addBuyers=5$ additional buyers (the competition complexity for $n=10$ in \Cref{tab:additional-buyers-mhr}). 
With a slight abuse of notation, let $\RevVCG_{n:n+\addBuyers}(\optquant)$ denote the expected revenue of the {\VCGAuction} in the $n$-unit, $(n+\addBuyers)$-buyer market when buyer values are drawn from the ${1}/{\optquant}$-truncated exponential distribution.  
Our numerical calculations in \Cref{fig:bk vcg:mhr:VCG revenue} suggest that $\RevVCG_{n:n+\addBuyers}(\optquant)$ is not single-peaked: as $\optquant$ increases from $1/e$, the function first decreases, then increases, and finally decreases again.  
Consequently, the left-side local minimum of $\RevVCG_{n:n+\addBuyers}(\optquant)$ is not attained at the boundary point $\optquant=1/e$.

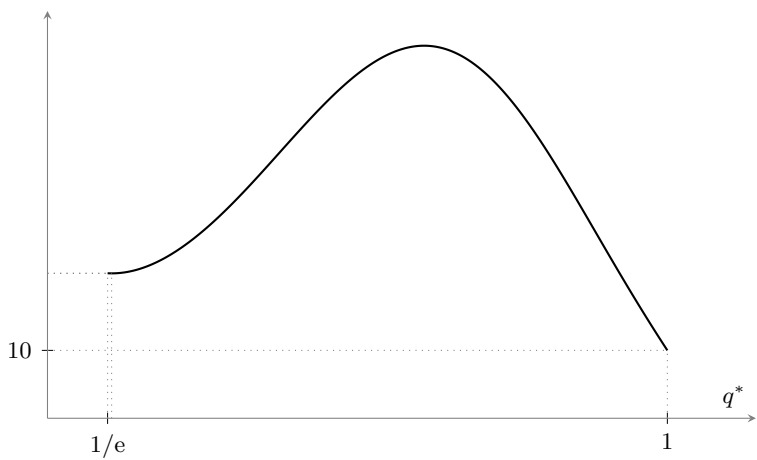
\begin{figure}
    \centering
\begin{tikzpicture}
 \begin{axis}[
width=11cm,
height=7cm,
xmin=0.3,xmax=1.1,
ymin=9.5,ymax=12.5,
xtick={0.36788, 1},   
xticklabels = {1/e, 1},
tick label style={font=\footnotesize}, 
scaled y ticks=false,     
yticklabel style={/pgf/number format/fixed}, 
ytick={10}, 
yticklabels = {10},
minor tick num=0,
axis line style=gray,
axis x line=middle,
axis y line=left,
tick style={black},
legend style={draw=none,fill=white,font=\small,at={(0.98,0.98)},anchor=north east},
xlabel={\footnotesize $\optquant$},
]
\addplot[smooth, thick] coordinates {
        (3.678804411714423073e-01,1.056814444687058696e+01)
        (3.724280567025829991e-01,1.056737789575261388e+01)
        (3.769756722337236909e-01,1.056804627619234260e+01)
        (3.815232877648644383e-01,1.057011830796789376e+01)
        (3.860709032960051301e-01,1.057356356623717630e+01)
        (3.906185188271458220e-01,1.057835237072577073e+01)
        (3.951661343582865138e-01,1.058445567973534196e+01)
        (3.997137498894272056e-01,1.059184498864036605e+01)
        (4.042613654205678975e-01,1.060049223258183382e+01)
        (4.088089809517086448e-01,1.061036969310391065e+01)
        (4.133565964828493366e-01,1.062144990851377813e+01)
        (4.179042120139900285e-01,1.063370558777619479e+01)
        (4.224518275451307203e-01,1.064710952778314024e+01)
        (4.269994430762714677e-01,1.066163453386533000e+01)
        (4.315470586074121595e-01,1.067725334343680288e+01)
        (4.360946741385528513e-01,1.069393855268613613e+01)
        (4.406422896696935432e-01,1.071166254624855085e+01)
        (4.451899052008342350e-01,1.073039742981216982e+01)
        (4.497375207319749268e-01,1.075011496562920144e+01)
        (4.542851362631156187e-01,1.077078651091892958e+01)
        (4.588327517942563660e-01,1.079238295916422175e+01)
        (4.633803673253970579e-01,1.081487468431679133e+01)
        (4.679279828565377497e-01,1.083823148793886837e+01)
        (4.724755983876784415e-01,1.086242254932024842e+01)
        (4.770232139188191889e-01,1.088741637861989453e+01)
        (4.815708294499598807e-01,1.091318077309049706e+01)
        (4.861184449811005726e-01,1.093968277645268472e+01)
        (4.906660605122412644e-01,1.096688864149276732e+01)
        (4.952136760433819562e-01,1.099476379596437958e+01)
        (4.997612915745226481e-01,1.102327281187976382e+01)
        (5.043089071056633399e-01,1.105237937828100492e+01)
        (5.088565226368040317e-01,1.108204627758526861e+01)
        (5.134041381679448346e-01,1.111223536560081726e+01)
        (5.179517536990854154e-01,1.114290755531259869e+01)
        (5.224993692302262183e-01,1.117402280453726959e+01)
        (5.270469847613669101e-01,1.120554010754768015e+01)
        (5.315946002925076019e-01,1.123741749076632068e+01)
        (5.361422158236482938e-01,1.126961201262563961e+01)
        (5.406898313547889856e-01,1.130207976769085754e+01)
        (5.452374468859296774e-01,1.133477589513763739e+01)
        (5.497850624170703693e-01,1.136765459167296299e+01)
        (5.543326779482110611e-01,1.140066912898256035e+01)
        (5.588802934793517529e-01,1.143377187578244047e+01)
        (5.634279090104925558e-01,1.146691432454545101e+01)
        (5.679755245416331366e-01,1.150004712296617804e+01)
        (5.725231400727739395e-01,1.153312011021912298e+01)
        (5.770707556039146313e-01,1.156608235805577678e+01)
        (5.816183711350553232e-01,1.159888221677611142e+01)
        (5.861659866661960150e-01,1.163146736609894205e+01)
        (5.907136021973367068e-01,1.166378487094377547e+01)
        (5.952612177284773987e-01,1.169578124212401349e+01)
        (5.998088332596180905e-01,1.172740250193785627e+01)
        (6.043564487907588934e-01,1.175859425462881092e+01)
        (6.089040643218994742e-01,1.178930176167254196e+01)
        (6.134516798530402770e-01,1.181947002183077799e+01)
        (6.179992953841808578e-01,1.184904385589617881e+01)
        (6.225469109153216607e-01,1.187796799603458631e+01)
        (6.270945264464623525e-01,1.190618717961270789e+01)
        (6.316421419776030444e-01,1.193364624738039304e+01)
        (6.361897575087437362e-01,1.196029024585694778e+01)
        (6.407373730398844280e-01,1.198606453375064085e+01)
        (6.452849885710251199e-01,1.201091489221973596e+01)
        (6.498326041021658117e-01,1.203478763876181823e+01)
        (6.543802196333066146e-01,1.205762974449639024e+01)
        (6.589278351644471954e-01,1.207938895458317852e+01)
        (6.634754506955879982e-01,1.210001391149592109e+01)
        (6.680230662267285791e-01,1.211945428084817067e+01)
        (6.725706817578693819e-01,1.213766087944441452e+01)
        (6.771182972890100737e-01,1.215458580520606446e+01)
        (6.816659128201507656e-01,1.217018256859837244e+01)
        (6.862135283512914574e-01,1.218440622516047078e+01)
        (6.907611438824321493e-01,1.219721350871720666e+01)
        (6.953087594135729521e-01,1.220856296482788572e+01)
        (6.998563749447135329e-01,1.221841508400389387e+01)
        (7.044039904758543358e-01,1.222673243420428513e+01)
        (7.089516060069949166e-01,1.223347979209611402e+01)
        (7.134992215381357195e-01,1.223862427254458751e+01)
        (7.180468370692763003e-01,1.224213545577709894e+01)
        (7.225944526004171031e-01,1.224398551164521010e+01)
        (7.271420681315577950e-01,1.224414932038953197e+01)
        (7.316896836626984868e-01,1.224260458929473572e+01)
        (7.362372991938391786e-01,1.223933196460535022e+01)
        (7.407849147249798705e-01,1.223431513805814674e+01)
        (7.453325302561206733e-01,1.222754094737370067e+01)
        (7.498801457872612541e-01,1.221899947003840126e+01)
        (7.544277613184020570e-01,1.220868410969904438e+01)
        (7.589753768495426378e-01,1.219659167448526915e+01)
        (7.635229923806834407e-01,1.218272244657085146e+01)
        (7.680706079118240215e-01,1.216708024228332974e+01)
        (7.726182234429648243e-01,1.214967246207299212e+01)
        (7.771658389741055162e-01,1.213051012965706121e+01)
        (7.817134545052462080e-01,1.210960791966330419e+01)
        (7.862610700363868999e-01,1.208698417310946738e+01)
        (7.908086855675275917e-01,1.206266090007132163e+01)
        (7.953563010986683945e-01,1.203666376891280443e+01)
        (7.999039166298089754e-01,1.200902208147730121e+01)
        (8.044515321609497782e-01,1.197976873366968320e+01)
        (8.089991476920903590e-01,1.194894016089469346e+01)
        (8.135467632232311619e-01,1.191657626785909940e+01)
        (8.180943787543718537e-01,1.188272034229293439e+01)
        (8.226419942855125456e-01,1.184741895219960917e+01)
        (8.271896098166532374e-01,1.181072182630611245e+01)
        (8.317372253477939292e-01,1.177268171745322967e+01)
        (8.362848408789346211e-01,1.173335424874231059e+01)
        (8.408324564100753129e-01,1.169279774233980618e+01)
        (8.453800719412161158e-01,1.165107303093438595e+01)
        (8.499276874723566966e-01,1.160824325194398554e+01)
        (8.544753030034974994e-01,1.156437362468255969e+01)
        (8.590229185346380802e-01,1.151953121081882259e+01)
        (8.635705340657788831e-01,1.147378465859247854e+01)
        (8.681181495969194639e-01,1.142720393139810398e+01)
        (8.726657651280602668e-01,1.137986002150310938e+01)
        (8.772133806592008476e-01,1.133182464983522841e+01)
        (8.817609961903416504e-01,1.128316995295672953e+01)
        (8.863086117214824533e-01,1.123396815853832820e+01)
        (8.908562272526230341e-01,1.118429125085559761e+01)
        (8.954038427837638370e-01,1.113421062805569228e+01)
        (8.999514583149044178e-01,1.108379675318297508e+01)
        (9.044990738460452206e-01,1.103311880120906174e+01)
        (9.090466893771858015e-01,1.098224430458729195e+01)
        (9.135943049083266043e-01,1.093123880014368510e+01)
        (9.181419204394671851e-01,1.088016548042748255e+01)
        (9.226895359706079880e-01,1.082908485297474321e+01)
        (9.272371515017487908e-01,1.077805441128938391e+01)
        (9.317847670328893717e-01,1.072712832171796116e+01)
        (9.363323825640301745e-01,1.067635713078872683e+01)
        (9.408799980951707553e-01,1.062578749800262656e+01)
        (9.454276136263115582e-01,1.057546195950492240e+01)
        (9.499752291574521390e-01,1.052541872853225158e+01)
        (9.545228446885929419e-01,1.047569153902171557e+01)
        (9.590704602197335227e-01,1.042630953928754778e+01)
        (9.636180757508743255e-01,1.037729724321756386e+01)
        (9.681656912820149063e-01,1.032867454701754717e+01)
        (9.727133068131557092e-01,1.028045682013763518e+01)
        (9.772609223442965121e-01,1.023265507965206389e+01)
        (9.818085378754370929e-01,1.018527625803331915e+01)
        (9.863561534065778957e-01,1.013832357496510461e+01)
        (9.909037689377184766e-01,1.009179702457677763e+01)
        (9.954513844688592794e-01,1.004569399025622189e+01)
        (9.999989999999999712e-01,1.000001000001000051e+01)
};

\addplot[gray, dotted] coordinates {
  (3.724280567025829991e-01,9.5)
  (3.724280567025829991e-01,1.056737789575261388e+01)
};

\addplot[gray, dotted] coordinates {
  (0.3,1.056737789575261388e+01)
  (3.724280567025829991e-01,1.056737789575261388e+01)
};

\addplot[gray, dotted] coordinates {
  (3.678804411714423073e-01,9.5)
  (3.678804411714423073e-01,1.056814444687058696e+01)
};

\addplot[gray, dotted] coordinates {
  (0.3,1.056814444687058696e+01)
  (3.678804411714423073e-01,1.056814444687058696e+01)
};
\addplot[gray, dotted] coordinates {
  (0.3,10)
  (1,10)
};

\addplot[gray, dotted] coordinates {
  (1,9.5)
  (1,10)
};
\end{axis}
\end{tikzpicture}
 \caption{The expected revenue of the {\VCGAuction} (of selling 10 units to 15 buyers whose values are drawn i.i.d.\ from $1/\optquant$-truncated exponential distribution) as a function of monopoly quantile $\optquant\in\left[{1}/{e},1\right]$. 
The local-minimum expected revenue of the {\VCGAuction} is not attained at $\optquant={1}/{e}$; instead, it is attained at some $\optquant$ in a right neighborhood of ${1}/{e}$ and at $\optquant=1$.
}

\label{fig:bk vcg:mhr:VCG revenue}
\end{figure}

We then show that, in balanced markets under MHR distributions, for every finite market size $n$ and the corresponding competition complexity $\addBuyers$,
the left-side local minimum of the expected revenue $\RevVCG_{n:n+\addBuyers}(\optquant)$ is not attained at the boundary point $\optquant=1/e$.
Indeed, for finite every market size $n$, at the monopoly quantile $\optquant=1/e$ we have
${\dd\RevVCG_{n:n+\addBuyers}(\optquant)}/{\dd \optquant}\le 0$.

\begin{claim}\label{clm:qStar-not-local-min}
In balanced markets under MHR distributions, for every market size $n$, the expected revenue $\RevVCG_{n:n+\addBuyers}(\optquant)$ is decreasing in a right neighborhood of $\optquant=1/e$.
\end{claim}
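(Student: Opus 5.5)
The plan is to compute the derivative of $\RevVCG_{n:n+\addBuyers}$ with respect to the truncation parameter $\reserve = 1/\optquant$ explicitly at $\reserve = e$. We use the representation of \Cref{lem:prelim:revenue expression in uniform order statistic}: $\RevVCG_{n:n+\addBuyers}(\reserve) = (n+\addBuyers)\int_0^1 \revcurve_{(0,\reserve)}(\quant)\,\osdensity_{n:n+\addBuyers-1}(\quant)\,\dd\quant$. The sign conclusion then transfers to $\optquant$ through the chain rule, $\dd/\dd\optquant = -\reserve^2\,\dd/\dd\reserve$.

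First, write the revenue curve of $\distExpReserve$ in closed form. On $\quant\in[0,1/\reserve]$ the buyer's value is the atom $\reserve$, so $\revcurve_{(0,\reserve)}(\quant)=\reserve\quant$. On $\quant\in[1/\reserve,1]$, inverting $\reserve^{-\val/\reserve}=\quant$ gives $\val = \reserve\ln(1/\quant)/\ln\reserve$. Hence
\begin{align*}
\revcurve_{(0,\reserve)}(\quant)=\frac{\reserve}{\ln \reserve}\,\quant\ln\frac{1}{\quant}.
\end{align*}
The two pieces agree at $\quant=1/\reserve$ (both equal $1$), so the curve is continuous in $\quant$. It is also jointly $C^1$ in $\reserve$ away from the kink, and the kink moves smoothly with $\reserve$.

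Next, differentiate under the integral sign with respect to $\reserve$. The breakpoint $1/\reserve$ depends on $\reserve$, but the Leibniz boundary terms cancel because the integrand is continuous across the breakpoint. For $\quant<1/\reserve$ we get $\partial_\reserve \revcurve = \quant$. For $\quant>1/\reserve$ we get
\begin{align*}
\partial_\reserve \revcurve = \frac{\ln\reserve-1}{(\ln\reserve)^2}\,\quant\ln\frac{1}{\quant},
\end{align*}
which vanishes exactly at $\reserve=e$. Therefore
\begin{align*}
\left.\frac{\dd \RevVCG_{n:n+\addBuyers}}{\dd\reserve}\right|_{\reserve=e}=(n+\addBuyers)\int_0^{1/e}\quant\,\osdensity_{n:n+\addBuyers-1}(\quant)\,\dd\quant>0.
\end{align*}
Strict positivity holds because $\osdensity$ is positive on $(0,1)$. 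In the degenerate case $\addBuyers=0$, where the density is a Dirac mass at $1$, the integral vanishes; the claim is then read with $\addBuyers\ge 1$, which is the relevant case.

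Finally, translate this to the quantile. The derivative in $\optquant$ at $\optquant=1/e$ equals $-e^2$ times the positive quantity above, so it is strictly negative. The integrand's $\reserve$-derivative is continuous in $\reserve$ and dominated by an integrable function on a compact neighborhood. Hence $\reserve\mapsto \dd\RevVCG_{n:n+\addBuyers}/\dd\reserve$ is continuous, and negativity in $\optquant$ persists on a right neighborhood of $1/e$. This gives the claim.

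The only delicate step is justifying differentiation through the moving kink, namely checking that the boundary terms cancel. Continuity of $\revcurve_{(0,\reserve)}$ at $\quant=1/\reserve$ handles this. I would also check one-sidedness: $\reserve\le e$ corresponds to $\optquant\ge 1/e$, so only the left derivative in $\reserve$ is needed, and the same formula gives it.
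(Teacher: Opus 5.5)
Your proof is correct, but it reaches the sign by a different computation than the paper's.

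\textbf{Paper's route.} The paper invokes the general log-derivative formula of \Cref{lem:VCG revenue derivative}, written in terms of the $\BetaDist{n+1}{\addBuyers}$ quantities $p_+(\optquant)$ and $h(\optquant)$. It then evaluates this formula at $\optquant=1/e$, where $(\ln\optquant)^2=1$.

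\textbf{Your route.} You differentiate the order-statistic integral directly in $\reserve$. The key observation is that the coefficient $\reserve/\ln\reserve$ of the logarithmic branch of the revenue curve is stationary at $\reserve=e$. Hence only the linear branch $\reserve\quant$ on $[0,1/e]$ contributes to the derivative there.

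\textbf{The two agree.} Your expression gives $\dd\RevVCG_{n:n+\addBuyers}/\dd\optquant=-e^2(n+\addBuyers)\int_0^{1/e}\quant\,\osdensity_{n:n+\addBuyers-1}(\quant)\,\dd\quant=-e^2\coefA\,\BetaFun{n+1}{\addBuyers}\,(1-p_+(1/e))$. The lemma, correctly simplified at $\optquant=1/e$, gives the log-derivative $-e\,(1-p_+(1/e))/(1-p_+(1/e)+h(1/e))$. Multiplying by $\RevVCG_{n:n+\addBuyers}(1/e)=e\coefA\,\BetaFun{n+1}{\addBuyers}\,(1-p_+(1/e)+h(1/e))$ recovers your quantity. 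The paper's displayed numerator $h(1/e)-1$ appears to be a slip for $p_+(1/e)-1$; this does not affect the sign.

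\textbf{What each route buys.} Your route is self-contained and shows transparently why $\reserve=e$ is special. It also gives a strict inequality and continuity of the derivative in $\reserve$. Those two facts are what is actually needed to pass from the sign at a single point to monotonicity on a right neighborhood. The paper records only a weak inequality at the point and leaves that step implicit. In exchange, the paper's route reuses a formula it needs anyway for the finite-market case analysis behind \Cref{thm:bk vcg:mhr finite market}.
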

\begin{proof}[Proof of \Cref{clm:qStar-not-local-min}]
Define a random variable $X\sim\BetaDist{n+1}{\addBuyers}$ supported on $[0,1]$, with density
$p\left(\variablex\right)\triangleq ({1}/{\BetaFun{n+1}{\addBuyers}})\cdot \variablex^{n}\cdot\left(1-\variablex\right)^{\addBuyers-1}$.
Let $
\rprob \triangleq \prob{\left(X\ge \optquant\right)}
= \int_{\optquant}^{1} p\left(\variablex\right)\,\dd \variablex$
and
$\lnexpect \triangleq \expect{\left(\left(-\ln X\right)\cdot \IF_{\left\{X\ge \optquant\right\}}\right)}
= \int_{\optquant}^{1} \left(-\ln \variablex\right)\cdot p\left(\variablex\right)\,\dd \variablex$.
According to the result of \Cref{lem:VCG revenue derivative}, we know that
\[
\frac{\dd \ln\left(\RevVCG_{n:n+\addBuyers}\left(\optquant\right)\right)}{\dd \optquant}
=
-\frac{1}{\optquant}
+
\frac{1}{\optquant}\cdot
\frac{\lnexpect}{
\left(\ln \optquant\right)^2\cdot
\left(1-\rprob-\frac{\lnexpect}{\ln \optquant}\right)
}, 
\]
substituting the monopoly quantile $\optquant=1/e$ yields
\[
\frac{\dd \ln\left(\RevVCG_{n:n+\addBuyers}\left(\optquant\right)\right)}{\dd \optquant}
=
e\cdot
\frac{h(\frac{1}{e})-1}{
\left(1-p_{+}(\frac{1}{e})+h(\frac{1}{e})\right)
}\le 0
\]
the inequality holds since $1-p_{+}({1}/{e})\ge 0$ and $0\le h({1}/{e})\le -\ln{(1/e)}=1$. 

By \Cref{lem:VCG revenue derivative}, we know that ${\dd \ln\left(\RevVCG_{n:n+\addBuyers}\left(\optquant\right)\right)}/{\dd \optquant}$ and ${\dd \RevVCG_{n:n+\addBuyers}\left(\optquant\right)}/{\dd \optquant}$ have the same sign.
Therefore, the revenue $\RevVCG_{n:n+\addBuyers}\left(\optquant\right)$ is decreasing in a right neighborhood of $\optquant=1/e$. This complete the proof of \Cref{clm:qStar-not-local-min}.
\end{proof}

Finally, we show that the local minimum of the expected revenue
$\RevVCG_{n:n+\addBuyers}(\optquant)$, which is attained at some $\optquant$ in a right neighborhood of $1/e$,
does not admit a closed-form expression.
This local minimum is attained at a monopoly quantile $\optquant$ that satisfies
\[
\frac{\dd}{\dd \optquant}\RevVCG_{n:n+\addBuyers}(\optquant)=-\frac{1}{\optquant}
+
\frac{1}{\optquant}\cdot
\frac{\lnexpect}{
\left(\ln \optquant\right)^2\cdot
\left(1-\rprob-\frac{\lnexpect}{\ln \optquant}\right)
}=0,
\]
where $\optquant$ depends on $n$ and does not admit a closed-form expression in general. 

\section{Numerical Experiment}
\label{sec:numerical experiment}
In this section, we investigate the numerical behavior of the competition complexity for finite market sizes $n$, and study how fast the ratio between the competition complexity and $n$ converges as $n$ grows. We begin by computing the exact competition complexity for all $n \le 593$ in balanced markets under an MHR distribution when approximation ratio $\Gamma = 1$. The complete set of numerical results is reported in \Cref{tab:additional-buyers-mhr}.

To determine the minimum number of additional buyers needed for each market size $n \le 593$, for each fixed $n$ we iterate over $\addBuyers \in \{1,2,\ldots,n\}$ and check whether $\addBuyers$ additional buyers suffice under the worst-case distribution. Concretely, for a given $n$ and $\addBuyers$, we discretize the admissible range of the monopoly reserve $\optreserve \in [1,e]$ into a uniform grid with step size $0.01$ and evaluate the corresponding expected revenue of the {\VCGAuction} at each grid point. We then take the minimum value over this grid as an approximation of the minimum expected revenue of the {\VCGAuction} over all $\optreserve$ for the given $n$ and $\addBuyers$. The algorithm returns the smallest $\addBuyers$ for which this minimum expected revenue of the {\VCGAuction} is at least the Bayesian-optimal revenue $n$. Repeating this procedure for all $n\le593$ produces the table of minimal additional buyers $\addBuyers$ reported in the paper.

\newcommand{\BlockSep}{0pt}  
\newcommand{\ColWn}{8.5mm}       
\newcommand{\ColWk}{8.5mm}      
\newcommand{\RowH}{4mm}     

\newcommand{\ThickV}{\vrule width 0.4pt}
\newcommand{\ThinV}{\vrule width 0.4pt}
\newlength{\OldArrayRuleWidth}

\newcommand{\ThickH}{\noalign{\global\OldArrayRuleWidth=\arrayrulewidth}\noalign{\global\setlength{\arrayrulewidth}{0.6pt}}\cline{1-12}\noalign{\global\setlength{\arrayrulewidth}{\OldArrayRuleWidth}}}

\newcommand{\DoubleV}{\ThickV\hskip 2.9pt\ThickV}

\newcolumntype{C}[1]{>{\centering\arraybackslash\rule{0pt}{\RowH}}m{#1}}

\setlength{\LTleft}{0pt}
\setlength{\LTright}{0pt}

\begin{longtable}
{@{}@{\extracolsep{0pt}}
!{\ThickV}C{\ColWn}!{\ThinV}C{\ColWk}!{\DoubleV}C{\ColWn}!{\ThinV}C{\ColWk}!{\DoubleV}C{\ColWn}!{\ThinV}C{\ColWk}!{\DoubleV}C{\ColWn}!{\ThinV}C{\ColWk}!{\DoubleV}C{\ColWn}!{\ThinV}C{\ColWk}!{\DoubleV}C{\ColWn}!{\ThinV}C{\ColWk}!{\ThickV}@{}}

\caption{Competition Complexity in Balanced Markets under MHR Distributions ($n\le593$, $\ccapproxratio=1$)}
\label{tab:additional-buyers-mhr}\\

\ThickH
\bfseries $n$ & \bfseries $t_n$ &
\bfseries $n$ & \bfseries $t_n$ &
\bfseries $n$ & \bfseries $t_n$ &
\bfseries $n$ & \bfseries $t_n$ &
\bfseries $n$ & \bfseries $t_n$ &
\bfseries $n$ & \bfseries $t_n$ \\
\ThickH
\endfirsthead

\ThickH
\bfseries $n$ & \bfseries $t_n$ &
\bfseries $n$ & \bfseries $t_n$ &
\bfseries $n$ & \bfseries $t_n$ &
\bfseries $n$ & \bfseries $t_n$ &
\bfseries $n$ & \bfseries $t_n$ &
\bfseries $n$ & \bfseries $t_n$ \\
\ThickH
\endhead

\ThickH
\endfoot

\ThickH
\endlastfoot

1 & 1 & 101 & 46 & 201 & 90 & 301 & 135 & 401 & 179 & 501 & 224 \\
  2 & 2 & 102 & 46 & 202 & 91 & 302 & 135 & 402 & 179 & 502 & 224 \\
  3 & 2 & 103 & 47 & 203 & 91 & 303 & 135 & 403 & 180 & 503 & 224 \\
  4 & 3 & 104 & 47 & 204 & 91 & 304 & 136 & 404 & 180 & 504 & 225 \\
  5 & 3 & 105 & 47 & 205 & 92 & 305 & 136 & 405 & 181 & 505 & 225 \\
  6 & 3 & 106 & 48 & 206 & 92 & 306 & 137 & 406 & 181 & 506 & 226 \\
  7 & 4 & 107 & 48 & 207 & 93 & 307 & 137 & 407 & 182 & 507 & 226 \\
  8 & 4 & 108 & 49 & 208 & 93 & 308 & 138 & 408 & 182 & 508 & 227 \\
  9 & 5 & 109 & 49 & 209 & 94 & 309 & 138 & 409 & 183 & 509 & 227 \\
  10 & 5 & 110 & 50 & 210 & 94 & 310 & 139 & 410 & 183 & 510 & 228 \\
  11 & 6 & 111 & 50 & 211 & 95 & 311 & 139 & 411 & 183 & 511 & 228 \\
  12 & 6 & 112 & 51 & 212 & 95 & 312 & 139 & 412 & 184 & 512 & 228 \\
  13 & 7 & 113 & 51 & 213 & 95 & 313 & 140 & 413 & 184 & 513 & 229 \\
  14 & 7 & 114 & 51 & 214 & 96 & 314 & 140 & 414 & 185 & 514 & 229 \\
  15 & 7 & 115 & 52 & 215 & 96 & 315 & 141 & 415 & 185 & 515 & 230 \\
  16 & 8 & 116 & 52 & 216 & 97 & 316 & 141 & 416 & 186 & 516 & 230 \\
  17 & 8 & 117 & 53 & 217 & 97 & 317 & 142 & 417 & 186 & 517 & 231 \\
  18 & 9 & 118 & 53 & 218 & 98 & 318 & 142 & 418 & 187 & 518 & 231 \\
  19 & 9 & 119 & 54 & 219 & 98 & 319 & 143 & 419 & 187 & 519 & 232 \\
  20 & 10 & 120 & 54 & 220 & 99 & 320 & 143 & 420 & 187 & 520 & 232 \\
  21 & 10 & 121 & 55 & 221 & 99 & 321 & 143 & 421 & 188 & 521 & 232 \\
  22 & 11 & 122 & 55 & 222 & 99 & 322 & 144 & 422 & 188 & 522 & 233 \\
  23 & 11 & 123 & 55 & 223 & 100 & 323 & 144 & 423 & 189 & 523 & 233 \\
  24 & 11 & 124 & 56 & 224 & 100 & 324 & 145 & 424 & 189 & 524 & 234 \\
  25 & 12 & 125 & 56 & 225 & 101 & 325 & 145 & 425 & 190 & 525 & 234 \\
  26 & 12 & 126 & 57 & 226 & 101 & 326 & 146 & 426 & 190 & 526 & 235 \\
  27 & 13 & 127 & 57 & 227 & 102 & 327 & 146 & 427 & 191 & 527 & 235 \\
  28 & 13 & 128 & 58 & 228 & 102 & 328 & 147 & 428 & 191 & 528 & 236 \\
  29 & 14 & 129 & 58 & 229 & 103 & 329 & 147 & 429 & 191 & 529 & 236 \\
  30 & 14 & 130 & 59 & 230 & 103 & 330 & 147 & 430 & 192 & 530 & 236 \\
  31 & 15 & 131 & 59 & 231 & 103 & 331 & 148 & 431 & 192 & 531 & 237 \\
  32 & 15 & 132 & 59 & 232 & 104 & 332 & 148 & 432 & 193 & 532 & 237 \\
  33 & 15 & 133 & 60 & 233 & 104 & 333 & 149 & 433 & 193 & 533 & 238 \\
  34 & 16 & 134 & 60 & 234 & 105 & 334 & 149 & 434 & 194 & 534 & 238 \\
  35 & 16 & 135 & 61 & 235 & 105 & 335 & 150 & 435 & 194 & 535 & 239 \\
  36 & 17 & 136 & 61 & 236 & 106 & 336 & 150 & 436 & 195 & 536 & 239 \\
  37 & 17 & 137 & 62 & 237 & 106 & 337 & 151 & 437 & 195 & 537 & 240 \\
  38 & 18 & 138 & 62 & 238 & 107 & 338 & 151 & 438 & 195 & 538 & 240 \\
  39 & 18 & 139 & 63 & 239 & 107 & 339 & 151 & 439 & 196 & 539 & 240 \\
  40 & 19 & 140 & 63 & 240 & 107 & 340 & 152 & 440 & 196 & 540 & 241 \\
  41 & 19 & 141 & 63 & 241 & 108 & 341 & 152 & 441 & 197 & 541 & 241 \\
  42 & 19 & 142 & 64 & 242 & 108 & 342 & 153 & 442 & 197 & 542 & 242 \\
  43 & 20 & 143 & 64 & 243 & 109 & 343 & 153 & 443 & 198 & 543 & 242 \\
  44 & 20 & 144 & 65 & 244 & 109 & 344 & 154 & 444 & 198 & 544 & 243 \\
  45 & 21 & 145 & 65 & 245 & 110 & 345 & 154 & 445 & 199 & 545 & 243 \\
  46 & 21 & 146 & 66 & 246 & 110 & 346 & 155 & 446 & 199 & 546 & 244 \\
  47 & 22 & 147 & 66 & 247 & 111 & 347 & 155 & 447 & 199 & 547 & 244 \\
  48 & 22 & 148 & 67 & 248 & 111 & 348 & 155 & 448 & 200 & 548 & 244 \\
  49 & 23 & 149 & 67 & 249 & 111 & 349 & 156 & 449 & 200 & 549 & 245 \\
  50 & 23 & 150 & 67 & 250 & 112 & 350 & 156 & 450 & 201 & 550 & 245 \\
  51 & 23 & 151 & 68 & 251 & 112 & 351 & 157 & 451 & 201 & 551 & 246 \\
  52 & 24 & 152 & 68 & 252 & 113 & 352 & 157 & 452 & 202 & 552 & 246 \\
  53 & 24 & 153 & 69 & 253 & 113 & 353 & 158 & 453 & 202 & 553 & 247 \\
  54 & 25 & 154 & 69 & 254 & 114 & 354 & 158 & 454 & 203 & 554 & 247 \\
  55 & 25 & 155 & 70 & 255 & 114 & 355 & 159 & 455 & 203 & 555 & 248 \\
  56 & 26 & 156 & 70 & 256 & 115 & 356 & 159 & 456 & 203 & 556 & 248 \\
  57 & 26 & 157 & 71 & 257 & 115 & 357 & 159 & 457 & 204 & 557 & 248 \\
  58 & 27 & 158 & 71 & 258 & 115 & 358 & 160 & 458 & 204 & 558 & 249 \\
  59 & 27 & 159 & 71 & 259 & 116 & 359 & 160 & 459 & 205 & 559 & 249 \\
  60 & 27 & 160 & 72 & 260 & 116 & 360 & 161 & 460 & 205 & 560 & 250 \\
  61 & 28 & 161 & 72 & 261 & 117 & 361 & 161 & 461 & 206 & 561 & 250 \\
  62 & 28 & 162 & 73 & 262 & 117 & 362 & 162 & 462 & 206 & 562 & 251 \\
  63 & 29 & 163 & 73 & 263 & 118 & 363 & 162 & 463 & 207 & 563 & 251 \\
  64 & 29 & 164 & 74 & 264 & 118 & 364 & 163 & 464 & 207 & 564 & 252 \\
  65 & 30 & 165 & 74 & 265 & 119 & 365 & 163 & 465 & 207 & 565 & 252 \\
  66 & 30 & 166 & 75 & 266 & 119 & 366 & 163 & 466 & 208 & 566 & 252 \\
  67 & 31 & 167 & 75 & 267 & 119 & 367 & 164 & 467 & 208 & 567 & 253 \\
  68 & 31 & 168 & 75 & 268 & 120 & 368 & 164 & 468 & 209 & 568 & 253 \\
  69 & 31 & 169 & 76 & 269 & 120 & 369 & 165 & 469 & 209 & 569 & 254 \\
  70 & 32 & 170 & 76 & 270 & 121 & 370 & 165 & 470 & 210 & 570 & 254 \\
  71 & 32 & 171 & 77 & 271 & 121 & 371 & 166 & 471 & 210 & 571 & 255 \\
  72 & 33 & 172 & 77 & 272 & 122 & 372 & 166 & 472 & 211 & 572 & 255 \\
  73 & 33 & 173 & 78 & 273 & 122 & 373 & 167 & 473 & 211 & 573 & 256 \\
  74 & 34 & 174 & 78 & 274 & 123 & 374 & 167 & 474 & 211 & 574 & 256 \\
  75 & 34 & 175 & 79 & 275 & 123 & 375 & 167 & 475 & 212 & 575 & 256 \\
  76 & 35 & 176 & 79 & 276 & 123 & 376 & 168 & 476 & 212 & 576 & 257 \\
  77 & 35 & 177 & 79 & 277 & 124 & 377 & 168 & 477 & 213 & 577 & 257 \\
  78 & 35 & 178 & 80 & 278 & 124 & 378 & 169 & 478 & 213 & 578 & 258 \\
  79 & 36 & 179 & 80 & 279 & 125 & 379 & 169 & 479 & 214 & 579 & 258 \\
  80 & 36 & 180 & 81 & 280 & 125 & 380 & 170 & 480 & 214 & 580 & 259 \\
  81 & 37 & 181 & 81 & 281 & 126 & 381 & 170 & 481 & 215 & 581 & 259 \\
  82 & 37 & 182 & 82 & 282 & 126 & 382 & 171 & 482 & 215 & 582 & 260 \\
  83 & 38 & 183 & 82 & 283 & 127 & 383 & 171 & 483 & 215 & 583 & 260 \\
  84 & 38 & 184 & 83 & 284 & 127 & 384 & 171 & 484 & 216 & 584 & 260 \\
  85 & 39 & 185 & 83 & 285 & 127 & 385 & 172 & 485 & 216 & 585 & 261 \\
  86 & 39 & 186 & 83 & 286 & 128 & 386 & 172 & 486 & 217 & 586 & 261 \\
  87 & 39 & 187 & 84 & 287 & 128 & 387 & 173 & 487 & 217 & 587 & 262 \\
  88 & 40 & 188 & 84 & 288 & 129 & 388 & 173 & 488 & 218 & 588 & 262 \\
  89 & 40 & 189 & 85 & 289 & 129 & 389 & 174 & 489 & 218 & 589 & 263 \\
  90 & 41 & 190 & 85 & 290 & 130 & 390 & 174 & 490 & 219 & 590 & 263 \\
  91 & 41 & 191 & 86 & 291 & 130 & 391 & 175 & 491 & 219 & 591 & 264 \\
  92 & 42 & 192 & 86 & 292 & 131 & 392 & 175 & 492 & 219 & 592 & 264 \\
  93 & 42 & 193 & 87 & 293 & 131 & 393 & 175 & 493 & 220 & 593 & 264 \\
  94 & 43 & 194 & 87 & 294 & 131 & 394 & 176 & 494 & 220 &  &  \\
  95 & 43 & 195 & 87 & 295 & 132 & 395 & 176 & 495 & 221 &  &  \\
  96 & 43 & 196 & 88 & 296 & 132 & 396 & 177 & 496 & 221 &  &  \\
  97 & 44 & 197 & 88 & 297 & 133 & 397 & 177 & 497 & 222 &  &  \\
  98 & 44 & 198 & 89 & 298 & 133 & 398 & 178 & 498 & 222 &  &  \\
  99 & 45 & 199 & 89 & 299 & 134 & 399 & 178 & 499 & 223 &  &  \\
  100 & 45 & 200 & 90 & 300 & 134 & 400 & 179 & 500 & 223 &  &  \\

\end{longtable}

Next, we visualize the growth of the competition complexity reported in \Cref{tab:additional-buyers-mhr} as $n$ increases, as well as the convergence of the ratio between the competition complexity and $n$. The results in \Cref{fig:MHR distribution:balance market} show that as $n$ increases, the minimal number of additional buyers $\addBuyers$ is monotonically non-decreasing, and the ratio ${\addBuyers}/{n}$ quickly approaches asymptotic competition complexity $(e^{1/e}-1)\approx 0.445$.

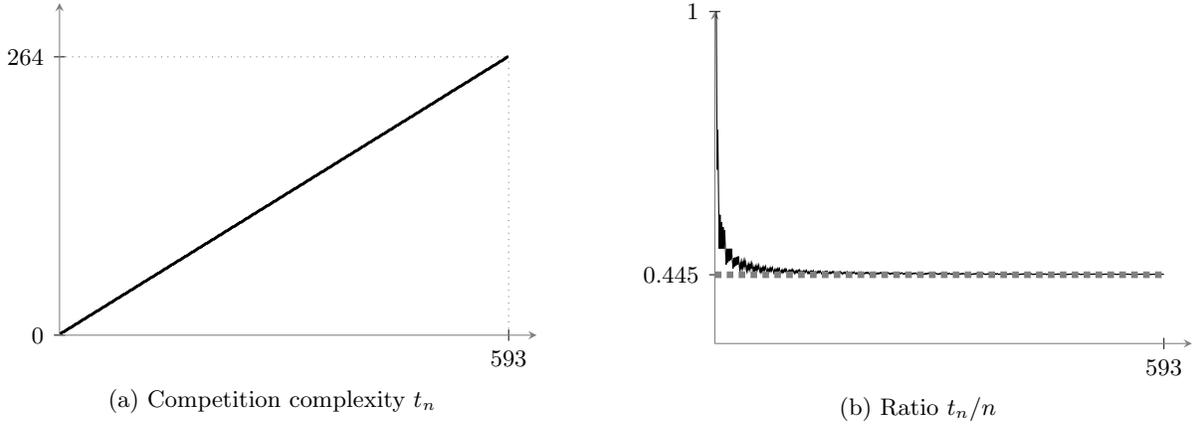
\begin{figure}
\centering
\begin{minipage}{0.48\textwidth}
  \centering
  \subfloat[Competition complexity $t_n$
  ]{
    \begin{tikzpicture}
  \begin{axis}[
width=\textwidth,
height=6cm,
xmin=0,xmax=630,
ymin=0,ymax=315,
xtick={0, 593},   
xticklabels = {0, 593},
tick label style={font=\footnotesize}, scaled y ticks=false,                 yticklabel style={/pgf/number format/fixed}, ytick={0, 264}, 
yticklabels = {0, 264},
minor tick num=0,
axis line style=gray,
axis x line=middle,
axis y line=left,
tick style={black},
legend style={draw=none,fill=white,font=\small,at={(0.98,0.98)},anchor=north east},
]

    \addplot[gray, dotted] coordinates {(593,0) (593,264)};

    \addplot[gray, dotted] coordinates {(0,264) (593,264)};

    \addplot[black, solid, line width=1pt] coordinates {
(1,1)
(2,2)
(3,2)
(4,3)
(5,3)
(6,3)
(7,4)
(8,4)
(9,5)
(10,5)
(11,6)
(12,6)
(13,7)
(14,7)
(15,7)
(16,8)
(17,8)
(18,9)
(19,9)
(20,10)
(21,10)
(22,11)
(23,11)
(24,11)
(25,12)
(26,12)
(27,13)
(28,13)
(29,14)
(30,14)
(31,15)
(32,15)
(33,15)
(34,16)
(35,16)
(36,17)
(37,17)
(38,18)
(39,18)
(40,19)
(41,19)
(42,19)
(43,20)
(44,20)
(45,21)
(46,21)
(47,22)
(48,22)
(49,23)
(50,23)
(51,23)
(52,24)
(53,24)
(54,25)
(55,25)
(56,26)
(57,26)
(58,27)
(59,27)
(60,27)
(61,28)
(62,28)
(63,29)
(64,29)
(65,30)
(66,30)
(67,31)
(68,31)
(69,31)
(70,32)
(71,32)
(72,33)
(73,33)
(74,34)
(75,34)
(76,35)
(77,35)
(78,35)
(79,36)
(80,36)
(81,37)
(82,37)
(83,38)
(84,38)
(85,39)
(86,39)
(87,39)
(88,40)
(89,40)
(90,41)
(91,41)
(92,42)
(93,42)
(94,43)
(95,43)
(96,43)
(97,44)
(98,44)
(99,45)
(100,45)
(101,46)
(102,46)
(103,47)
(104,47)
(105,47)
(106,48)
(107,48)
(108,49)
(109,49)
(110,50)
(111,50)
(112,51)
(113,51)
(114,51)
(115,52)
(116,52)
(117,53)
(118,53)
(119,54)
(120,54)
(121,55)
(122,55)
(123,55)
(124,56)
(125,56)
(126,57)
(127,57)
(128,58)
(129,58)
(130,59)
(131,59)
(132,59)
(133,60)
(134,60)
(135,61)
(136,61)
(137,62)
(138,62)
(139,63)
(140,63)
(141,63)
(142,64)
(143,64)
(144,65)
(145,65)
(146,66)
(147,66)
(148,67)
(149,67)
(150,67)
(151,68)
(152,68)
(153,69)
(154,69)
(155,70)
(156,70)
(157,71)
(158,71)
(159,71)
(160,72)
(161,72)
(162,73)
(163,73)
(164,74)
(165,74)
(166,75)
(167,75)
(168,75)
(169,76)
(170,76)
(171,77)
(172,77)
(173,78)
(174,78)
(175,79)
(176,79)
(177,79)
(178,80)
(179,80)
(180,81)
(181,81)
(182,82)
(183,82)
(184,83)
(185,83)
(186,83)
(187,84)
(188,84)
(189,85)
(190,85)
(191,86)
(192,86)
(193,87)
(194,87)
(195,87)
(196,88)
(197,88)
(198,89)
(199,89)
(200,90)
(201,90)
(202,91)
(203,91)
(204,91)
(205,92)
(206,92)
(207,93)
(208,93)
(209,94)
(210,94)
(211,95)
(212,95)
(213,95)
(214,96)
(215,96)
(216,97)
(217,97)
(218,98)
(219,98)
(220,99)
(221,99)
(222,99)
(223,100)
(224,100)
(225,101)
(226,101)
(227,102)
(228,102)
(229,103)
(230,103)
(231,103)
(232,104)
(233,104)
(234,105)
(235,105)
(236,106)
(237,106)
(238,107)
(239,107)
(240,107)
(241,108)
(242,108)
(243,109)
(244,109)
(245,110)
(246,110)
(247,111)
(248,111)
(249,111)
(250,112)
(251,112)
(252,113)
(253,113)
(254,114)
(255,114)
(256,115)
(257,115)
(258,115)
(259,116)
(260,116)
(261,117)
(262,117)
(263,118)
(264,118)
(265,119)
(266,119)
(267,119)
(268,120)
(269,120)
(270,121)
(271,121)
(272,122)
(273,122)
(274,123)
(275,123)
(276,123)
(277,124)
(278,124)
(279,125)
(280,125)
(281,126)
(282,126)
(283,127)
(284,127)
(285,127)
(286,128)
(287,128)
(288,129)
(289,129)
(290,130)
(291,130)
(292,131)
(293,131)
(294,131)
(295,132)
(296,132)
(297,133)
(298,133)
(299,134)
(300,134)
(301,135)
(302,135)
(303,135)
(304,136)
(305,136)
(306,137)
(307,137)
(308,138)
(309,138)
(310,139)
(311,139)
(312,139)
(313,140)
(314,140)
(315,141)
(316,141)
(317,142)
(318,142)
(319,143)
(320,143)
(321,143)
(322,144)
(323,144)
(324,145)
(325,145)
(326,146)
(327,146)
(328,147)
(329,147)
(330,147)
(331,148)
(332,148)
(333,149)
(334,149)
(335,150)
(336,150)
(337,151)
(338,151)
(339,151)
(340,152)
(341,152)
(342,153)
(343,153)
(344,154)
(345,154)
(346,155)
(347,155)
(348,155)
(349,156)
(350,156)
(351,157)
(352,157)
(353,158)
(354,158)
(355,159)
(356,159)
(357,159)
(358,160)
(359,160)
(360,161)
(361,161)
(362,162)
(363,162)
(364,163)
(365,163)
(366,163)
(367,164)
(368,164)
(369,165)
(370,165)
(371,166)
(372,166)
(373,167)
(374,167)
(375,167)
(376,168)
(377,168)
(378,169)
(379,169)
(380,170)
(381,170)
(382,171)
(383,171)
(384,171)
(385,172)
(386,172)
(387,173)
(388,173)
(389,174)
(390,174)
(391,175)
(392,175)
(393,175)
(394,176)
(395,176)
(396,177)
(397,177)
(398,178)
(399,178)
(400,179)
(401,179)
(402,179)
(403,180)
(404,180)
(405,181)
(406,181)
(407,182)
(408,182)
(409,183)
(410,183)
(411,183)
(412,184)
(413,184)
(414,185)
(415,185)
(416,186)
(417,186)
(418,187)
(419,187)
(420,187)
(421,188)
(422,188)
(423,189)
(424,189)
(425,190)
(426,190)
(427,191)
(428,191)
(429,191)
(430,192)
(431,192)
(432,193)
(433,193)
(434,194)
(435,194)
(436,195)
(437,195)
(438,195)
(439,196)
(440,196)
(441,197)
(442,197)
(443,198)
(444,198)
(445,199)
(446,199)
(447,199)
(448,200)
(449,200)
(450,201)
(451,201)
(452,202)
(453,202)
(454,203)
(455,203)
(456,203)
(457,204)
(458,204)
(459,205)
(460,205)
(461,206)
(462,206)
(463,207)
(464,207)
(465,207)
(466,208)
(467,208)
(468,209)
(469,209)
(470,210)
(471,210)
(472,211)
(473,211)
(474,211)
(475,212)
(476,212)
(477,213)
(478,213)
(479,214)
(480,214)
(481,215)
(482,215)
(483,215)
(484,216)
(485,216)
(486,217)
(487,217)
(488,218)
(489,218)
(490,219)
(491,219)
(492,219)
(493,220)
(494,220)
(495,221)
(496,221)
(497,222)
(498,222)
(499,223)
(500,223)
(501,224)
(502,224)
(503,224)
(504,225)
(505,225)
(506,226)
(507,226)
(508,227)
(509,227)
(510,228)
(511,228)
(512,228)
(513,229)
(514,229)
(515,230)
(516,230)
(517,231)
(518,231)
(519,232)
(520,232)
(521,232)
(522,233)
(523,233)
(524,234)
(525,234)
(526,235)
(527,235)
(528,236)
(529,236)
(530,236)
(531,237)
(532,237)
(533,238)
(534,238)
(535,239)
(536,239)
(537,240)
(538,240)
(539,240)
(540,241)
(541,241)
(542,242)
(543,242)
(544,243)
(545,243)
(546,244)
(547,244)
(548,244)
(549,245)
(550,245)
(551,246)
(552,246)
(553,247)
(554,247)
(555,248)
(556,248)
(557,248)
(558,249)
(559,249)
(560,250)
(561,250)
(562,251)
(563,251)
(564,252)
(565,252)
(566,252)
(567,253)
(568,253)
(569,254)
(570,254)
(571,255)
(572,255)
(573,256)
(574,256)
(575,256)
(576,257)
(577,257)
(578,258)
(579,258)
(580,259)
(581,259)
(582,260)
(583,260)
(584,260)
(585,261)
(586,261)
(587,262)
(588,262)
(589,263)
(590,263)
(591,264)
(592,264)
(593,264)
    };
\end{axis}
\end{tikzpicture}     \label{fig:MHR:balanced:minnum}
  }
\end{minipage}
\hfill
\begin{minipage}{0.48\textwidth}
  \centering
  \subfloat[Ratio $\addBuyers/n$
  ]{
    \begin{tikzpicture}
  \begin{axis}[
width=\textwidth,
height=6cm,
xmin=0,xmax=630,
ymin=0.3,ymax=1,
xtick={0, 593},   
xticklabels = {0, 593},
tick label style={font=\footnotesize}, scaled y ticks=false,                 yticklabel style={/pgf/number format/fixed}, ytick={0, 0.445,1}, 
yticklabels = {0, 0.445,1},
minor tick num=0,
axis line style=gray,
axis x line=middle,
axis y line=left,
tick style={black},
legend style={draw=none,fill=white,font=\small,at={(0.98,0.98)},anchor=north east},
]
\addplot[black, solid] coordinates {
(1,1.0)
(2,1.0)
(3,0.6666666666666666)
(4,0.75)
(5,0.6)
(6,0.5)
(7,0.5714285714285714)
(8,0.5)
(9,0.5555555555555556)
(10,0.5)
(11,0.5454545454545454)
(12,0.5)
(13,0.5384615384615384)
(14,0.5)
(15,0.4666666666666667)
(16,0.5)
(17,0.47058823529411764)
(18,0.5)
(19,0.47368421052631576)
(20,0.5)
(21,0.47619047619047616)
(22,0.5)
(23,0.4782608695652174)
(24,0.4583333333333333)
(25,0.48)
(26,0.46153846153846156)
(27,0.48148148148148145)
(28,0.4642857142857143)
(29,0.4827586206896552)
(30,0.4666666666666667)
(31,0.4838709677419355)
(32,0.46875)
(33,0.45454545454545453)
(34,0.47058823529411764)
(35,0.45714285714285713)
(36,0.4722222222222222)
(37,0.4594594594594595)
(38,0.47368421052631576)
(39,0.46153846153846156)
(40,0.475)
(41,0.4634146341463415)
(42,0.4523809523809524)
(43,0.46511627906976744)
(44,0.45454545454545453)
(45,0.4666666666666667)
(46,0.45652173913043476)
(47,0.46808510638297873)
(48,0.4583333333333333)
(49,0.46938775510204084)
(50,0.46)
(51,0.45098039215686275)
(52,0.46153846153846156)
(53,0.4528301886792453)
(54,0.46296296296296297)
(55,0.45454545454545453)
(56,0.4642857142857143)
(57,0.45614035087719296)
(58,0.46551724137931033)
(59,0.4576271186440678)
(60,0.45)
(61,0.45901639344262296)
(62,0.45161290322580644)
(63,0.4603174603174603)
(64,0.453125)
(65,0.46153846153846156)
(66,0.45454545454545453)
(67,0.4626865671641791)
(68,0.45588235294117646)
(69,0.4492753623188406)
(70,0.45714285714285713)
(71,0.4507042253521127)
(72,0.4583333333333333)
(73,0.4520547945205479)
(74,0.4594594594594595)
(75,0.4533333333333333)
(76,0.4605263157894737)
(77,0.45454545454545453)
(78,0.44871794871794873)
(79,0.45569620253164556)
(80,0.45)
(81,0.4567901234567901)
(82,0.45121951219512196)
(83,0.4578313253012048)
(84,0.4523809523809524)
(85,0.4588235294117647)
(86,0.45348837209302323)
(87,0.4482758620689655)
(88,0.45454545454545453)
(89,0.449438202247191)
(90,0.45555555555555555)
(91,0.45054945054945056)
(92,0.45652173913043476)
(93,0.45161290322580644)
(94,0.4574468085106383)
(95,0.45263157894736844)
(96,0.4479166666666667)
(97,0.4536082474226804)
(98,0.4489795918367347)
(99,0.45454545454545453)
(100,0.45)
(101,0.45544554455445546)
(102,0.45098039215686275)
(103,0.4563106796116505)
(104,0.4519230769230769)
(105,0.44761904761904764)
(106,0.4528301886792453)
(107,0.4485981308411215)
(108,0.4537037037037037)
(109,0.44954128440366975)
(110,0.45454545454545453)
(111,0.45045045045045046)
(112,0.45535714285714285)
(113,0.45132743362831856)
(114,0.4473684210526316)
(115,0.45217391304347826)
(116,0.4482758620689655)
(117,0.452991452991453)
(118,0.4491525423728814)
(119,0.453781512605042)
(120,0.45)
(121,0.45454545454545453)
(122,0.45081967213114754)
(123,0.44715447154471544)
(124,0.45161290322580644)
(125,0.448)
(126,0.4523809523809524)
(127,0.44881889763779526)
(128,0.453125)
(129,0.4496124031007752)
(130,0.45384615384615384)
(131,0.45038167938931295)
(132,0.44696969696969696)
(133,0.45112781954887216)
(134,0.44776119402985076)
(135,0.45185185185185184)
(136,0.4485294117647059)
(137,0.45255474452554745)
(138,0.4492753623188406)
(139,0.45323741007194246)
(140,0.45)
(141,0.44680851063829785)
(142,0.4507042253521127)
(143,0.44755244755244755)
(144,0.4513888888888889)
(145,0.4482758620689655)
(146,0.4520547945205479)
(147,0.4489795918367347)
(148,0.4527027027027027)
(149,0.44966442953020136)
(150,0.44666666666666666)
(151,0.4503311258278146)
(152,0.4473684210526316)
(153,0.45098039215686275)
(154,0.44805194805194803)
(155,0.45161290322580644)
(156,0.44871794871794873)
(157,0.45222929936305734)
(158,0.44936708860759494)
(159,0.44654088050314467)
(160,0.45)
(161,0.4472049689440994)
(162,0.4506172839506173)
(163,0.44785276073619634)
(164,0.45121951219512196)
(165,0.4484848484848485)
(166,0.45180722891566266)
(167,0.4491017964071856)
(168,0.44642857142857145)
(169,0.44970414201183434)
(170,0.4470588235294118)
(171,0.4502923976608187)
(172,0.4476744186046512)
(173,0.4508670520231214)
(174,0.4482758620689655)
(175,0.4514285714285714)
(176,0.44886363636363635)
(177,0.4463276836158192)
(178,0.449438202247191)
(179,0.44692737430167595)
(180,0.45)
(181,0.44751381215469616)
(182,0.45054945054945056)
(183,0.44808743169398907)
(184,0.45108695652173914)
(185,0.4486486486486487)
(186,0.44623655913978494)
(187,0.44919786096256686)
(188,0.44680851063829785)
(189,0.4497354497354497)
(190,0.4473684210526316)
(191,0.450261780104712)
(192,0.4479166666666667)
(193,0.45077720207253885)
(194,0.4484536082474227)
(195,0.4461538461538462)
(196,0.4489795918367347)
(197,0.4467005076142132)
(198,0.4494949494949495)
(199,0.4472361809045226)
(200,0.45)
(201,0.44776119402985076)
(202,0.4504950495049505)
(203,0.4482758620689655)
(204,0.44607843137254904)
(205,0.44878048780487806)
(206,0.44660194174757284)
(207,0.4492753623188406)
(208,0.44711538461538464)
(209,0.44976076555023925)
(210,0.44761904761904764)
(211,0.45023696682464454)
(212,0.4481132075471698)
(213,0.4460093896713615)
(214,0.4485981308411215)
(215,0.44651162790697674)
(216,0.44907407407407407)
(217,0.4470046082949309)
(218,0.44954128440366975)
(219,0.4474885844748858)
(220,0.45)
(221,0.4479638009049774)
(222,0.44594594594594594)
(223,0.4484304932735426)
(224,0.44642857142857145)
(225,0.4488888888888889)
(226,0.4469026548672566)
(227,0.44933920704845814)
(228,0.4473684210526316)
(229,0.4497816593886463)
(230,0.44782608695652176)
(231,0.4458874458874459)
(232,0.4482758620689655)
(233,0.44635193133047213)
(234,0.44871794871794873)
(235,0.44680851063829785)
(236,0.4491525423728814)
(237,0.4472573839662447)
(238,0.4495798319327731)
(239,0.4476987447698745)
(240,0.44583333333333336)
(241,0.44813278008298757)
(242,0.4462809917355372)
(243,0.448559670781893)
(244,0.44672131147540983)
(245,0.4489795918367347)
(246,0.44715447154471544)
(247,0.4493927125506073)
(248,0.4475806451612903)
(249,0.4457831325301205)
(250,0.448)
(251,0.44621513944223107)
(252,0.44841269841269843)
(253,0.44664031620553357)
(254,0.44881889763779526)
(255,0.4470588235294118)
(256,0.44921875)
(257,0.4474708171206226)
(258,0.44573643410852715)
(259,0.44787644787644787)
(260,0.4461538461538462)
(261,0.4482758620689655)
(262,0.44656488549618323)
(263,0.44866920152091255)
(264,0.44696969696969696)
(265,0.4490566037735849)
(266,0.4473684210526316)
(267,0.44569288389513106)
(268,0.44776119402985076)
(269,0.44609665427509293)
(270,0.44814814814814813)
(271,0.44649446494464945)
(272,0.4485294117647059)
(273,0.4468864468864469)
(274,0.4489051094890511)
(275,0.44727272727272727)
(276,0.44565217391304346)
(277,0.44765342960288806)
(278,0.4460431654676259)
(279,0.44802867383512546)
(280,0.44642857142857145)
(281,0.4483985765124555)
(282,0.44680851063829785)
(283,0.44876325088339225)
(284,0.4471830985915493)
(285,0.4456140350877193)
(286,0.44755244755244755)
(287,0.445993031358885)
(288,0.4479166666666667)
(289,0.4463667820069204)
(290,0.4482758620689655)
(291,0.44673539518900346)
(292,0.4486301369863014)
(293,0.447098976109215)
(294,0.445578231292517)
(295,0.44745762711864406)
(296,0.44594594594594594)
(297,0.4478114478114478)
(298,0.4463087248322148)
(299,0.44816053511705684)
(300,0.44666666666666666)
(301,0.4485049833887043)
(302,0.4470198675496689)
(303,0.44554455445544555)
(304,0.4473684210526316)
(305,0.4459016393442623)
(306,0.4477124183006536)
(307,0.44625407166123776)
(308,0.44805194805194803)
(309,0.44660194174757284)
(310,0.4483870967741935)
(311,0.44694533762057875)
(312,0.44551282051282054)
(313,0.4472843450479233)
(314,0.445859872611465)
(315,0.44761904761904764)
(316,0.4462025316455696)
(317,0.4479495268138801)
(318,0.44654088050314467)
(319,0.4482758620689655)
(320,0.446875)
(321,0.4454828660436137)
(322,0.4472049689440994)
(323,0.4458204334365325)
(324,0.44753086419753085)
(325,0.4461538461538462)
(326,0.44785276073619634)
(327,0.44648318042813456)
(328,0.4481707317073171)
(329,0.44680851063829785)
(330,0.44545454545454544)
(331,0.4471299093655589)
(332,0.4457831325301205)
(333,0.44744744744744747)
(334,0.44610778443113774)
(335,0.44776119402985076)
(336,0.44642857142857145)
(337,0.44807121661721067)
(338,0.4467455621301775)
(339,0.44542772861356933)
(340,0.4470588235294118)
(341,0.44574780058651026)
(342,0.4473684210526316)
(343,0.446064139941691)
(344,0.4476744186046512)
(345,0.4463768115942029)
(346,0.4479768786127168)
(347,0.44668587896253603)
(348,0.4454022988505747)
(349,0.4469914040114613)
(350,0.44571428571428573)
(351,0.4472934472934473)
(352,0.4460227272727273)
(353,0.4475920679886686)
(354,0.4463276836158192)
(355,0.447887323943662)
(356,0.44662921348314605)
(357,0.44537815126050423)
(358,0.44692737430167595)
(359,0.4456824512534819)
(360,0.44722222222222224)
(361,0.44598337950138506)
(362,0.44751381215469616)
(363,0.4462809917355372)
(364,0.4478021978021978)
(365,0.4465753424657534)
(366,0.4453551912568306)
(367,0.44686648501362397)
(368,0.44565217391304346)
(369,0.44715447154471544)
(370,0.44594594594594594)
(371,0.4474393530997305)
(372,0.44623655913978494)
(373,0.4477211796246649)
(374,0.446524064171123)
(375,0.44533333333333336)
(376,0.44680851063829785)
(377,0.44562334217506633)
(378,0.4470899470899471)
(379,0.44591029023746703)
(380,0.4473684210526316)
(381,0.4461942257217848)
(382,0.4476439790575916)
(383,0.4464751958224543)
(384,0.4453125)
(385,0.44675324675324674)
(386,0.44559585492227977)
(387,0.4470284237726098)
(388,0.44587628865979384)
(389,0.4473007712082262)
(390,0.4461538461538462)
(391,0.4475703324808184)
(392,0.44642857142857145)
(393,0.44529262086513993)
(394,0.4467005076142132)
(395,0.44556962025316454)
(396,0.44696969696969696)
(397,0.44584382871536526)
(398,0.4472361809045226)
(399,0.44611528822055135)
(400,0.4475)
(401,0.4463840399002494)
(402,0.44527363184079605)
(403,0.4466501240694789)
(404,0.44554455445544555)
(405,0.4469135802469136)
(406,0.4458128078817734)
(407,0.44717444717444715)
(408,0.44607843137254904)
(409,0.4474327628361858)
(410,0.44634146341463415)
(411,0.44525547445255476)
(412,0.44660194174757284)
(413,0.44552058111380144)
(414,0.4468599033816425)
(415,0.4457831325301205)
(416,0.44711538461538464)
(417,0.4460431654676259)
(418,0.4473684210526316)
(419,0.44630071599045346)
(420,0.4452380952380952)
(421,0.44655581947743467)
(422,0.44549763033175355)
(423,0.44680851063829785)
(424,0.44575471698113206)
(425,0.4470588235294118)
(426,0.4460093896713615)
(427,0.44730679156908665)
(428,0.4462616822429907)
(429,0.44522144522144524)
(430,0.44651162790697674)
(431,0.44547563805104406)
(432,0.44675925925925924)
(433,0.4457274826789838)
(434,0.4470046082949309)
(435,0.4459770114942529)
(436,0.44724770642201833)
(437,0.4462242562929062)
(438,0.4452054794520548)
(439,0.44646924829157175)
(440,0.44545454545454544)
(441,0.4467120181405896)
(442,0.4457013574660634)
(443,0.4469525959367946)
(444,0.44594594594594594)
(445,0.44719101123595506)
(446,0.4461883408071749)
(447,0.4451901565995526)
(448,0.44642857142857145)
(449,0.44543429844098)
(450,0.44666666666666666)
(451,0.44567627494456763)
(452,0.4469026548672566)
(453,0.445916114790287)
(454,0.44713656387665196)
(455,0.4461538461538462)
(456,0.4451754385964912)
(457,0.44638949671772427)
(458,0.44541484716157204)
(459,0.4466230936819172)
(460,0.44565217391304346)
(461,0.44685466377440347)
(462,0.4458874458874459)
(463,0.4470842332613391)
(464,0.44612068965517243)
(465,0.44516129032258067)
(466,0.44635193133047213)
(467,0.44539614561027835)
(468,0.4465811965811966)
(469,0.44562899786780386)
(470,0.44680851063829785)
(471,0.445859872611465)
(472,0.4470338983050847)
(473,0.44608879492600423)
(474,0.4451476793248945)
(475,0.4463157894736842)
(476,0.44537815126050423)
(477,0.44654088050314467)
(478,0.4456066945606695)
(479,0.44676409185803756)
(480,0.44583333333333336)
(481,0.446985446985447)
(482,0.4460580912863071)
(483,0.4451345755693582)
(484,0.4462809917355372)
(485,0.44536082474226807)
(486,0.44650205761316875)
(487,0.4455852156057495)
(488,0.44672131147540983)
(489,0.4458077709611452)
(490,0.44693877551020406)
(491,0.4460285132382892)
(492,0.4451219512195122)
(493,0.4462474645030426)
(494,0.44534412955465585)
(495,0.44646464646464645)
(496,0.44556451612903225)
(497,0.44668008048289737)
(498,0.4457831325301205)
(499,0.4468937875751503)
(500,0.446)
(501,0.4471057884231537)
(502,0.44621513944223107)
(503,0.44532803180914515)
(504,0.44642857142857145)
(505,0.44554455445544555)
(506,0.44664031620553357)
(507,0.4457593688362919)
(508,0.4468503937007874)
(509,0.44597249508840864)
(510,0.4470588235294118)
(511,0.4461839530332681)
(512,0.4453125)
(513,0.44639376218323584)
(514,0.4455252918287938)
(515,0.44660194174757284)
(516,0.44573643410852715)
(517,0.44680851063829785)
(518,0.44594594594594594)
(519,0.44701348747591524)
(520,0.4461538461538462)
(521,0.44529750479846447)
(522,0.446360153256705)
(523,0.44550669216061184)
(524,0.44656488549618323)
(525,0.44571428571428573)
(526,0.4467680608365019)
(527,0.4459203036053131)
(528,0.44696969696969696)
(529,0.44612476370510395)
(530,0.44528301886792454)
(531,0.4463276836158192)
(532,0.44548872180451127)
(533,0.44652908067542213)
(534,0.44569288389513106)
(535,0.44672897196261685)
(536,0.4458955223880597)
(537,0.44692737430167595)
(538,0.44609665427509293)
(539,0.4452690166975881)
(540,0.4462962962962963)
(541,0.4454713493530499)
(542,0.44649446494464945)
(543,0.44567219152854515)
(544,0.44669117647058826)
(545,0.44587155963302755)
(546,0.4468864468864469)
(547,0.4460694698354662)
(548,0.44525547445255476)
(549,0.44626593806921677)
(550,0.44545454545454544)
(551,0.44646098003629764)
(552,0.44565217391304346)
(553,0.44665461121157324)
(554,0.44584837545126355)
(555,0.44684684684684683)
(556,0.4460431654676259)
(557,0.4452423698384201)
(558,0.44623655913978494)
(559,0.44543828264758495)
(560,0.44642857142857145)
(561,0.44563279857397503)
(562,0.44661921708185054)
(563,0.44582593250444047)
(564,0.44680851063829785)
(565,0.44601769911504424)
(566,0.4452296819787986)
(567,0.4462081128747795)
(568,0.4454225352112676)
(569,0.44639718804920914)
(570,0.4456140350877193)
(571,0.44658493870402804)
(572,0.4458041958041958)
(573,0.4467713787085515)
(574,0.445993031358885)
(575,0.44521739130434784)
(576,0.4461805555555556)
(577,0.44540727902946275)
(578,0.4463667820069204)
(579,0.44559585492227977)
(580,0.44655172413793104)
(581,0.4457831325301205)
(582,0.44673539518900346)
(583,0.44596912521440824)
(584,0.4452054794520548)
(585,0.4461538461538462)
(586,0.4453924914675768)
(587,0.4463373083475298)
(588,0.445578231292517)
(589,0.4465195246179966)
(590,0.4457627118644068)
(591,0.4467005076142132)
(592,0.44594594594594594)
(593,0.4451939291736931)
};

    \addplot[gray, dotted, line width=2.5pt] coordinates {(0,0.445) (593,0.445)};
\node at (axis cs:0,0.445) [left,  xshift=2pt, black!20, text=black] {$0.445$};
  \end{axis}
\end{tikzpicture}     \label{fig:MHR:balanced:minratio}
  }
\end{minipage}

\caption{Competition complexity for balanced markets under MHR distributions (to achieve $\ccapproxratio=1$).
The black solid curve in Figure~(a) plots the competition complexity $\addBuyers$ as a function of the market size $n$.
The black solid curve in Figure (b) plots the ratio of the competition complexity $\addBuyers$ to $n$, as a function of $n$, and the light-gray dotted line marks the asymptotic competition complexity $0.445$.
}
\label{fig:MHR distribution:balance market}
\end{figure}

Next, we present numerical results for general markets to provide further insights into the competition complexity beyond the balanced-market setting. 
We consider general markets under MHR distributions in \Cref{fig:MHR distribution:general market} and general markets under regular distributions in \Cref{fig:regular distribution:general market}, with market size $n$ and supply-to-demand ratio $\imbalanceratio$, where the number of goods is $m_n=\lceil \imbalanceratio\cdot n\rceil$.
Let the competition complexity be denoted by $\addBuyers$.

\Cref{fig:MHR distribution:general market} illustrates both the growth of the competition complexity $\addBuyers$ as $n$ increases and the convergence of the ratio $\addBuyers/n$ under MHR distributions when the approximation ratio is $\ccapproxratio=1$.
We let $n$ range from $4$ to $400$ with step size $4$, and set $\imbalanceratio \in \{0.5, 0.75\}$. 
The results show that when $\imbalanceratio = 0.5$ or $\imbalanceratio = 0.75$, the competition complexity $\addBuyers$ is monotonically non-decreasing in $n$. 
Moreover, as $n$ grows, the ratio $\addBuyers/n$ gradually converges to the asymptotic competition complexity
$(\imbalanceratio\cdot e^{\frac{1}{e\imbalanceratio}} - 1)$ when $\imbalanceratio\geq {1}/{e}$.

\begin{figure}
\centering
\begin{minipage}{0.48\textwidth}
\centering
\subfloat[Competition complexity $\addBuyers$
  ]{
    \begin{tikzpicture}
  \begin{axis}[
width=\textwidth,
height=6cm,
xmin=0,xmax=420,
ymin=5,ymax=100,
xtick={0, 400},   
xticklabels = {0, 400},
tick label style={font=\footnotesize}, scaled y ticks=false,                 yticklabel style={/pgf/number format/fixed}, ytick={5, 23, 96}, 
yticklabels = {0, 18, 91},
minor tick num=0,
axis line style=gray,
axis x line=middle,
axis y line=left,
tick style={black},
legend style={draw=none,fill=white,font=\small,at={(0.98,0.98)},anchor=north east},
]

\pgfplotsset{
  myLine/.style={line width=1pt}
}

\addplot[black!45, myLine] coordinates {(4,6) (8,6) (12,7) (16,7) (20,7) (24,7) (28,7) (32,8) (36,8) (40,8) (44,8) (48,8) (52,8) (56,9) (60,9) (64,9) (68,9) (72,9) (76,9) (80,10) (84,10) (88,10) (92,10) (96,10) (100,10) (104,11) (108,11) (112,11) (116,11) (120,11) (124,11) (128,12) (132,12) (136,12) (140,12) (144,12) (148,12) (152,13) (156,13) (160,13) (164,13) (168,13) (172,14) (176,14) (180,14) (184,14) (188,14) (192,14) (196,15) (200,15) (204,15) (208,15) (212,15) (216,15) (220,16) (224,16) (228,16) (232,16) (236,16) (240,16) (244,17) (248,17) (252,17) (256,17) (260,17) (264,18) (268,18) (272,18) (276,18) (280,18) (284,18) (288,19) (292,19) (296,19) (300,19) (304,19) (308,19) (312,20) (316,20) (320,20) (324,20) (328,20) (332,20) (336,21) (340,21) (344,21) (348,21) (352,21) (356,22) (360,22) (364,22) (368,22) (372,22) (376,22) (380,23) (384,23) (388,23) (392,23) (396,23) (400,23)};

    \addplot[gray, dotted]coordinates {(400,0) (400,96)};
    \addplot[gray, dotted]coordinates {(0,96) (400,96)};
\addplot[black, myLine] coordinates {(4,7) (8,8) (12,9) (16,9) (20,10) (24,11) (28,12) (32,13) (36,14) (40,15) (44,16) (48,17) (52,18) (56,18) (60,19) (64,20) (68,21) (72,22) (76,23) (80,24) (84,25) (88,26) (92,27) (96,27) (100,28) (104,29) (108,30) (112,31) (116,32) (120,33) (124,34) (128,35) (132,35) (136,36) (140,37) (144,38) (148,39) (152,40) (156,41) (160,42) (164,43) (168,44) (172,44) (176,45) (180,46) (184,47) (188,48) (192,49) (196,50) (200,51) (204,52) (208,53) (212,53) (216,54) (220,55) (224,56) (228,57) (232,58) (236,59) (240,60) (244,61) (248,62) (252,62) (256,63) (260,64) (264,65) (268,66) (272,67) (276,68) (280,69) (284,70) (288,71) (292,71) (296,72) (300,73) (304,74) (308,75) (312,76) (316,77) (320,78) (324,79) (328,80) (332,80) (336,81) (340,82) (344,83) (348,84) (352,85) (356,86) (360,87) (364,88) (368,89) (372,89) (376,90) (380,91) (384,92) (388,93) (392,94) (396,95) (400,96)};
    
\end{axis}
\end{tikzpicture}     \label{fig:MHR:general:minnum}
  }
\end{minipage}
\hfill
\begin{minipage}{0.48\textwidth}
\centering
\subfloat[Ratio $\addBuyers/n$
  ]{
    \begin{tikzpicture}
\begin{axis}[
width=\textwidth,
height=6cm,
xmin=0,xmax=420,
ymin=0.05,ymax=0.6,
xtick={0, 400},   
xticklabels = {0, 400},
tick label style={font=\footnotesize}, scaled y ticks=false,                 yticklabel style={/pgf/number format/fixed}, ytick={0.05, 0.094, 0.275}, 
yticklabels = {0, 0.044, 0.225},
minor tick num=0,
axis line style=gray,
axis x line=middle,
axis y line=left,
tick style={black},
legend style={draw=none,fill=white,font=\small,at={(0.98,0.98)},anchor=north east},
]

\pgfplotsset{
  myLine/.style={line width=1pt}}

\addplot[black!20, myLine, domain=4:400, samples=100] {0.05};
\addplot[black!45, myLine] coordinates {(4,0.3) (8,0.175) (12,0.21666666666666667) (16,0.175) (20,0.15000000000000002) (24,0.13333333333333333) (28,0.12142857142857143) (32,0.14375) (36,0.13333333333333333) (40,0.125) (44,0.11818181818181818) (48,0.1125) (52,0.1076923076923077) (56,0.12142857142857143) (60,0.11666666666666667) (64,0.1125) (68,0.10882352941176471) (72,0.10555555555555556) (76,0.10263157894736842) (80,0.1125) (84,0.10952380952380952) (88,0.10681818181818181) (92,0.10434782608695653) (96,0.10208333333333333) (100,0.1) (104,0.1076923076923077) (108,0.10555555555555556) (112,0.10357142857142856) (116,0.10172413793103449) (120,0.1) (124,0.09838709677419355) (128,0.1046875) (132,0.10303030303030303) (136,0.10147058823529412) (140,0.1) (144,0.09861111111111112) (148,0.0972972972972973) (152,0.10263157894736842) (156,0.10128205128205128) (160,0.1) (164,0.09878048780487805) (168,0.09761904761904762) (172,0.10232558139534884) (176,0.10113636363636364) (180,0.1) (184,0.09891304347826088) (188,0.09787234042553192) (192,0.096875) (196,0.10102040816326531) (200,0.1) (204,0.09901960784313726) (208,0.09807692307692309) (212,0.09716981132075472) (216,0.0962962962962963) (220,0.1) (224,0.09910714285714287) (228,0.09824561403508772) (232,0.09741379310344828) (236,0.09661016949152543) (240,0.09583333333333333) (244,0.09918032786885246) (248,0.09838709677419355) (252,0.09761904761904762) (256,0.096875) (260,0.09615384615384616) (264,0.09924242424242424) (268,0.09850746268656717) (272,0.09779411764705882) (276,0.09710144927536232) (280,0.09642857142857143) (284,0.09577464788732395) (288,0.09861111111111112) (292,0.09794520547945205) (296,0.0972972972972973) (300,0.09666666666666668) (304,0.09605263157894736) (308,0.09545454545454546) (312,0.09807692307692309) (316,0.09746835443037975) (320,0.096875) (324,0.0962962962962963) (328,0.09573170731707317) (332,0.09518072289156626) (336,0.09761904761904762) (340,0.09705882352941177) (344,0.09651162790697675) (348,0.09597701149425288) (352,0.09545454545454546) (356,0.09775280898876404) (360,0.09722222222222222) (364,0.0967032967032967) (368,0.09619565217391304) (372,0.0956989247311828) (376,0.09521276595744682) (380,0.09736842105263158) (384,0.096875) (388,0.09639175257731959) (392,0.09591836734693879) (396,0.09545454545454546) (400,0.095)};
\addplot[black, myLine, ] coordinates {(4,0.55) (8,0.425) (12,0.3833333333333333) (16,0.3) (20,0.3) (24,0.3) (28,0.3) (32,0.3) (36,0.3) (40,0.3) (44,0.3) (48,0.3) (52,0.3) (56,0.28214285714285714) (60,0.2833333333333333) (64,0.284375) (68,0.2852941176470588) (72,0.2861111111111111) (76,0.2868421052631579) (80,0.2875) (84,0.28809523809523807) (88,0.28863636363636364) (92,0.2891304347826087) (96,0.2791666666666667) (100,0.28) (104,0.28076923076923077) (108,0.2814814814814815) (112,0.28214285714285714) (116,0.2827586206896552) (120,0.2833333333333333) (124,0.2838709677419355) (128,0.284375) (132,0.2772727272727273) (136,0.27794117647058825) (140,0.2785714285714286) (144,0.2791666666666667) (148,0.27972972972972976) (152,0.2802631578947368) (156,0.28076923076923077) (160,0.28125) (164,0.28170731707317076) (168,0.28214285714285714) (172,0.27674418604651163) (176,0.2772727272727273) (180,0.2777777777777778) (184,0.2782608695652174) (188,0.27872340425531916) (192,0.2791666666666667) (196,0.2795918367346939) (200,0.28) (204,0.2803921568627451) (208,0.28076923076923077) (212,0.27641509433962264) (216,0.27685185185185185) (220,0.2772727272727273) (224,0.27767857142857144) (228,0.27807017543859647) (232,0.278448275862069) (236,0.2788135593220339) (240,0.2791666666666667) (244,0.2795081967213115) (248,0.27983870967741936) (252,0.2761904761904762) (256,0.2765625) (260,0.27692307692307694) (264,0.2772727272727273) (268,0.27761194029850744) (272,0.27794117647058825) (276,0.2782608695652174) (280,0.2785714285714286) (284,0.27887323943661974) (288,0.2791666666666667) (292,0.276027397260274) (296,0.27635135135135136) (300,0.27666666666666667) (304,0.2769736842105263) (308,0.2772727272727273) (312,0.2775641025641026) (316,0.27784810126582277) (320,0.278125) (324,0.27839506172839507) (328,0.2786585365853659) (332,0.27590361445783135) (336,0.2761904761904762) (340,0.27647058823529413) (344,0.27674418604651163) (348,0.2770114942528736) (352,0.2772727272727273) (356,0.27752808988764044) (360,0.2777777777777778) (364,0.27802197802197803) (368,0.2782608695652174) (372,0.27580645161290324) (376,0.27606382978723404) (380,0.27631578947368424) (384,0.2765625) (388,0.2768041237113402) (392,0.2770408163265306) (396,0.2772727272727273) (400,0.2775)};
\addplot[gray, dotted]coordinates{(0,0.094)(400,0.094)};
    \addplot[gray, dotted]coordinates{(0,0.275)(400,0.275)};
  \end{axis}
\end{tikzpicture}     \label{fig:MHR:general:minratio}
  }
\end{minipage}\caption{Competition complexity for general markets under MHR distributions (to achieve $\ccapproxratio=1$). 
The dark-gray, and black curves correspond to the supply-to-demand ratios $\imbalanceratio=0.5$, and $\imbalanceratio=0.75$, respectively. 
In Figure~(a), the curves plot the competition complexity $\addBuyers$ as a function of the market size $n$. 
In Figure~(b), the curves plot the ratio of the competition complexity $\addBuyers$
to $n$, as a function of $n$, and the dotted lines mark the asymptotic competition complexities $0.044$ and $0.225$, corresponding to $\imbalanceratio=0.5$ and $\imbalanceratio=0.75$, respectively.
}
\label{fig:MHR distribution:general market}
\end{figure}
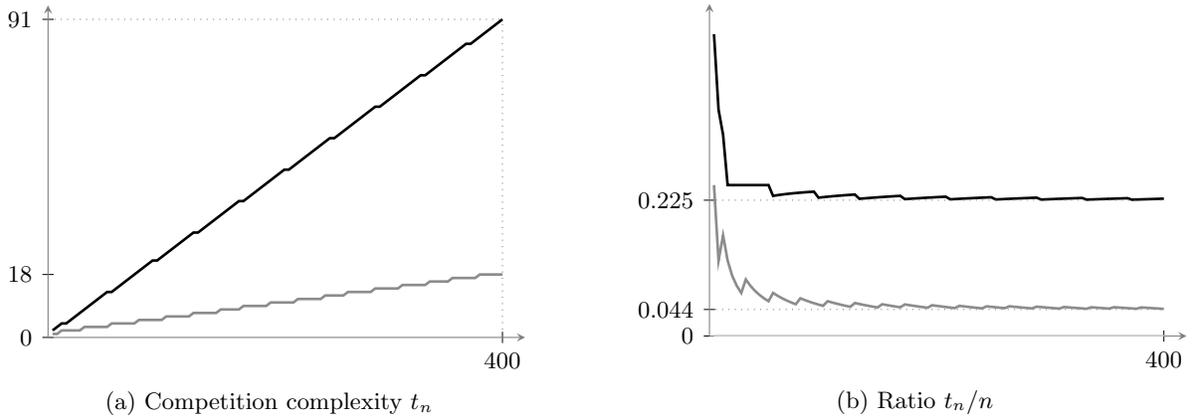

\Cref{fig:regular distribution:general market} illustrates both the growth of the competition complexity $\addBuyers$ as $n$ increases and the convergence of the ratio $\addBuyers/n$ under regular distributions when the approximation ratio is $\ccapproxratio=1$.
We let $n$ range from $4$ to $400$ with step size $4$, and set $\imbalanceratio \in \{0.5, 0.75, 1\}$.
The results show that as $n$ increases, the competition complexity $\addBuyers$ grows linearly, and the ratio $\addBuyers/n$ matches the asymptotic competition complexity $\plus{\ccapproxratio - 1 + \imbalanceratio}$ throughout.

\begin{figure}
\centering
\begin{minipage}{0.48\textwidth}
\centering
\begin{tikzpicture}
  \begin{axis}[
width=\textwidth,
height=6cm,
xmin=0,xmax=420,
ymin=0,ymax=420,
xtick={0, 400},   
xticklabels = {0, 400},
tick label style={font=\footnotesize}, scaled y ticks=false,                 yticklabel style={/pgf/number format/fixed}, ytick={0, 100,200,300,400}, 
yticklabels = {0, 100,200,300,400},
minor tick num=0,
axis line style=gray,
axis x line=middle,
axis y line=left,
tick style={black},
legend style={draw=none,fill=white,font=\small,at={(0.98,0.98)},anchor=north east},
]

\pgfplotsset{
  myLine/.style={line width=1pt}
}

    \addplot[black!20, myLine ] coordinates {(4,1) (8,2) (12,3) (16,4) (20,5) (24,6) (28,7) (32,8) (36,9) (40,10) (44,11) (48,12) (52,13) (56,14) (60,15) (64,16) (68,17) (72,18) (76,19) (80,20) (84,21) (88,22) (92,23) (96,24) (100,25) (104,26) (108,27) (112,28) (116,29) (120,30) (124,31) (128,32) (132,33) (136,34) (140,35) (144,36) (148,37) (152,38) (156,39) (160,40) (164,41) (168,42) (172,43) (176,44) (180,45) (184,46) (188,47) (192,48) (196,49) (200,50) (204,51) (208,52) (212,53) (216,54) (220,55) (224,56) (228,57) (232,58) (236,59) (240,60) (244,61) (248,62) (252,63) (256,64) (260,65) (264,66) (268,67) (272,68) (276,69) (280,70) (284,71) (288,72) (292,73) (296,74) (300,75) (304,76) (308,77) (312,78) (316,79) (320,80) (324,81) (328,82) (332,83) (336,84) (340,85) (344,86) (348,87) (352,88) (356,89) (360,90) (364,91) (368,92) (372,93) (376,94) (380,95) (384,96) (388,97) (392,98) (396,99) (400,100)};
\addplot[black!40, myLine] coordinates {(4,2) (8,4) (12,6) (16,8) (20,10) (24,12) (28,14) (32,16) (36,18) (40,20) (44,22) (48,24) (52,26) (56,28) (60,30) (64,32) (68,34) (72,36) (76,38) (80,40) (84,42) (88,44) (92,46) (96,48) (100,50) (104,52) (108,54) (112,56) (116,58) (120,60) (124,62) (128,64) (132,66) (136,68) (140,70) (144,72) (148,74) (152,76) (156,78) (160,80) (164,82) (168,84) (172,86) (176,88) (180,90) (184,92) (188,94) (192,96) (196,98) (200,100) (204,102) (208,104) (212,106) (216,108) (220,110) (224,112) (228,114) (232,116) (236,118) (240,120) (244,122) (248,124) (252,126) (256,128) (260,130) (264,132) (268,134) (272,136) (276,138) (280,140) (284,142) (288,144) (292,146) (296,148) (300,150) (304,152) (308,154) (312,156) (316,158) (320,160) (324,162) (328,164) (332,166) (336,168) (340,170) (344,172) (348,174) (352,176) (356,178) (360,180) (364,182) (368,184) (372,186) (376,188) (380,190) (384,192) (388,194) (392,196) (396,198) (400,200)};
\addplot[black!60,myLine] coordinates {(4,3) (8,6) (12,9) (16,12) (20,15) (24,18) (28,21) (32,24) (36,27) (40,30) (44,33) (48,36) (52,39) (56,42) (60,45) (64,48) (68,51) (72,54) (76,57) (80,60) (84,63) (88,66) (92,69) (96,72) (100,75) (104,78) (108,81) (112,84) (116,87) (120,90) (124,93) (128,96) (132,99) (136,102) (140,105) (144,108) (148,111) (152,114) (156,117) (160,120) (164,123) (168,126) (172,129) (176,132) (180,135) (184,138) (188,141) (192,144) (196,147) (200,150) (204,153) (208,156) (212,159) (216,162) (220,165) (224,168) (228,171) (232,174) (236,177) (240,180) (244,183) (248,186) (252,189) (256,192) (260,195) (264,198) (268,201) (272,204) (276,207) (280,210) (284,213) (288,216) (292,219) (296,222) (300,225) (304,228) (308,231) (312,234) (316,237) (320,240) (324,243) (328,246) (332,249) (336,252) (340,255) (344,258) (348,261) (352,264) (356,267) (360,270) (364,273) (368,276) (372,279) (376,282) (380,285) (384,288) (388,291) (392,294) (396,297) (400,300)};
\addplot[black, myLine] coordinates {(4,4) (8,8) (12,12) (16,16) (20,20) (24,24) (28,28) (32,32) (36,36) (40,40) (44,44) (48,48) (52,52) (56,56) (60,60) (64,64) (68,68) (72,72) (76,76) (80,80) (84,84) (88,88) (92,92) (96,96) (100,100) (104,104) (108,108) (112,112) (116,116) (120,120) (124,124) (128,128) (132,132) (136,136) (140,140) (144,144) (148,148) (152,152) (156,156) (160,160) (164,164) (168,168) (172,172) (176,176) (180,180) (184,184) (188,188) (192,192) (196,196) (200,200) (204,204) (208,208) (212,212) (216,216) (220,220) (224,224) (228,228) (232,232) (236,236) (240,240) (244,244) (248,248) (252,252) (256,256) (260,260) (264,264) (268,268) (272,272) (276,276) (280,280) (284,284) (288,288) (292,292) (296,296) (300,300) (304,304) (308,308) (312,312) (316,316) (320,320) (324,324) (328,328) (332,332) (336,336) (340,340) (344,344) (348,348) (352,352) (356,356) (360,360) (364,364) (368,368) (372,372) (376,376) (380,380) (384,384) (388,388) (392,392) (396,396) (400,400)};
    
    \addplot[gray, dotted] coordinates{(400,0)(400,400)};

    \addplot[gray, dotted] coordinates{(0,200)(400,200)};

    \addplot[gray, dotted] coordinates{(0,300)(400,300)};

    \addplot[gray, dotted] coordinates{(0,400)(400,400)};
\end{axis}
\end{tikzpicture} \end{minipage}
\hfill
\begin{minipage}{0.48\textwidth}
\centering
\begin{tikzpicture}
  \begin{axis}[
width=\textwidth,
height=6cm,
xmin=0,xmax=420,
ymin=0,ymax=1.2,
xtick={0, 400},   
xticklabels = {0, 400},
tick label style={font=\footnotesize}, scaled y ticks=false,                 yticklabel style={/pgf/number format/fixed}, ytick={0, 0.25,0.5,0.75,1}, 
yticklabels = {0, 0.25,0.5,0.75,1},
minor tick num=0,
axis line style=gray,
axis x line=middle,
axis y line=left,
tick style={black},
legend style={draw=none,fill=white,font=\small,at={(0.98,0.98)},anchor=north east},
]

\pgfplotsset{
  myLine/.style={line width=1pt}
}

\addplot[black!40,myLine] coordinates {(4,0.5) (8,0.5) (12,0.5) (16,0.5) (20,0.5) (24,0.5) (28,0.5) (32,0.5) (36,0.5) (40,0.5) (44,0.5) (48,0.5) (52,0.5) (56,0.5) (60,0.5) (64,0.5) (68,0.5) (72,0.5) (76,0.5) (80,0.5) (84,0.5) (88,0.5) (92,0.5) (96,0.5) (100,0.5) (104,0.5) (108,0.5) (112,0.5) (116,0.5) (120,0.5) (124,0.5) (128,0.5) (132,0.5) (136,0.5) (140,0.5) (144,0.5) (148,0.5) (152,0.5) (156,0.5) (160,0.5) (164,0.5) (168,0.5) (172,0.5) (176,0.5) (180,0.5) (184,0.5) (188,0.5) (192,0.5) (196,0.5) (200,0.5) (204,0.5) (208,0.5) (212,0.5) (216,0.5) (220,0.5) (224,0.5) (228,0.5) (232,0.5) (236,0.5) (240,0.5) (244,0.5) (248,0.5) (252,0.5) (256,0.5) (260,0.5) (264,0.5) (268,0.5) (272,0.5) (276,0.5) (280,0.5) (284,0.5) (288,0.5) (292,0.5) (296,0.5) (300,0.5) (304,0.5) (308,0.5) (312,0.5) (316,0.5) (320,0.5) (324,0.5) (328,0.5) (332,0.5) (336,0.5) (340,0.5) (344,0.5) (348,0.5) (352,0.5) (356,0.5) (360,0.5) (364,0.5) (368,0.5) (372,0.5) (376,0.5) (380,0.5) (384,0.5) (388,0.5) (392,0.5) (396,0.5) (400,0.5)};
\addplot[black!65,myLine] coordinates {(4,0.75) (8,0.75) (12,0.75) (16,0.75) (20,0.75) (24,0.75) (28,0.75) (32,0.75) (36,0.75) (40,0.75) (44,0.75) (48,0.75) (52,0.75) (56,0.75) (60,0.75) (64,0.75) (68,0.75) (72,0.75) (76,0.75) (80,0.75) (84,0.75) (88,0.75) (92,0.75) (96,0.75) (100,0.75) (104,0.75) (108,0.75) (112,0.75) (116,0.75) (120,0.75) (124,0.75) (128,0.75) (132,0.75) (136,0.75) (140,0.75) (144,0.75) (148,0.75) (152,0.75) (156,0.75) (160,0.75) (164,0.75) (168,0.75) (172,0.75) (176,0.75) (180,0.75) (184,0.75) (188,0.75) (192,0.75) (196,0.75) (200,0.75) (204,0.75) (208,0.75) (212,0.75) (216,0.75) (220,0.75) (224,0.75) (228,0.75) (232,0.75) (236,0.75) (240,0.75) (244,0.75) (248,0.75) (252,0.75) (256,0.75) (260,0.75) (264,0.75) (268,0.75) (272,0.75) (276,0.75) (280,0.75) (284,0.75) (288,0.75) (292,0.75) (296,0.75) (300,0.75) (304,0.75) (308,0.75) (312,0.75) (316,0.75) (320,0.75) (324,0.75) (328,0.75) (332,0.75) (336,0.75) (340,0.75) (344,0.75) (348,0.75) (352,0.75) (356,0.75) (360,0.75) (364,0.75) (368,0.75) (372,0.75) (376,0.75) (380,0.75) (384,0.75) (388,0.75) (392,0.75) (396,0.75) (400,0.75)};
\addplot[black, myLine] coordinates {(4,1.0) (8,1.0) (12,1.0) (16,1.0) (20,1.0) (24,1.0) (28,1.0) (32,1.0) (36,1.0) (40,1.0) (44,1.0) (48,1.0) (52,1.0) (56,1.0) (60,1.0) (64,1.0) (68,1.0) (72,1.0) (76,1.0) (80,1.0) (84,1.0) (88,1.0) (92,1.0) (96,1.0) (100,1.0) (104,1.0) (108,1.0) (112,1.0) (116,1.0) (120,1.0) (124,1.0) (128,1.0) (132,1.0) (136,1.0) (140,1.0) (144,1.0) (148,1.0) (152,1.0) (156,1.0) (160,1.0) (164,1.0) (168,1.0) (172,1.0) (176,1.0) (180,1.0) (184,1.0) (188,1.0) (192,1.0) (196,1.0) (200,1.0) (204,1.0) (208,1.0) (212,1.0) (216,1.0) (220,1.0) (224,1.0) (228,1.0) (232,1.0) (236,1.0) (240,1.0) (244,1.0) (248,1.0) (252,1.0) (256,1.0) (260,1.0) (264,1.0) (268,1.0) (272,1.0) (276,1.0) (280,1.0) (284,1.0) (288,1.0) (292,1.0) (296,1.0) (300,1.0) (304,1.0) (308,1.0) (312,1.0) (316,1.0) (320,1.0) (324,1.0) (328,1.0) (332,1.0) (336,1.0) (340,1.0) (344,1.0) (348,1.0) (352,1.0) (356,1.0) (360,1.0) (364,1.0) (368,1.0) (372,1.0) (376,1.0) (380,1.0) (384,1.0) (388,1.0) (392,1.0) (396,1.0) (400,1.0)};
\end{axis}
\end{tikzpicture} \end{minipage}\caption{Competition complexity for general markets under regular distributions (to achieve $\ccapproxratio=1$). 
The light-gray, dark-gray, black, dark-gray curves correspond to the supply-to-demand ratios $\imbalanceratio=0.5$, $\imbalanceratio=0.75$ and $\imbalanceratio=1$, respectively. 
In Figure~(a), the curves plot the competition complexity $\addBuyers$ as a function of the market size $n$. 
In Figure~(b), the curves plot the ratio of the competition complexity $\addBuyers$
to $n$, as a function of $n$, which coincide with the corresponding asymptotic competition complexities.}
\label{fig:regular distribution:general market}
\end{figure}
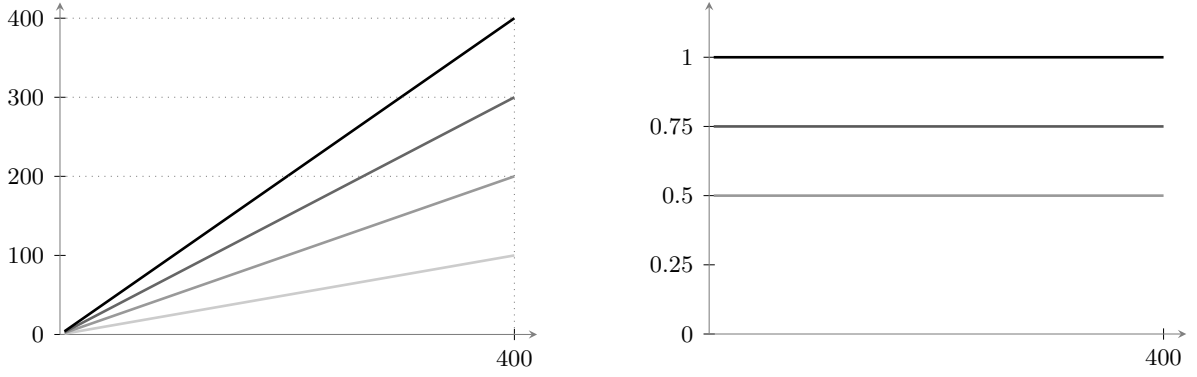

\section{Omitted Proofs}

This section contains all proofs omitted from the main text.

\subsection{Proof of \texorpdfstring{\Cref{thm:bk vcg:mhr finite market}}{Theorem 3.2}}
\label{apx:mhr finite market proof}

In this section, we prove \Cref{thm:bk vcg:mhr finite market}. 

\thmMHRLUB*

With a slight abuse of notation, we let $\RevVCG_{n:n+\addBuyers}(\optquant)$ denote the revenue of the {\VCGAuction} in the $n$-unit, $(n+\addBuyers)$-buyer market when buyer values are drawn from the $\frac{1}{\optquant}$-truncated exponential distribution.
Throughout \Cref{apx:mhr finite market proof}, we use $\RevVCG_{n:n+\addBuyers}(\optquant)$ and  $\RevVCG_{n:n+\addBuyers}(\reserve)$ interchangeably, where $\optquant$ is between $[1/e, 1]$ and $\reserve$ is between $[1,e]$.

We first establish a sequence of technical lemmas needed for our final analysis.

\begin{lemma}\label{lem:VCG revenue derivative}
Define a random variable $X\sim\BetaDist{n+1}{\addBuyers}$ with density
$p\left(\variablex\right)\triangleq \frac{1}{\BetaFun{n+1}{\addBuyers}}\cdot \variablex^{n}\cdot\left(1-\variablex\right)^{\addBuyers-1}$.
Let $
\rprob \triangleq \prob{\left(X\ge \optquant\right)}
= \int_{\optquant}^{1} p\left(\variablex\right)\,\dd \variablex$
and
$\lnexpect \triangleq \expect{\left(\left(-\ln X\right)\cdot \IF_{\left\{X\ge \optquant\right\}}\right)}
= \int_{\optquant}^{1} \left(-\ln \variablex\right)\cdot p\left(\variablex\right)\,\dd \variablex$.
Then,
\[
\frac{\dd \ln\left(\RevVCG_{n:n+\addBuyers}\left(\optquant\right)\right)}{\dd \optquant}
=
-\frac{1}{\optquant}
+
\frac{1}{\optquant}\cdot
\frac{\lnexpect}{
\left(\ln \optquant\right)^2\cdot
\left(1-\rprob-\frac{\lnexpect}{\ln \optquant}\right)
}.
\]
\end{lemma}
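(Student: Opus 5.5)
The plan is to compute $\RevVCG_{n:n+\addBuyers}(\optquant)$ in closed form as an integral against the $\BetaDist{n+1}{\addBuyers}$ density $p$, and then differentiate the logarithm directly. Throughout, $\reserve=1/\optquant$ and $\optquant\in[1/e,1]$.

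\textbf{Step 1: the revenue curve.} For $\val\le\reserve$ the survival function is $1-\distExpReserve(\val)=\reserve^{-\val/\reserve}$, and there is an atom of mass $1/\reserve=\optquant$ at $\reserve$. Solving $\reserve^{-\val/\reserve}=\quant$ gives $\val=\ln\quant/(\optquant\ln\optquant)$. Hence the revenue curve is
\[
\revcurve(\quant)=\quant/\optquant \text{ for } \quant\in[0,\optquant],\qquad \revcurve(\quant)=\frac{\quant\ln\quant}{\optquant\ln\optquant}\text{ for } \quant\in[\optquant,1].
\]

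\textbf{Step 2: the revenue as a Beta integral.} By \Cref{lem:prelim:revenue expression in uniform order statistic},
\[
\RevVCG_{n:n+\addBuyers}=(n+\addBuyers)\int_0^1\revcurve(\quant)\,\osdensity_{n:n+\addBuyers-1}(\quant)\,\dd\quant .
\]
Here $\osdensity_{n:n+\addBuyers-1}$ is the $\BetaDist{n}{\addBuyers}$ density. The key observation is that every piece of $\revcurve$ carries a factor $\quant$, and
\[
(n+\addBuyers)\,\quant\,\osdensity_{n:n+\addBuyers-1}(\quant)=n\,p(\quant).
\]
This holds because $B(n+1,\addBuyers)/B(n,\addBuyers)=n/(n+\addBuyers)$. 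Substituting, and using $\int_{\optquant}^1\ln\quant\,p(\quant)\,\dd\quant=-\lnexpect$, gives
\[
\RevVCG_{n:n+\addBuyers}(\optquant)=\frac{n}{\optquant}\Bigl(1-\rprob-\frac{\lnexpect}{\ln\optquant}\Bigr).
\]
The bracket is positive, since the revenue is positive.

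\textbf{Step 3: differentiate the logarithm.} The factor $n/\optquant$ contributes $-1/\optquant$. For the bracket, use
\[
\drprob=-p(\optquant),\qquad \dlnexpect=\ln\optquant\cdot p(\optquant).
\]
Then
\[
\frac{\dd}{\dd\optquant}\Bigl(-\frac{\lnexpect}{\ln\optquant}\Bigr)=-p(\optquant)+\frac{\lnexpect}{\optquant(\ln\optquant)^2}.
\]
The $-p(\optquant)$ term cancels exactly against $-\drprob=+p(\optquant)$. Only $\lnexpect/(\optquant(\ln\optquant)^2)$ survives, and dividing it by the bracket gives the claimed formula.

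\textbf{Edge case and main obstacle.} The endpoint $\optquant=1$, where $\ln\optquant=0$, should be treated as a limit; the statement is meant for $\optquant\in[1/e,1)$. The only real care needed is the bookkeeping in Step 2: the $\quant$-factor absorption that converts $\BetaDist{n}{\addBuyers}$ into $\BetaDist{n+1}{\addBuyers}$, and the sign conventions of $\ln\optquant<0$. The derivative in Step 3 is then routine, thanks to the cancellation of the $p(\optquant)$ terms.
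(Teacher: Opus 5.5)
Your proposal is correct and follows essentially the same route as the paper. It writes the revenue via the truncated-exponential revenue curve as a $\BetaDist{n+1}{\addBuyers}$ integral, giving $\RevVCG_{n:n+\addBuyers}(\optquant)\propto \frac{1}{\optquant}\bigl(1-\rprob-\lnexpect/\ln\optquant\bigr)$, then differentiates the logarithm by Leibniz's rule, where the $p(\optquant)$ terms cancel. Your identification of the constant prefactor as exactly $n$ (the paper leaves it as $\coefA\cdot\BetaFun{n+1}{\addBuyers}$) is a harmless simplification that does not affect the log-derivative.
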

\begin{proof}[Proof of \Cref{lem:VCG revenue derivative}]

Define the quantity $\coefA \triangleq (n+\addBuyers)\cdot(n+\addBuyers-1)\cdot\binom{n+\addBuyers-2}{n-1}$, which does not depend on $\optquant$.
The expression for $\RevVCG_{n:n+\addBuyers}\left(\optquant\right)$ is as follows (recall $\osdensity_{a:b}$ is the PDF of $a$-th smallest order statistics of $b$ i.i.d.\ samples from uniform distribution $U[0,1]$):
\begin{align*}
    \RevVCG_{n:n+\addBuyers}(\optquant)={}&
    \left(n+\addBuyers\right)\cdot \displaystyle\int_0^1 \revcurve_{\optreserve}\left(\quant\right)
    \cdot \osdensity_{n:n+\addBuyers - 1}\left(\quant\right)\dd  \quant
    \\={}&(n+\addBuyers)\cdot\left[\displaystyle\int_0^{\optquant} \optreserve\cdot\quant\cdot\osdensity_{n:n+\addBuyers - 1}\dd  \quant
    +\displaystyle\int_{\optquant}^1 \frac{\optreserve}{\ln \optreserve}\cdot\quant\cdot\ln
    \frac{1}{\quant}\cdot\osdensity_{n:n+\addBuyers - 1}\dd  \quant\right]
    \\={}&  
     \frac{\coefA}{\optquant}\cdot
    \left(\int_0^{\optquant} (1-\quant)^{\addBuyers-1}\cdot \quant^{n}\dd \quant -
   \frac{1}{\ln \optquant}\cdot\displaystyle\int_{\optquant}^1 -\ln \quant\cdot(1-\quant)^{\addBuyers-1}\cdot \quant^{n}\dd \quant\right)
\end{align*}
By the definition of the Beta function, we know that $
\BetaFun{n+1}{\addBuyers}=\int_0^1 \quant^{n}\cdot\left(1-\quant\right)^{\addBuyers-1}\dd \quant$. 
Then we have:
\begin{align*}
    &\int_0^{\optquant} (1-\quant)^{\addBuyers-1}\cdot \quant^{n}\dd \quant
=\BetaFun{n+1}{\addBuyers}\cdot
(1-\rprob)
\\&\int_{\optquant}^1 -\ln \quant\cdot(1-\quant)^{\addBuyers-1}\cdot \quant^{n}\dd \quant
=\BetaFun{n+1}{\addBuyers}\cdot
\lnexpect
\end{align*}
These probabilistic representations allow us to rewrite the key terms in a compact form:
\begin{align*}
     \RevVCG_{n:n+\addBuyers}(\optquant)=\frac{\coefA}{\optquant}\cdot\BetaFun{n+1}{\addBuyers}\cdot\left(1-\rprob-\frac{\lnexpect}{\ln \optquant}\right)
\end{align*}
We will analyze the monotonicity of $ \RevVCG_{n:n+\addBuyers}(\optquant)$ by studying the sign of its logarithmic derivative.
Since $ \RevVCG_{n:n+\addBuyers}(\optquant) > 0$ for all $\optquant$, it suffices to study the sign of
$\dd \ln( \RevVCG_{n:n+\addBuyers}(\optquant))/{\dd \optquant} = {\RevVCG_{n:n+\addBuyers}(\optquant)^\prime}/{ \RevVCG_{n:n+\addBuyers}(\optquant)}$, 
where
\begin{align}\label{eq:multi-unit bk:mhr:balance market:ln VCG}
    \ln( \RevVCG_{n:n+\addBuyers}(\optquant))=\ln \frac{\coefA\cdot \BetaFun{n+1}{\addBuyers}}{\optquant} + \ln \left(1-\rprob-\frac{\lnexpect}{\ln \optquant}\right)
\end{align}
Since $\coefA\cdot \BetaFun{n+1}{\addBuyers}$ is independent of $\optquant$, in \Cref{eq:multi-unit bk:mhr:balance market:ln VCG}, the derivative of the first term with respect to $\optquant$ equals $-1/\optquant$.

Let $
\VCGrpart\triangleq1-\rprob-{\lnexpect}/{\ln \optquant}$ denote the argument of the logarithm in the second term of \Cref{eq:multi-unit bk:mhr:balance market:ln VCG}. 
Differentiating $\VCGrpart$ with respect to $\optquant$, we obtain
\[
\dVCGrpart=-\drprob+\frac{\dlnexpect}{\abs{\ln \optquant}}+\frac{\lnexpect}{\optquant\cdot (\ln \optquant)^2}=\frac{\lnexpect}{\optquant\cdot (\ln \optquant)^2}
\]
The last equality follows from Leibniz' rule, which yields $\drprob=-p(\optquant)$ and $\dlnexpect=-(-\ln \optquant)\,p(\optquant)$.
Therefore, 
\begin{align*}
    \frac{\dd \ln( \RevVCG_{n:n+\addBuyers}(\optquant))}{\dd \optquant}
    ={}-\frac{1}{\optquant}+\frac{\dVCGrpart}{\VCGrpart}
    ={}-\frac{1}{\optquant}+\frac{1}{\optquant}\cdot\frac{\lnexpect}{\left(\ln \optquant\right)^2\cdot\left(1-\rprob-\frac{\lnexpect}{\ln \optquant}\right)}
\end{align*}
This complete the proof of \Cref{lem:VCG revenue derivative}
\end{proof}

\begin{lemma}
\label{lem:VCG revenue small reserve}
    For any $n \geq 594$ and $\reserve \in [1, 5/4]$, let 
    $\addBuyers \triangleq \lceil \left(e^{1/e} - 1\right) n + 1.05 \ln n \rceil$.
    Then,
    $\RevVCG_{n:n+\addBuyers}(\reserve) \geq \RevVCG_{n:n+\addBuyers}(1) = n$.
\end{lemma}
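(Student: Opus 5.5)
The plan is a sign analysis of the logarithmic derivative from \Cref{lem:VCG revenue derivative}, written in the quantile parametrization $\optquant = 1/\reserve \in [4/5,1]$. Write $L \triangleq -\ln \optquant \in [0,\ln(5/4)]$.

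\textbf{Step 1: the endpoint.} At $\reserve=1$ the distribution is a point mass at $1$. The {\VCGAuction} with $n+\addBuyers>n$ buyers then sells all $n$ units at price $1$, so $\RevVCG_{n:n+\addBuyers}(1)=n$. It therefore suffices to show that $\RevVCG_{n:n+\addBuyers}(\optquant)$ is non-increasing in $\optquant$ on $[4/5,1]$, i.e.\ that the derivative in \Cref{lem:VCG revenue derivative} is $\le 0$ there. Substituting $\ln\optquant=-L$, the sign condition becomes $\lnexpect \le L^2\bigl(1-\rprob\bigr)+L\,\lnexpect$. Equivalently, we must show
\begin{align*}
\lnexpect\,(1-L) \le L^2\,\bigl(1-\rprob\bigr).
\end{align*}

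\textbf{Step 2: reduce to a tail bound.} On the event $\{X\ge\optquant\}$ we have $-\ln X\le L$, so $\lnexpect\le L\,\rprob$. Since $L\le \ln(5/4)<1/2$, it is enough to show $\rprob\le L/2$; then $\rprob(1-L)\le L(1-\rprob)$ follows. For this tail I would use the explicit Beta density rather than a crude CLT. The mode of $\BetaDist{n+1}{\addBuyers}$ is $n/(n+\addBuyers-1)\approx 1/e^{1/e}\approx 0.692<4/5\le\optquant$, so $x^n(1-x)^{\addBuyers-1}$ is decreasing on $[\optquant,1]$. Bounding the integral over $[\optquant,1]$ by the value at $x=\optquant$ times the interval length $1-\optquant$ gives
\begin{align*}
\rprob \le (1-\optquant)\cdot \addBuyers\binom{n+\addBuyers}{n}\optquant^{\,n}(1-\optquant)^{\addBuyers-1}.
\end{align*}
Since $1-\optquant\le L$, it remains to show that $\addBuyers\binom{n+\addBuyers}{n}\optquant^{n}(1-\optquant)^{\addBuyers-1}\le 1/2$ for every $\optquant\in[4/5,1]$.

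\textbf{Step 3: the binomial estimate (main obstacle).} This last quantity is $\addBuyers/(1-\optquant)$ times a binomial point mass $\Pr[\mathrm{Bin}(n+\addBuyers,\optquant)=n]$. By Stirling and the standard entropy bound, that point mass is at most $\exp\bigl(-(n+\addBuyers)\,\mathrm{KL}\bigl(\tfrac{n}{n+\addBuyers}\,\big\|\,\optquant\bigr)\bigr)$. With $n/(n+\addBuyers)\approx0.692$ and $\optquant\ge0.8$, the KL divergence is bounded below by an absolute constant (roughly $0.03$). Hence the bound decays like $n\,e^{-cn}$ times a factor $(1-\optquant)^{\addBuyers-1}$, which only helps as $\optquant\to1$.

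The delicate part is making this explicit and uniform. First, the ratio $n/(n+\addBuyers)$ involves the ceiling and the $1.05\ln n$ term, so it must be bounded carefully. Second, the factor $\addBuyers/(1-\optquant)$ must be handled near $\optquant=1$. There I would not drop the factor $(1-\optquant)^{\addBuyers-1}$; I would check monotonicity in $\optquant$ of $(1-\optquant)^{\addBuyers-2}\optquant^{n}$, which is decreasing on $[4/5,1]$, so the worst case is $\optquant=4/5$. One then verifies numerically that the resulting explicit bound is $\le1/2$ at $n=594$ and is decreasing in $n$ thereafter. If the simple KL constant turns out too tight at $n=594$, the fallback is to keep the exact Stirling factors, including the polynomial prefactor, instead of the crude entropy bound.
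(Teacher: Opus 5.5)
Your argument is correct, and it takes a cleaner route than the paper's.

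\textbf{The paper's route.} It starts from the same sign condition $h(q^*)(1+\ln q^*)<(\ln q^*)^2\,(1-p_+(q^*))$ given by the log-derivative lemma. It then bounds the ratio $h(q^*)/(\ln q^*)^2$ separately in three regimes:
\begin{itemize}
\item the endpoint $q^*\to 1$, by L'H\^opital;
\item $q^*\in[0.99,1]$, by a Taylor expansion of $-\ln(1-y)$ combined with a Stirling bound on $1/B(n+1,t_n)$;
\item $q^*\in[0.8,0.99]$, by a Hoeffding-type concentration bound for the $\mathrm{Beta}(n+1,t_n)$ variable around its mean $\approx 0.69$.
\end{itemize}

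\textbf{Your route.} Your chain
\[
h(q^*)\le L\,p_+(q^*)\le L\,(1-q^*)\,p(q^*)\le L^2\,p(4/5)
\]
uses only three facts: $-\ln x\le L$ on $[q^*,1]$, the Beta density is decreasing above its mode $n/(n+t_n-1)<4/5$, and $1-q^*\le -\ln q^*$. It gives $h(q^*)/(\ln q^*)^2\le p(4/5)$ uniformly on $[4/5,1]$. This removes the $0/0$ at $q^*=1$ with no case split. It also reduces the whole lemma to bounding the density at a single point, with no concentration inequality.

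\textbf{Endpoint.} The same bound gives $|h(q^*)/\ln q^*|\le p_+(q^*)\to 0$. So the revenue is continuous at $q^*=1$ with limit $n$. You should state this, since it is what lets you pass from the derivative on $[4/5,1)$ to the closed interval.

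\textbf{The final estimate is not delicate.} Since $t_n\ge(e^{1/e}-1)n$, we have $n/(n+t_n)\le e^{-1/e}$. The entropy bound then gives
\[
p(4/5)\le 5t_n\exp\bigl(-(n+t_n)\,\mathrm{KL}(e^{-1/e}\,\|\,4/5)\bigr)\le 5n\,e^{-0.032n}.
\]
This is already about $10^{-5}$ at $n=594$ and decreases in $n$; the exact value at $n=594$ is of order $10^{-12}$. So the ceiling and the $1.05\ln n$ term need no special care.

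Your target of $1/2$ is also stronger than needed. The reduced condition is just $p_+(q^*)\le L$, so $p(4/5)\le 1$ suffices. The monotonicity you want in Step~3 is just that of $x^n(1-x)^{t_n-1}$, already used in Step~2; the exponent $t_n-2$ there is a slip.

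The paper's three-way split gains little here beyond consistency with the Hoeffding-based estimates it reuses in the neighboring lemmas.
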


\begin{proof}[Proof of \Cref{lem:VCG revenue small reserve}]
In this proof, we will always consider on $\RevVCG_{n:n+t_n}(\optquant)$, where $\optquant$ is between $[0.8, 1]$. 

Let $\ExpectX\triangleq \expect{X}=(n+1)/(n+\addBuyers+1)$, which is approximately $0.69$ when $n\ge 594$. By Hoeffding’s inequality, for $\threshold=0.1$ we have
\begin{align*}
\prob{\abs{X-\ExpectX}\ge \threshold}\le 2\cdot e^{-(n+\addBuyers+1)\cdot \threshold^2}= 2\cdot  e^{-0.01\cdot (n+\addBuyers+1)}
\end{align*}
when $\optquant\in[0.8, 1]$, we know that $\optquant\ge\ExpectX+\threshold$.

We will prove that ${\dd \ln( \RevVCG_{n:n+\addBuyers}(\optquant))}/{\dd \optquant}<0$. According to \Cref{lem:VCG revenue derivative}, 
it is equivalent to
\begin{align}\label{eq:multi-unit bk:MHR:balance market:increasing condition}
    \frac{\lnexpect}{(\ln\optquant)^2}\cdot(1+\ln\optquant)<1-\rprob.
\end{align}

\xhdr{Case (i)}
When $\optquant\to 1^-$, by l'Hôpital's rule, 
\begin{align*}
\lim_{\optquant\to 1^-}\frac{\lnexpect}{(\ln \optquant)^2}
= \lim_{\optquant\to 1^-}
\frac{\dfrac{\dd}{\dd\optquant}\lnexpect}{\dfrac{\dd}{\dd\optquant}(\ln \optquant)^2}
= \lim_{\optquant\to 1^-}
\frac{(\ln \optquant)\cdot p(\optquant)}{2\ln \optquant / \optquant}
= \lim_{\optquant\to 1^-}
\frac{\optquant\cdot p(\optquant)}{2}
=0
\end{align*}
The last equality holds since $p(\optquant)=0$ when $\optquant=1$.
Therefore, condition~\ref{eq:multi-unit bk:MHR:balance market:increasing condition} holds when $\optquant=1$.

\xhdr{Case (ii)}
When $\optquant\in[0.99,1]$, 
 we apply the substitutions $\optquant= 1-\variableu$ and $\variablex=1-\variablet$. Accordingly, $\lnexpect$ and $p(\optquant)$ are rewritten as $h(1-\variableu)$ and $p(1-\variablet)$, respectively. We note that $u\in[0,0.01]$. Then
\begin{align*}
h\left(1-\variableu\right)=\int_{0}^{\variableu}-\ln\left(1-\variablet\right)\cdot\pdf{1-\variablet}\dd \variablet
\,\,\,\text{and}\,\,\,
\pdf{1-\variablet}=\frac{1}{\BetaFun{n+1}{\addBuyers}}\cdot\left(1-\variablet\right)^{n}
\cdot\variablet^{\addBuyers-1}
\end{align*}
Since $0\le \variablet\le \variableu\le0.01$, using the Taylor expansion of $-\ln(1-\variablet)$, we have
\begin{align*}
-\ln\left(1-\variablet\right)=
\variablet+\frac{\variablet^{2}}{2}+\sum_{\indexj=3}^{\infty}\frac{\variablet^{\indexj}}{\indexj}
\le \variablet+\frac{\variablet^{2}}{2}+\frac{1}{3}\sum_{\indexj=3}^{\infty}\variablet^{\indexj}
=\variablet+\frac{\variablet^{2}}{2}+\frac{\variablet^{3}}{3\left(1-\variablet\right)}
\le \variablet+\frac{\variablet^{2}}{2}+\frac{\variablet^{3}}{3\left(1-\variableu\right)}
\end{align*}
Moreover, since $\left(1-\variablet\right)^{n}\le1$, it follows that
\begin{align*}
h\left(1-\variableu\right)
&=\frac{1}{\BetaFun{n+1}{\addBuyers}}\cdot\int_{0}^{\variableu}-\ln\left(1-\variablet\right)\cdot\left(1-\variablet\right)^{n}\cdot\variablet^{\addBuyers-1}\dd \variablet
\\&\le
\frac{1}{\BetaFun{n+1}{\addBuyers}}\cdot\int_{0}^{\variableu}\left(\variablet+\frac{\variablet^{2}}{2}+\frac{\variablet^{3}}{3\cdot\left(1-\variableu\right)}\right)\cdot\variablet^{\addBuyers-1}\dd \variablet\\
&=\frac{1}{\BetaFun{n+1}{\addBuyers}}\cdot\left(\frac{\variableu^{\addBuyers+1}}{\addBuyers+1}+\frac{\variableu^{\addBuyers+2}}{2\cdot\left(\addBuyers+2\right)}+\frac{\variableu^{\addBuyers+3}}{3\cdot\left(1-\variableu\right)\cdot\left(\addBuyers+3\right)}\right)
\end{align*}
Therefore, we conclude that for all $\optquant\in\left[0.99,1\right]$,
\begin{align*}
\frac{h\left(\optquant\right)}{\left(\ln \optquant\right)^{2}}
={}\frac{h\left(1-\variableu\right)}{\left(\ln\left(1-\variableu\right)\right)^{2}}
\le{}&\frac{1}{\BetaFun{n+1}{\addBuyers}}\cdot\left(\frac{\variableu^{\addBuyers-1}}{\addBuyers+1}+\frac{\variableu^{\addBuyers}}{2\cdot\left(\addBuyers+2\right)}+\frac{\variableu^{\addBuyers+1}}{3\cdot0.99\cdot\left(\addBuyers+3\right)}\right)
\\\le{}&\frac{(0.01)^{\addBuyers-1}}{\BetaFun{n+1}{\addBuyers}}\cdot
\left(\frac{1}{\addBuyers+1}+\frac{0.01}{2\cdot\left(\addBuyers+2\right)}+\frac{(0.01)^2}{3\cdot0.99\cdot\left(\addBuyers+3\right)}\right),
\end{align*}
the first inequality holds since  the inequality $-\ln\left(1-u\right)\ge u$ implies
$\left(\ln\left(1-\variableu\right)\right)^{2}\ge \variableu^{2}$, and $\frac{1}{1-\variableu}\le\frac{1}{0.99}$ for all $u\in\left[0,0.01\right]$.
The second inequality holds since
for fixed $n$, $\variableu^{\addBuyers-1}$, $\variableu^{\addBuyers}$ and $\variableu^{\addBuyers+1}$ are all monotonically increasing in $\variableu$, and hence attain their maximum at $\variableu=0.01$. 
Using Stirling's approximation for the Beta function, when $n\ge594$, let $\kratio\triangleq0.444\le \frac{\addBuyers}{n}$, we have 
\begin{align*}
\frac{1}{\BetaFun{n+1}{\addBuyers}}\le 0.32\cdot\sqrt{n}\cdot e^{\left(\left(1+\kratio\right)
\cdot\ln\left(1+\kratio\right)-\kratio\cdot\ln\kratio\right)\cdot n}
\le 0.32\cdot\sqrt{n}\cdot e^n
\end{align*}
Moreover, $(0.01)^{\addBuyers-1}\le(0.01)^{\kratio
\cdot n-1}=100\cdot e^{-\kratio\cdot n\cdot\ln100}\le 100\cdot e^{-2\cdot n}$.
Combining the two bounds, we obtain
\begin{align*}
\frac{h\left(\optquant\right)}{\left(\ln \optquant\right)^{2}}&\le 32
\cdot\sqrt{n}\cdot e^{-n}\cdot \left(\frac{1}{\addBuyers+1}+\frac{0.01}{2\cdot\left(\addBuyers+2\right)}+\frac{(0.01)^2}{3\cdot0.99\cdot\left(\addBuyers+3\right)}\right)
\\&\le 32
\cdot\sqrt{594}\cdot e^{-594}\cdot \left(\frac{1}{271+1}+\frac{0.01}{2\cdot\left(271+2\right)}+\frac{(0.01)^2}{3\cdot0.99\cdot\left(271+3\right)}\right)\approx 3.08\cdot10^{-258},
\end{align*}
the second inequality holds since $\addBuyers \triangleq \lceil \left(e^{1/e} - 1\right) n + 1.05 \ln n \rceil$ grows linearly in $n$, the fuctions ${1}/(\addBuyers+1)+{0.01}/(2\cdot\left(\addBuyers+2\right))+{(0.01)^2}/(3\cdot0.99\cdot\left(\addBuyers+3\right))$ and $n\cdot e^{-n}$ are both decreasing in $n$, and hence attain their maximum at $n=594$ and $\addBuyers=271$.

Since $1-\rprob\ge 0.5$ and $1+\ln\optquant\le 1$ when $\optquant\in[0.8,1]$, we complete the proof of condition~\ref{eq:multi-unit bk:MHR:balance market:increasing condition}, which is $({\lnexpect}/{(\ln\optquant)^2})\cdot(1+\ln\optquant)<1-\rprob$.

\xhdr{Case (iii)} When $\optquant\in[0.8,0.99]$, it holds that $\ln \optquant\in[\ln0.99,\ln0.8]$, then
\begin{align*}
    \lnexpect={}&\int_{\optquant}^1 -\ln \variablex\cdot p(\variablex)\dd \variablex
    \le\int_{0.8}^{1} -\ln \variablex\cdot p(\variablex)\dd \variablex
    \\\le{}&-\ln0.8\cdot \prob{0.8\le X\le 1}
    \le-\ln 0.8\cdot2\cdot e^{-0.01\cdot(n+\addBuyers+1)}
\end{align*}
Additionally, we know that $(\ln\optquant)^2>(\ln0.99)^2$, $1+\ln\optquant<1$ and $1-\rprob=\prob{X\le \optquant}\ge 1-2\cdot  e^{-0.01\cdot (n+\addBuyers+1)}$.
Therefore, condition~\ref{eq:multi-unit bk:MHR:balance market:increasing condition} holds when $n\ge594$:
\begin{align*}
    \frac{\lnexpect}{(\ln\optquant)^2}\cdot(1+\ln\optquant)
    \le\frac{-\ln 0.8\cdot2\cdot e^{-0.01\cdot(n+\addBuyers+1)}}{(\ln0.99)^2}\cdot1
    <1-2\cdot  e^{-0.01\cdot (n+\addBuyers+1)}
    \le 1-\rprob
\end{align*}

Finally, combining the conclusions from the three cases establishes that $ \RevVCG_{n:n+\addBuyers}(\optquant)$ is monotonically decreasing in 
$\optquant$ when $\optquant\in[0.8,1]$. Moreover, since $ \RevVCG_{n:n+\addBuyers}(1)=n$, we conclude the proof of \Cref{lem:VCG revenue small reserve}.
\end{proof} 

\begin{lemma}
\label{lem:VCG revenue middle reserve}
    For any $n \geq 594$ and $\reserve \in [5/4, 2]$, let 
    $\addBuyers \triangleq \lceil \left(e^{1/e} - 1\right) n + 1.05 \ln n \rceil$.
    Then,
    $\RevVCG_{n:n+\addBuyers}(\reserve) \geq n$.
\end{lemma}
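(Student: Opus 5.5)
The plan is to avoid any derivative analysis on this middle range and instead show that the additive slack $\RevVCG_{n:n+\addBuyers}(\reserve)-n$ is linear in $n$. A crude concentration argument should then suffice.

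By \Cref{lem:prelim:revenue expression in uniform order statistic},
\[
\RevVCG_{n:n+\addBuyers}(\reserve)=(n+\addBuyers)\cdot\expect{\revcurve_{\reserve}(Q)},
\]
where $Q\sim\osdensity_{n:n+\addBuyers-1}$, i.e.\ $Q\sim\BetaDist{n}{\addBuyers}$. The revenue curve of the $\reserve$-truncated exponential distribution is
\[
\revcurve_{\reserve}(\quant)=\reserve\quant \ \text{ for } \quant\le 1/\reserve,
\qquad
\revcurve_{\reserve}(\quant)=\tfrac{\reserve}{\ln\reserve}\,\quant\ln(1/\quant) \ \text{ for } \quant\ge 1/\reserve .
\]
Since $\revcurve_{\reserve}\ge 0$, I will lower bound
\[
\expect{\revcurve_{\reserve}(Q)}\ \ge\ \prob{\abs{Q-\ExpectX}\le\threshold}\cdot\min_{\abs{\quant-\ExpectX}\le\threshold}\revcurve_{\reserve}(\quant),
\]
with $\ExpectX=n/(n+\addBuyers)$ and a fixed $\threshold$, say $\threshold=0.05$. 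As in the proof of \Cref{lem:VCG revenue small reserve}, a Hoeffding-type bound gives $\prob{\abs{Q-\ExpectX}>\threshold}\le 2e^{-c(n+\addBuyers)}$, which is astronomically small for $n\ge594$.

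Next, I locate $\ExpectX$. With $\addBuyers=\lceil(e^{1/e}-1)n+1.05\ln n\rceil$ and $n\ge 594$, the ratio $\addBuyers/n$ lies in a narrow window slightly above $0.4447$ (using $1.05\ln n/n\le 0.012$). Hence $\ExpectX\in[0.68,0.70]$, and the window $\abs{\quant-\ExpectX}\le\threshold$ sits inside $[0.63,0.75]$.

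I then bound $\min_{\quant\in[0.63,0.75]}\revcurve_{\reserve}(\quant)$ uniformly over $\reserve\in[5/4,2]$, using concavity of $\revcurve_{\reserve}$ so the minimum is at an endpoint $\quant\in\{0.63,0.75\}$.
\begin{itemize}
\item On the linear piece the bound is $\reserve\quant$.
\item On the log piece, $\reserve/\ln\reserve$ is decreasing on $[5/4,2]\subset[1,e]$, so $\reserve/\ln\reserve\ge 2/\ln 2\approx2.885$.
\end{itemize}
A short check, splitting according to whether $1/\reserve$ lies left of, inside, or right of the window, should give a uniform bound $\min\revcurve_{\reserve}\ge 0.72$. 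The weakest case is near $\reserve=2$, $\quant=0.75$, where the value is about $0.72$. Multiplying, the revenue is at least
\[
(n+\addBuyers)\cdot 0.72\cdot\bigl(1-2e^{-c(n+\addBuyers)}\bigr)\ \ge\ 1.444\,n\cdot 0.71\ >\ n .
\]

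The main obstacle is the endpoint check near $\reserve=2$ and large $\quant$, where the margin is smallest (roughly $1.04n$ versus $n$). I must make sure the window width $\threshold$ and the concentration constant are chosen so that this margin survives. If $\threshold=0.05$ turns out too loose, I would shrink it to $0.03$; the Hoeffding tail is still negligible at $n\ge 594$. I would also exploit that $\revcurve_{\reserve}(0.75)$ at $\reserve=2$ equals $\tfrac{2}{\ln2}\cdot0.75\ln\tfrac43\approx0.623$. If that value actually breaks the bound, I would not use the endpoint minimum. Instead I would use the first-order expansion
\[
\expect{\revcurve_{\reserve}(Q)}\ge \revcurve_{\reserve}(\ExpectX)-O(\mathrm{Var}(Q))-O\bigl(e^{-cn}\bigr),
\]
obtained via a second-order Taylor bound on the concave smooth pieces together with $\mathrm{Var}(Q)=O(1/n)$. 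At $\reserve=2$ this gives $\revcurve_{2}(0.69)\approx0.74$, hence a revenue of about $1.07n$ with slack linear in $n$.
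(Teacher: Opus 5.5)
Your main line has a genuine gap, and shrinking $\epsilon$ does not close it. The claimed bound $\min_{q\in[0.63,0.75]}R_r(q)\ge 0.72$ is false at the binding value $r=2$: $R_2(0.75)=\frac{2}{\ln 2}\cdot 0.75\ln\frac43\approx 0.623$, so your last display becomes about $1.445\cdot 0.623\,n\approx 0.90\,n<n$. With $\epsilon=0.03$ it still fails for large $n$, even if you ignore the tail: $\mu\to e^{-1/e}\approx 0.692$ and $R_2(0.722)\approx 0.678<\mu$. At $n=594$, keeping $R_2(\mu+\epsilon)\ge\mu$ forces $\epsilon\lesssim 0.03$. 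That is only about two standard deviations of $Q$ ($\sqrt{\mathrm{Var}(Q)}\approx 0.0158$), so the discarded mass is a few percent, not astronomically small; a Hoeffding bound $2e^{-2(n+t_n-1)\epsilon^2}$ is about $0.4$ at $\epsilon=0.03$. Even with the window optimized, any ``probability of the window times the minimum on it'' bound stays around $0.97n$ at $n=594$, $r=2$.

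The obstruction is structural. Near $\mu$ the slope of $R_2$ is about $-1.8$, while the slack $R_2(\mu)-\mu$ is only $0.04$--$0.06$. Charging the whole window at its worst endpoint and zeroing the tail therefore costs more than the slack. This is exactly the paper's argument (window $\epsilon=\sqrt{\ln n/(2(n-1))}\approx 0.073$, same endpoint case split), and it breaks there for the same reason. In the paper's Case (iv), which contains $r=2$, the step rewriting $(n+t_n)\frac{2}{\ln 2}q_R\ln\frac1{q_R}(1-\frac2n)$ as $(n+(n+t_n)\epsilon)\frac{2}{\ln 2}(1-\frac2n)$ drops the factor $\ln(1/q_R)\approx 0.27$. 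With that factor restored, the bound is roughly $0.87n$ at $n=594$.

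So your fallback is not optional; it is the proof, and it needs explicit constants because the claim is for every $n\ge 594$. The simplest version uniform in $r$ is first order. For $r\in[5/4,2]$, the slopes of $R_r$ lie in $[-\frac{r}{\ln r},r]$, so $R_r$ is $\frac{r}{\ln r}$-Lipschitz on $[0,1]$. With $\sigma^2=\mathrm{Var}(Q)=\frac{n t_n}{(n+t_n)^2(n+t_n+1)}$,
\[
\mathbb{E}[R_r(Q)]\;\ge\;R_r(\mu)-\tfrac{r}{\ln r}\,\mathbb{E}|Q-\mu|\;\ge\;R_r(\mu)-\tfrac{r}{\ln r}\,\sigma .
\]
For $r\ge 1/\mu$, the right side is $\frac{r}{\ln r}\bigl(\mu\ln\frac1\mu-\sigma\bigr)$, which is decreasing in $r$. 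At $r=2$, $n=594$ ($\mu=594/865$, $\sigma\approx 0.0158$) it exceeds $\mu$ by about $0.012$. For $r\le 1/\mu$, it equals $r\mu-\frac{r}{\ln r}\sigma\ge\mu+0.08$. The slack only grows with $n$.

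If you keep the second-order form, note that an $O(\mathrm{Var}(Q))$ error is not uniform in $r$. When the kink $1/r$ sits near $\mu$ (around $r\approx1.45$), the loss is of order $\sigma$; that is harmless only because $R_r(\mu)\approx 1$ there. For $r$ near $2$, the global minorant $R_2(q)\ge R_2(\mu)+R_2'(\mu)(q-\mu)-6(q-\mu)^2$ holds on all of $[0,1]$, including the linear piece, and gives a loss of only about $0.0015$. Alternatively, the identity $\mathbb{E}[Q\ln(1/Q)]=\mu\,(H_{n+t_n}-H_n)$, which the paper uses for $r=e$, evaluates the logarithmic piece exactly.
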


\begin{proof}[Proof of \Cref{lem:VCG revenue middle reserve}]

Define 
$
\CriticalQuantile \triangleq \frac{n}{n+\addBuyers},\,  
\threshold \triangleq \sqrt{\frac{\log n}{2\cdot(n-1)}}$, and $ 
\revLB \triangleq \min_{\quant:\shiftCriticalQuantile\le \threshold}
\revcurve_{\optreserve}\left(\quant\right)$.
Then, $\RevVCG_{n:n+\addBuyers}\left(\reserve\right)$ 
can be lower bounded as follows:
\begin{align*}
    &\RevVCG_{n:n+\addBuyers}\left(\reserve\right)
    = \left(n+\addBuyers\right)\cdot \displaystyle\int_0^1 \revcurve_{\optreserve}\left(\quant\right)
        \cdot \osdensity_{n:n+\addBuyers - 1}\left(\quant\right)\dd \quant
    =\left(n+\addBuyers\right)\cdot\expect{\revcurve_{\optreserve}\left(\quant\right)}
    \\={}&\left(n+\addBuyers\right)\cdot\left(\prob{\shiftCriticalQuantile\le \threshold}\cdot\expect{\revcurve_{\optreserve}\left(\quant\right)\condition\shiftCriticalQuantile\le \threshold}
    +\prob{\shiftCriticalQuantile> \threshold}\cdot\expect{\revcurve_{\optreserve}\left(\quant\right)\condition\shiftCriticalQuantile> \threshold}\right)
    \\ \ge{} & \left(n+\addBuyers\right)\cdot\left(\prob{\shiftCriticalQuantile\le \threshold}\cdot\revLB+
    \prob{\shiftCriticalQuantile> \threshold}\cdot 0\right)
    =\left(n+\addBuyers\right)\cdot\prob{\shiftCriticalQuantile\le \threshold}\cdot\revLB,
    \end{align*}
where all the expectations and probabilities are taken over the random quantile $\quant$, assuming that $\quant$ is the $n$-th smallest order statistic of $n+k_n-1$ samples drawn from the uniform distribution $U[0, 1]$.
By Hoeffding’s inequality, the following probability bound holds:
\begin{align*}
\prob{\shiftCriticalQuantile\le \threshold}
&\ge 1-2\cdot e^{-2\cdot\left(n+\addBuyers-1\right)\cdot\left(\threshold+
\frac{n}{\left(n+\addBuyers-1\right)\cdot\left(n+\addBuyers\right)}\right)^2}
\ge 1-2\cdot e^{-2\cdot \left(n-1\right)\cdot\threshold^2}
\\&= 1-2\cdot e^{-2\cdot \left(n-1\right)\cdot\frac{\log n}{n-1}}
=1-\frac{2}{n}.
\end{align*}
Therefore, $\left(n+\addBuyers\right)\cdot\left(1-\frac{2}{n}\right)\cdot \revLB$ 
serves as a lower bound for 
$\RevVCG_{n:n+\addBuyers}\left(\buyerdist_{\optreserve}\right)$. We can prove \Cref{lem:VCG revenue middle reserve} by showing that $\addBuyers = \lceil \left(e^{1/e} - 1\right) n + 1.05 \ln n \rceil$ satisfies the following condition:
\begin{align*}
\left(n+\addBuyers\right)\cdot\left(1-\frac{2}{n}\right)\cdot \revLB
\ge n
\end{align*}
Define the left endpoint $\qLeft\triangleq\max(0,\CriticalQuantile-\threshold)$, and the right endpoint $\qRight\triangleq\min(1,\CriticalQuantile+\threshold)$.
The distribution is MHR and therefore regular, implying that the revenue curve is concave. 
Consequently, the revenue curve is single-peaked, and 
the value of $\revLB$ is either $\revcurve_{\optreserve}\left(\qLeft\right)$ or $\revcurve_{\optreserve}\left(\qRight\right)$.

\xhdr{Case (i)} When $\revcurve_{\optreserve}\left(\qLeft\right) \le \revcurve_{\optreserve}\left(\qRight\right)$ and $\optreserve \in [5/4, {1}/{\qLeft}]$. 
By construction, $\revLB = \revcurve_{\optreserve}\left(\qLeft\right) = {\optreserve} \cdot \qLeft$.
Then, the following condition must hold:
\begin{align}\label{eq:multi-unit bk:MHR:balance market:upper bound condition}
    \left(n+\addBuyers\right)\cdot{\optreserve}\cdot\qLeft\cdot\left(1-\frac{2}{n}\right)\ge  n.
\end{align}
When $n \ge 594$ and $\optreserve \in [5/4,1/\qLeft]$, straightforward calculation shows that 
condition~\ref{eq:multi-unit bk:MHR:balance market:upper bound condition} 
is satisfied for $\addBuyers = \lceil \left(e^{1/e} - 1\right) n + 1.05 \ln n \rceil$:
\begin{align*}
\left(n+\addBuyers\right)\cdot{\optreserve}\cdot\qLeft\cdot\left(1-\frac{2}{n}\right)
&\ge \left(n - 2\cdot n \cdot \threshold\right)\cdot\frac{5}{4}\cdot\left(1-\frac{2}{n}\right)
\\&\ge n\cdot\left(1 - 2\cdot\sqrt{\frac{\log 594}{2\cdot593}}\right)\cdot\frac{5}{4}\cdot\left(1-\frac{2}{594}\right)
\ge n,
\end{align*}
the first inequality holds since $
\qLeft={n}/(n+\addBuyers)-\epsilon$ and $\addBuyers\le n$, and the second inquality holds since $n\ge594$ and $\epsilon=\sqrt{\log n/(2\cdot(n-1))}$ and $2/n$ are both decreasing in $n$.

\xhdr{Case (ii)} 
When $\revcurve_{\optreserve}\left(\qLeft\right)\le\revcurve_{\optreserve}\left(\qRight\right)$ and $\optreserve\in[{1}/{\qLeft},2]$, we have
$\revLB=\revcurve_{\optreserve}\left(\qLeft\right)=({\optreserve}/\ln \optreserve)\cdot\qLeft\cdot\ln({{1}/{\qLeft}})
\ge (2/\ln 2)\cdot\qLeft\cdot\ln{({1}/{\qLeft})}$,
where the inequality holds because $\optreserve/\ln\optreserve$ is decreasing in $\optreserve$, and $\max \optreserve=2$.  
Therefore, we need to prove
\begin{align*}
    &\left(n+\addBuyers\right)\cdot \frac{2}{\ln2}\cdot\qLeft\cdot\ln{\frac{1}{\qLeft}}\cdot\left(1-\frac{2}{n}\right)
    =n\cdot\left(1-\left(1+\frac{\addBuyers}{n}\right)\cdot\threshold\right)\cdot \frac{2}{\ln2}\cdot \ln\left(\frac{1}{\qLeft}\right)\cdot\left(1-\frac{2}{n}\right)
    \\\ge{}&n\cdot\left(1-\left(1+\frac{ \left(e^{1/e} - 1\right) \cdot594 + 1.05 \ln594 +1}{594}\right)\cdot\sqrt{\frac{\log 594}{2\cdot593}}\right)\cdot \frac{2}{\ln2}\cdot 1\cdot\left(1-\frac{2}{594}\right) 
    \ge n,
\end{align*}
the first equality holds since $\qLeft={n}/(n+\addBuyers)-\threshold$, and the first inequality holds since $n\ge 594$, and the quantities $\addBuyers/n$, $\threshold=\sqrt{\log n/(2\cdot(n-1))}$, and $2/n$ are all decreasing in $n$; hence the expression is overall monotonically increasing in $n$.

\xhdr{Case (iii)} 
When $\revcurve_{\optreserve}\left(\qLeft\right)>\revcurve_{\optreserve}\left(\qRight\right)$ and $\optreserve\in[5/4,{1}/{\qRight}]$, it holds that
$
\revLB=\revcurve_{\optreserve}\left(\qRight\right)={\optreserve}\cdot\qRight
$. Therefore, it holds that
\begin{align*}
    \left(n+\addBuyers\right)\cdot{\optreserve}\cdot\qRight\cdot\left(1-\frac{2}{n}\right)= \left(n+\left(n+\addBuyers\right)\cdot\threshold\right)\cdot{\optreserve}\cdot\left(1-\frac{2}{n}\right)\ge n\cdot \frac{5}{4}\cdot\left(1-\frac{2}{594}\right)\ge n
\end{align*}
the first equality holds since $\qRight={n}/(n+\addBuyers)+\epsilon$, and the first inequality holds since $n\ge  594$ and $1-2/n$ is increasing in $n$.

\xhdr{Case (iv)} 
When $\revcurve_{\optreserve}\left(\qLeft\right)>\revcurve_{\optreserve}\left(\qRight\right)$ and $\optreserve\in[{1}/{\qRight},2]$, it follows that $    \revLB=\revcurve_{\optreserve}\left(\qRight\right)=({\optreserve}/\ln \optreserve)\cdot\qRight\cdot\ln({{1}/{\qRight}})
\ge (2/\ln 2)\cdot\qRight\cdot\ln{({1}/{\qRight})}$
where the inequality holds because $\optreserve/\ln\left(\optreserve\right)$ is decreasing in $\optreserve$, and $\max \optreserve=2$.
Therefore, it holds that
\begin{align*}
    \left(n+\addBuyers\right)\cdot \frac{2}{\ln 2}\cdot\qRight\cdot\ln{\frac{1}{\qRight}}\cdot\left(1-\frac{2}{n}\right)
    =\left(n+\left(n+\addBuyers\right)\cdot\threshold\right)\cdot{\frac{2}{\ln2}}\cdot\left(1-\frac{2}{n}\right)\ge n\cdot \frac{2}{\ln2}\cdot\left(1-\frac{2}{594}\right)\ge n
\end{align*}
the first equality holds since $\qRight=\frac{n}{n+\addBuyers}+\epsilon$, and the first inequality holds since $n\ge  594$ and $1-2/n$ is increasing in $n$.

Combining the four cases completes the proof of \Cref{lem:VCG revenue middle reserve}.
\end{proof}

\begin{lemma}
\label{lem:VCG revenue high reserve}

    For any $n \geq 594$ and $\reserve \in [2, e/(1+1/n)]$, let 
    $\addBuyers \triangleq \lceil \left(e^{1/e} - 1\right) n + 1.05 \ln n \rceil$.
    Then,
    $\RevVCG_{n:n+\addBuyers}(\reserve) \geq \RevVCG_{n:n+\addBuyers}(e/(1+1/n))$.
\end{lemma}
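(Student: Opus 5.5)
The plan is to work in quantile space, with $\optquant = 1/\reserve \in [(1+1/n)/e,\, 1/2]$. I will show that $\RevVCG_{n:n+\addBuyers}(\optquant)$ is increasing in $\optquant$ on this interval, which is the same as decreasing in $\reserve$. The claim then follows, since the right endpoint $\reserve = e/(1+1/n)$ corresponds to the smallest quantile $\optquant = (1+1/n)/e$.

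By \Cref{lem:VCG revenue derivative}, and writing $F(\optquant) = 1-\rprob-\lnexpect/\ln\optquant > 0$, the logarithmic derivative is positive iff $\lnexpect > (\ln\optquant)^2 F(\optquant)$. Multiplying out, this is equivalent to
\begin{align*}
\lnexpect\cdot\left(1+\ln\optquant\right) > \left(1-\rprob\right)\cdot\left(\ln\optquant\right)^2 .
\end{align*}
This is the reverse of condition~\eqref{eq:multi-unit bk:MHR:balance market:increasing condition} from \Cref{lem:VCG revenue small reserve}. On our interval $1+\ln\optquant \ge \ln(1+1/n) \ge 1/(2n)$ and $(\ln\optquant)^2 \le 1$. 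It therefore suffices to prove
\begin{align*}
\frac{\lnexpect}{2n} > 1-\rprob = \prob{X<\optquant}.
\end{align*}

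Both sides can be controlled by concentration of $X\sim\BetaDist{n+1}{\addBuyers}$. Its mean is $\ExpectX=(n+1)/(n+\addBuyers+1)$. For $n\ge 594$ and the given $\addBuyers$, this mean lies in a narrow window around $e^{-1/e}\approx 0.692$, say $\ExpectX\in[0.68,0.70]$.

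For the right-hand side, take $\threshold = 0.18$. Since $\optquant\le 1/2\le \ExpectX-\threshold$, the Hoeffding-type bound already used in \Cref{lem:VCG revenue small reserve} gives $\prob{X<\optquant}\le 2e^{-2(n+\addBuyers+1)\threshold^2}\le 2e^{-0.09 n}$.

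For the left-hand side, $\lnexpect \ge \expect{(-\ln X)\IF_{\{X\ge 1/2\}}}$. On the event $1/2\le X\le \ExpectX+\threshold\le 0.88$, we have $-\ln X\ge -\ln 0.88 > 0.12$. Hence $\lnexpect \ge 0.12\,(1-4e^{-0.09n}) \ge 0.1$.

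The required inequality then reduces to $0.1/(2n) > 2e^{-0.09n}$. This holds easily for $n\ge 594$: the left side is about $10^{-4}$ and the right side about $10^{-23}$. The step needing care is pinning down explicit constants for $\ExpectX$ and for the Beta concentration so that they are valid uniformly in $n\ge594$, given that $\addBuyers/n$ varies slightly with $n$. I would verify that $\addBuyers/n\in[0.444,0.48]$ on this range and plug in these bounds.

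The main obstacle is the factor $1+\ln\optquant$, which degenerates as $\optquant\to 1/e$. Only this factor makes the left-hand side small. That is exactly why the interval is cut at $\reserve=e/(1+1/n)$: the cut leaves a margin of order $1/n$, and this polynomial margin easily beats the exponentially small lower-tail probability. On the remaining piece $[e/(1+1/n), e]$ the sign is genuinely not controlled (cf.\ \Cref{clm:qStar-not-local-min}), which is why that case is treated separately.
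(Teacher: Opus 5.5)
Your proposal is correct and follows the paper's proof essentially step for step. Both pass to $\optquant\in[(1+1/n)/e,\,1/2]$, reduce via \Cref{lem:VCG revenue derivative} to $(1+\ln\optquant)\,\lnexpect>(\ln\optquant)^2(1-\rprob)$, bound $1+\ln\optquant\ge\ln(1+1/n)$ and $(\ln\optquant)^2\le 1$, and use concentration of $X$ to get a constant lower bound on $\lnexpect$ and an exponentially small bound on $1-\rprob$. The differences are cosmetic: you use $\threshold=0.18$ and $-\ln 0.88$ where the paper uses $\threshold=0.1$ and $-\ln 0.8$, and you make the final numerical comparison explicit via $\ln(1+1/n)\ge 1/(2n)$, which the paper only asserts.
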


\begin{proof}[Proof of \Cref{lem:VCG revenue high reserve}]
In this proof, we will always consider on $\RevVCG_{n:n+t_n}(\optquant)$, where $\optquant$ is between $[({1}/{e})\cdot(1+{1}/{n}), 0.5]$. 

Let $\ExpectX\triangleq \expect{X}=(n+1)/(n+\addBuyers+1)$, which is approximately $0.69$ when $n\ge 594$. Recall that by Hoeffding’s inequality, for $\threshold=0.1$ we have
\begin{align*}
\prob{\abs{X-\ExpectX}\ge \threshold}\le 2\cdot e^{-(n+\addBuyers+1)\cdot \threshold^2}= 2\cdot  e^{-0.01\cdot (n+\addBuyers+1)}
\end{align*}
when $\optquant\in[({1}/{e})\cdot(1+{1}/{n}), 0.5]$, we know that $\optquant<\ExpectX(\optquant)-\threshold$, and we will prove that
\begin{align*}
    {\dd \ln( \RevVCG_{n:n+\addBuyers}(\optquant))}/{\dd \optquant}>0
\end{align*}
According to \Cref{lem:VCG revenue derivative}, 
it is equivalent to
\[(1+\ln\optquant)\cdot\lnexpect>(\ln\optquant)^2\cdot(1-\rprob)\]
Since $\optquant\in[({1}/{e})\cdot(1+{1}/{n}), 0.5]$, we know that $1+\ln\optquant\ge \ln(1+{1}/{n})$ and $(\ln\optquant)^2\le 1$.
Since $\optquant<\ExpectX(\optquant)-\threshold$, we know that
\begin{align*}
    \lnexpect={}&\int_{\optquant}^1 -\ln \variablex\cdot p(\variablex)\, \dd \variablex
    \ge\int_{\optquant}^{\ExpectX+\threshold} -\ln \variablex\cdot p(\variablex)\, \dd \variablex
    \\\ge{}&-\ln(\ExpectX+\threshold)\cdot \prob{0.5\le X\le \ExpectX+\threshold}
    \ge-\ln 0.8\cdot\left(1-2\cdot e^{-0.01\cdot(n+\addBuyers+1)}\right)
\end{align*}
and $1-\rprob=\prob{X\le \optquant}\le 2\cdot  e^{-0.01\cdot (n+\addBuyers+1)}$.
Therefore, we need to prove that when $n\ge 594$, it holds that
\begin{align*}
    \ln(1+\frac{1}{n})\cdot(-\ln 0.8)\cdot\left(1-2\cdot e^{-0.01\cdot(n+\addBuyers+1)}\right)>2\cdot  e^{-0.01\cdot (n+\addBuyers+1)}.
\end{align*}

As $n\to\infty$, the leading term on the left satisfies $\ln\left(1+\frac{1}{n}\right)\sim \frac{1}{n}$, which decays only polynomially, whereas the right-hand side $2e^{-0.01(n+\addBuyers+1)}$ decays exponentially to $0$; hence, for sufficiently large $n$, the exponential terms become negligible and the inequality must hold.
By calculation, we know that when $n\ge 594$, the VCG revenue $ \RevVCG_{n:n+\addBuyers}(\optquant)$ is increasing when $\optquant\in[({1}/{e})\cdot(1+{1}/{n}), 0.5]$.

This completes the proof of \Cref{lem:VCG revenue high reserve}.
\end{proof}

\begin{lemma}
\label{lem:VCG revenue super high reserve}
    For any $n \geq 594$ and $\reserve \in [e/(1+1/n), e]$, let 
    $\addBuyers \triangleq \lceil \left(e^{1/e} - 1\right) n + 1.05 \ln n \rceil$.
    Then,
    $\RevVCG_{n:n+\addBuyers}(\reserve) \geq \RevVCG_{n:n+\addBuyers}(e) - 8$.
\end{lemma}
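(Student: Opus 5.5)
Work in the quantile parametrization $\optquant=1/\reserve$, so the interval $\reserve\in[e/(1+1/n),e]$ becomes $\optquant\in[1/e,(1+1/n)/e]$. This interval has length $1/(en)$. The plan is a mean-value argument: bound $\abs{\dd \RevVCG_{n:n+\addBuyers}(\optquant)/\dd\optquant}$ uniformly on it by roughly $8en$, and conclude
\[
\RevVCG_{n:n+\addBuyers}(\reserve)\ge \RevVCG_{n:n+\addBuyers}(e)-\frac{1}{en}\cdot \sup\abs{\frac{\dd\RevVCG_{n:n+\addBuyers}}{\dd\optquant}}.
\]

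\textbf{Step 1 (derivative formula).} By \Cref{lem:VCG revenue derivative},
\[
\frac{\dd\RevVCG_{n:n+\addBuyers}}{\dd\optquant}=\RevVCG_{n:n+\addBuyers}(\optquant)\cdot\frac{1}{\optquant}\left(-1+\frac{\lnexpect}{(\ln\optquant)^2\,\VCGrpart}\right),
\qquad
\VCGrpart=1-\rprob-\frac{\lnexpect}{\ln\optquant}.
\]

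\textbf{Step 2 (bound the revenue factor).} The revenue curve of a truncated exponential with monopoly revenue $1$ is at most $1$. Hence $\RevVCG_{n:n+\addBuyers}\le n+\addBuyers\le 2n$.

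\textbf{Step 3 (bound the bracket).} We have $1/\optquant\le e$. Since $\ln\optquant<0$, the term $-\lnexpect/\ln\optquant$ is nonnegative. Therefore
\[
\VCGrpart\ge\frac{\lnexpect}{\abs{\ln\optquant}},
\qquad\text{so}\qquad
\frac{\lnexpect}{(\ln\optquant)^2\,\VCGrpart}\le\frac{1}{\abs{\ln\optquant}}.
\]
On $\optquant\in[1/e,(1+1/n)/e]$ we have $\abs{\ln\optquant}\ge 1-\ln(1+1/n)\ge 1-1/n$, so this ratio is at most $1/(1-1/n)$. The bracket $-1+\text{ratio}$ then lies in $[-1,\,1/(n-1)]$, and its absolute value is at most $1$.

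\textbf{Step 4 (combine).} Putting the pieces together,
\[
\abs{\frac{\dd\RevVCG_{n:n+\addBuyers}}{\dd\optquant}}\le 2n\cdot e\cdot 1=2en.
\]
Multiplying by the interval length $1/(en)$ gives a total variation of at most $2$, comfortably below the claimed $8$. The slack absorbs crude estimates, for instance if one prefers to bound $\RevVCG$ via $\optreserve\quant\le e$ rather than by the revenue curve.

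\textbf{Main obstacle.} The only subtle point is Step 3, because the ratio $\lnexpect/((\ln\optquant)^2\VCGrpart)$ looks as if it could blow up. Writing $\VCGrpart$ as a sum of two nonnegative parts, one of which is exactly $\lnexpect/\abs{\ln\optquant}$, resolves this cleanly. If that decomposition failed, the fallback would be to use the concentration of $X\sim\BetaDist{n+1}{\addBuyers}$ near $0.69>\optquant$ to bound $\lnexpect\le 1$ and $1-\rprob\ge0$ directly.
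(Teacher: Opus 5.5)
Your proof is correct. Its skeleton matches the paper's: pass to $\optquant=1/\reserve$, invoke \Cref{lem:VCG revenue derivative}, bound $\RevVCG_{n:n+\addBuyers}\le n+\addBuyers$, and multiply a uniform derivative bound by the interval length $1/(en)$.

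The two arguments differ only in how they control the ratio $\funcC(\optquant)=\lnexpect/\bigl((\ln\optquant)^2\VCGrpart\bigr)$.

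\textbf{The paper's route.} It applies Hoeffding's inequality to $X\sim\BetaDist{n+1}{\addBuyers}$, which is concentrated near $0.69$. This gives numerical bounds $0.22\bigl(1-2e^{-0.01(n+\addBuyers+1)}\bigr)\le\lnexpect\le 1$ and $1-\rprob\le 2e^{-0.01(n+\addBuyers+1)}$. From these it gets $0.2\le\funcC\le 5$ and $\abs{\dd\ln\RevVCG_{n:n+\addBuyers}/\dd\optquant}\le 4e$. The resulting total variation is $20/e<8$.

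\textbf{Your route.} You observe that $\VCGrpart=(1-\rprob)+\lnexpect/\abs{\ln\optquant}$ is a sum of nonnegative terms. This yields the exact inequality $0\le\funcC\le 1/\abs{\ln\optquant}\le n/(n-1)$ on the interval, with no probabilistic input at all. Your stated fallback is in fact exactly the paper's route.

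\textbf{What each buys.} Your version is shorter and gives a variation of at most $2$. It also does not depend on $n\ge 594$ or on the specific value of $\addBuyers$; you only need something like $\addBuyers\le n$. The paper's concentration estimates additionally give the lower bound $\funcC\ge 0.2$, but this two-sided control is not needed for this lemma.
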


\begin{proof}[Proof of \Cref{lem:VCG revenue super high reserve}]
In this proof, we will always consider on $\RevVCG_{n:n+t_n}(\optquant)$, where $\optquant$ is between $[1/e, ({1}/{e})\cdot(1+{1}/{n})]$.

We first prove that when $\reserve\in[e/(1+1/n),e]$, or equivalently when $\optquant\in[1/e,(1+1/n)\cdot 1/e]$, let $\addBuyers\triangleq \lceil \left(e^{1/e} - 1\right) n + 1.05 \ln n \rceil$, for every integer $n\ge 594$, the derivative of the revenue with respect to 
$\optquant$ satisfies 
$\abs{\dd \RevVCG_{n:n+\addBuyers}(\optquant)/\dd\optquant}\le 0.5\cdot n$.

Let $\ExpectX\triangleq \expect{X}=(n+1)/(n+\addBuyers+1)$, which is approximately $0.69$ when $n\ge 594$. To bound $\dd \ln( \RevVCG_{n:n+\addBuyers}(\optquant))/\dd \optquant$, we will repeatedly use the following inequality: by Hoeffding’s inequality, for $\threshold=0.1$ we have
\begin{align}\label{eq:multi-unit bk:mhr:balance market:concentrate bound}
    \prob{\abs{X-\ExpectX}\ge \threshold}\le 2\cdot e^{-(n+\addBuyers+1)\cdot \threshold^2}= 2\cdot  e^{-0.01\cdot (n+\addBuyers+1)}
\end{align}
We first provide upper and lower bounds on 
$\lnexpect$. The upper bound of $\lnexpect$ is
\begin{align*}
    \lnexpect={}&\int_{\optquant}^1 -\ln \variablex\cdot p(\variablex)\dd \variablex
    \ge \int_{\optquant}^{0.8} -\ln \variablex\cdot p(\variablex)\dd \variablex
    \\\ge{}&
    -\ln(0.8)\cdot \prob{\frac{1}{e}+\frac{1}{e}\cdot\frac{1}{n}\le X\le 1}
    \ge0.22\cdot\left(1-2\cdot e^{-0.01\cdot(n+\addBuyers+1)}\right),
\end{align*}
 where the second inequality holds since $\frac{1}{e}\cdot\frac{1}{n}< 0.01$ when $n\ge 594$. The lower bound of $\lnexpect$ is
\begin{align*}
    \lnexpect=\int_{\optquant}^1 -\ln \variablex\cdot p(\variablex)\dd \variablex
    \le-\ln\frac{1}{e}\cdot \prob{\frac{1}{e}\le X\le 1}
    \le1
\end{align*}
Define $\funcC(\optquant)\triangleq{\lnexpect}/\left({(\ln \optquant)^2\cdot\left(1-\rprob-{\lnexpect}/{\ln \optquant}\right)}\right)$. Since $-1\le\ln\optquant\le-0.97$ and $0\le1-\rprob\le 2\cdot e^{-0.01\cdot(n+\addBuyers+1)}$, the upper and lower bounds of $\funcC(\optquant)$ are
\begin{align*}
    \frac{0.22\cdot\left(1-2\cdot e^{-0.01\cdot(n+\addBuyers+1)}\right)}{2\cdot e^{-0.01\cdot(n+\addBuyers+1)}+\frac{1}{0.97}}\le \funcC(\optquant)\le \frac{1}{(0.97)^2\cdot 0.22\cdot\left(1-2\cdot e^{-0.01\cdot(n+\addBuyers+1)}\right)}
\end{align*}
When $n\ge 594$, we have $0.2\le\funcC(\optquant)\le 5$. Therefore, we have 
\begin{align*}
    \abs{\frac{\dd \ln(\RevVCG_{n:n+\addBuyers}(\optquant))}{\dd \optquant}}\le \abs{-\frac{1}{\optquant}+\frac{1}{\optquant}\cdot5}
    \le4\cdot e
\end{align*}
Since $ \RevVCG_{n:n+\addBuyers}(\optquant)\le n+\addBuyers$, we know that $\abs{{\dd \RevVCG_{n:n+\addBuyers}(\optquant)}/{\dd\optquant}}\le 4\cdot e\cdot(n+\addBuyers)\le 20\cdot n$. Therefore, when monopoly quantile $\optquant \in [{1}/{e}, {1}/{e}\cdot\left(1+{1}/{n}\right)]$ and $n\ge594$, for every $\addBuyers\in[0.444\cdot n, 0.446\cdot n]$, we have \begin{align*}
        \RevVCG_{n:n+\addBuyers}(\optquant)\ge{} \RevVCG_{n:n+\addBuyers}(\frac{1}{e})-20\cdot n\cdot\frac{1}{e}\cdot\frac{1}{n}
        \ge \RevVCG_{n:n+\addBuyers}(\frac{1}{e})-8
    \end{align*}
    This completes the proof of \Cref{lem:VCG revenue super high reserve}.
\end{proof}

\begin{lemma}
\label{lem:VCG revenue e reserve}
    For any $n \geq 594$, let 
    $\addBuyers \triangleq \lceil \left(e^{1/e} - 1\right) n + 1.05 \ln n \rceil$.
    Then, $\RevVCG_{n:n+\addBuyers}(e) \geq n + 8$.
\end{lemma}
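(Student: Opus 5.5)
The plan is to compute $\RevVCG_{n:n+\addBuyers}(e)$ in closed form, rewrite it as an expectation of $-\ln X$ for a Beta random variable, evaluate that expectation exactly with harmonic numbers, and show that the $1.05\ln n$ slack in $\addBuyers$ leaves more than $8$ to spare.

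\textbf{Step 1: closed form.} I would start from the representation in the proof of \Cref{lem:VCG revenue derivative}:
\[
\RevVCG_{n:n+\addBuyers}(\optquant)=\frac{\coefA}{\optquant}\cdot\BetaFun{n+1}{\addBuyers}\cdot\left(1-\rprob-\frac{\lnexpect}{\ln\optquant}\right),
\]
evaluated at $\optquant=1/e$, so that $\ln\optquant=-1$. A direct factorial computation with $\coefA=(n+\addBuyers)(n+\addBuyers-1)\binom{n+\addBuyers-2}{n-1}$ and $\BetaFun{n+1}{\addBuyers}=n!(\addBuyers-1)!/(n+\addBuyers)!$ gives $\coefA\cdot\BetaFun{n+1}{\addBuyers}=n$. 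With $X\sim\BetaDist{n+1}{\addBuyers}$ this yields
\[
\RevVCG_{n:n+\addBuyers}(e)=e\, n\cdot\left(\prob{X<1/e}+\expect{(-\ln X)\IF_{\{X\ge 1/e\}}}\right)=e\,n\cdot\left(\expect{-\ln X}-\expect{(-\ln X-1)\IF_{\{X<1/e\}}}\right).
\]

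\textbf{Step 2: main term.} For a Beta variable, $\expect{-\ln X}=\digammaf(n+\addBuyers+1)-\digammaf(n+1)=\Harm{n+\addBuyers}-\Harm{n}$. Comparing $1/k$ with $\int_k^{k+1}\dd x/x$ gives
\[
\Harm{n+\addBuyers}-\Harm{n}\ge\ln\frac{n+\addBuyers+1}{n+1}=\ln\left(1+\frac{\addBuyers}{n}\right)-\left[\ln\left(1+\frac1n\right)-\ln\left(1+\frac{1}{n+\addBuyers}\right)\right].
\]
After multiplying by $e n$, the bracketed correction costs at most about $e\addBuyers/(n+\addBuyers)<1$.

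For the leading term, write $\addBuyers/n\ge c+\delta$ with $c=e^{1/e}-1$ and $\delta=1.05\ln n/n$. Concavity of $\ln$ gives $\ln(1+c+\delta)\ge \frac1e+\frac{\delta}{1+c+\delta}$, so
\[
e\,n\ln\left(1+\frac{\addBuyers}{n}\right)\ge n+\frac{1.05\,e\ln n}{e^{1/e}+\delta}\approx n+1.98\ln n .
\]
At $n=594$ this is already about $n+12.6$. Since the gain is increasing in $n$ once $\delta$ is small, the main term is at least $n+11$ for all $n\ge 594$.

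\textbf{Step 3: tail term.} The subtracted quantity $\expect{(-\ln X-1)\IF_{\{X<1/e\}}}$ should be exponentially small. By Cauchy--Schwarz it is at most $\sqrt{\expect{(\ln X)^2}}\cdot\sqrt{\prob{X<1/e}}$. Here $\expect{(\ln X)^2}=(\Harm{n+\addBuyers}-\Harm{n})^2+\sum_{k=n+1}^{n+\addBuyers}k^{-2}=O(1)$. The mean of $X$ is $\ExpectX\approx 0.69$, which exceeds $1/e$ by more than $0.3$, so the concentration bound \eqref{eq:multi-unit bk:mhr:balance market:concentrate bound} with $\threshold=0.3$ makes $\prob{X<1/e}$ of order $e^{-0.09(n+\addBuyers)}$. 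Multiplied by $en$, this term is far below $1$ for $n\ge 594$.

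Combining the three steps gives $\RevVCG_{n:n+\addBuyers}(e)\ge n+11-1-o(1)\ge n+8$.

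\textbf{Main obstacle.} The delicate part is Step 2. The $\ln n$ slack has to absorb both the harmonic-versus-log discretisation error and the ceiling in $\addBuyers$, uniformly in $n\ge594$ rather than only asymptotically. I would handle this by reducing everything to an explicit function of $n$ that is monotone increasing, and then checking it numerically at the endpoint $n=594$, in the same way the earlier lemmas use monotonicity together with an endpoint evaluation.
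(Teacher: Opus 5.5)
Your proof is correct and is essentially the paper's argument: your expression $e\,n\bigl(\mathbb{E}[-\ln X]-\mathbb{E}[(-\ln X-1)\IF_{\{X<1/e\}}]\bigr)$ is exactly the paper's $e\,\coefA(\AltBinSum-\coefOld)$, and both proofs rest on the identity $\coefA\,\AltBinSum=n(\Harm{n+\addBuyers}-\Harm{n})$ together with an integral lower bound on the harmonic difference. The differences lie only in the estimates: you bound the correction additively via Cauchy--Schwarz and Beta concentration, rather than through the paper's multiplicative bound $\coefOld\le e^{-\sqrt{n}}\AltBinSum$ (\Cref{lem:VCG revenue e reserve inequality}), and your concavity step makes the margin explicit ($\approx 1.98\ln n\ge 12.5$, before the $<1$ harmonic-versus-log loss), which is more robust than the paper's final check in \Cref{lem:VCG revenue e reserve calculation} --- there the claim $\Auxfunc(n)\le 1$ is false (in fact $\Auxfunc(n)>8e^{1/e}/e\approx 4.25$), though the required bound on $\addBuyers$ still holds thanks to the $1.05\ln n\ge 6.7$ slack.
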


\begin{proof}[Proof of \Cref{lem:VCG revenue e reserve}]
Recall from \Cref{lem:VCG revenue derivative} that $
\coefA \triangleq (n+\addBuyers)\cdot(n+\addBuyers-1)\cdot\binom{n+\addBuyers-2}{n-1}$ and
\begin{align*}
    \RevVCG_{n:n+\addBuyers}\left(\reserve\right) 
    ={}& \coefA \cdot \optreserve\cdot\int_0^{\optquant} (1-\quant)^{\addBuyers-1}\cdot \quant^{n}\dd \quant +
    \coefA\cdot\frac{\optreserve}{\ln \optreserve}\cdot\displaystyle\int_{\optquant}^1 -\ln \quant\cdot(1-\quant)^{\addBuyers-1}\cdot \quant^{n}\dd \quant
    \\={} &\coefA\cdot\optreserve\cdot\sum_{\indexj=0}^{\addBuyers-1}\binom{\addBuyers-1}{\indexj}\cdot{(-1)}^\indexj\cdot\frac{{\optquant}^{n+\indexj+1}}{n+\indexj+1}
    \\&+
    \coefA\cdot\frac{\optreserve}{\ln \optreserve}\cdot\sum_{\indexj=0}^{\addBuyers-1}\binom{\addBuyers-1}{\indexj}\cdot{(-1)}^\indexj\cdot\left[\frac{1}{(n+\indexj+1)^2}+\frac{{\optquant}^{n+\indexj+1}\cdot \ln \optquant}{n+\indexj+1}-\frac{{\optquant}^{n+\indexj+1}}{(n+\indexj+1)^2}\right]
\end{align*}
According to \Cref{lem:VCG revenue e reserve calculation}, when $\reserve=e$ and $\optquant={1}/{e}$, we can obtain a lower bound for $\RevVCG_{n:n+\addBuyers}(\reserve)$ as follows:
\begin{align*}
\RevVCG_{n:n+\addBuyers}\left(\reserve\right) 
    ={}& \coefA\cdot e\cdot\sum_{\indexj=0}^{\addBuyers-1}\binom{\addBuyers-1}{\indexj}\cdot{(-1)}^\indexj\cdot\frac{1-{e}^{-(n+\indexj+1)}}{(n+\indexj+1)^2}\notag
    \\\ge{}& \coefA\cdot e\cdot(1-{e}^{-\sqrt{n}})\cdot\sum_{\indexj=0}^{\addBuyers-1}\binom{\addBuyers-1}{\indexj}\cdot{(-1)}^\indexj\cdot\frac{1}{(n+\indexj+1)^2}
\end{align*}
Denote $\AltBinSum\triangleq \sum_{\indexj=0}^{\addBuyers-1}\binom{\addBuyers-1}{\indexj}\cdot(-1)^{\indexj}\cdot\frac{1}{(n+\indexj+1)^2}$. 
Using the standard identity, we obtain $
\frac{1}{(n+\indexj+1)^2}
= \int_0^1 -\ln \quant\cdot {\quant}^{n+\indexj}\dd \quant.$
Substituting this representation into the sum gives
\begin{align*}
\AltBinSum={}&\sum_{\indexj=0}^{\addBuyers-1}\binom{\addBuyers-1}{\indexj}\cdot(-1)^{\indexj}\cdot
   \int_0^1-\ln \quant\cdot {\quant}^{n+\indexj}\dd {\quant} 
   \\={}& \int_0^1 -\ln \quant\cdot {\quant}^{n}\cdot
   \left[\sum_{\indexj=0}^{\addBuyers-1}\binom{\addBuyers-1}{\indexj}\cdot(-1)^{\indexj}\cdot{\quant}^{\indexj}\right]\dd {\quant}
   =\int_0^1 -\ln \quant\cdot(1-\quant)^{\addBuyers-1}\cdot{\quant}^{n}\dd \quant
\end{align*}
The last equality holds since, by the binomial theorem, we know that 
$\sum_{\indexj=0}^{\addBuyers-1}\binom{\addBuyers-1}{\indexj}\cdot(-1)^{\indexj}\cdot {\quant}^{\indexj}
= (1-{\quant})^{\addBuyers-1}$.

Next we relate this integral to the Beta function. Recall that for $\alpha,\beta>0$,
\[
\BetaFun{\alpha}{\beta}\triangleq\displaystyle\int_0^1 \quant^{\alpha-1}\cdot(1-\quant)^{\beta-1}\mathrm{d}\quant
\,\,\text{and}\,\,
\int_0^1 -\ln \quant\cdot\quant^{\alpha-1}\cdot(1-\quant)^{\beta-1}\dd\quant
= -\frac{\partial}{\partial \alpha}\BetaFun{\alpha}{\beta}
\]
According to the standard derivative formula for the Beta function, let $\digammaf$ denote the digamma function. For positive integers $n$, the digamma 
function satisfies
$\digammaf(n+1) = \Harm{n} - \gamma$,
where $\gamma$ is the Euler--Mascheroni constant. Therefore,
\begin{align}\label{eq:bk vcg:mhr:beta function}
    \coefA\cdot  \AltBinSum={}& -\coefA\cdot \frac{\partial}{\partial (n+1)}\BetaFun{n+1}{\addBuyers}
    = a\cdot  \BetaFun{n+1}{\addBuyers}\cdot\left[\digammaf(n+\addBuyers+1)-\digammaf(n+1)\right]\notag
    \\={}&\coefA\cdot \frac{H_{n+\addBuyers}-H_n}{(n+\addBuyers)\cdot\binom{n+\addBuyers-1}{\addBuyers-1}}
    = n\cdot (\Harm{n+\addBuyers}-\Harm{n})
\end{align}
In our case, the value of $\RevVCG_{n:n+\addBuyers}\left(\reserve\right)$ becomes $\coefA\cdot e\cdot(1-{e}^{-\sqrt{n}})\cdot \AltBinSum=e\cdot(1-{e}^{-\sqrt{n}})\cdot n\cdot (\Harm{n+\addBuyers}-\Harm{n})$.
Finally, we need to prove that
\begin{align}\label{eq:multi-unit bk:MHR:revenue}
    e\cdot(1-{e}^{-\sqrt{n}})\cdot n\cdot (\Harm{n+\addBuyers}-\Harm{n})\ge n+8
\end{align}
Since $
\Harm{n+\addBuyers}-\Harm{n} = \sum_{\indexj=n+1}^{n+\addBuyers}{1}/{\indexj}$,
and the function ${1}/{x}$ is decreasing in $x$, we have the lower bound
$
\sum_{\indexj=n+1}^{n+\addBuyers}{1}/{\indexj}
\ge
\int_{n+1}^{n+\addBuyers+1}{1}/{x}\,\dd x
= \ln\left(1+{\addBuyers}/(n+1)\right).
$
Therefore, a sufficient condition for \Cref{eq:multi-unit bk:MHR:revenue} is
$\ln\left(1+\frac{\addBuyers}{n+1}\right)\ge({n+8})/({e\cdot(1-{e}^{-\sqrt{n}})})$,
or equivalently,
$\addBuyers=\left\lceil \left(e^{1/e}-1\right)\cdot n + 1.05\cdot\ln n \right\rceil \ge (n+1)\cdot\left(\exp\left(({1+\frac{8}{n}})/({e\cdot(1-e^{-\sqrt{n}})})\right)-1\right)$.

By \Cref{lem:VCG revenue e reserve calculation}, this inequality holds, and hence \Cref{eq:multi-unit bk:MHR:revenue} follows.
This completes the proof of \Cref{lem:VCG revenue e reserve}.
\end{proof}

\begin{lemma}\label{lem:VCG revenue e reserve calculation}
For any $n \geq 594$, it holds that $\addBuyers=\left\lceil \left(e^{1/e}-1\right)\cdot n + 1.05\cdot\ln n \right\rceil \ge (n+1)\cdot(e^{\frac{1+\frac{8}{n}}{e\cdot(1-e^{-\sqrt{n}})}}-1)$.
\end{lemma}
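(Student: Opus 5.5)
Since $\addBuyers \ge (e^{1/e}-1)n + 1.05\ln n$, it suffices to show
\begin{align*}
(e^{1/e}-1)\,n + 1.05\ln n \;\ge\; (n+1)\left(e^{x_n}-1\right),
\qquad
x_n \triangleq \frac{1+\frac{8}{n}}{e\,(1-e^{-\sqrt{n}})}.
\end{align*}
The plan is to compare $e^{x_n}$ with $e^{1/e}$ by a first-order expansion with explicit error, and then to check that the resulting slack is dominated by $1.05\ln n$ for all $n\ge 594$.

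\textbf{Step 1: bound $x_n - 1/e$.} Using $1/(1-y)\le 1+y/(1-y)$ with $y=e^{-\sqrt n}\le e^{-24}$, we get
\begin{align*}
x_n \le \frac{1}{e}\left(1+\frac{8}{n}\right)\left(1+2e^{-\sqrt n}\right)
\le \frac{1}{e} + \frac{8}{en} + \delta_n,
\end{align*}
where $\delta_n \le 3e^{-\sqrt n}$ is astronomically small for $n\ge 594$. Since $x_n - 1/e \le 8/(en)+\delta_n \le 0.005$, convexity of $\exp$ on a short interval gives
\begin{align*}
e^{x_n} \le e^{1/e}\left(1 + (x_n-\tfrac1e)(1+x_n-\tfrac1e)\right),
\end{align*}
and hence $e^{x_n}-1 \le (e^{1/e}-1) + e^{1/e}\cdot 1.005\left(\frac{8}{en}+\delta_n\right)$.

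\textbf{Step 2: multiply by $n+1$.} This yields
\begin{align*}
(n+1)(e^{x_n}-1) \le (e^{1/e}-1)\,n + (e^{1/e}-1) + 1.005\,e^{1/e}\,\frac{8(n+1)}{en} + (n+1)\,e^{1/e}\,\delta_n .
\end{align*}
Numerically $e^{1/e}\approx 1.4447$, so the constant terms total at most roughly $0.4447 + 1.005\cdot 1.4447\cdot 2.943\cdot 1.0017 + 10^{-7} \approx 4.73$. It then remains to check that $1.05\ln n \ge 4.73$. For $n\ge 594$ we have $\ln 594 \approx 6.387$, so $1.05\ln n \ge 6.7$, which gives a comfortable margin. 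The left side increases in $n$ while the constant bound is nonincreasing in $n$, since $(n+1)/n$ decreases and $n\,e^{-\sqrt n}$ decreases. The check at $n=594$ therefore covers all larger $n$.

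\textbf{Main obstacle.} The only delicate point is making the exponential expansion rigorous with explicit constants rather than relying on asymptotics. I would handle this with the mean value theorem: $e^{x_n}-e^{1/e} \le (x_n - 1/e)\,e^{x_n}$ together with $e^{x_n}\le e^{1/e+0.005}$. The resulting constant, about $4.75$, still sits well below $6.7$, so the argument has enough slack that crude numerical bounds suffice.
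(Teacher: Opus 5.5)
Your argument is correct and follows essentially the paper's route. You treat the factor $1/(1-e^{-\sqrt n})$ as an exponentially small perturbation, linearize $e^{x}$ at $1/e$ to get $(n+1)(e^{x_n}-1)\le (e^{1/e}-1)n+O(1)$, and absorb the constant into $1.05\ln n$; you merge the paper's two steps into one convexity/mean-value estimate. Your estimate of that constant is in fact the accurate one: the paper's auxiliary function $n\,e^{1/e}(e^{8/(en)}-1)$ decreases to $8e^{1/e-1}\approx 4.25$, so its claims that this is at most $1$ and that the total is at most $n(e^{1/e}-1)+4$ are false, and it is your bound of roughly $4.73<1.05\ln 594\approx 6.7$ that actually proves the lemma.
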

\begin{proof}[Proof of \Cref{lem:VCG revenue e reserve calculation}]
We will first prove the inequality $(n+1)\cdot e^{\frac{1+\frac{8}{n}}{e\cdot\left(1-e^{-\sqrt{n}}\right)}}
- (n+1)\cdot e^{\frac{1+\frac{8}{n}}{e}}
\le1$ for every integer $n$:
\begin{align*}
(n+1)\cdot e^{\frac{1+\frac{8}{n}}{e\cdot\left(1-e^{-\sqrt{n}}\right)}}
- (n+1)\cdot e^\frac{1+\frac{8}{n}}{e} 
={} (n+1) \cdot e^{\frac{1+\frac{8}{n}}{e}}\cdot\left(
e^{\frac{e^{-\sqrt{n}-1}}{1-e^{-\sqrt{n}}}}-1
\right)
\end{align*}
Since $e^{-\sqrt{n}}\le e^{-1}\le \frac{1}{2}$, we obtain $(e^{-\sqrt{n}-1})/(1-e^{-\sqrt{n}})\le 2\cdot{e^{-\sqrt{n}-1}}\le 1$.
For any $\quant\in[0,1]$ the elementary inequality $e^{\quant}-1\le 2\cdot\quant$ holds, hence $
e^{\frac{e^{-\sqrt{n}-1}}{1-e^{-\sqrt{n}}}}-1 \le e^{2\cdot{e^{-\sqrt{n}-1}}}-1 \le  4\cdot {e^{-\sqrt{n}-1}}$.
Therefore, 
\[
(n+1)\cdot e^{\frac{1+\frac{8}{n}}{e}}\cdot(e^{\frac{1}{e}\cdot\frac{e^{-\sqrt{n}}}{1-e^{-\sqrt{n}}}}-1)
\le  (n+1)\cdot e^{\frac{1+\frac{8}{n}}{e}}\cdot 4\cdot {e^{-\sqrt{n}-1}}
= 4\cdot (n+1)\cdot e^{\frac{1+\frac{8}{n}}{e}-1}\cdot {e^{-\sqrt{n}}}
\]
By differentiating the function, we observe that $
4\cdot (n+1)\cdot e^{\frac{1+\frac{8}{n}}{e}-1}\cdot e^{-\sqrt{n}} \le 1$
for all $n\ge 594$. Hence,
\[
(n+1)\cdot\left(e^{\frac{1+\frac{8}{n}}{e\cdot(1-{e}^{-\sqrt{n}})}}-1\right) \le (n+1)\cdot (e^{\frac{1+\frac{8}{n}}{e}}-1) + 1, \text{ }
\left\lceil (n+1)\cdot\left(e^{\frac{1+\frac{8}{n}}{e\cdot(1-{e}^{-\sqrt{n}})}}-1\right)\right\rceil \le (n+1)\cdot (e^{\frac{1+\frac{8}{n}}{e}}-1) + 2.
\]
Second, we will prove that $(n+1)\cdot (e^{\frac{1+\frac{8}{n}}{e}}-1)-n\cdot (e^{\frac{1}{e}}-1)\le 1$. Since $e^{\frac{1+\frac{8}{n}}{e}}-1\le 1$ when $n\ge 594$, we have
$(n+1)\cdot (e^{\frac{1+\frac{8}{n}}{e}}-1)-n\cdot (e^{\frac{1}{e}}-1)\le n\cdot e^{\frac{1}{e}}\cdot(e^{\frac{1}{e}\cdot\frac{8}{n}}-1)+1$,
define $\Auxfunc(n)\triangleq n\cdot e^{\frac{1}{e}}\cdot(e^{\frac{1}{e}\cdot\frac{8}{n}}-1)$, we know that 
\begin{align*}
\Auxfunc^\prime(n)=e^{\frac{1}{e}}\cdot\left(e^{\frac{1}{e}\cdot\frac{8}{n}}\cdot\left(1-\frac{1}{e}\cdot\frac{8}{n}\right)-1\right)
\le e^{\frac{1}{e}}\cdot\left(e^{\frac{1}{e}\cdot\frac{8}{n}}\cdot e^{-\frac{1}{e}\cdot\frac{8}{n}}-1\right)
=0
\end{align*} 
Inequality holds by the convexity inequality $1-\variablex\le e^{-\variablex}$ for all $\variablex\ge 0$.
Therefore, $\Auxfunc(n)$ is decreasing in $n$.
When $n\ge 594$, $\Auxfunc(n)\le 1$, therefore, when $n\ge 594$, we have 
\[
\left\lceil n\cdot\left(e^{\frac{1+\frac{1}{n}}{e\cdot(1-{e}^{-\sqrt{n}})}}-1\right)\right\rceil \le n\cdot (e^{\frac{1}{e}}-1) + 4\le \addBuyers.
\]
This completes the proof of \Cref{lem:VCG revenue e reserve calculation}.
\end{proof}

\begin{lemma}\label{lem:VCG revenue e reserve inequality}
For any $n \geq 594$, let 
    $\addBuyers \triangleq \lceil \left(e^{1/e} - 1\right) n + 1.05 \ln n \rceil$.
    Then, 
    \begin{align*}
     \sum_{\indexj=0}^{\addBuyers-1}\binom{\addBuyers-1}{\indexj}\cdot{(-1)}^\indexj\cdot\frac{1-{e}^{-(n+\indexj+1)}}{(n+\indexj+1)^2}\notag
    \ge{} (1-{e}^{-\sqrt{n}})\cdot\sum_{\indexj=0}^{\addBuyers-1}\binom{\addBuyers-1}{\indexj}\cdot{(-1)}^\indexj\cdot\frac{1}{(n+\indexj+1)^2}.
    \end{align*}
\end{lemma}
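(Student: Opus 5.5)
The plan is to reinterpret both alternating sums as integrals against the Beta-type kernel $g(\quant)\triangleq -\ln\quant\cdot(1-\quant)^{\addBuyers-1}\cdot\quant^{n}$, exactly as in the proof of \Cref{lem:VCG revenue e reserve}. With $N\triangleq n+\indexj+1$, three elementary integrals are
\begin{align*}
\int_0^{1/e}\quant^{N-1}\,\dd\quant=\frac{e^{-N}}{N},
\qquad
\int_{1/e}^1-\ln\quant\cdot\quant^{N-1}\,\dd\quant=\frac{1}{N^2}-\frac{e^{-N}}{N}-\frac{e^{-N}}{N^2},
\qquad
\int_0^1-\ln\quant\cdot\quant^{N-1}\,\dd\quant=\frac{1}{N^2}.
\end{align*}
The first two add up to $(1-e^{-N})/N^2$. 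Summing against $\binom{\addBuyers-1}{\indexj}(-1)^\indexj$ and using the binomial theorem, the left-hand side of the lemma equals
\begin{align*}
\int_0^{1/e}(1-\quant)^{\addBuyers-1}\quant^{n}\,\dd\quant+\int_{1/e}^1 g(\quant)\,\dd\quant,
\end{align*}
which is exactly the revenue decomposition at $\optquant=1/e$. The right-hand side is $(1-e^{-\sqrt n})\int_0^1 g(\quant)\,\dd\quant$. Since the first integral is nonnegative, it suffices to prove
\begin{align*}
\int_0^{1/e} g(\quant)\,\dd\quant\;\le\; e^{-\sqrt n}\int_0^1 g(\quant)\,\dd\quant.
\end{align*}

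For the right-hand side I would reuse identity \eqref{eq:bk vcg:mhr:beta function}, namely $\int_0^1 g=\BetaFun{n+1}{\addBuyers}\,(\Harm{n+\addBuyers}-\Harm{n})$. Together with $\Harm{n+\addBuyers}-\Harm{n}\ge\ln(1+\addBuyers/(n+1))$ and $\addBuyers/n\ge 0.444$, this gives $\int_0^1 g\ge 0.36\cdot\BetaFun{n+1}{\addBuyers}$.

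For the left-hand side I see two routes. The crude route drops $(1-\quant)^{\addBuyers-1}\le 1$ to get
\begin{align*}
\int_0^{1/e}g\le\int_0^{1/e}-\ln\quant\cdot\quant^n\,\dd\quant=e^{-(n+1)}\left(\frac{1}{n+1}+\frac{1}{(n+1)^2}\right).
\end{align*}
It then remains to compare $e^{-(n+1)}$ with $\BetaFun{n+1}{\addBuyers}e^{-\sqrt n}$. By the Stirling bound already used in \Cref{lem:VCG revenue small reserve}, $\BetaFun{n+1}{\addBuyers}\approx e^{-cn}$ with $c=(1+\kratio)\ln(1+\kratio)-\kratio\ln\kratio\approx 0.89$, so the ratio is about $e^{-0.11n+\sqrt n}$ times polynomial factors. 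This is small for large $n$, but the margin at $n=594$ is thin (roughly $e^{-65+24}$ before the polynomial factors).

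The more robust route, which I would try first, is probabilistic. On $[0,1/e]$ the log factor is unbounded, so I would split at a tiny cutoff $\delta=e^{-3}$. On $[0,\delta]$ I use the crude bound $\delta^{n+1}(\ln(1/\delta)+1)/(n+1)$, which is negligible. On $[\delta,1/e]$ I bound $-\ln\quant\le 3$, so the contribution is at most $3\,\BetaFun{n+1}{\addBuyers}\,\prob{X\le 1/e}$ with $X\sim\BetaDist{n+1}{\addBuyers}$. Since $X$ is an order statistic of uniforms, $\prob{X\le 1/e}=\prob{\mathrm{Bin}(n+\addBuyers,1/e)\ge n+1}$. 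Hoeffding with gap $\frac{n+1}{n+\addBuyers}-\frac1e\ge 0.31$ then gives at most $\exp(-2\cdot 0.31^2(n+\addBuyers))\le e^{-0.27n}$.

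Both pieces are then at most a small constant times $e^{-0.27n}\BetaFun{n+1}{\addBuyers}$, which is far below $0.36\,e^{-\sqrt n}\BetaFun{n+1}{\addBuyers}$ for all $n\ge 594$. The only care needed is the $[0,\delta]$ piece relative to $\BetaFun{n+1}{\addBuyers}$: $\delta^{n+1}=e^{-3(n+1)}$ versus $e^{-0.89n}$, which is comfortable. The main obstacle I expect is making these constants explicit and monotone in $n$, so that a single check at $n=594$ covers the whole range; this parallels the $n=594$ verifications in the neighbouring lemmas.
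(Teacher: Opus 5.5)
Your proposal is correct, and of your two routes the ``crude'' one is essentially the paper's own argument, while the probabilistic one is genuinely different.

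\textbf{The crude route matches the paper.} The paper writes the slack as $S_{\text{left}}=e^{-(n+1)}\int_0^1 q^n(1-q/e)^{t_n-1}(-\ln q)\,dq$. After the substitution $q=eu$ this equals $\int_0^{1/e}g-\int_0^{1/e}(1-u)^{t_n-1}u^n\,du$, so your sufficient condition $\int_0^{1/e}g\le e^{-\sqrt n}\int_0^1 g$ is only slightly stronger than the paper's exact reformulation. The paper then:
\begin{itemize}
\item drops $(1-q/e)^{t_n-1}\le 1$ to get $S_{\text{left}}\le e^{-(n+1)}/(n+1)^2$;
\item lower-bounds $\int_0^1 g=B(n+1,t_n)(H_{n+t_n}-H_n)$ by $B(n+1,t_n)\,t_n/(n+t_n)$;
\item compares $e^{-(n+1)}$ with $e^{-\sqrt n}B(n+1,t_n)$ via $\binom{n+t_n}{t_n}\le e^{\frac32 nH(1/3)}$, using $t_n\le n/2$.
\end{itemize}
Your worry that this margin is thin is misplaced for your version. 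A margin of $e^{-65+24}$ is about $e^{-40}$. It is still roughly $e^{-29}$ once you bound $t_n/n\le 0.46$ from above, which you need for an honest upper bound on $1/B(n+1,t_n)$. It is the paper's lossy $t_n\le n/2$ step that leaves only an exponent of about $-0.045n+\sqrt n$ at $n=594$.

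\textbf{The probabilistic route is new.} Instead of comparing $e^{-n}$ globally against the Beta normalizer $B(n+1,t_n)\approx e^{-0.89n}$, you:
\begin{itemize}
\item cut off $[0,e^{-3}]$, where the logarithm is unbounded;
\item bound $-\ln q\le 3$ on the rest of $[0,1/e]$;
\item use the order-statistic identity $\Pr[\mathrm{Beta}(n+1,t_n)\le 1/e]=\Pr[\mathrm{Bin}(n+t_n,1/e)\ge n+1]$ with Hoeffding.
\end{itemize}
The Hoeffding gap is indeed at least $0.31$, since $t_n/n\le 0.458$ for $n\ge 594$. This gives a margin of $e^{-0.27n}$ against $e^{-\sqrt n}$, far more slack than either exponential comparison. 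The truncated piece only needs a trivial estimate such as $B(n+1,t_n)\ge 2^{-(n+t_n)}/t_n$, so no entropy constant has to be tuned.

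The paper's route is shorter: one pointwise bound plus a closed-form binomial estimate, at the cost of a much tighter numerical check at $n=594$. Either route proves the lemma for all $n\ge 594$.
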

\begin{proof}[Proof of \Cref{lem:VCG revenue e reserve inequality}] Let $n\ge594$ and $\addBuyers \triangleq \lceil \left(e^{1/e} - 1\right) n + 1.05 \ln n \rceil$. Define
\[
\coefOld
\triangleq
\sum_{\indexj=0}^{\addBuyers-1}
\binom{\addBuyers-1}{\indexj} \cdot (-1)^{\indexj}
\cdot
\frac{e^{-(n+\indexj+1)}}{(n+\indexj+1)^2}
\,\,\text{and}\,\,
\coefNew
\triangleq
\sum_{\indexj=0}^{\addBuyers-1}
\binom{\addBuyers-1}{\indexj} \cdot (-1)^{\indexj}
\cdot
\frac{e^{-\sqrt{n}}}{(n+\indexj+1)^2}
\]
Then we will show that $\coefNew>\coefOld>0$, which directly ensures \Cref{lem:VCG revenue e reserve inequality} holds.

Define $\intOld(n,\addBuyers)\triangleq\int_0^1
\quant^n
\cdot
(1-e^{-1}\cdot\quant)^{\addBuyers-1}
\cdot
(-\ln\quant)\, \dd\quant$ and $\intNew(n,\addBuyers)\triangleq\int_0^1
\quant^n
\cdot
(1-\quant)^{\addBuyers-1}
\cdot
(-\ln\quant)\, \dd\quant$, both $\intOld$ and $\intNew$ are strictly positive because every factor in their integrands is positive.
Using
\begin{align*}
\frac{1}{(n+\indexj+1)^2}
=
\int_0^1
\quant^{n+\indexj} \cdot (-\ln \quant)\, \dd\quant
\,\,\,\text{and}\,\,\,
\sum_{\indexj=0}^{\addBuyers-1}
\binom{\addBuyers-1}{\indexj} \cdot (-1)^{\indexj}
\cdot
\quant^{\indexj}
=
(1-\quant)^{\addBuyers-1},
\end{align*}
 we obtain
\begin{align*}
\coefOld
&=
e^{-(n+1)}
\cdot
\int_0^1
\quant^n
\cdot
(1-e^{-1}\cdot\quant)^{\addBuyers-1}
\cdot
(-\ln\quant)\, \dd\quant=
e^{-(n+1)} \cdot \intOld(n,\addBuyers),
\\
\coefNew
&=
e^{-\sqrt{n}}
\cdot
\int_0^1
\quant^n
\cdot
(1-\quant)^{\addBuyers-1}
\cdot
(-\ln\quant)\, \dd\quant
=
e^{-\sqrt{n}} \cdot \intNew(n,\addBuyers)
\end{align*}
Therefore, \Cref{lem:VCG revenue e reserve inequality} holds if and only if
\begin{align*}
\frac{\coefOld}{\coefNew}
=
e^{-(n+1-\sqrt{n})}
\cdot
\frac{\intOld(n,\addBuyers)}{\intNew(n,\addBuyers)}\le 1
\end{align*}

Since $(1-e^{-1}\quant)^{\addBuyers-1}\le 1$, we obtain the upper bound $
\intOld(n,\addBuyers)
\le
\int_0^1
\quant^n \cdot (-\ln \quant)\, \dd\quant
=
\frac{1}{(n+1)^2}$.
According to \Cref{eq:bk vcg:mhr:beta function}, we know that
\begin{align*}
\intNew(n,\addBuyers)
= \frac{\partial}{\partial (n+1)}\BetaFun{n+1}{\addBuyers}=
\frac{n! \cdot (\addBuyers-1)!}{(n+\addBuyers)!}
\cdot
(H_{n+\addBuyers}-H_m)
\ge
\frac{n! \cdot (\addBuyers-1)!}{(n+\addBuyers)!}
\cdot
\frac{\addBuyers}{n+\addBuyers},
\end{align*}
the last inequality holds since $H_{n+\addBuyers}-H_n
=
\sum_{i=n+1}^{n+\addBuyers}
{1}/{i}
\ge
{\addBuyers}/(n+\addBuyers)$. Therefore,
\begin{align*}
\frac{\intOld}{\intNew}
\le
\frac{n+\addBuyers}{(n+1)^2}
\cdot
\binom{n+\addBuyers}{\addBuyers}
\end{align*}
Since when $n\ge 594$,  $\addBuyers= \lceil \left(e^{1/e} - 1\right) n + 1.05 \ln n \rceil\le \frac{n}{2}$, the standard entropy bound yields
$\binom{n+\addBuyers}{\addBuyers}
\le
\exp\left(
\tfrac{3}{2}\cdot n \cdot \binEntropy(1/3)
\right)$,
where $\binEntropy(p)$ is the binary entropy function defined by
$
\binEntropy(\quant)\triangleq
-\quant \cdot\ln \quant
-
(1-\quant)\cdot\ln(1-\quant)$ for $\quant\in(0,1)$.
In particular, $\binEntropy(1/3)\approx 0.6365$. Therefore, 
\begin{align*}
\frac{\coefOld}{\coefNew}
\le
\exp\left(
-(n+1-\sqrt{n})
+
\frac{3}{2} \cdot n \cdot \binEntropy(1/3)
+
\ln \frac{n+\addBuyers}{(n+1)^2}
\right)
\le
\exp\left(
-0.04 \cdot n + \sqrt{n} + \ln \frac{3}{2\cdot n}
\right)
\le1
\end{align*}
holds when $n>594$.
Therefore $\coefNew > \coefOld$ when $n\ge 594$ and $\addBuyers\le \frac{n}{2}$, as claimed, which directly ensures \Cref{lem:VCG revenue e reserve inequality} holds.
\end{proof}

We are now ready to prove \Cref{thm:bk vcg:mhr finite market}.

\begin{proof}[Proof of \Cref{thm:bk vcg:mhr finite market}]
    For $n\leq 593$, the numerical computation in \Cref{sec:numerical experiment} proves both the lower and upper bounds. It remains to prove the statement for $n\geq 594$.

    \xhdr{Upper bound.}
    We next prove the upper bound in the theorem statement for $n\geq 594$. Invoking \Cref{prop:bk vcg:mhr worst case}, it suffices to show that by adding $\addBuyers=\lceil(e^{1/e}-1)n+1.05\ln n\rceil$ additional buyers, 
    \begin{align*}
        \RevVCG_{n:n+\addBuyers}(\buyerdist_{(0, \reserve)}) \geq \RevOPT_{n:n}(\buyerdist_{(0, \reserve)})
    \end{align*}
    for all $\reserve\in[1, e]$. 
    By construction, we know that 
    \begin{align*}
        \RevOPT_{n:n}(\buyerdist_{(0, \reserve)}) = n 
        \;\;
        \text{for all $\reserve\in[1,e]$}
    \end{align*}
    Invoking \Cref{lem:VCG revenue small reserve}, we have that for all $\reserve\in[1,5/4]$,
    \begin{align*}
        \RevVCG_{n:n+\addBuyers}(\buyerdist_{(0, \reserve)})
        \geq 
        \RevVCG_{n:n+\addBuyers}(\buyerdist_{(0, 1)})
        = n = \RevOPT_{n:n}(\buyerdist_{(0, \reserve)})
    \end{align*}
    Invoking \Cref{lem:VCG revenue middle reserve}, we have that for all $\reserve\in[5/4,2]$,
    \begin{align*}
        \RevVCG_{n:n+\addBuyers}(\buyerdist_{(0, \reserve)})
        \geq 
        n = \RevOPT_{n:n}(\buyerdist_{(0, \reserve)})
    \end{align*}
    Invoking \Cref{lem:VCG revenue high reserve,lem:VCG revenue super high reserve,lem:VCG revenue e reserve},  we have that for all $\reserve\in[2,e/(1+1/n)]$,
    \begin{align*}
        \RevVCG_{n:n+\addBuyers}(\buyerdist_{(0, \reserve)})
        &\geq 
        \RevVCG_{n:n+\addBuyers}(\buyerdist_{(0, e/(1+1/n))})
        \\
        &\geq 
        \RevVCG_{n:n+\addBuyers}(\buyerdist_{(0, e)}) - 1
        \geq 
        n + 1 - 1 = \RevOPT_{n:n}(\buyerdist_{(0, \reserve)})
    \end{align*}
    and for all $\reserve\in[e/(1+1/n),e]$,
    \begin{align*}
        \RevVCG_{n:n+\addBuyers}(\buyerdist_{(0, \reserve)})
        \geq 
        \RevVCG_{n:n+\addBuyers}(\buyerdist_{(0, e)}) - 1
        \geq 
        n + 1 - 1 = \RevOPT_{n:n}(\buyerdist_{(0, \reserve)})
    \end{align*}
    Combining all the pieces, we prove the upper bound stated in the theorem as desired.

    \xhdr{Lower bound.} For the lower bound, we directly consider the $e$-truncated exponential distribution $\buyerdist_{(0, e)}$. Invoking \Cref{lem:prelim:revenue expression in uniform order statistic}, the revenue of the {\VCGAuction} with $k$ additional buyers can be expressed as 
    \begin{align*}
        \RevVCG_{n:n+k}(\buyerdist_{(0, e)}) &= 
        \left(n+k\right)\cdot \displaystyle\int_0^1 \revcurve_{(0,e)}\left(\quant\right)
        \cdot \osdensity_{n:n+k - 1}\left(\quant\right)\,\dd \quant \notag
        \\
        &\overset{(a)}{\leq}
         \left(n+k\right)\cdot \revcurve_{(0,e)}\left(\frac{n}{n+k}\right)
         \overset{(b)}{=}e\cdot n\cdot \ln\left(\frac{n+k}{n}\right)
    \end{align*}
    where inequality~(a) holds since revenue curve $\revcurve_{(0,e)}$ induced from MHR distribution $\buyerdist_{(0, e)}$ is concave and the Jensen's inequality, and equality~(b) holds due to the construction of revenue curve $\revcurve_{(0,e)}$. By the definition of the competition complexity, it satisfies that 
    \begin{align*}
        \RevVCG_{n:n+k}(\buyerdist_{(0, e)}) \geq 
        \RevOPT_{n:n}(\buyerdist_{(0, e)})
    \end{align*}
    which implies
    \begin{align*}
        e\cdot n\cdot \ln\left(\frac{n+\ccomplexity\left(n,\MHRDists\right)}{n}\right)
        \geq n 
        \;\;
        \Longleftrightarrow
        \;\;
        \ccomplexity\left(n,\MHRDists\right) \geq 
        \left(e^{\frac{1}{e}}-1\right)\cdot n
    \end{align*}
    which completes the analysis of the lower bound and finishes the proof of \Cref{thm:bk vcg:mhr finite market} as desired.
\end{proof}

\subsection{Proof of \texorpdfstring{\Cref{prop:bk supply limiting:worst case}}{Proposition 5.1}}
\label{apx:propVCGSLWorstDists}

\propVCGSLWorstDists*

\begin{proof}[Proof of \Cref{prop:bk supply limiting:worst case}]
    With a slight abuse of notation, we define 
    \begin{align*}
        \ccomplexitySL\left(\supply, m, n, \DistClass, \ccapproxratio\right) 
        &\triangleq 
        \min\left\{k \in \naturals :
        \forall\,\buyerdist \in \DistClass,\,
        \RevSLVCG_{\supply:n+k}\left(\buyerdist\right) \geq 
        \ccapproxratio \cdot \RevOPT_{m:n}\left(\buyerdist\right)
        \right\}.
    \end{align*}
    It suffices to prove that for all $\supply\in[m]$,
    \begin{align*}
        \ccomplexitySL\left(\supply,m,n,\lambdaDists,\ccapproxratio\right) = \ccomplexitySL\left(\supply,m,n,\truncGPDists,\ccapproxratio\right)
    \end{align*}
    Define auxiliary notation $k \triangleq \ccomplexitySL\left(\supply, m,n,\lambdaDists, \ccapproxratio\right) - 1$.
    By the definition of the competition complexity, there exists an $\Reglevel$-regular distribution $\buyerdist\primed$ such that 
    \begin{align*}
        \RevVCG_{\supply:n+k}\left(\buyerdist\primed\right)
        <
        \ccapproxratio\cdot \RevOPT_{m:n}
        \left(\buyerdist\primed\right)
    \end{align*}
    Without loss of generality, we assume that the monopoly revenue in distribution $\buyerdist\primed$ is one. Invoking Proposition~4 in \citet{SS-19}, since the monopoly quantile of any $\Reglevel$-regular distribution is at least ${(1-\Reglevel)}^{{1}/{\Reglevel}}$, the monopoly reserve $\optreserve$ of distribution $\buyerdist\primed$ satisfies $\optreserve\in[1,{({1-\Reglevel})}^{-{1}/{\Reglevel}}]$. 

    Let $\revcurve\primed$ be the revenue curve of distribution $\buyerdist\primed$.
    Invoking \Cref{lem:prelim:revenue expression in uniform order statistic}, we express the revenue of the {\BayesianOptimalMech} and the {\VCGAuction} as follows (recall $\osdensity_{a:b}$ is the PDF of $a$-th smallest order statistics of $b$ i.i.d.\ samples from uniform distribution $U[0,1]$):
    \begin{align*}
        \RevOPT_{m:n}\left(\buyerdist\primed\right) 
        & = 
        n\cdot \displaystyle\int_0^1 \revcurve\primed\left(\min\{\quant,\optquant\}\right)\cdot \osdensity_{m:n - 1}\left(\quant\right)\,\dd\quant
        \\
        & = 
        n\cdot \left(\displaystyle\int_0^{\optquant} \revcurve\primed\left(\quant\right) \cdot \osdensity_{m:n - 1}\left(\quant\right)\,\dd\quant 
        +
        \int_{\optquant}^1 \revcurve\primed\left(\optquant\right) \cdot \osdensity_{m:n - 1}\left(\quant\right)\,\dd\quant
        \right)
    \end{align*}
    and
    \begin{align*}
        \RevVCG_{\supply:n+k}\left(\buyerdist\primed\right) 
        & = 
        \left(n + k\right)\cdot \displaystyle\int_0^1 \revcurve\primed\left(\quant\right) \cdot \osdensity_{\supply:n + k - 1}\left(\quant\right)\,\dd\quant 
        \\
        & = 
        \left(n + k\right)\cdot \left(\displaystyle\int_0^{\optquant} \revcurve\primed\left(\quant\right) \cdot \osdensity_{\supply:n + k - 1}\left(\quant\right)\,\dd\quant 
        +
        \displaystyle\int_{\optquant}^1 \revcurve\primed\left(\quant\right) \cdot \osdensity_{\supply:n + k - 1}\left(\quant\right)\,\dd\quant 
        \right)
    \end{align*}
    We utilize the following technical lemma about the PDF of order statistics from the uniform distribution.
    \begin{lemma}[Single-crossing property]
    \label{lem:bk supply limiting:osdensity single crossing}
        Given any $m,n,k\in\naturals$ ($m\leq n$), there exists threshold quantile $\quantSC\in[0, 1]$ such that
        \begin{align*}
            \forall \quant \in [0, \quantSC]:&
            \qquad
            \left(n + k\right)\cdot \osdensity_{\supply:n + k - 1}\left(\quant\right)
            \geq 
            \ccapproxratio\cdot n\cdot \osdensity_{m:n - 1}\left(\quant\right)
            \\
            \forall \quant \in [\quantSC, 1]:&
            \qquad
            \left(n + k\right)\cdot \osdensity_{\supply:n + k - 1}\left(\quant\right)
            \leq 
            \ccapproxratio\cdot n\cdot \osdensity_{m:n - 1}\left(\quant\right)
        \end{align*}
    \end{lemma}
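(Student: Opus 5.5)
We will mirror the proof of \Cref{lem:bk vcg:osdensity single crossing}: write down the normalized difference of the two densities, factor out the common power of $\quant$ and $(1-\quant)$, and show that what remains is monotone in $\quant$. Here $\supply\le m$ and $\supply\le n+k-1$, with $\osdensity_{\supply:n+k-1}$ carrying the factor $\quant^{\supply-1}(1-\quant)^{n+k-1-\supply}$ and $\osdensity_{m:n-1}$ carrying $\quant^{m-1}(1-\quant)^{n-1-m}$. (The degenerate case $m=n$ is handled as in \Cref{lem:prelim:revenue expression in uniform order statistic}: $\osdensity_{m:n-1}$ is the Dirac mass at $1$, so we may take $\quantSC=1$ up to the single point $\quant=1$.)

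For $m<n$, note that $\supply-1\le m-1$ and $n+k-1-\supply\ge n-1-m$. We would therefore factor out $\quant^{\supply-1}(1-\quant)^{n-1-m}$, which is nonnegative on $[0,1]$. The sign of $(n+k)\,\osdensity_{\supply:n+k-1}(\quant)-\ccapproxratio\, n\,\osdensity_{m:n-1}(\quant)$ then equals the sign of
\begin{align*}
g(\quant)\triangleq C_1\,(1-\quant)^{k+m-\supply}-C_2\,\quant^{m-\supply},
\end{align*}
where $C_1,C_2>0$ are the constants coming from the binomial coefficients, the factors $n+k$, $n-1$, $n+k-1$, and $\ccapproxratio$.

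The key step is that $g$ is non-increasing on $[0,1]$. Since $k+m-\supply\ge 0$, the term $(1-\quant)^{k+m-\supply}$ is non-increasing. Since $m-\supply\ge 0$, the term $\quant^{m-\supply}$ is non-decreasing. So $g$ is a non-increasing function minus a non-decreasing one, hence non-increasing.

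It follows that $\{\quant: g(\quant)\ge 0\}$ is an initial segment $[0,\quantSC]$ of $[0,1]$. We set
\begin{align*}
\quantSC\triangleq\sup\{\quant\in[0,1]: g(\quant)\ge 0\},
\end{align*}
with the convention $\quantSC=0$ if the set is empty. Continuity of $g$ then gives $g\ge 0$ on $[0,\quantSC]$ and $g\le 0$ on $[\quantSC,1]$. Multiplying back by the nonnegative factor yields both inequalities. The endpoints need no special treatment, since the factored-out term only vanishes where the inequalities hold trivially as equalities, and any non-strict crossing is harmless.

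The only subtle point is bookkeeping of the exponents, so that the factored power of $\quant$ is the smaller one, $\supply-1$. This exponent choice is exactly where $\supply\le m$ is used. Unlike in \Cref{lem:bk vcg:osdensity single crossing}, the residual now carries a non-trivial $\quant^{m-\supply}$ term, but it has the favorable monotonicity, so no further work is needed.
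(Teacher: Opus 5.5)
Your proposal is correct and is essentially the paper's own proof: both factor out $\quant^{\supply-1}(1-\quant)^{n-1-m}$ and use that the residual $C_1(1-\quant)^{k+m-\supply}-C_2\,\quant^{m-\supply}$ is non-increasing. Your explicit sup-construction of $\quantSC$ and the Dirac convention for $m=n$ are in fact more careful than the paper, which asserts that the difference vanishes at $\quant\in\{0,1\}$. One case is covered by neither argument: $k=0$ (if admitted), $\supply=m=n-1$, $\ccapproxratio<1$. There the residual is a positive constant and the factored term equals $1$ at $\quant=1$, so your step ``continuity gives $g\le 0$ on $[\quantSC,1]$'' fails at $\quantSC=1$. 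In fact the second inequality then fails at $\quant=1$ for every choice of $\quantSC$, so this is a defect of the pointwise statement itself, and it is harmless for the integrals in which the lemma is used.
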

    \begin{proof}
        Consider the (normalized) difference between the left-hand side and right-hand side of the inequality in the lemma statement:
        \begin{align*}
            & 
            \frac{\left(n + k\right)}{\ccapproxratio\cdot n}\cdot \osdensity_{\supply:n + k - 1}\left(\quant\right)
            - 
            \osdensity_{m:n - 1}\left(\quant\right)
            \\
            ={}&
            \frac{\left(n + k\right)}{\ccapproxratio\cdot n}
            \cdot 
            \left(n + k - 1\right)\cdot \binom{n + k - 2}{\supply - 1}\cdot
            \left(1 - \quant\right)^{n + k - 1 - \supply}\cdot \quant^{\supply - 1}
            -
            \left(n - 1\right)\cdot \binom{n - 2}{m - 1}
            \cdot 
            \left(1 - \quant\right)^{n - 1 - m}\cdot \quant^{m - 1}
            \\
            ={} &
            \left(1 - \quant\right)^{n - 1 - m}\cdot \quant^{\supply - 1} \cdot 
            \left(
            \frac{\left(n + k\right)}{\ccapproxratio\cdot n}
            \cdot 
            \left(n + k - 1\right)\cdot \binom{n + k - 2}{\supply - 1}\cdot
            \left(1 - \quant\right)^{m+k-\supply}
            -
            \left(n - 1\right)\cdot \binom{n - 2}{m - 1}\cdot{\quant}^{m-\supply}
            \right)
        \end{align*}
        Note that the difference is zero for boundary quantiles $\quant \in \{0, 1\}$. 
        In addition,
        \begin{align*}
            \frac{\left(n + k\right)}{\ccapproxratio\cdot n}
            \cdot 
            \left(n + k - 1\right)\cdot \binom{n + k - 2}{\supply - 1}\cdot
            \left(1 - \quant\right)^{m+k-\supply}
            -
            \left(n - 1\right)\cdot \binom{n - 2}{m - 1}\cdot{\quant}^{m-\supply}
        \end{align*}
        is decreasing in $\quant\in[0, 1]$. Therefore, there exists at most one quantile $\quantSC \in \left(0, 1\right)$ such that the difference equals zero. Moreover, the difference is weakly positive (resp.\ weakly negative) for all quantiles smaller than (resp.\ larger than) $\quantSC$. This completes the proof of \Cref{lem:bk supply limiting:osdensity single crossing}.
    \end{proof}
    Invoking \Cref{lem:bk supply limiting:osdensity single crossing}, there exists threshold quantile $\quantSC\in[0, 1]$ such that
    \begin{align*}
            \forall \quant \in [0, \quantSC]:&
            \qquad
            \left(n + k\right)\cdot \osdensity_{\supply:n + k - 1}\left(\quant\right)
            \geq 
            \ccapproxratio\cdot n\cdot \osdensity_{m:n - 1}\left(\quant\right)
            \\
            \forall \quant \in [\quantSC, 1]:&
            \qquad
            \left(n + k\right)\cdot \osdensity_{\supply:n + k - 1}\left(\quant\right)
            \leq 
            \ccapproxratio\cdot n\cdot \osdensity_{m:n - 1}\left(\quant\right)
        \end{align*}
    Define auxiliary function $\auxfunc:[0, 1]\rightarrow\reals$ as follows:
    \begin{align*}
        \auxfunc\left(\quant\right) \triangleq \ccapproxratio\cdot n\cdot \osdensity_{m:n - 1}\left(\quant\right)
         - \left(n + k\right)\cdot \osdensity_{\supply:n + k - 1}\left(\quant\right)
    \end{align*}
    By construction, $\auxfunc\left(\quant\right) \leq 0$ for all  $\quant\in[0, \quantSC]$ and $\auxfunc\left(\quant\right) \geq 0$ for all $\quant\in[\quantSC,1]$.
    Consider the following two cases:

    \xhdr{Case (i) [Quantiles $\optquant\leq \quantSC$]:} In this case, 
    consider $\optreserve$-truncated $\Reglevel$-generalized Pareto distribution~$\WorstRegDist$. By construction, $\WorstRegDist$ has the same monopoly reserve, monopoly quantile, and monopoly revenue as distribution $\buyerdist\primed$. In addition, $\distLambdaReserve$ is first order stochastically dominated by $\buyerdist\primed$, and thus the induced revenue curves $\WorstRevcurve\left(\quant\right) \leq \revcurve\primed\left(\quant\right)$ for all quantiles $\quant\in[0, 1]$. Note that
    \begin{align*}
        &\ccapproxratio\cdot \RevOPT_{m:n}\left(\WorstRegDist\right)
        -
        \RevVCG_{\supply:n + k}\left(\WorstRegDist\right) 
        \\
        ={} & 
        \ccapproxratio\cdot n\cdot \left(\displaystyle\int_0^{\optquant} \WorstRevcurve\left(\quant\right) \cdot \osdensity_{m:n - 1}\left(\quant\right)\,\dd\quant 
        +
        \int_{\optquant}^1 \WorstRevcurve\left(\optquant\right) \cdot \osdensity_{m:n - 1}\left(\quant\right)\,\dd\quant
        \right)
        \\
        & \qquad 
        - 
        \left(n + k\right)\cdot \left(\displaystyle\int_0^{\optquant} \WorstRevcurve\left(\quant\right) \cdot \osdensity_{\supply:n + k - 1}\left(\quant\right)\,\dd\quant 
        +
        \displaystyle\int_{\optquant}^1 \WorstRevcurve\left(\quant\right) \cdot \osdensity_{\supply:n + k - 1}\left(\quant\right)\,\dd\quant 
        \right)
        \\
        ={} & 
        \displaystyle\int_0^{\optquant}
        \WorstRevcurve\left(\quant\right)
        \cdot 
        \auxfunc\left(\quant\right)
        \,
        \dd \quant
+
        \ccapproxratio\cdot n\cdot \int_{\optquant}^1 \WorstRevcurve\left(\optquant\right) \cdot \osdensity_{m:n - 1}\left(\quant\right)\,\dd\quant
       \\
       &\qquad 
       -
        \left(n + k\right)\cdot \displaystyle\int_{\optquant}^1 \WorstRevcurve\left(\quant\right) \cdot \osdensity_{\supply:n + k - 1}\left(\quant\right)\,\dd\quant 
        \\
        \geq{} & 
        \displaystyle\int_0^{\optquant}
        \revcurve\primed\left(\quant\right)
        \cdot 
        \auxfunc\left(\quant\right)
        \,
        \dd \quant
+
        \ccapproxratio\cdot n\cdot \int_{\optquant}^1 \revcurve\primed\left(\optquant\right) \cdot \osdensity_{m:n - 1}\left(\quant\right)\,\dd\quant
        -
        \left(n + k\right)\cdot \displaystyle\int_{\optquant}^1 \revcurve\primed\left(\quant\right) \cdot \osdensity_{\supply:n + k - 1}\left(\quant\right)\,\dd\quant 
        \\
        ={} & 
        \ccapproxratio\cdot \RevOPT_{m:n}\left(\WorstRegDist\primed\right)
        -
        \RevVCG_{\supply:n + k}\left(\buyerdist\primed\right) 
        \\
        \geq {}& 0
    \end{align*}
    where the first inequality holds since (i) revenue curves satisfy $\WorstRevcurve\left(\quant\right)\leq \revcurve\primed\left(\quant\right)$ for all quantiles $\quant\in[0, 1]$, $\WorstRevcurve\left(\optquant\right) = \revcurve\primed\left(\optquant\right)$, and (ii) auxiliary function $\auxfunc\left(\quant\right)
    \leq 0$ for all quantiles $\quant\in[0,\optquant]\subseteq[0,\quantSC]$ (\Cref{lem:bk supply limiting:osdensity single crossing}). This completes the proof of the proposition statement for Case (i).

    \xhdr{Case (ii) [Quantiles $\optquant\geq \quantSC$]:}
    In this case, 
    consider $\optreserve$-truncated $\Reglevel$-generalized Pareto distribution~$\WorstRegDist$. By construction, $\WorstRegDist$ has the same monopoly reserve, monopoly quantile, and monopoly revenue as distribution $\buyerdist\primed$. In addition, $\WorstRegDist$ is first order stochastically dominated by $\buyerdist\primed$.
    Let $\valSC$ be the value such that $1 - \buyerdist\primed\left(\valSC\right) = \quantSC$. The revenue curve $\WorstRevcurve$ induced by distribution $\WorstRegDist$ satisfies 
    \begin{align*}
        \forall \quant\in[0,\quantSC]:&\qquad 
        \WorstRevcurve\left(\quant\right) \leq \frac{\optreserve}{\valSC}\cdot  \revcurve\primed\left(\quant\right) \\
        \forall \quant\in[\quantSC,\optquant]:& \qquad 
        \WorstRevcurve\left(\quant\right) \geq 
        \frac{\optreserve}{\valSC}\cdot \revcurve\primed\left(\quant\right)\\
        \forall \quant\in[\optquant, 1]:& \qquad 
        \WorstRevcurve\left(\quant\right) \leq \revcurve\primed\left(\quant\right)
    \end{align*}
    and $\WorstRevcurve\left(\optquant\right) = \revcurve\primed\left(\optquant\right)$,
    where the first and second inequalities holds since distribution $\buyerdist\primed$ is $\Reglevel$-regular and thus the revenue curve $\revcurve\primed$ is concave. Note that 
    \begin{align*}
        &\ccapproxratio\cdot \RevOPT_{m:n}\left(\WorstRegDist\right)
        -
        \RevVCG_{\supply:n + k}\left(\WorstRegDist\right) 
        \\    
        ={} & 
        \ccapproxratio\cdot n\cdot \left(\displaystyle\int_0^{\optquant} \WorstRevcurve\left(\quant\right) \cdot \osdensity_{m:n - 1}\left(\quant\right)\,\dd\quant 
        +
        \int_{\optquant}^1 \WorstRevcurve\left(\optquant\right) \cdot \osdensity_{m:n - 1}\left(\quant\right)\,\dd\quant
        \right)
        \\
        & \qquad 
        - 
        \left(n + k\right)\cdot \left(\displaystyle\int_0^{\optquant} \WorstRevcurve\left(\quant\right) \cdot \osdensity_{\supply:n + k - 1}\left(\quant\right)\,\dd\quant 
        +
        \displaystyle\int_{\optquant}^1 \WorstRevcurve\left(\quant\right) \cdot \osdensity_{\supply:n + k - 1}\left(\quant\right)\,\dd\quant 
        \right)
        \\
        ={} & 
        \ccapproxratio\cdot n\cdot \left(\displaystyle\int_0^{\quantSC} \WorstRevcurve\left(\quant\right) \cdot \osdensity_{m:n - 1}\left(\quant\right)\,\dd\quant 
        +
        \displaystyle\int_{\quantSC}^{\optquant} \WorstRevcurve\left(\quant\right) \cdot \osdensity_{m:n - 1}\left(\quant\right)\,\dd\quant 
        +
        \int_{\optquant}^1 \WorstRevcurve\left(\optquant\right) \cdot \osdensity_{m:n - 1}\left(\quant\right)\,\dd\quant
        \right)
        \\
        & \qquad 
        - 
        \left(n + k\right)\cdot \left(\displaystyle\int_0^{\quantSC} \WorstRevcurve\left(\quant\right) \cdot \osdensity_{
        \supply:n + k - 1}\left(\quant\right)\,\dd\quant 
        +
        \displaystyle\int_{\quantSC}^{\optquant} \WorstRevcurve\left(\quant\right) \cdot \osdensity_{\supply:n + k - 1}\left(\quant\right)\,\dd\quant \right.
        \\
        &
        \hspace{10cm}
        \left.
        +
        \displaystyle\int_{\optquant}^1 \WorstRevcurve\left(\quant\right) \cdot \osdensity_{\supply:n + k - 1}\left(\quant\right)\,\dd\quant 
        \right)
        \\
        ={} & 
        \displaystyle\int_0^{\quantSC}
        \WorstRevcurve\left(\quant\right)
        \cdot 
        \auxfunc\left(\quant\right)
        \,
        \dd \quant
+
        \displaystyle\int_{\quantSC}^{\optquant}
        \WorstRevcurve\left(\quant\right)
        \cdot 
        \auxfunc\left(\quant\right)
        \,
        \dd \quant
        \\
        & \qquad
        +
        \ccapproxratio\cdot n\cdot \int_{\optquant}^1 \WorstRevcurve\left(\optquant\right) \cdot \osdensity_{m:n - 1}\left(\quant\right)\,\dd\quant
        -
        \left(n + k\right)\cdot \displaystyle\int_{\optquant}^1 \WorstRevcurve\left(\quant\right) \cdot \osdensity_{\supply:n + k - 1}\left(\quant\right)\,\dd\quant 
        \\
        = {} & 
        \frac{\optreserve}{\valSC}
        \cdot 
        \left(
        \frac{\valSC}{\optreserve}
        \cdot 
        \displaystyle\int_0^{\quantSC}
        \WorstRevcurve\left(\quant\right)
        \cdot 
        \auxfunc\left(\quant\right)
        \,
        \dd \quant
        \right.
+
        \frac{\valSC}{\optreserve}\cdot 
        \displaystyle\int_{\quantSC}^{\optquant}
        \WorstRevcurve\left(\quant\right)
        \cdot 
        \auxfunc\left(\quant\right)
        \,
        \dd \quant
        \\
        & \left.\qquad
        +
        \frac{\valSC}{\optreserve}\cdot 
        \ccapproxratio\cdot n\cdot \int_{\optquant}^1 \WorstRevcurve\left(\optquant\right) \cdot \osdensity_{m:n - 1}\left(\quant\right)\,\dd\quant
        -
        \frac{\valSC}{\optreserve}\cdot 
        \left(n + k\right)\cdot \displaystyle\int_{\optquant}^1 \WorstRevcurve\left(\quant\right) \cdot \osdensity_{\supply:n + k - 1}\left(\quant\right)\,\dd\quant 
        \right)
        \\  
        \geq {} & 
        \frac{\optreserve}{\valSC}
        \cdot 
        \left(
        \displaystyle\int_0^{\quantSC}
        \revcurve\primed\left(\quant\right)
        \cdot 
        \auxfunc\left(\quant\right)
        \,
        \dd \quant
        \right.
+
        \displaystyle\int_{\quantSC}^{\optquant}
        \revcurve\primed\left(\quant\right)
        \cdot 
        \auxfunc\left(\quant\right)
        \,
        \dd \quant
        \\
        & \left.\qquad
        + {}~
        \ccapproxratio\cdot n\cdot \int_{\optquant}^1 \revcurve\primed\left(\optquant\right) \cdot \osdensity_{m:n - 1}\left(\quant\right)\,\dd\quant
        -
        \left(n + k\right)\cdot \displaystyle\int_{\optquant}^1 \revcurve\primed\left(\quant\right) \cdot \osdensity_{\supply:n + k - 1}\left(\quant\right)\,\dd\quant 
        \right)
        \\
        ={} & \frac{\optreserve}{\valSC}
        \cdot\left(\ccapproxratio\cdot \RevOPT_{m:n}\left(\buyerdist\primed\right)
        -
        \RevVCG_{\supply:n + k}\left(\buyerdist\primed\right) \right)
        \\
        \geq {} & 0
    \end{align*}
    Here, the third equality holds due to the construction of auxiliary function $\auxfunc$, and the first inequality holds since 
    \begin{align*}
        &\frac{\valSC}{\optreserve}
        \cdot 
        \displaystyle\int_0^{\quantSC}
        \WorstRevcurve\left(\quant\right)
        \cdot 
        \auxfunc\left(\quant\right)
        \,
        \dd \quant
        \overset{\left(a\right)}{\geq} 
        \displaystyle\int_0^{\quantSC}
        \revcurve\primed\left(\quant\right)
        \cdot 
        \auxfunc\left(\quant\right)
        \,
        \dd \quant
        \\
        &\frac{\valSC}{\optreserve}
        \cdot 
        \displaystyle\int_{\quantSC}^{\optquant}
        \WorstRevcurve\left(\quant\right)
        \cdot 
        \auxfunc\left(\quant\right)
        \,
        \dd \quant
        \overset{\left(b\right)}{\geq} 
        \displaystyle\int_{\quantSC}^{\optquant}
        \revcurve\primed\left(\quant\right)
        \cdot 
        \auxfunc\left(\quant\right)
        \,
        \dd \quant
        \\
        &\frac{\valSC}{\optreserve}\cdot 
        \ccapproxratio\cdot n\cdot \int_{\optquant}^1 \WorstRevcurve\left(\optquant\right) \cdot \osdensity_{m:n - 1}\left(\quant\right)\,\dd\quant
        -
        \frac{\valSC}{\optreserve}\cdot 
        \left(n + k\right)\cdot \displaystyle\int_{\optquant}^1 \WorstRevcurve\left(\quant\right) \cdot \osdensity_{\supply:n + k - 1}\left(\quant\right)\,\dd\quant 
        \\
        & \qquad \qquad\overset{\left(c\right)}{\geq}{} 
        \ccapproxratio\cdot n\cdot \int_{\optquant}^1 \WorstRevcurve\left(\optquant\right) \cdot \osdensity_{m:n - 1}\left(\quant\right)\,\dd\quant
        -
        \left(n + k\right)\cdot \displaystyle\int_{\optquant}^1 \WorstRevcurve\left(\quant\right) \cdot \osdensity_{\supply:n + k - 1}\left(\quant\right)\,\dd\quant 
        \\
        & \qquad \qquad\overset{\left(d\right)}{\geq}{} 
        \ccapproxratio\cdot n\cdot \int_{\optquant}^1 \revcurve\primed\left(\optquant\right) \cdot \osdensity_{m:n - 1}\left(\quant\right)\,\dd\quant
        -
        \left(n + k\right)\cdot \displaystyle\int_{\optquant}^1 \revcurve\primed\left(\quant\right) \cdot \osdensity_{\supply:n + k - 1}\left(\quant\right)\,\dd\quant 
    \end{align*}
    where inequality~(a) holds since $\WorstRevcurve\left(\quant\right) \leq \frac{\optreserve}{\valSC}\cdot \revcurve\primed\left(\quant\right)$ and $\auxfunc\left(\quant\right) \leq 0$ for all quantiles $\quant \in [0, \quantSC]$, inequality~(b) holds since $\WorstRevcurve\left(\quant\right) \geq \frac{\optreserve}{\valSC}\cdot \revcurve\primed\left(\quant\right)$ and $\auxfunc\left(\quant\right) \geq 0$ for all quantiles $\quant \in [\quantSC,\optquant]$, and inequality~(c) holds since $\frac{\valSC}{\optreserve} \geq 1$ (implied by the case assumption that $\quantSC \leq \optquant$ and the construction of $\valSC$) and $\ccapproxratio\cdot n\cdot \int_{\optquant}^1 \WorstRevcurve\left(\optquant\right) \cdot \osdensity_{m:n - 1}\left(\quant\right)\,\dd\quant
    -
    \left(n + k\right)\cdot \int_{\optquant}^1 \WorstRevcurve\left(\quant\right) \cdot \osdensity_{\supply:n + k - 1}\left(\quant\right)\,\dd\quant  \geq 0$ (implied by the fact that $\WorstRevcurve\left(\optquant\right) \geq \WorstRevcurve\left(\quant\right)$ for all $\quant\in[\optquant,1]$ and $\ccapproxratio\cdot n\cdot\osdensity_{m:n-1}\left(\quant\right)\,\dd\quant - \left(n+k\right)\cdot \osdensity_{\supply:n+k-1}\left(\quant\right)\geq 0$ for all $\quant\in[\optquant, 1]$), and inequality~(d) holds since $\WorstRevcurve\left(\optquant\right) = \revcurve\primed\left(\optquant\right)$, $\WorstRevcurve\left(\quant\right) \leq \revcurve\primed\left(\quant\right)$ for all $\quant\in[\optquant,1]$. This completes the proof of the proposition statement for Case~(ii). 

    Putting the analysis for the two cases together, we finish the proof of \Cref{prop:bk supply limiting:worst case}.
\end{proof} 
\subsection{Proof of \texorpdfstring{\Cref{thm:bk supply limiting:large market}}{Theorem 5.2}}
\label{apx:thmVCGSLLargeMarket}

\thmVCGSLLargeMarket*

\begin{proof}[Proof of \Cref{thm:bk supply limiting:large market}]
    Fix $\Reglevel\in[0, 1]$, $\imbalanceratio\in(0, 1]$, and $\ccapproxratio\in(0, 1)$. For ease of presentation, define auxiliary notations $m_n$ and $k_n$ as
    \begin{align*}
        m_n \triangleq \lceil\imbalanceratio\cdot n\rceil
        \;\;\mbox{and}\;\;
        k_n \triangleq \ccomplexitySLInfty(\imbalanceratio,\lambdaDists,\ccapproxratio)
    \end{align*}
    Invoking \Cref{prop:bk supply limiting:worst case}, $k_n$ is the smallest non-negative integer such that for all $\reserve\in[1,(1-\Reglevel)^{-1/\Reglevel}]$,
    \begin{align}
    \label{eq:bk supply-limiting:large market:condition}
        \RevVCG_{\supply:n+k_n}\left(\distLambdaReserve\right) \geq 
        \ccapproxratio\cdot 
        \RevOPT_{m_n:n}\left(\distLambdaReserve\right) 
    \end{align}
    where $\distLambdaReserve$ is the $\reserve$-truncated $\Reglevel$-generalized Pareto distribution (\Cref{def:truncated generalized Pareto distribution}). By construction, distribution $\Reglevel$ has monopoly reserve $\reserve$, monopoly quantile $1/\reserve$, and thus monopoly revenue of one. 

    Invoking \Cref{lem:prelim:revenue expression in uniform order statistic}, we express the revenue of the {\VCGAuction} and the {\BayesianOptimalMech} as follows (recall $\osdensity_{a:b}$ is the PDF of $a$-th smallest order statistics of $b$ i.i.d.\ samples from uniform distribution $U[0,1]$):
    \begin{align*}
        \lim_{n\rightarrow \infty}
        \RevVCG_{\supply:n+k_n}\left(\distLambdaReserve\right) 
        &= 
        \lim_{n\rightarrow \infty}
        \left(n+k_n\right)\cdot \displaystyle\int_0^1 \revcurve_{(\Reglevel,\reserve)}\left(\quant\right)
        \cdot \osdensity_{\supply:n+k_n - 1}\left(\quant\right)\,\dd \quant \notag
        \\
        &\overset{(a)}{=}
        \lim_{n\rightarrow \infty} \left(n+k_n\right)\cdot \revcurve_{(\Reglevel,\reserve)}\left(\frac{\supply}{n+k_n}\right)~,
    \end{align*}
    and
    \begin{align*}
        \lim_{n\rightarrow \infty}
        \RevOPT_{m_n:n}\left(\distLambdaReserve\right) 
        &= 
        \lim_{n\rightarrow \infty}
        n\cdot \displaystyle\int_0^1
        \revcurve_{(\Reglevel,\reserve)}\left(\min\left\{\quant,\frac{1}{\reserve}\right\}\right)\cdot \osdensity_{m_n:n-1}\left(\quant\right)\,\dd\quant \notag
        \\
        &\overset{(b)}{=}
        \lim_{n\rightarrow \infty} n\cdot \revcurve_{(\Reglevel,\reserve)}\left(\min\left\{\frac{m_n}{n},\frac{1}{\reserve}\right\}\right)~,
    \end{align*}
    where $\revcurve_{(\Reglevel,\reserve)}$ is the induced revenue curve of distribution $\distLambdaReserve$, and both equalities~(a) and (b) hold due to the concentration of the order statistics ($\expect[\quant\sim\osdensity_{\supply:n+k_n - 1}]{\quant}=\supply/(n+k_n)$ and $\expect[\quant\sim\osdensity_{m_n:n - 1}]{\quant}=m_n/n$) and the Lipschitz continuity (implied by the $\Reglevel$-regularity of $\distLambdaReserve$) of the revenue curve $\revcurve_{(\Reglevel,\reserve)}(\cdot)$ and $\revcurve_{(\Reglevel,\reserve)}(\min\{\cdot,1/\reserve\})$ at $\supply/(n+k_n)$ and $m_n/n$, respectively.

    Below, we consider two cases depending on the value of the supply-to-demand ratio $\imbalanceratio$ separately.

    \xhdr{Case (i) [Supply-to-demand ratio $\imbalanceratio\in[(1-\Reglevel)^{1/\Reglevel}, 1]$]:}
    In this case, we further consider three subcases depending on the value of truncation $\reserve$ (which is also the monopoly reserve by construction) in distribution $\distLambdaReserve$.
    
    \xhdr{Case (i.a) [Truncation $\reserve\in[1,n/m_n]$]:} In this subcase, the revenue curve $\revcurve_{(\Reglevel,\reserve)}$ satisfies 
    \begin{align*}
        \revcurve_{(\Reglevel,\reserve)}\left(\frac{\supply}{n+k_n}\right) = \reserve\cdot \frac{\supply}{n+k_n}
        \;\;
        \mbox{and}
        \;\;
        \revcurve_{(\Reglevel,\reserve)}\left(\min\left\{\frac{m_n}{n},\frac{1}{\reserve}\right\}\right)
        =
        \reserve\cdot \frac{m_n}{n}
    \end{align*}
    Therefore, 
    \begin{align*}
        \lim_{n\rightarrow\infty}~
        \frac{\RevVCG_{\supply:n+k_n}\left(\distLambdaReserve\right)}{\RevOPT_{m_n:n}\left(\distLambdaReserve\right)}
        =
        \lim_{n\rightarrow\infty}~\frac{\left(n+k_n\right)\cdot \reserve \cdot \frac{\supply}{n + k_n}}{n\cdot \reserve\cdot \frac{m_n}{n}}
        =
        \frac{\supply}{m_n}
    \end{align*}
    Let $\supply/m_n \ge \ccapproxratio$, then $\supply \ge \lceil \ccapproxratio \cdot m_n \rceil = \lceil \ccapproxratio \cdot \lceil \imbalanceratio \cdot n \rceil \rceil$. In this case, condition~\ref{eq:bk supply-limiting:large market:condition} holds for any $k_n \ge 0$, in particular for $k_n=0$.

    \xhdr{Case (i.b) [Truncation $\reserve \in(n/m_n, (n+k_n)/\supply]$]:} In this subcase, the revenue curve $\revcurve_{(\Reglevel,\reserve)}$ satisfies
    \begin{align*}
        \revcurve_{(\Reglevel,\reserve)}\left(\frac{\supply}{n+k_n}\right) = \reserve\cdot \frac{\supply}{n+k_n}
        \;\;
        \mbox{and}
        \;\;
        \revcurve_{(\Reglevel,\reserve)}\left(\min\left\{\frac{m_n}{n},\frac{1}{\reserve}\right\}\right)
        =
        1
    \end{align*}
    Therefore, 
    \begin{align*}
        \lim_{n\rightarrow\infty}~
        \frac{\RevVCG_{\supply:n+k_n}\left(\distLambdaReserve\right)}{\RevOPT_{m_n:n}\left(\distLambdaReserve\right)}
        =
        \lim_{n\rightarrow\infty}~\frac{\left(n+k_n\right)\cdot \reserve \cdot \frac{\supply}{n + k_n}}{n\cdot 1}
        =
        \frac{\supply}{n}\cdot \reserve > \frac{\supply}{m_n}
    \end{align*}
    As in the analysis of \xhdr{Case (i.a)}, when $\supply \ge \lceil \ccapproxratio \cdot m_n \rceil = \lceil \ccapproxratio \cdot \lceil \imbalanceratio \cdot n \rceil \rceil$, condition~\ref{eq:bk supply-limiting:large market:condition} holds for any $k_n \ge 0$, in particular for $k_n=0$.
    
    \xhdr{Case (i.c) [Truncation $\reserve \in((n+k_n)/\supply,(1-\Reglevel)^{-1/\Reglevel}]$]:} In this subcase, the revenue curve $\revcurve_{(\Reglevel,\reserve)}$ satisfies
    \begin{align*}
        &\revcurve_{(\Reglevel,\reserve)}\left(\frac{\supply}{n+k_n}\right) = \frac{\reserve}{\reserve^\Reglevel - 1}\cdot \left(\left(\frac{n+k_n}{\supply}\right)^\Reglevel-1\right)\cdot \frac{\supply}{n+k_n}
        \geq 
        \frac{1}{\Reglevel}
        \cdot 
        \left(\frac{1}{1-\Reglevel}\right)^{\frac{1-\Reglevel}{\Reglevel}}\cdot \left(\left(\frac{n+k_n}{\supply}\right)^\Reglevel-1\right)\cdot \frac{\supply}{n+k_n}
        \\
        \mbox{and}
        \;\;
        &\revcurve_{(\Reglevel,\reserve)}\left(\min\left\{\frac{m_n}{n},\frac{1}{\reserve}\right\}\right)
        =
        1 
    \end{align*}
    where the inequality holds since $\reserve/(\reserve^\Reglevel - 1)$ is decreasing in $\reserve\in((n+k_n)/\supply,(1-\Reglevel)^{-1/\Reglevel}]$.
    Therefore, 
    \begin{align*}
        \lim_{n\rightarrow\infty}~
        \frac{\RevVCG_{\supply:n+k_n}\left(\distLambdaReserve\right)}{\RevOPT_{m_n:n}\left(\distLambdaReserve\right)}
        \geq
        \lim_{n\rightarrow\infty}~\frac{(n+k_n)\cdot\frac{1}{\Reglevel}
        \cdot 
        \left(\frac{1}{1-\Reglevel}\right)^{\frac{1-\Reglevel}{\Reglevel}}\cdot \left(\left(\frac{n+k_n}{\supply}\right)^\Reglevel-1\right)\cdot \frac{\supply}{n+k_n}}{n\cdot 1}
        \ge \ccapproxratio
    \end{align*}
    where the last inequality holds when $\supply=\lceil \ccapproxratio \cdot\lceil\imbalanceratio\cdot n \rceil\rceil$ due to the value of $k_n$ provided in the theorem statement.

    \xhdr{Case (ii) [Supply-to-demand ratio $\imbalanceratio\in(0,(1-\Reglevel)^{1/\Reglevel})$]:}
    In this case, since truncation $\reserve \leq (1-\Reglevel)^{-1/\Reglevel}$ by construction (\Cref{def:truncated generalized Pareto distribution}), it guarantees that $\reserve \in [1, n/m_n]$ and thus the same argument as Case (i.a) applies.

    \smallskip
    \noindent
    Combining all the pieces together, we complete the proof of \Cref{thm:bk supply limiting:large market} as desired.
\end{proof}

\end{document}